\title{Branching Bisimilarity for Processes with Time-outs}
\author{Gaspard Reghem}{ENS Paris-Saclay, Université Paris-Saclay, France}{gaspard.reghem@ens-paris-saclay.fr}{}{}
\author{Rob J. van Glabbeek}{School of Informatics, University of Edinburgh, UK \and School of Computer Science and Engineering, University of New South Wales, Sydney, Australia \and \url{http://theory.stanford.edu/~rvg/}}{rvg@cs.stanford.edu}{https://orcid.org/0000-0003-4712-7423}{}
\authorrunning{G. Reghem and R.\,J. van Glabbeek}
\keywords{Reactive Systems, Time-outs, Branching Bisimilarity, Modal Characterisation, Congruence, Axiomatisation}
\begin{document}

\maketitle

\begin{abstract}
This paper provides an adaptation of branching bisimilarity to reactive systems with time-outs. Multiple equivalent definitions are procured, along with a modal characterisation and a proof of its congruence property for a standard process algebra with recursion. The last section presents a complete axiomatisation for guarded processes without infinite sequences of unobservable actions.
\end{abstract}

\section{Introduction} \label{sec:intro}

\emph{Strong bisimilarity} \cite{Mi90ccs} is the default semantic equivalence on labelled transition systems (LTSs), modelling systems that move from state to state by performing discrete, uninterpreted actions. In \cite{strongreactivebisimilarity}, it has been generalised, under the name \emph{strong reactive bisimilarity},
to LTSs that feature, besides the hidden action $\tau$ \cite{Mi90ccs}, an unobservable \emph{time-out} action $\rt$ \cite{vG21}, modelling the end of a time-consuming activity from which we abstract. This addition significantly increases the expressiveness of the model \cite{vG23a,strongreactivebisimilarity}.

Applied to the verification of realistic distributed systems, strong bisimilarity is too fine an equivalence, especially because it does not cater to abstraction from internal activity. \emph{Branching bisimilarity} \cite{branching} is a variant that does abstract from internal activity, and lies at the basis of many verification toolsets \cite{BGKLNVWWW19,GLMS11}. The present paper generalises branching bisimilarity to LTSs with time-outs, thereby combining the virtues of \cite{strongreactivebisimilarity} and \cite{branching}. It supports the resulting notion of \emph{branching reactive bisimilarity} through a modal characterisation, congruence results for a standard process algebra with recursion, and a complete axiomatisation.

The addition of the time-out action $\rt$ aims at modelling the passage of time while staying in the realm of \emph{untimed} process algebra. Here, ``untimed'' means that our framework does not facilitate measuring time, even though it models whether a system can pause in some state or not. We assume that the execution of any action is instantaneous; thus, time elapses in states only. The amount of time spent in a state is dictated by the interaction of the system with an external entity called its \emph{environment}.

We call a system \emph{reactive} if it interacts with an environment able to allow or disallow visible actions. The environment represents a user or other systems, running in parallel, which has no control over $\tau$ or $\rt$ actions. If $X$ is the set of visible actions currently allowed by the environment and the system can perform any transition labelled by an element of $X \cup \{\tau\}$ then it will perform one of those transitions immediately. When a visible action is performed, it triggers the environment to choose a new set of allowed actions. If the environment is allowing $X$ and the system cannot perform any transition labelled by $\tau$ or any allowed action, then the system is said to be \emph{idling}. When the system idles, time-outs become executable, but the environment can also get impatient and choose a new $X$ before any time-out occurs. 
\advance\textheight 3pt

\advance\textheight -3pt
We have supposed that the environment cannot synchronise with the execution of a time-out, thus implying that, right after executing a time-out, the environment is still allowing the same set of allowed actions as before this execution. For example, the process $a.P + \rt.(a.Q + \tau.R)$ will never reach $Q$ because, for the time-out to happen, the environment has to block $a$ and so $a.Q + \tau.R$ can only be reached when the environment blocks $a$. In this case, the $\tau$-transition is always executed before the environment can allow $a$ again. 

Similarly, strong and branching reactive bisimilarity satisfy the process algebraic law $\tau.P + \rt.Q = \tau.P$, essentially giving $\tau$ priority over $\rt$. Whereas this could have been formalised through an operational semantics in which the process $\tau.P + \rt.Q$ lacks an outgoing $\rt$-transition, here, and in \cite{strongreactivebisimilarity}, we derive an LTS for a standard process algebra with time-outs in a way that treats $\rt$ just like any other action. Instead, the priority of $\tau$ over $\rt$ is implemented in the reactive bisimilarity: its says that even though the transition
$\tau.P + \rt.Q \step\rt Q$ is present in our LTS, it will never be taken. This approach is not only simpler, it also generalises better to choices like $b.P + \rt.Q$, where the priority of $b$ over $\rt$ is conditional on the environment in which the system is placed, namely on whether or not this environment allows the $b$-action to occur.

From the system's perspective, the environment can be in two kinds of states: either allowing a specific set of actions, or being triggered to change. Our model does not stipulate how much time the environment takes to choose a new set of allowed actions once triggered, or even if it will ever make such a choice. Thus, the system could perform some transitions while the environment is triggered, especially those labelled $\tau$. In our view, the most natural way to see the environment is as another system executed in parallel, while enforcing synchronisation on all visible actions. This implies that the environment allows a set $X$ of actions when it idles in a state whose set of initial actions is $X$, and the environment is triggered when it is not idling, especially when it can perform a $\tau$-transition. In this paradigm, while the environment is triggered, any action can be allowed for a brief amount of time. However, there is no reason to believe that it will necessarily settle down on a specific set. For instance, this can happen if the environment reaches a \emph{divergence}: an infinite sequence of $\tau$-transitions.

In \cite{vG93}, seven (or nine) forms of branching bisimilarity are classified; they differ only in the treatment of divergence. In the present paper we are chiefly interested in divergence-free processes, on grounds that in the intuition of \cite{strongreactivebisimilarity} any sequence of $\tau$-transitions could be executed in time zero; yet we do wish to allow infinite sequences of $\rt$-transitions.
For divergence-free process all these forms of branching bisimilarity coincide.
Nevertheless, we do not formally exclude divergences, and in their presence our branching reactive bisimilarity generalises the \emph{stability respecting branching bisimilarity} of \cite{vG93}, which differs from the default version from \cite{branching} through the presence of Clause 2.e of Definition~\ref{def:intuitive}. There does not exist a plausible reactive generalisation of the default version.

Section \ref{sec:brb} supplies the formal definition of branching reactive bisimilarity as well as its rooted version, which will be shown to be its congruence closure. It also provides equivalent definitions that reduce our bisimilarity to a non-reactive one and illustrate that branching reactive bisimilarity coincides with stability respecting branching bisimilarity in the absence of time-outs.

Section \ref{sec:modal} gives a modal characterisation of branching reactive bisimilarity and its rooted version on an extension of the Hennessy-Milner logic. Section \ref{sec:congruence} introduces the process algebra $\ccsp$ along with an alternative characterisation of branching reactive bisimilarity that will be used to prove that rooted branching reactive bisimilarity is a full congruence for $\ccsp$.

Section \ref{sec:axiom} displays a complete axiomatisation of our bisimilarity on different fragments of $\ccsp$. Most completeness proofs rely on standard techniques like equation merging, but the very last one uses a relatively new method called ``canonical representatives''. 

\section{Branching Reactive Bisimilarity}\label{sec:brb}

A \emph{labelled transition system} (LTS) is a triple $(\closed,Act,\rightarrow)$ with $\closed$ a set (of \emph{states} or \emph{processes}), $Act$ a set (of \emph{actions}) and ${\rightarrow}\in\closed\times Act\times\closed$. In this paper we consider LTSs with $Act:= A\uplus\{\tau,\rt\}$, where $A$ is a set of \emph{visible actions}, $\tau$ is the \emph{hidden or invisible action}, and $\rt$ the \emph{time-out action}. Let $A_\tau := A \cup \{\tau\}$. $P \step{\alpha} P'$ stands for $(P,\alpha,P') \in {\rightarrow}$ and these triplets are called \emph{transitions}. Moreover, $P \step{\opt{\alpha}} P'$ denotes that either $\alpha = \tau$ and $P = P'$, or $P \step{\alpha} P'$. Furthermore, \emph{paths} are sequences of connected transitions and $\pathtau$ is the reflexive-transitive closure of $\steptau$. The set of \emph{initial} actions of a process $P \in\closed$ is $\init{P}:=\{\alpha\in A_\tau \mid P{\step \alpha}\}$. Here $P{\step \alpha}$ means that there is a $Q$ with $P \step\alpha Q$.

\begin{definition}\rm \label{def:intuitive}
    A \emph{\tb reactive bisimulation} is a symmetric\footnote{meaning that $(P,Q)\in \R \Leftrightarrow (Q,P)\in \R$ and $(P,X,Q)\in \R \Leftrightarrow (Q,X,P)\in \R$} relation $\R \subseteq (\closed\times\closed) \cup (\closed\times\Pow(A)\times\closed)$ such that, for all $P,Q \in \closed$ and $X \subseteq A$,
    \begin{enumerate}
        \item if $\R(P,Q)$ then
        \begin{enumerate}
            \item if $P \step{\alpha} P'$ with $\alpha \in A_\tau$ then there is a path $Q \pathtau Q_1 \step{\opt{\alpha}} Q_2$ with $\R(P,Q_1)$ and $\R(P',Q_2)$,
            \item for all $Y \subseteq A$, $\R(P,Y,Q)$;
        \end{enumerate}
        \item if $\R(P,X,Q)$ then
        \begin{enumerate} 
            \item if $P \steptau P'$ then there is a path $Q \pathtau Q_1 \step{\opt{\tau}} Q_2$ with $\R(P,X,Q_1)$ and $\R(P',X,Q_2)$,
            \item if $P \step{a} P'$ with $a \in X$ then there is a path $Q \pathtau Q_1 \step{a} Q_2$ with $\R(P,X,Q_1)$ and $\R(P',Q_2)$,
            \item if $\deadend{P}{X}$ then there is a path $Q \pathtau Q_0$ with $\R(P,Q_0)$,
            \item if $\deadend{P}{X}$ and $P \step{\rt} P'$ then there is a path $Q \mathrel{=:} Q_0 \pathtau Q_1 \step{\rt} Q_2 \pathtau Q_3 \step{\rt} \dots \pathtau Q_{2r{-}1} \step{\opt{\rt}} Q_{2r}$ with $r>0$, such that $\forall i \in [0,r{-}1], \R(P,X,Q_{2i}) \wedge \deadend{Q_{2i+1}}{X}$ and $\R(P',X,Q_{2r})$,
            \item if $P \nsteptau$ then there is a path $Q \pathtau Q_0 \nsteptau$.
        \end{enumerate}
    \end{enumerate}
For $P,Q \mathbin\in \closed$, if there exists a \tb reactive bisimulation $\R$ with $\R(P,Q)$ (resp.\ $\R(P,X,Q)$) then $P$ and $Q$ are said to be \emph{\tb reactive bisimilar} (resp.\ \emph{\tb $X$-bisimilar}), which is denoted $P \bisimtbr Q$ (resp.\ $P \bisimtbr[X] Q$).
\end{definition}

\noindent
To build the above definition, the definition of a strong reactive bisimulation \cite{strongreactivebisimilarity} was modified in a branching manner \cite{branching}. Intuitively, a triplet $\R(P,X,Q)$ affirms that $P$ and $Q$ behave similarly when the environment allows (only) the set of actions in $X$ to occur, whereas a couple $\R(P,Q)$ says that $P$ and $Q$ behave in the same way when the environment has been triggered to change. As said before, the environment can be seen as a system executed in parallel while enforcing the synchronisation of all visible actions. 

Clause 1 captures the scenario of a triggered environment: if $P$ can perform a visible or invisible action then $Q$ has to be able to match it; and the environment can settle on a set $Y$ of allowed actions at any moment. Time-outs are not considered because these can occur only when the system idles, and idling can happen only when the environment has stabilised on a set of allowed actions. One might notice that, in \cite{strongreactivebisimilarity}, the first clause was only required for invisible actions. However, there the case $\alpha\neq\tau$ is actually implied by the other clauses. If in our definition Clause 1.a were restricted to invisible actions then $\bisimtbr$ would not be a congruence for the parallel operator, as shown in Appendix \ref{app:examples}.

Clause 2 depicts the scenario of an environment allowing $X$. $\tau$-transitions have to be matched since the environment cannot disallow them, and their execution does not trigger the environment to change. Visible actions have to be matched only if they are allowed, and their execution triggers the environment. Triggering the environment or not explains why Clause 2a matches $Q_2$ in a triplet and Clause 2b in a couple. If $P$ idles (i.e.\ $\deadend{P}{X}$) then the environment can be triggered, thus, $Q$ has to be able to instantaneously reach a state $Q_0$ related to $P$ in a triggered environment.\footnote{By Lemma~\ref{lem:obvious}.4 we can even choose $Q_0$ such that $Q_0 \nsteptau$, so that $\init{Q_0}=\init{P}$.} If $P$ idles and has an outgoing time-out transition then $Q$ has to be able to match it in a branching manner. This involves $Q$ performing any sequence of $\tau$ and $\rt$-transitions, such that all states encountered prior to the last optional $\rt$ are related to $P$.\footnote{Clause 2.d requires this only for states of the form $Q_{2i}$ with $i\in [0,r{-}1]$, but by Lemma~\ref{lem:obvious}.1 it holds for all of them. Clause 2.c further implies that in Clause 2.d we have $\R(P,Q_{2i+1})$ for all $i \in [0,r{-}1]$.} Lastly, a stability respecting clause \cite{vG93} was added for practical reasons. In Appendix \ref{app:examples}, an example shows that without it $\bisimtbr$ would not even be an equivalence. For the important class of \emph{divergence-free} systems, without infinite sequences $Q_0 \steptau Q_{1} \steptau \dots$, Clause 2.e is easily seen to be redundant.

\begin{lemma}\label{lem:obvious}
  Let $\R$ be a \tb reactive bisimulation.
  \begin{enumerate}
  \item If $\R(P,X,Q)$, $P \nsteptau$ and $Q \pathtau Q'$ then also $\R(P,X,Q')$.
  \item If $\R(P,Q)$ or $\R(P,X,Q)$, $P \nsteptau$ and $Q \nsteptau$ then $\init{Q}=\init{P}$.
  \item If $\R(P,X,Q)$, $\deadend{P}{X}$ and $Q \nsteptau$ then $\R(P,Q)$.
  \item If $\R(P,X,Q)$ and $\deadend{P}{X}$ then there is a path $Q \pathtau Q_0$ with $\R(P,Q_0)$, $Q_0\nsteptau$ and $\init{Q_0}=\init{P}$.
  \end{enumerate}
\end{lemma}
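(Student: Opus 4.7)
I would prove the four parts in the dependency order 1, 3, 2, 4, since Parts 2 and 4 both rely on the earlier clauses.

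For Part 1, I would induct on the length of $Q \pathtau Q'$. The base case $Q = Q'$ is immediate. For the inductive step $Q \steptau Q_1 \pathtau Q'$, symmetry gives $\R(Q,X,P)$, and Clause~2.a applied to $Q \steptau Q_1$ furnishes a matching path $P \pathtau P_1 \step{\opt{\tau}} P_2$ with $\R(Q,X,P_1)$ and $\R(Q_1,X,P_2)$. Since $P \nsteptau$, the matching path collapses: $P_1 = P = P_2$. So $\R(Q_1,X,P)$, i.e.\ $\R(P,X,Q_1)$, and the induction hypothesis on $Q_1 \pathtau Q'$ concludes.

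Part~3 is a direct application of Clause~2.c: it supplies a path $Q \pathtau Q_0$ with $\R(P,Q_0)$, and $Q \nsteptau$ forces the path to have length zero, so $\R(P,Q)$. For Part~2 I would treat the pair and triple cases separately. In the pair case, any $\alpha \in \init{P}$ lies in $A$ (because $P \nsteptau$); Clause~1.a then yields $Q \pathtau Q_1 \step{\opt{\alpha}} Q_2$, and together with $Q \nsteptau$ and $\alpha \neq \tau$ this collapses to a genuine $Q \step{\alpha} Q_2$, giving $\alpha \in \init{Q}$. Symmetry closes the reverse inclusion. In the triple case, I would first observe, using Clause~2.b and symmetry, that $\deadend{P}{X}$ and $\deadend{Q}{X}$ are equivalent under the hypotheses; if $\deadend{P}{X}$ holds, Part~3 reduces to the pair case, and otherwise in-$X$ actions match via Clause~2.b applied in both directions.

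Part~4 chains the previous parts. Since $\deadend{P}{X}$ entails $P \nsteptau$, Clause~2.c yields $Q \pathtau Q'$ with $\R(P,Q')$, and Clause~1.b promotes this to $\R(P,X,Q')$. Clause~2.e then provides an extension $Q' \pathtau Q_0$ with $Q_0 \nsteptau$; Part~1 preserves the $X$-relation to give $\R(P,X,Q_0)$, Part~3 strengthens it to $\R(P,Q_0)$, and Part~2 (pair case) delivers $\init{Q_0} = \init{P}$. Concatenating the two sub-paths gives the required $Q \pathtau Q_0$. The delicate step is the triple case of Part~2 in the non-deadend branch, since Definition~2.1's clauses primarily control labels in $X \cup \{\tau\}$ (and $\rt$ conditional on deadend); I expect this to be the main obstacle and to require the careful symmetric case analysis sketched above rather than a single clean appeal to one clause.
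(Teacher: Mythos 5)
Your parts 1 and 3, the pair case of part 2, and part 4 are correct and essentially the paper's own argument. Part 1 is exactly the symmetric counterpart of Clause 2.a with the matching path collapsing because $P \nsteptau$; part 3 is Clause 2.c combined with $Q \nsteptau$ (the paper first detours through Clause 2.e, but your direct application is fine); and part 4 is the same chain of 2.e, part 1, part 3 and the pair case of part 2 as in the paper, your extra initial step through 2.c and 1.b being harmless.

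The step you flag as the delicate one is indeed the only gap, but it is not one you could have closed: in the triple case of part 2 without $\deadend{P}{X}$, Clauses 2.a, 2.b and symmetry only yield $\init{P}\cap(X\cup\{\tau\}) = \init{Q}\cap(X\cup\{\tau\})$, and the full claim $\init{P}=\init{Q}$ is actually false in that sub-case. Take $a,b \in A$, $X=\{a\}$, $P=a.0+b.0$ and $Q=a.0$; the symmetric closure of $\{(P,X,Q),(0,0)\}\cup\{(0,Y,0)\mid Y\subseteq A\}$ is a \tb reactive bisimulation, since the $b$-transition of $P$ need never be matched ($b\notin X$ and $P$ does not satisfy $\deadend{P}{X}$, so none of Clauses 2.b--2.d applies to it), yet $\init{P}\neq\init{Q}$ while $P\nsteptau$ and $Q\nsteptau$. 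The paper's one-line proof of Claim 2 (``a direct consequence of Clause 1.a or 2.b and its symmetric counterpart'') does not cover this sub-case either; in every place the paper invokes Claim 2, either a couple $\R(P,Q)$ is at hand or $\deadend{P}{X}$ holds, and in the latter situation your reduction via part 3 to the pair case is exactly the right move. So your argument establishes all instances of the lemma that are true and that are used downstream; the residual sub-case is a defect of the statement rather than of your proof, and it would be worth noting explicitly that the triple case of Claim 2 should be read with the additional hypothesis $\deadend{P}{X}$ (or with the weaker conclusion $\init{P}\cap(X\cup\{\tau\}) = \init{Q}\cap(X\cup\{\tau\})$).
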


\begin{proof}
\begin{enumerate}
    \item This is an immediate consequence of the symmetric counterpart of Clause 2.a (where $Q$ takes a $\tau$-step).
    When that clause yields $P \pathtau P_1 \step{\opt{\tau}} P_2$ we have $P_2=P$.
    \item This is a direct consequence of Clause 1.a or 2.b and its symmetric counterpart.
    \item By Clause 2.e there is path $Q \pathtau Q_0$ with $Q_0 \nsteptau$. By Claim 1 of this lemma, $\R(P,X,Q_0)$. Thus, by Clause 2.c there is a path $Q_0 \mathbin{\pathtau} Q_1$ with $\R(\hspace{-1pt}P,\hspace{-1pt}Q_1\hspace{-1pt})$, but $Q_1\mathop{=}Q_0\mathop{=}Q$ since $Q \!\nsteptau$.
    \item By Clause 2.e there is path $Q \pathtau Q_0$ with $Q_0 \nsteptau$. By Claim 1 of this lemma, $\R(P,X,Q_0)$. That $\init{Q_0}=\init{P}$ and $\R(P,Q_0)$ follows by Claims 2 and 3 of this lemma.
\popQED
\end{enumerate}
\end{proof}

\noindent
Definition~\ref{def:intuitive} enables us to elide some time-outs. Using the process algebra notation to be formally introduced in Section~\ref{sec:process algebra}, the processes $a.\rt.b.0$ and $a.\rt.\rt.b.0$ (as well as $a.\rt.\tau.\rt.b.0$) are \tb reactive bisimilar. Both require an unquantified positive but finite amount of rest between the actions $a$ and $b$. To support this example, Clause 2.d of Definition~\ref{def:intuitive} must allow a single time-out transition of one process to be matched by either zero or multiple time-outs of the other. An alternative definition, treating time-outs more like visible transitions, is obtained by replacing Clause 2.d by
\begin{enumerate}
    \setcounter{enumi}{1}
    \item 
    \begin{enumerate}
        \setcounter{enumii}{3}
        \item if $\deadend{P}{X}$ and $P \step{\rt} P'$ then there is a path $Q \pathtau Q_1 \step{\rt} Q_2$ with $\R(P',X,Q_2)$.
    \end{enumerate}
\end{enumerate}
Requiring that the matching time-out is executable (i.e.\ $\deadend{Q_1}{X}$) is not necessary here, as it is implied by the other clauses. Indeed, Lemma~\ref{lem:obvious}.3, which is not affected by changing Clause 2.d, implies the existence of a path $Q \pathtau Q_1 \nsteptau$ such that $\R(P,Q_1)$ and $\deadend{Q_1}{X}$. Since $Q_1 \nsteptau$, $\deadend{P}{X}$ and $P\step{\rt} P'$, Clause 2d yields $Q_1 \step{\rt} Q_2$ with $\R(P',X,Q_2)$. This version of the definition has been studied \cite{Reghem24} and has properties similar to $\bisimtbr$\,, which are recapped in Appendix \ref{app:concrete time-out}.

In \cite{branching}, branching bisimilarity is expressed in multiple equivalent ways. For practical purposes, our definition uses the semi-branching format, which is equivalent to the branching format thanks to the following lemma.

\begin{lemma}[Stuttering Lemma] \label{lem:stuttering}
    Let $P, P^\dag, P^\ddag, Q \in \closed$, if $P \bisimtbr Q$, $P^\ddag \bisimtbr Q$ (resp.\ $P \bisimtbr[X] Q$, $P^\ddag \bisimtbr[X] Q$) and $P \steptau P^\dag \steptau P^\ddag$ then $P^\dag \bisimtbr Q$ (resp.\ $P^\dag \bisimtbr[X] Q$).
\end{lemma}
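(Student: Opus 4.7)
The plan is to exhibit a \tb reactive bisimulation containing $(P^\dag, Q)$ (resp.\ $(P^\dag, X, Q)$) by enlarging $\bisimtbr$ and each $\bisimtbr[X]$ with ``interpolating'' pairs and triples that sit between two processes both related to the same state. Concretely, set
\[
  \R_p = \{(R,S) \mid R \bisimtbr S\} \cup \{(R,S) \mid \exists P_0,P_1.\ P_0 \pathtau R \pathtau P_1,\ P_0 \bisimtbr S,\ P_1 \bisimtbr S\}
\]
and analogously
\[
  \R_t = \{(R,X,S) \mid R \bisimtbr[X] S\} \cup \{(R,X,S) \mid \exists P_0,P_1.\ P_0 \pathtau R \pathtau P_1,\ P_0 \bisimtbr[X] S,\ P_1 \bisimtbr[X] S\},
\]
and let $\R$ be the symmetric closure of $\R_p \cup \R_t$. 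With $(P_0, P_1) := (P, P^\ddag)$, the hypothesis places $(P^\dag,Q) \in \R_p$ (and $(P^\dag,X,Q) \in \R_t$ in the triple case). It therefore suffices to show that $\R$ is a \tb reactive bisimulation.

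For the old pairs and triples, drawn from $\bisimtbr$ and the $\bisimtbr[X]$, the required clauses hold directly. For a novel pair $(R,S) \in \R_p$ with witnesses $(P_0, P_1)$, I match a transition $R \step\alpha R'$ in two stages: first iterate Clause~1.a of Definition~\ref{def:intuitive} starting from $P_0 \bisimtbr S$ along the path $P_0 \pathtau R$, producing $S \pathtau \tilde S$ with $R \bisimtbr \tilde S$; then apply Clause~1.a once more to $R \step\alpha R'$ to extend it to $\tilde S \pathtau \tilde S_1 \step{\opt\alpha} \tilde S_2$ with $R \bisimtbr \tilde S_1$ and $R' \bisimtbr \tilde S_2$. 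The concatenation supplies the required matching path with both endpoints in $\R$. Clause~1.b is immediate, since Clause~1.b upgrades the witnesses to $P_0 \bisimtbr[Y] S$ and $P_1 \bisimtbr[Y] S$, placing $(R,Y,S)$ in $\R_t$. The same iterate-then-extend scheme, using Clauses~2.a and~2.b applied to $\bisimtbr[X]$, dispatches the invisible- and visible-action clauses for novel triples.

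The main obstacle, I expect, lies in the stability-sensitive Clauses~2.c, 2.d, and 2.e for novel triples $(R,X,S) \in \R_t$, whose premises all demand a form of $\tau$-stability at $R$. The crucial observation is that each such premise forces $R = P_1$: for Clauses~2.c and~2.d the hypothesis $\deadend{R}{X}$ implies $R \nsteptau$, so the path $R \pathtau P_1$ must have length zero; Clause~2.e assumes $R \nsteptau$ directly. Once this collapse is noted, the corresponding clause applied to $P_1 \bisimtbr[X] S$ supplies the required matching path, whose endpoints lie in $\bisimtbr$ or $\bisimtbr[X]$, hence in $\R$. Assembling all cases, $\R$ satisfies Definition~\ref{def:intuitive}, yielding $P^\dag \bisimtbr Q$ (resp.\ $P^\dag \bisimtbr[X] Q$).
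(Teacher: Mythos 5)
Your construction is in essence the paper's own proof: you enlarge the bisimilarity with interpolating pairs and triples sitting strictly between two states related to the same partner, and then verify the clauses of Definition~\ref{def:intuitive}. The iterate-then-extend matching for Clauses 1.a, 2.a and 2.b is correct, and your observation that the premises of Clauses 2.c--2.e force the interpolated state to coincide with the right endpoint $P_1$ is also sound -- indeed it is a mild shortcut compared with the paper, which instead iterates Clause 2.a along $P_0 \pathtau R$ and then applies 2.c/2.d to the resulting triple. (One small point: your claim that for the ``old'' pairs and triples the clauses ``hold directly'' tacitly uses that $\bisimtbr$ together with $(\bisimtbr[X])_{X \subseteq A}$ is itself a \tb reactive bisimulation, i.e.\ that the union of such bisimulations is one; the paper avoids having to say this by fixing a single bisimulation $\R$ and enlarging it.)

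There is, however, a gap in the case analysis: you take the symmetric closure of $\R_p \cup \R_t$, but you never check the clauses for the reversed pairs $(S,R)$ and triples $(S,X,R)$, in which the interpolated state $R$ is the \emph{responder} rather than the challenger. Neither of your two mechanisms covers them: iterate-then-extend only matches transitions emanating from $R$, and the collapse argument is unavailable because the stability premises of Clauses 2.c--2.e now concern $S$, not $R$ (for instance, for $(S,X,R)$ with $\deadend{S}{X}$ you must exhibit a path $R \pathtau R_0$ with $(S,R_0) \in \R$, and nothing forces $R = P_1$ there). These cases do go through, but by a different one-line argument: from $S \bisimtbr P_1$ (resp.\ $S \bisimtbr[X] P_1$) obtain the matching path starting at $P_1$ required by the relevant clause, and prefix it with the path $R \pathtau P_1$; the endpoints are then related by $\bisimtbr$ or $\bisimtbr[X]$ and hence lie in $\R$. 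This prepending step is exactly what the second half of every clause in the paper's verification does, and without it your concluding ``assembling all cases'' is not justified.
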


\begin{proof}
    Let $\R$ be a \tb reactive bisimulation. Let's define $\R' := \R \cup \{(P^\dag,Q),(Q,P^\dag) \mid \exists P,P^\ddag \in \closed, P \pathtau P^\dag \pathtau P^\ddag \wedge \R(P,Q) \wedge \R(P^\ddag,Q)\} \cup \{(P^\dag,X,Q),(Q,X,P^\dag) \mid \exists P,P^\ddag \in \closed,\linebreak[3] P \pathtau P^\dag \pathtau P^\ddag \wedge \R(P,X,Q) \wedge \R(P^\ddag,X,Q)\}$. $\R'$ is symmetric by definition and $\R'$ is a \tb reactive bisimulation, as proven in Appendix \ref{app:intro}.
\end{proof}

\begin{proposition} \label{prop:equivalence}
    $\bisimtbr$ and $(\bisimtbr[X])_{X \subseteq A}$ are equivalence relations.
\end{proposition}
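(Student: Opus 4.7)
The plan is to establish reflexivity, symmetry, and transitivity of $\bisimtbr$ and of each $\bisimtbr[X]$ simultaneously, by exhibiting an appropriate \tb reactive bisimulation in each case.

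Reflexivity is witnessed by $\R := \{(P,P) \mid P \in \closed\} \cup \{(P,X,P) \mid P \in \closed,\ X \subseteq A\}$. Every clause of Definition~\ref{def:intuitive} is then satisfied trivially with empty $\pathtau$-prefixes: a transition $P \step\alpha P'$ is matched by itself, Clauses 2.c and 2.e take $Q_0 := P$, and Clause~2.d with $r=1$ matches $P \step\rt P'$ by $P \pathtau P \step\rt P'$. Symmetry is automatic, since $\R \cup \R^{-1}$ is a \tb reactive bisimulation whenever $\R$ is, by the symmetry axiom built into Definition~\ref{def:intuitive}.

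For transitivity, let $\R_1$ witness $P \bisimtbr Q$ and $\R_2$ witness $Q \bisimtbr R$ (and likewise for the triplet variants). I form the symmetric closure of the relational composition
\[
\R := \{(P,R) \mid \exists Q.\ \R_1(P,Q) \wedge \R_2(Q,R)\} \cup \{(P,X,R) \mid \exists Q.\ \R_1(P,X,Q) \wedge \R_2(Q,X,R)\}
\]
and verify it is a \tb reactive bisimulation. For Clauses 1.a, 1.b, 2.a, and 2.b the strategy is a two-stage chase. To match $P \step\alpha P'$ from $\R(P,R)$ via some intermediate $Q$: first use the relevant clause of $\R_1$ to obtain $Q \pathtau Q_1 \step{\opt\alpha} Q_2$ with the required $\R_1$-links at the endpoints; then iterate Clause~1.a (or 2.a, depending on the ambient relation) of $\R_2$ one $\tau$-step at a time along $Q \pathtau Q_1$; finally invoke the matching clause of $\R_2$ on $Q_1 \step{\opt\alpha} Q_2$. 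Concatenating yields $R \pathtau R^\dag \step{\opt\alpha} R^\ddag$ with $\R(P,R^\dag)$ via $Q_1$ and $\R(P',R^\ddag)$ via $Q_2$. The semi-branching format of Definition~\ref{def:intuitive} is essential here: no relatedness is demanded at intermediate states of the $\pathtau$-prefix, so the chase composes cleanly without a Stuttering-style argument. Clauses 2.c and 2.e are handled similarly, using Lemma~\ref{lem:obvious} (parts 1, 3, and 4) to convert a chased triplet $\R_2(Q_0,X,R_0)$ at a stable state into a couple $\R_2(Q_0,R_0)$ that can be composed with $\R_1(P,Q_0)$.

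The main obstacle is Clause~2.d. A single time-out $P \step\rt P'$ with $\deadend{P}{X}$ is matched by $\R_1$ with an alternating path $Q = Q_0 \pathtau Q_1 \step\rt Q_2 \pathtau \cdots \step{\opt\rt} Q_{2r}$ in which each $Q_{2i+1}$ is idling in $X$. I would chase each segment $Q_{2i} \pathtau Q_{2i+1} \step\rt Q_{2i+2}$ through $\R_2$ separately: the $\pathtau$-part by iterating Clause~2.a of $\R_2$ to reach some $R^*$ with $\R_2(Q_{2i+1},X,R^*)$, and the $\rt$-step then by Clause~2.d of $\R_2$ (licensed by $\deadend{Q_{2i+1}}{X}$), which yields a further alternating subpath on the $R$-side ending at a state $\R_2$-linked to $Q_{2i+2}$. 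Concatenating all these subpaths produces one long alternating path on $R$ of exactly the shape required by Clause~2.d of $\R$: even-indexed $R$-states are $\R$-related to $P$ by composing $\R_1(P,X,Q_{2i})$ with the $\R_2$-link at that position; odd-indexed $R$-states are idling in $X$ because $\R_2$'s own Clause~2.d guarantees this on each subpath; and the terminal $R$-state is $\R$-related to $P'$ via $Q_{2r}$. The difficulty here is purely a combinatorial book-keeping exercise on the nested index structure; no tool beyond Lemma~\ref{lem:obvious} and the clauses of $\R_1$ and $\R_2$ themselves is required.
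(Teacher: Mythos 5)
Your proposal is correct and follows essentially the same route as the paper: reflexivity and symmetry are immediate, and transitivity is proved by verifying that the relational composition $(\R_1 \circ \R_2) \cup (\R_2 \circ \R_1)$ is a \tb reactive bisimulation, chasing $\pathtau$-prefixes clause by clause, handling Clauses 2.c/2.e via the stability facts of Lemma~\ref{lem:obvious}, and matching Clause~2.d by chasing each $\pathtau\step{\rt}$ segment of the intermediate path through the second bisimulation (using that $P\nsteptau$ to carry the triplet relation onto the odd-indexed intermediate states) and concatenating the resulting subpaths. This is exactly the paper's argument in Appendix~\ref{app:intro}, only with the book-keeping left slightly more implicit.
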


\begin{proof}
    Reflexivity and symmetry are trivial following the definition. For transitivity, consider two \tb reactive bisimulations $\R_1$ and $\R_2$. Let's define $\R := (\R_1 \circ \R_2) \cup (\R_2 \circ \R_1)$. Here $\R_1 \circ \R_2 := \{(P,Q) \mid \exists R.~\R(P,R) \wedge \R(R,Q)\} \cup \{(P,X,Q) \mid \exists R.~\R(P,X,R) \wedge \R(R,X,Q)\}$.\linebreak[3] $\R$ is symmetric by definition and $\R$ is a \tb reactive bisimulation, as proven in Appendix \ref{app:intro}.
\end{proof}

\subsection{Rooted Version}

A well-known limitation of branching bisimilarity $\bisimb$ is that it fails to be a congruence for the choice operator $+$. For example, $a \bisimb \tau.a$ but $a + b \,\not\!\bisimb \tau.a + b$. Since the objective is to define a congruence, instead of $\bisimtbr$ we use the \emph{congruence closure} of $\bisimtbr$\,, which is the coarsest congruence included in $\bisimtbr$\,. 

\begin{definition}\rm\label{def:rooted intuitive}
    A \emph{rooted \tb reactive bisimulation} is a symmetric relation $\R \subseteq (\closed\times\closed)\cup(\closed\times\Pow(A)\times\closed)$ such that, for all $P,Q \in \closed$ and $X \subseteq A$,
    \begin{enumerate}
        \item if $\R(P,Q)$
        \begin{enumerate}
            \item if $P \step\alpha P'$ with $\alpha \in A_\tau$ then there is a transition $Q \step\alpha Q'$ with $P' \bisimtbr Q'$,
            \item for all $Y \subseteq A$, $\R(P,Y,Q)$;
        \end{enumerate}
        \item if $\R(P,X,Q)$
        \begin{enumerate}
            \item if $P \steptau P'$ then there is a transition $Q \steptau Q'$ with $P' \bisimtbr[X] Q'$,
            \item if $P \step{a} P'$ with $a \in X$ then there is a transition $Q \step{a} Q'$ with $P' \bisimtbr Q'$,
            \item if $\deadend{P}{X}$ then $\R(P,Q)$,
            \item if $\deadend{P}{X}$ and $P \step{\rt} P'$ then there is a transition $Q \step{\rt} Q'$ with $P' \bisimtbr[X] Q'$.\vspace{3pt}
        \end{enumerate}
    \end{enumerate}
For $P,Q \in \closed$, if there exists a rooted \tb reactive bisimulation $\R$ with $\R(P,Q)$ (resp.\ $\R(P,X,Q)$) then $P$ and $Q$ are said to be \emph{rooted \tb reactive bisimilar} (resp.\ rooted \tb $X$-bisimilar), which is denoted $P \bisimrtbr Q$ (resp.\ $P \bisimrtbr[X] Q$).
\end{definition}
A rooted version of a bisimulation consists in enforcing a stricter matching on the first transition of a system. In the branching case, the first transition is matched in the strong manner. The stability respecting clause can be removed, as it is now implied by the other clauses. Rooting the bisimilarity is the standard technique to obtain its congruence closure; later $\bisimrtbr$ will be proven to be a congruence. As any \tb reactive bisimulation relating $P+b$ and $Q+b$, for a fresh action $b$, induces a rooted \tb reactive bisimulation relating $P$ and $Q$, it then follows that $\bisimrtbr$ is the coarsest included in $\bisimtbr$. Since $\bisimtbr$ is an equivalence, the proof of Proposition \ref{prop:equivalence} can be adapted to $\bisimrtbr$ in a straightforward way.

\begin{proposition} \label{prop:rooted equivalence}
    $\bisimrtbr$ and $(\bisimrtbr[X])_{X \subseteq A}$ are equivalence relations.
\end{proposition}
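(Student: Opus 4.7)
The argument closely follows that of Proposition~\ref{prop:equivalence}, so I would reuse its three-part structure. Reflexivity is witnessed by $\R:=\{(P,P)\mid P\in\closed\}\cup\{(P,X,P)\mid P\in\closed,\ X\subseteq A\}$: every one-step challenge is matched by itself, and the continuations that are required to lie in $\bisimtbr$ (resp.\ $\bisimtbr[X]$) do so by reflexivity of those already-established equivalences. Symmetry is built into Definition~\ref{def:rooted intuitive}.

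For transitivity, given rooted \tb reactive bisimulations $\R_1$ and $\R_2$, I would set $\R := (\R_1\circ\R_2)\cup(\R_2\circ\R_1)$, composing at both the pair and triple level exactly as in the non-rooted case, which makes $\R$ symmetric by construction. The verification proceeds clause by clause, by a two-hop matching: a single transition of $P$ is matched, through $\R_1$, by a single transition of an intermediate $Q$, which is then matched, through $\R_2$, by a single transition of $R$; the $\bisimtbr$- or $\bisimtbr[X]$-related pairs of continuations produced at each hop are glued by transitivity of $\bisimtbr$ and $\bisimtbr[X]$ from Proposition~\ref{prop:equivalence}. Clause~1.b is immediate from the corresponding property of $\R_1$ and $\R_2$.

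The step that needs care is Clauses~2.c and~2.d. For~2.c, assuming $\R_1(P,X,Q)$, $\R_2(Q,X,R)$ and $\deadend{P}{X}$, Clause~2.c of $\R_1$ yields $\R_1(P,Q)$, but to obtain $\R_2(Q,R)$ I must first establish $\deadend{Q}{X}$. I would derive this from symmetry of $\R_1$ together with the rooted matching Clauses~2.a and~2.b: any $\tau$-transition of $Q$, or any $a$-transition of $Q$ with $a\in X$, would force a matching transition of $P$, contradicting $\deadend{P}{X}$. The very same stability-preservation observation then allows the $\rt$-transition produced by Clause~2.d of $\R_1$ to be fed into Clause~2.d of $\R_2$, after which transitivity of $\bisimtbr[X]$ finishes the argument.

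The only genuine obstacle is noticing this stability-preservation property of rooted \tb reactive bisimulations; once it is in hand, the rest of the verification is a routine adaptation of the appendix proof of Proposition~\ref{prop:equivalence}.
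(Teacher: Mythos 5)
Your proposal is correct and matches the paper's intended argument: the paper itself only remarks that, since $\bisimtbr$ is an equivalence, the proof of Proposition~\ref{prop:equivalence} adapts straightforwardly, and your composition $(\R_1\circ\R_2)\cup(\R_2\circ\R_1)$ with continuations glued by transitivity of $\bisimtbr$ and $\bisimtbr[X]$ is exactly that adaptation. Your observation that $\deadend{P}{X}$ transfers to the intermediate process $Q$ via the symmetric counterparts of Clauses~2.a and~2.b is precisely the detail the paper leaves implicit, and it is handled correctly.
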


\subsection{Alternative Forms of Definition~\ref{def:intuitive}}

Definition~\ref{def:intuitive} can be rephrased in various ways. First of all, using Requirements 1.b and 2.c, one can move Requirement 2.d from Clause 2 (dealing with triples $(P,X,Q)$) to Clause~1 (dealing with pairs $(P,Q)$), now adding a universal quantifier over $X$ to the requirement. Next, Requirement 2.e can be copied under Clause~1. This makes Clause 1.b unnecessary, thereby obtaining a definition in which the triples $(P,X,Q)$ are encountered only after taking a $\rt$-transition. In this form it is obvious that branching reactive bisimilarity reduces to the classical stability respecting branching bisimilarity for systems without $\rt$-transitions. We have chosen the form of Definition~\ref{def:intuitive} over the above alternatives, because we believe it comes with more natural intuitions for its plausibility.

In Appendix~\ref{app:gbrb} a further modification of Definitions~\ref{def:intuitive} and~\ref{def:rooted intuitive} is proposed, called \emph{generalised [rooted] \tb reactive bisimulation}. We show that each [rooted] \tb reactive bisimulation is a generalised [rooted] \tb reactive bisimulation, and two systems are [rooted] \tb reactive bisimilar iff they are related by a generalised [rooted] \tb reactive bisimulation.
This characterisation of $\bisimtbr$ and $\bisimrtbr$ will be used in the proofs of Theorem~\ref{thm:modal characterisation} and Proposition~\ref{prop:time-out bisim}.

In \cite{Pohlmann}, Pohlmann introduces an encoding which maps strong reactive bisimilarity to strong bisimilarity where time-outs are considered as any visible action. This encoding in essence places a given process in a most general environment, one that features environment time-out actions $\rt_\varepsilon$, as well as actions $\varepsilon_X$ for settling in a state that allows exactly the actions in $X$. This proves that reactive equivalences can be expressed as non-reactive ones at the cost of increasing the processes' size. Thus, any tool set able to work on strong bisimulation could theoretically deal with its reactive counterpart. 

In Appendix \ref{app:Pohlmann}, this encoding is slightly modified to yield a similar result for branching reactive bisimulation and its rooted version, for the latter result also employing actions $\rt_X$. It appears that these modifications do not impact its effect on strong reactive bisimilarity. Since our bisimilarity has some time-out eliding properties, it is not mapped to stability respecting branching bisimilarity, but to a new bisimilarity, defined below.

\begin{definition}\rm \label{def:non-reactive}
    A \emph{\rt-branching bisimulation} is a symmetric relation $\R \subseteq \closed\times\closed$ such that, for all $P,Q \in \closed$, if $\R(P,Q)$ then
    \begin{enumerate}
        \item if $P \step{\alpha} P'$ with $\alpha \in A_\tau \cup\{\rt_\epsilon,\epsilon_X \mid X \subseteq A\}$ then there is a path $Q \pathtau Q_1 \step{\opt{\alpha}} Q_2$ with $\R(P,Q_1)$ and $\R(P',Q_2)$,
        \item if $P \step{\rt} P'$ then there is a path $Q = Q_0 \pathtau Q_1 \step{\rt} Q_2 \pathtau Q_3 \step{\rt} ... \pathtau Q_{2r{-}1} \step{\opt{\rt}} Q_{2r}$ with $r>0$, such that $\forall i \in [0,2r{-}1],\; \R(P,Q_i)$ and $\R(P',Q_{2r})$,
        \item if $P \nsteptau$ then there is a path $Q \pathtau Q_0 \nsteptau$.
\vspace{3pt}

\end{enumerate}
For $P,Q \in \closed$, if there exists a \rt-branching bisimulation $\R$ with $\R(P,Q)$ then $P$ and $Q$ are said to be \emph{\rt-branching bisimilar}, which is denoted $P \bisimtb Q$.
\end{definition}

\noindent
The encoding also sends $\bisimrtbr$ to the rooted version of $\bisimtb$\,. 

\begin{definition}\rm \label{def:rooted non-reactive}
    A \emph{rooted \rt-branching bisimulation} is a symmetric relation $\R \subseteq \closed\times\closed$ such that, for all $P,Q \in \closed$, if $\R(P,Q)$ then
    \begin{enumerate}
        \item if $P \mathord{\step{\alpha}} P'$ with $\alpha \mathord\in Act \mathop\cup\{\rt_\epsilon,\hspace{-.5pt}\rt_X,\hspace{-0.5pt}\epsilon_X \mathbin{\mid} X \mathord\subseteq A\}$ then there is a transition $Q \mathop{\step{\alpha}} Q'$ with $P' \mathbin{\bisimtb} Q'\!$.\vspace{3pt}
    \end{enumerate}
For $P,Q \in \closed$, if there exists a rooted \rt-branching bisimulation $\R$ with $\R(P,Q)$ then $P$ and $Q$ are said to be \emph{rooted \rt-branching bisimilar}, which is denoted $P \bisimrtb Q$.
\end{definition}

\noindent
Providing a complete axiomatisation of rooted $\rt$-branching bisimilarity will be useful in the proof of completeness of the axiomatisation of rooted branching reactive bisimilarity (Lemma~\ref{lem:simplication}). 

\section{Modal Characterisation} \label{sec:modal}

The Hennessy-Milner logic \cite{HM85} expresses properties of the behaviour of processes in an LTS\@. In \cite{strongreactivebisimilarity}, the modality $\langle X\rangle\varphi$ was added to obtain a modal characterisation of strong reactive bisimilarity ($\rbis{}{r}$). In order to capture branching reactive bisimilarity we add another modality $X\varphi$. To avoid confusion, $\langle X\rangle\varphi$ is renamed $\langle \rt_X\rangle\varphi$.

\begin{definition}\rm \label{def:Hennessy-Milner}
    The class $\logic$ of \emph{reactive Hennessy-Milner formulas} is defined as follows, where $I$ is an index set, $\alpha \in Act$, $a \in A$ and $X \subseteq A$,
    \begin{center}
        $\varphi ::= \top \mid \bigwedge\limits_{i \in I} \varphi_i \mid \neg\varphi \mid 
      \langle\alpha\rangle\varphi \mid X\varphi$
    \end{center}
    \vspace{-14pt}
\end{definition}

\begin{table}[ht]
\begin{center}
    \begin{tabular}{l c l l c l}
        $P \models \top$ &&& $P \models_Y \top$ \\
        $P \models \bigwedge_{i\in I}\varphi_i$ & iff & $\forall i \in I, P \models \varphi_i$ & $P \models_Y \bigwedge_{i\in I}\varphi_i$ & iff & $\forall i \in I, P \models_Y \varphi_i$ \\
        $P \models \neg\varphi$ & iff & $P \not\models \varphi$ & $P \models_Y \neg\varphi$ & iff & $P \not\models_Y \varphi$ \\
        $P \models \langle\alpha\rangle\varphi$ & iff & $\exists P \step{\alpha} P', P' \models \varphi$ & $P \models_Y \langle\tau\rangle\varphi$ & iff & $\exists P \step{\tau} P', P' \models_Y \varphi$ \\
        &&& $P \models_Y \langle\rt\rangle\varphi$ & iff & $\exists P \step{\rt} P', P' \models_Y \varphi$ \\
        $P \models_Y \langle a\rangle\varphi$ & iff & \multicolumn{4}{l}{$(a 
        \in Y \vee \deadend{P}{Y}) \wedge \exists P \step{a} P', P' \models \varphi$} \\
        $P \models X\varphi$ & iff & \multicolumn{4}{l}{$\deadend{P}{X} \wedge P \models_X \varphi$} \\
        $P \models_Y X\varphi$ & iff & \multicolumn{4}{l}{$\deadend{P}{X \cup Y} \wedge P \models_X \varphi$} \\
    \end{tabular}
\end{center}
    \caption{Semantics of $\models$ and $(\models_Y)_{Y \subseteq A}$}
    \label{tab:formulas semantics}
\vspace{-2.5ex}
\end{table}

\noindent
The satisfaction rules of $\logic$ are given in Table \ref{tab:formulas semantics}. $P \models \varphi$ means that $P$ satisfies $\varphi$ when the environment is triggered, and $P \models_Y \varphi$ indicates that $P$ satisfies $\varphi$ when the environment allows $Y$. The modality $X\varphi$ expresses that a process can idle in its current state during a period in which the environment allows the actions in $X\!$, after which it behaves according~to~$\varphi$.

The modality $\langle \rt_X\rangle\varphi$ from \cite{strongreactivebisimilarity} can now be defined as $\langle \rt_X\rangle\varphi := X\langle \rt\rangle\varphi$. Write $\logic_s$ for the fragment of $\logic$ from \cite{strongreactivebisimilarity}, which includes $\langle \rt_X\rangle\varphi$ but does not feature $X\varphi$ or $\langle \rt\rangle\varphi$. Then the modal characterisation theorem of \cite{strongreactivebisimilarity} says \hfill
\( P \rbis{}{r} Q ~~\Leftrightarrow~~ \forall \varphi\in\logic_s.~(P \models \varphi \Leftrightarrow Q \models \varphi)\;. \)

Here we restrict attention to the fragment of $\logic$ that includes $\langle \rt_X\rangle\varphi$ and $X\varphi$, but not $\langle \rt\rangle\varphi$. On this fragment $\models_Y$ is defined such that whenever $\init{P}\cap (Y\cup\{\tau\}) = \emptyset$ then $P \models_Y \varphi$ iff $P \models \varphi$. This is because the environment may choose to change during a period of idling.

To obtain a modal characterisation of [rooted] branching relative bisimilarity, we need a few other derived modalities. First of all, $\langle\epsilon\rangle\varphi := \bigvee_{i \in \nat}\langle\tau\rangle^i\varphi$. To lessen the notations, for all $\alpha \in A_\tau$, $\langle\hat{\alpha}\rangle\varphi$ denotes $\varphi \vee \langle\tau\rangle\varphi$ if $\alpha = \tau$, $\langle\alpha\rangle\varphi$ otherwise, and the modality $\langle\hat{\rt}_X\rangle\varphi$ denotes $\langle \rt_X\rangle\varphi \vee X\varphi$ or $X\langle\hat{\rt}\rangle\varphi$. Moreover, $\varphi \wedge \langle\hat{\alpha}\rangle\varphi'$ is shortened to $\varphi\langle\hat{\alpha}\rangle\varphi'$.
Furthermore, we define $\varphi\langle\epsilon_X\rangle\varphi' := \bigvee_{i \in \nat}\varphi\langle\epsilon_X\rangle^{(i)}\varphi'$, where $\varphi\langle\epsilon_X\rangle^{(0)}\varphi' := \langle\epsilon\rangle(\varphi\wedge\langle\hat{\rt}_X\rangle\varphi')$ and, for all $i>0$,\linebreak[3]
$\varphi\langle\epsilon_X\rangle^{(i)}\varphi' :=  \langle\epsilon\rangle(\varphi\wedge\langle \rt_X\rangle(\varphi\wedge (\varphi\langle\epsilon_X\rangle^{(i-1)}\varphi')))$.
The satisfaction rules of these new modalities can be derived from the basic ones: see Table \ref{tab:operator semantics}.

\begin{table}[t]
\begin{center}
    \begin{tabular}{l c l l c l}
        $P \models \langle\hat{\alpha}\rangle\varphi$ & iff & $\exists P \step{\opt{\alpha}} P', P' \models \varphi$ & $P \models_Y \langle\hat{\tau}\rangle\varphi$ & iff & $\exists P \step{\opt{\tau}} P', P' \models_Y \varphi$ \\
        $P \models \langle\epsilon\rangle\varphi$ & iff & $\exists P \pathtau P', P' \models \varphi$ & $P \models_Y \langle\epsilon\rangle\varphi$ & iff & $\exists P \pathtau P', P' \models_Y \varphi$ \\
        $P \models \varphi\langle\epsilon_X\rangle\varphi'$ & iff & \multicolumn{4}{l}{$\exists P \pathtau P_1 \step{\rt} P_2 \pathtau P_3 \step{\rt} ... \pathtau P_{2r{-}1} \step{\opt{\rt}} P_{2r}$ with $r>0$, such that}\\
         & & \multicolumn{4}{l}{$\forall i \in [1,2r{-}1]\; P_{i} \models_X \varphi \wedge P_{2r} \models_X \varphi'$ and}\\
         & & \multicolumn{4}{l}{$\forall i \in [0,r{-}1]\; \deadend{P_{2i+1}}{X}$} \\
        $P \models_Y \varphi\langle\epsilon_X\rangle\varphi'$ & iff & \multicolumn{4}{l}{$\exists P \pathtau P_1 \step{\rt} P_2 \pathtau P_3 \step{\rt} ... \pathtau P_{2r{-}1} \step{\opt{\rt}} P_{2r}$ with $r>0$, such that}\\
        & & \multicolumn{4}{l}{$\forall i \in [1,2r{-}1]\; P_{i} \models_X \varphi \wedge P_{2r} \models_X \varphi'$ and}\\
        & & \multicolumn{4}{l}{$\deadend{P_{1}}{Y} \wedge \forall i \in [0,r{-}1]\; \deadend{P_{2i+1}}{X}$} \\
        $P \models \langle \rt_X\rangle\varphi$ & iff & \multicolumn{4}{l}{$\deadend{P}{X} \wedge \exists P \step{\rt} P', P' \models_X \varphi$} \\
    \end{tabular}
\end{center}
\caption{Semantics of $\models$ and $(\models_Y)_{Y \subseteq A}$ for the derived modalities}
\label{tab:operator semantics}
\vspace{-2ex}
\end{table}

\begin{definition}\rm \label{def:subclass}
    The sub-classes $\logic_b$ and $\logic_b^r$ are defined as follows, where $I$ is an index set, $\alpha \in A_\tau$, $X \subseteq A$, $\varphi, \varphi' \in \logic_b$ and $\psi \in \logic_b^r$,
    \begin{align*}
        \varphi &::= \top \mid \bigwedge_{i \in I} \varphi_i \mid \neg\varphi \mid \langle\epsilon\rangle(\varphi\langle\hat{\alpha}\rangle\varphi') \mid \varphi\langle\epsilon_X\rangle\varphi' \mid \langle\epsilon\rangle\neg\langle\tau\rangle\top \tag{$\logic_b$} \\
        \psi &::= \top \mid \bigwedge_{i \in I} \psi_i \mid \neg\psi \mid \langle\alpha\rangle\varphi \mid \langle \rt_X\rangle\varphi \tag{$\logic_b^r$}
    \end{align*}
\end{definition}

\noindent
The last option for $\logic_b$, inspired by \cite{modalstab}, is used to encompass the stability respecting Clause 2.e of Definition~\ref{def:intuitive}.

\begin{theorem} \label{thm:modal characterisation}
    Let $P,Q \in \closed$. For all $X \subseteq A$,
    \begin{itemize}
        \item $P \bisimtbr Q$ iff $\forall \varphi \in \logic_b, P \models \varphi \Leftrightarrow Q \models \varphi$,
        \item $P \bisimtbr[X] Q$ iff $\forall \varphi \in \logic_b, P \models_X \varphi \Leftrightarrow Q \models_X \varphi$,
        \item $P \bisimrtbr Q$ iff $\forall \psi \in \logic_b^r, P \models \psi \Leftrightarrow Q \models \psi$,
        \item $P \bisimrtbr[X] Q$ iff $\forall \psi \in \logic_b^r, P \models_X \psi \Leftrightarrow Q \models_X \psi$.
    \end{itemize}
\end{theorem}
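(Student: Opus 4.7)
\medskip

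The plan is to prove all four equivalences simultaneously via the standard two-direction argument for Hennessy--Milner style theorems, working relative to the generalised [rooted] \tb reactive bisimulation characterisation from Appendix~\ref{app:gbrb}, since that formulation matches the shape of the modalities more directly.

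For the soundness direction ($\Rightarrow$), I would induct on the structure of $\varphi \in \logic_b$ (and $\psi \in \logic_b^r$). The cases $\top$, $\bigwedge_{i \in I}\varphi_i$, and $\neg\varphi$ are immediate from the induction hypothesis. For $\langle\alpha\rangle\varphi$ and $\langle\rt_X\rangle\varphi$ in $\logic_b^r$, the strong matching in Clauses~1.a and~2.d of Definition~\ref{def:rooted intuitive} gives a direct witness, after invoking the unrooted characterisation (which must therefore be proved first, or proved in parallel) on the successor states. For $\langle\epsilon\rangle(\varphi\langle\hat{\alpha}\rangle\varphi')$ in $\logic_b$, I would use the generalised Clause~1.a (or 2.a/2.b) to propagate the $\tau^*$-prefix and match the $\opt{\alpha}$-step, using Lemma~\ref{lem:obvious}.1 to keep the intermediate $\tau$-states related. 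For $\varphi\langle\epsilon_X\rangle\varphi'$, unfold the disjunction to $\varphi\langle\epsilon_X\rangle^{(i)}\varphi'$ and match level by level using Clauses~2.c and~2.d, noting that the footnote after Definition~\ref{def:intuitive} together with Clause~2.c ensures that all intermediate states $Q_{2i+1}$ are stable with $\deadend{Q_{2i+1}}{X}$, so that the sub-formula $\varphi$ is indeed satisfied there at level $X$. The stability formula $\langle\epsilon\rangle\neg\langle\tau\rangle\top$ is handled by Clause~2.e. Throughout, the reductions between $\models$ and $\models_X$ are governed by Clause~2.c and the convention that $\models_Y$ collapses to $\models$ at stable states with $\init{P}\cap(Y\cup\{\tau\})=\emptyset$.

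For the completeness direction ($\Leftarrow$), I would define
\[
  \R \;:=\; \{(P,Q) \mid P\equiv_{\logic_b} Q\}\cup\{(P,X,Q)\mid P\equiv_{X,\logic_b} Q\}
\]
and verify that $\R$ is a \tb reactive bisimulation by checking each clause of Definition~\ref{def:intuitive} via contradiction: if the matching clause fails, exhibit a distinguishing formula satisfied by one side but not the other. Clauses~1.a, 2.a, 2.b are treated by the modality $\langle\epsilon\rangle(\varphi\langle\hat{\alpha}\rangle\varphi')$, taking as $\varphi,\varphi'$ characterising conjunctions of the equivalence classes of the witness states; Clause~2.c uses the fact that a triple-level formula $X\varphi$ in $\logic_b$ collapses into a pair-level test after a $\tau^*$ passage; Clause~2.e uses $\langle\epsilon\rangle\neg\langle\tau\rangle\top$; and the rooted clauses of Definition~\ref{def:rooted intuitive} use $\langle\alpha\rangle\varphi$ and $\langle\rt_X\rangle\varphi$ after invoking the unrooted characterisation on successors. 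The rooted parts therefore hinge on the unrooted parts, motivating a proof order in which parts 1 and 2 are completed first, then used as a black box for parts 3 and 4.

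The main obstacle, as expected, is Clause~2.d, because the matching $\rt$-path has unbounded length $r$ and the related states can shift through arbitrarily many equivalence classes. The modality $\varphi\langle\epsilon_X\rangle\varphi' = \bigvee_{i\in\nat}\varphi\langle\epsilon_X\rangle^{(i)}\varphi'$ was designed precisely to express the existence of such a path of \emph{some} finite length, and the completeness argument for this clause is the delicate one: given $\deadend{P}{X}$, $P\step{\rt} P'$ and $(P,X,Q)\in\R$, I would suppose for contradiction that no admissible path $Q\pathtau Q_1\step{\rt}\cdots\step{\opt{\rt}}Q_{2r}$ respects the required relatedness with $P,P'$, then build, for each candidate length $r$ and each candidate endpoint, a characterising conjunction $\varphi_r$ from $\logic_b$ distinguishing the failure, and combine them into a single formula $\bigvee_r \varphi_r\langle\epsilon_X\rangle^{(r)}\varphi'$ satisfied by $P$ but not $Q$. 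Making this diagonal construction go through on the nose --- in particular keeping track of the asymmetric roles of even- and odd-indexed states, and ensuring the inner sub-formulas genuinely live in $\logic_b$ rather than $\logic$ --- is the technical heart of the completeness proof; the other clauses are routine variations of classical HM completeness arguments.
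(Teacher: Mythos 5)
Your overall architecture is the same as the paper's: soundness by simultaneous structural induction on $\logic_b$ and $\logic_b^r$, and completeness by showing that the logical-equivalence relations (on pairs and on triples) form a bisimulation, using closure of $\logic_b$ under negation and arbitrary conjunction to build one characterising formula per set of ``bad'' witness states. However, two points in your completeness direction are genuine gaps. First, you announce the generalised characterisation of Appendix~\ref{app:gbrb} but then say you will verify the clauses of Definition~\ref{def:intuitive}. These are not interchangeable here: Definition~\ref{def:intuitive} contains Clause~1.b (every pair must also be related as a triple, for every $Y$) and Clause~2.c (an $X$-idle $P$ forces a path $Q \pathtau Q_0$ with $(P,Q_0)$ related as a \emph{pair}), and with $\R$ built from modal equivalence these obligations read ``$\models$-equivalence implies $\models_Y$-equivalence'' and ``$\models_X$-equivalence of $P,Q$ yields a $\tau$-descendant of $Q$ that is $\models$-equivalent to $P$''. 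Neither is delivered by your sketch: your remark that ``a triple-level formula $X\varphi$ in $\logic_b$ collapses into a pair-level test'' has nothing to act on, since $X\varphi$ is not a production of $\logic_b$, and a direct proof of these implications threatens circularity with the theorem itself. The paper's proof checks the generalised definition precisely because it has no Clause~1.b or 2.c: triples arise only after a $\rt$-step, where the characterising-formula argument supplies them. You should commit to Definitions~\ref{def:generalised} and~\ref{def:generalised rooted} throughout.

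Second, your treatment of the time-out clause --- proof by contradiction with formulas $\varphi_r$ indexed by candidate path length, combined into $\bigvee_r \varphi_r\langle\epsilon_X\rangle^{(r)}\varphi'$ --- is both unnecessary and not correctly set up (it is unclear what each $\varphi_r$ characterises, and the endpoint formula must itself be a characterising formula for $P'$, not the original $\varphi'$). The clause is handled by exactly the same pattern as the others: let $\mathcal{Q}^\dag$ be the set of \emph{all} states reachable from $Q$ by paths of the form $\pathtau(\step{\rt}\pathtau)^*$ that are not $\models_X$-equivalent to $P$, and $\mathcal{Q}^\ddag$ the set of candidate endpoints not $\models_X$-equivalent to $P'$; a single conjunction over each set gives $\varphi^\dag,\varphi^\ddag \in \logic_b$ with $P \models_X \varphi^\dag$, $P' \models_X \varphi^\ddag$, and all bad states failing them. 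Then $P \models \varphi^\dag\langle\epsilon_X\rangle\varphi^\ddag$ via its own one-step path $P \step{\rt} P'$, hence $Q$ satisfies it, and the semantics of the derived modality --- which already contains the disjunction over all path lengths and the idling conditions on the odd-indexed states --- hands you the required matching path with all intermediate states $\models_X$-equivalent to $P$ and the endpoint $\models_X$-equivalent to $P'$. No per-length case analysis and no diagonalisation is needed; the clause is no harder than the action clauses.
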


\begin{proof}
    $(\Rightarrow)$ The four propositions are proven simultaneously by structural induction on $\logic_b$ and $\logic_b^r$ in Appendix \ref{app:modal}.

    $(\Leftarrow)$ Let $\equiv \; := \{(P,Q) \mid \forall \varphi \in \logic_{b}, P \models \varphi \Leftrightarrow Q \models \varphi\} \cup \{(P,X,Q) \mid \forall \varphi \in \logic_{b}, P \models_X \varphi \Leftrightarrow Q \models_X \varphi\}$, and $\equiv^r \; := \{(P,Q) \mid \forall \psi \in \logic_{b}^r, P \models \psi \Leftrightarrow Q \models \psi\} \cup \{(P,X,Q) \mid \forall \psi \in \logic_{b}^r, P \models_X \psi \Leftrightarrow Q \models_X \psi\}$. It suffices to check that $\equiv$ [resp.\ $\equiv^r$] is a generalised [rooted] \tb reactive bisimulation. This is done in Appendix \ref{app:modal}.
\end{proof}

\section{Process Algebra and Congruence} \label{sec:congruence}
\label{sec:process algebra}

The process algebra $\ccsp$ is composed of classical operators from the well-known process algebras CCS \cite{Mi90ccs}, CSP \cite{BHR84,OH86} and ACP \cite{BW90,Fok00}, as well as the time-out action $\rt$ and two \emph{environment operators} from \cite{strongreactivebisimilarity}, that were added in order to enable a complete axiomatisation.

\begin{definition}\rm \label{def:ccsp syntax}
    Let $V$ be a countable set of variables, the \emph{expressions} of $\ccsp$ are recursively defined as follows:
    \begin{align*}
        E ::= 0 \mid x \mid \alpha.E \mid E + F \mid E \parallel_S F \mid \tau_I(E) \mid \rename(E) \mid \theta_L^U(E) \mid \psi_X(E) \mid \langle y | \equa \rangle
    \end{align*}
    where $x \in V$, $\alpha \in Act$, $S, I, L, U, X \subseteq A$, $L \subseteq U$, $\rename \subseteq A\times A$, $\equa$ is a \emph{recursive specification}: a set of equations $\{x\mathbin=\equa_x \mid x \mathbin\in V_\equa\}$ with $V_\equa \subseteq V$ and each $\equa_x$ a $\ccsp$ expression, and $y\mathbin\in V_\equa$.\linebreak[3] We require that all sets ${\{b\mid (a,b)\in \rename\}}$ for $a\in A$ are finite.
\end{definition}

\noindent
$0$ stands for a system which cannot perform any action. The expression $\alpha.E$ represents a system that first performs $\alpha$ and then $E$. The expression $E + F$ represents a choice to behave like $E$ or $F$. The parallel composition $E \parallel_S F$ synchronises the execution of $E$ and $F$, but only when performing actions in $S$. $\tau_I(E)$ represents the system $E$ where all actions $a \mathbin\in I$ are transformed into~$\tau$. The operator $\rename$ renames a given action $a\mathbin\in A$ into a choice between all actions $b$ with $(a,b)\mathbin\in \rename$. $\langle y | \equa\rangle$ is the $y$-component of a solution of $\equa$. 

$\ccsp$ also has two environment operators that help to develop a complete axiomatisation (like the left merge for ACP). $\theta_L^U(E)$ is the expression $E$ plunged into an environment $X$ such that $L \subseteq X \subseteq U$. $\theta_X^X(E)$ is denoted $\theta_X(E)$. $\psi_X(E)$ plunges $E$ into the environment $X$ if a time-out occurs, but, has no effect if any other action is performed. The operational semantics of $\ccsp$ is given in Figure~\ref{fig:ccsp semantics}. All operators except the environment ones follow the semantics of CCS, CSP or ACP\@. As $\theta_L^U(E)$ simulates the expression $E$ plunged in an environment $L \subseteq X \subseteq U$, it has no effect on $\tau$-transitions, which do not trigger the environment. Moreover, $\theta_L^U$ restricts the ability to perform visible actions to those allowed by the environment (i.e.\ included in $U$) and performing these actions triggers the environment. However, if the expression idles (i.e.\ $\deadend{E}{L}$) then it might trigger the environment and $\theta_L^U(E)$ acts like $E$. $\psi_X(E)$ supposes that time-outs are performed while the environment allows $X$, thus, it has no effect on actions that are not $\rt$. However, if $E$ can perform a time-out while the environment allows $X$ (i.e. $\deadend{E}{X}$) then $\psi_X(E)$ can perform the time-out while plunging the expression in the environment $X$.

\begin{figure}[ht]
\centering
    \begin{prooftree}
        \AxiomC{\textcolor{white}{$x \step{\alpha} y$}}
        \UnaryInfC{$\alpha.x \step{\alpha} x$}
        \DisplayProof
        \hskip 2em
        \AxiomC{$x \step{\alpha} x'$}
        \UnaryInfC{$x + y \step{\alpha} x'$}
        \DisplayProof
        \hskip 2em
        \AxiomC{$y \step{\alpha} y'$}
        \UnaryInfC{$x + y \step{\alpha} y'$}
    \end{prooftree}
    
    \begin{prooftree}
        \AxiomC{$x \step{a} x' \wedge \rename(a,b)$}
        \UnaryInfC{$\rename(x) \step{b} \rename(x')$}
        \DisplayProof
        \hskip 2em
        \AxiomC{$x \steptau x'$}
        \UnaryInfC{$\rename(x) \steptau \rename(x')$}
        \DisplayProof
        \hskip 2em
        \AxiomC{$x \step{\rt} x'$}
        \UnaryInfC{$\rename(x) \step{\rt} \rename(x')$}
    \end{prooftree}

    \begin{prooftree}
        \AxiomC{$x \step{\alpha} x' \wedge \alpha \not\in S$}
        \UnaryInfC{$x\parallel_S y \step{\alpha} x' \parallel_S y$}
        \DisplayProof
        \hskip 2em
        \AxiomC{$y \step{\alpha} y' \wedge \alpha \not\in S$}
        \UnaryInfC{$x \parallel_S y \step{\alpha} x \parallel_S y'$}
        \DisplayProof
        \hskip 2em
        \AxiomC{$x \step{a} x' \wedge y \step{a} y' \wedge a \in S$}
        \UnaryInfC{$x \parallel_S y \step{a} x' \parallel_S y'$}
    \end{prooftree}

    \begin{prooftree}
        \AxiomC{$x \step{\alpha} x' \wedge \alpha \not\in I$}
        \UnaryInfC{$\tau_I(x) \step{\alpha} \tau_I(x')$}
        \DisplayProof
        \hskip 2em
        \AxiomC{$x \step{a} x' \wedge a \in I$}
        \UnaryInfC{$\tau_I(x) \steptau \tau_I(x')$}
        \DisplayProof
        \hskip 2em
        \AxiomC{$\langle \equa_x | \equa \rangle \step{\alpha} x'$}
        \UnaryInfC{$\langle x | \equa\rangle \step{\alpha} x'$}
    \end{prooftree}

    \begin{prooftree}
        \AxiomC{$x \steptau x'$}
        \UnaryInfC{$\theta_L^U(x) \steptau \theta_L^U(x')$}
        \DisplayProof
        \hskip 2em
        \AxiomC{$x \step{a} x' \wedge a \in U$}
        \UnaryInfC{$\theta_L^U(x) \step{a} x'$}
        \DisplayProof
        \hskip 2em
        \AxiomC{$x \step{\alpha} x' \wedge \alpha \neq t$}
        \UnaryInfC{$\psi_X(x) \step{\alpha} x'$}
    \end{prooftree}

    \begin{prooftree}
        \AxiomC{$x \step{\alpha} x' \wedge \deadend{x}{L}$}
        \UnaryInfC{$\theta_L^U(x) \step{\alpha} x'$}
        \DisplayProof
        \hskip 2em
        \AxiomC{$x \step{\rt} x' \wedge \deadend{x}{X}$}
        \UnaryInfC{$\psi_X(x) \step{\rt} \theta_X(x')$}
    \end{prooftree}
    \caption{Operational semantics of $\ccsp$}
    \label{fig:ccsp semantics}
\end{figure}

All $\equa_x$ are considered to be sub-expressions of $\langle y | \equa\rangle$. An occurrence of a variable $x$ is \emph{bound} in $E \in \ccsp$ iff it occurs in a sub-expression $\langle y |\equa\rangle$ of $E$ such that $x \in V_\equa$; otherwise it is \emph{free}. An expression $E$ is \emph{invalid} if it has a sub-expression $\theta_L^U(F)$ or $\psi_X(F)$ such that a variable occurrence is free in $F$, but bound in $E$. An example justifying this condition can be found in \cite{strongreactivebisimilarity}. The set of valid expressions of $\ccsp$ is denoted $\expr$. If an expression is valid and all of its variable occurrences are bound then it is \emph{closed} and we call it a \emph{process}; the set of processes is denoted $\closed$.

A \emph{substitution} is a partial function $\rho: V \rightharpoonup E$. The application $E[\rho]$ of a substitution $\rho$ to an expression $E \in \expr$ is the result of the simultaneous replacement, for all $x \in \dom{\rho}$, of each free occurrence of $x$ by the expression $\rho(x)$, while renaming bound variables to avoid name clashes. We write $\langle E|\equa\rangle$ for the expression $E$ where any $y \in V_\equa$ is substituted by $\langle y | \equa\rangle$. 

\subsection{Time-out Bisimulation}

Thanks to the environment operator $\theta_L^U$, it is possible to express our bisimilarity in a much more succinct way. Indeed, $\theta_X$ was defined so that $P \bisimtbr[X] Q$ if and only if $\theta_X(P) \bisimtbr \theta_X(Q)$.

\begin{definition}\rm \label{def:time-out bisim}
    A \emph{\tb time-out bisimulation} is a symmetric relation ${\tbisim} \subseteq \closed\times\closed$ such that, for all $P,Q \in \closed$, if $P \tbisim Q$ then
    \begin{enumerate}
        \item if $P \step{\alpha} P'$ with $\alpha \in A_\tau$ then there is a path $Q \pathtau Q_1 \step{\opt{\alpha}} Q_2$ with $P \tbisim Q_1$ and $P' \tbisim Q_2$
        \item if $\deadend{P}{X}$ and $P \step{\rt} P'$ then there is a path $Q \pathtau Q_1
          \step{\rt} Q_2 \pathtau Q_3 \step{\rt} ... \pathtau Q_{2r{-}1} \step{\opt{\rt}} Q_{2r}$
          with $r>0$, such that $Q_1\nsteptau$, $\forall i \in [1,r{-}1],\; \theta_X(P) \tbisim \theta_X(Q_{2i})\linebreak[2] \wedge\linebreak[2] \deadend{Q_{2i+1}}{X}$ and $\theta_X(P') \tbisim \theta_X(Q_{2r})$
        \item if $P \nsteptau$ then there is a path $Q \pathtau Q_0 \nsteptau$.
    \end{enumerate}
\end{definition}

\noindent
Note that in Condition~2 above one also has $P \tbisim Q_1$ and consequently $\deadend{Q_{1}}{X}$. A rooted version of branching time-out bisimulation can be defined in the same vein.

\begin{definition}\rm \label{def:rooted time-out bisim}
    A \emph{rooted \tb time-out bisimulation} is a symmetric relation ${\tbisim} \subseteq \closed\times\closed$ such that, for all $P,Q \in \closed$ such that $P \tbisim Q$,
    \begin{enumerate}
        \item if $P \step{\alpha} P'$ with $\alpha \in A_\tau$ then there is a step $Q \step{\alpha} Q'$ such that $P' \bisimtbr Q'$
        \item if $\init{P}\cap(X\cup\{\tau\}) \mathbin= \emptyset$ and $P \mathbin{\step{\rt}} P'$ then there is a step $Q \mathbin{\step{\rt}} Q'$ such that $\theta_X(P') \mathbin{\bisimtbr} \theta_X(Q').$
    \end{enumerate}
\end{definition}

\begin{proposition} \label{prop:time-out bisim}
    Let $P,Q \in \closed$, 
    \begin{enumerate}
        \item $P \bisimtbr Q$ (resp.\ $P \bisimtbr[X] Q$) iff there exists a \tb time-out bisimulation $\tbisim$ with $P \tbisim Q $ (resp.\ $(\theta_X(P) \tbisim \theta_X(Q)$),
        \item $P \bisimtbr[X] Q$ if and only if $\theta_X(P) \bisimtbr \theta_X(Q)$,\label{corr}
        \item $P \bisimrtbr Q$ (resp.\ $P \bisimrtbr[X] Q$) iff there exists a rooted \tb time-out bisimulation $\tbisim$ with $P \tbisim Q $ (resp.\ $(\theta_X(P) \tbisim \theta_X(Q)$).
    \end{enumerate}
\end{proposition}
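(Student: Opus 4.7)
The plan is to exploit the operational semantics of $\theta_X$, under which $\theta_X(R)$ faithfully represents $R$ placed in an environment permitting exactly the set $X$: $\tau$-moves stay inside the wrapper, visible $a$-moves unwrap when either $a \in X$ or $\deadend{R}{X}$, and $\rt$-moves unwrap only when $\deadend{R}{X}$. This is the correspondence between Definitions~\ref{def:intuitive} and~\ref{def:time-out bisim} on which all three claims rest.

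For Part~1 in the forward direction, given a branching reactive bisimulation $\R$ witnessing $P \bisimtbr Q$ (respectively, $P \bisimtbr[X] Q$), I would set
\[
{\tbisim} \,:=\, \{(R_1,R_2) \mid \R(R_1,R_2)\} \,\cup\, \{(\theta_Y(R_1),\theta_Y(R_2)) \mid Y \subseteq A,\; \R(R_1,Y,R_2)\}
\]
(suitably closed so as to absorb repeated $\theta$-wrappings) and verify Definition~\ref{def:time-out bisim}. For first-kind pairs, Clause~1 of~\ref{def:time-out bisim} reduces to Clause~1.a of~\ref{def:intuitive}, Clause~2 follows by passing to the triple via Clause~1.b and applying Clause~2.d (using Lemma~\ref{lem:obvious}.4 to guarantee the first intermediate state is $\tau$-stable), and Clause~3 comes from Clauses~1.b and~2.e. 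For second-kind pairs, $\tau$- and $a \in Y$-transitions match via Clauses~2.a and~2.b, while transitions firing because $\deadend{R_1}{Y}$ are routed through Lemma~\ref{lem:obvious}.4 back to Clause~1.a. For the reverse direction, given a time-out bisimulation $\tbisim$ with $P \tbisim Q$, I would put
\[
\R(P',Q') \iff P' \tbisim Q', \qquad \R(P',X,Q') \iff \theta_X(P') \tbisim \theta_X(Q')
\]
and deduce each clause of Definition~\ref{def:intuitive} from Definition~\ref{def:time-out bisim} applied to the corresponding wrapped pair, using $\theta_X$'s semantics to translate transitions of $P'$ into transitions of $\theta_X(P')$. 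Part~2 is then immediate from these mutual constructions, and Part~3 follows the same outline for Definitions~\ref{def:rooted intuitive} and~\ref{def:rooted time-out bisim}, with strict first-step matching.

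The main obstacle is Clause~2 of Definition~\ref{def:time-out bisim} applied to a second-kind pair $(\theta_Y(R_1),\theta_Y(R_2))$: after $\theta_Y(R_1) \step{\rt} R_1'$, the clause demands $\theta_X(\theta_Y(R_1)) \tbisim \theta_X(S)$ for the $X$ supplied by the outer side-condition $\deadend{\theta_Y(R_1)}{X}$, which need not coincide with $Y$. Bridging $\theta_X(\theta_Y(R_1))$ and $\theta_X(R_1)$, and lifting the triple $\R(R_1,Y,R_2)$ to $\R(R_1,X,R_2)$ before invoking Clause~2.d of~\ref{def:intuitive}, is the delicate point. This is exactly why $\bisimtbr$ and $\bisimrtbr$ are first recast as generalised branching reactive bisimulations in Appendix~\ref{app:gbrb}: that relaxed formulation absorbs nested $\theta$-wrappings and makes the mutual bookkeeping close.
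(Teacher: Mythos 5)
Your skeleton is the paper's own: from a branching reactive bisimulation $\R$ you build $\tbisim := \{(P,Q) \mid \R(P,Q)\} \cup \{(\theta_X(P),\theta_X(Q)) \mid \R(P,X,Q)\}$, conversely you build $\R := \{(P,Q) \mid P \tbisim Q\} \cup \{(P,X,Q) \mid \theta_X(P) \tbisim \theta_X(Q)\}$, Part 2 is read off from Part 1, and Part 3 is the same scheme with strict first-step matching. That is exactly how Appendix~\ref{app:time-out} proceeds, and your clause-by-clause plan for first-kind pairs matches the paper's.

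The one substantive problem is that you assign the generalised bisimulations of Appendix~\ref{app:gbrb} the wrong job, so the delicate point you correctly flag is not actually discharged by your proposal. The generalised form is indispensable in the \emph{converse} direction: from $P \tbisim Q$ alone one cannot conclude $\theta_Y(P) \tbisim \theta_Y(Q)$, so the relation $\R$ you define cannot satisfy Clause 1.b of Definition~\ref{def:intuitive}; Definition~\ref{def:generalised} drops precisely that clause (triples arise only after a $\rt$-step, where Clause 2 of Definition~\ref{def:time-out bisim} hands you the $\theta_X$-wrapped pairs), and Proposition~\ref{prop:generalised} transfers the result back. It also streamlines the rooted forward direction, where Clause 2.c of Definition~\ref{def:generalised rooted} performs your ``lifting from $Y$ to $X$'' in one step; in the unrooted forward direction the plain Clauses 2.c and 1.b of Definition~\ref{def:intuitive} together with Lemma~\ref{lem:obvious} already suffice. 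What the generalised form does \emph{not} do is bridge $\theta_X(\theta_Y(R_1))$ and $\theta_X(R_1)$: that mismatch lives on the time-out-bisimulation side of the correspondence, and ``suitably closing'' $\tbisim$ under repeated wrappings only regresses, since the doubly wrapped pairs then need their own clause verification. The actual bridge is that a $\rt$-transition of $\theta_Y(R_1)$ forces $\deadend{R_1}{Y}$, whence $\theta_Y(R_1) \bisim R_1$ and, by congruence of $\bisim$, $\theta_X(\theta_Y(R_1)) \bisim \theta_X(R_1)$; one then either verifies $\tbisim$ as a time-out bisimulation up to $\bisim$ and applies Proposition~\ref{prop:up to} to extract an honest time-out bisimulation, or accepts the required intermediate pairs modulo $\bisim$. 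Without some such step your verification of Clause 2 of Definition~\ref{def:time-out bisim} at second-kind pairs does not close, and pointing at Appendix~\ref{app:gbrb} does not repair it.
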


\begin{proof}
  Note that Proposition~\ref{prop:time-out bisim}.\ref{corr} is a trivial corollary of \ref{prop:time-out bisim}.1.
  
  Let $\R$ be a [generalised rooted] \tb reactive bisimulation, let's define ${\tbisim} := \{(P,Q) \mid \R(P,Q)\} \cup \{(\theta_X(P),\theta_X(Q)) \mid \R(P,X,Q)\}$. $\tbisim$ is a [rooted] \tb time-out bisimulation, as proven in Appendix \ref{app:time-out}. Let $\tbisim$ be a [rooted] \tb time-out bisimulation, let's define $\R = \{(P,Q) \mid P \tbisim Q\} \cup \{(P,X,Q) \mid \theta_X(P) \tbisim \theta_X(Q)\}$. $\R$ is a [rooted] generalised \tb reactive bisimulation, as proven in Appendix \ref{app:time-out}.
\end{proof}

\noindent
Time-out bisimulations are very practical as there are no triplets to deal with anymore.

\subsection{Congruence}

Until now, bisimilarity was only defined between closed expressions, but any relation ${\sim} \subseteq \closed\times\closed$ can be extended to $\expr\times\expr$ in the following way: $E \sim F$ iff $\forall \rho: V \rightarrow \closed,\; E[\rho] \sim F[\rho]$. It can be extended further to substitutions $\rho, \nu \in V \rightharpoonup \expr$ by $\rho \sim \nu$ iff $\dom{\rho} = \dom{\nu}$ and $\forall x \in \dom{\rho},\; \rho(x) \sim \nu(x)$.

\begin{definition}\rm \label{def:congruence}
    An equivalence ${\sim} \subseteq \expr\times\expr$ is a congruence for an $n$-ary operator $f$ if $P_i \sim Q_i$ for all $i=0,\dots,n{-}1$ implies $f(P_0,...,P_{n-1}) \sim f(Q_0,...,Q_{n-1})$. It is a \emph{lean congruence} if, for all $E \in \expr$ and all $\rho, \nu \in V \rightharpoonup \expr$ such that $\rho \sim \nu$, $E[\rho] \sim E[\nu]$. It is a \emph{full congruence} if 
    \begin{enumerate}
        \item it is a congruence for all operators in the language, and
        \item for all recursive specifications $\equa, \equa'$ with $V_\equa = V_{\equa'}$ and $x \in V_\equa$ such that $\langle x|\equa\rangle,\langle x|\equa'\rangle \in \closed$, if $\forall y \in V_\equa,\; \equa_y \sim \equa'_y$ then $\langle x|\equa\rangle \sim \langle x|\equa'\rangle$.
    \end{enumerate}
\end{definition}

\noindent
To show that $\sim$ is a lean congruence it suffices to restrict attention to closed substitutions $\rho, \nu \in V \rightarrow \closed$, because the general property will then follow by composition of substitutions. A full congruence is a lean congruence, and a lean congruence is a congruence for all operators in the language, but both implications are strict, as shown in \cite{vG17b}.

To show that $\bisimrtbr$ and $\bisimrtb$ are full congruences, it is first necessary to prove that $\bisimtbr$ and $\bisimtb$ are congruences for some of the operators of $\ccsp$.

\begin{proposition} \label{prop:stability}
    $\bisimtbr$ and $\bisimtb$ are congruences for action prefixing, parallel composition, abstraction, renaming and the environment operator $\theta_L^U$, for all $L \subseteq U \subseteq A$.
\end{proposition}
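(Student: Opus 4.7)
The plan is to reduce the problem for $\bisimtbr$ to one about \tb time-out bisimulations via Proposition~\ref{prop:time-out bisim}, which replaces the reactive bisimulation relating pairs and triples by a purely binary relation. For each operator $f$ and each bisimulation $\tbisim$ (a \tb time-out bisimulation for $\bisimtbr$, a $\rt$-branching bisimulation for $\bisimtb$), I would consider the candidate relation
\[ \tbisim_f := \{(f(\vec{P}), f(\vec{Q})) \mid P_i \tbisim Q_i \text{ for all } i\} \cup {\tbisim} \]
and verify that $\tbisim_f$ is itself a bisimulation of the appropriate kind. In the case of $\bisimtb$, the two extra action families $\rt_\epsilon$ and $\epsilon_X$ of Definition~\ref{def:non-reactive} fall under Clause~1 and can be treated exactly like visible actions.

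For action prefixing the verification is immediate, since there is a unique SOS rule. For renaming $\rename$ and abstraction $\tau_I$, every transition of $\rename(P)$ or $\tau_I(P)$ descends from a unique transition of $P$, and the matching path lifts through the operator. The only noteworthy case is when $\tau_I$ turns an $a$-transition (with $a\in I$) into a $\tau$-transition: the $a$-clause of $\tbisim$ produces $Q \pathtau Q_1 \step{\opt a} Q_2$, which lifts to $\tau_I(Q) \pathtau \tau_I(Q_1) \step{\opt{\tau}} \tau_I(Q_2)$, satisfying Clause~1 of Definition~\ref{def:time-out bisim}. Idling through $\tau_I$ unfolds to $\deadend{P}{X\cup I}$ and transports through $\tbisim$. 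For $\theta_L^U$, the $\tau$-rule keeps the $\theta_L^U$-wrapper, so the successor states remain related through $\tbisim_\theta$; the two rules producing a visible or $\rt$ transition drop the wrapper, so the resulting states are related by $\tbisim$ itself, which is already contained in $\tbisim_\theta$. The side conditions $a\in U$ and $\deadend{x}{L}$ are preserved by $\tbisim$ (using Lemma~\ref{lem:obvious} via the translation in Proposition~\ref{prop:time-out bisim}).

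Parallel composition is the subtle case. Define $\tbisim_{\parallel} := \{(P_1\parallel_S P_2,\, Q_1\parallel_S Q_2) \mid P_1 \tbisim Q_1 \wedge P_2 \tbisim Q_2\} \cup {\tbisim}$. Asynchronous $\alpha$-moves lift the matching path of one component while keeping the other fixed. Synchronised $a$-moves (with $a\in S$) require matching paths from both components, which are then interleaved into a single path of the product; the intermediate states stay in $\tbisim_{\parallel}$ thanks to the time-out analogue of Lemma~\ref{lem:obvious}.1 applied to the side that stutters. For Clause~2 a $\rt$-move comes from a single component, since $\rt\notin S$; the idling condition $\deadend{P_1\parallel_S P_2}{X}$ projects to $\deadend{P_1}{X'}$ for a set $X' \supseteq X$ determined by $\init{P_2}$ (actions in $S\setminus\init{P_2}$ are effectively blocked for $P_1$), and the $\rt$-matching clause of $P_1 \tbisim Q_1$ must then be invoked at environment $X'$. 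The resulting alternating $\rt$/$\tau$-chain in $Q_1$ lifts into $Q_1\parallel_S Q_2$ by keeping $Q_2$ fixed. Stability is routine: if $P_1\parallel_S P_2 \nsteptau$ then each $P_i \nsteptau$, and stable matches of each component compose to a stable state of the product.

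The main obstacle will be the bookkeeping for parallel composition: correctly computing $X'$ from $X$ and $\init{P_2}$ (which itself depends on the matched $Q_2$ via Lemma~\ref{lem:obvious}.2), threading it through the alternating $\rt$- and $\tau$-steps of Clause~2 of Definition~\ref{def:time-out bisim}, and maintaining the invariant that all intermediate states lie in $\tbisim_{\parallel}$. I expect the proof for $\bisimtb$ to be strictly simpler than that for $\bisimtbr$, as there is no environment parameter to juggle, and the only genuinely new content compared with the classical branching-bisimulation congruence proofs is the Clause~2 matching for $\rt$, which is already handled componentwise.
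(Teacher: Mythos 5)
Your overall strategy (turn the problem into one about \tb time-out bisimulations via Proposition~\ref{prop:time-out bisim}, close a candidate relation under the operator, verify the clauses of Definition~\ref{def:time-out bisim}) is the same as the paper's, but there is a genuine gap in how you intend to discharge Clause~2 for the contexts $\parallel_S$, $\tau_I$ and $\rename$: your candidate relations simply do not contain the pairs that Clause~2 demands. Concretely, take $P=P_1\parallel_S P_2$ with $\deadend{P}{X}$ and $P\step{\rt}P_1\parallel_S P_2'$. As you say, the component hypothesis must be invoked at the modified environment $X'=X\setminus(S\setminus\init{P_1})$, and it yields only $\theta_{X'}(P_2')\tbisim\theta_{X'}(Q_{2r})$ (and similarly for the intermediate states). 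But Clause~2 for the composite requires relating $\theta_X(P_1\parallel_S P_2')$ with $\theta_X(Q_1^0\parallel_S Q_{2r})$, where the outer $\theta_X$ wraps the whole parallel composition; this pair is not of the form $f(\vec P)\mathrel{\tbisim_f}f(\vec Q)$ with componentwise related arguments, nor is it obtainable from your relation, because $\theta_X$ does not distribute syntactically over $\parallel_S$ and the inner components are related only under $\theta_{X'}$, not on the nose. The paper bridges exactly this mismatch with the strong-bisimilarity identities of Lemma~\ref{lem:strong identities}, e.g.\ $\theta_X(P_1\parallel_S P_2')\bisim\theta_X(P_1\parallel_S\theta_{X'}(P_2'))$, $\theta_X(\tau_I(P'))\bisim\theta_X(\tau_I(\theta_{X\cup I}(P')))$ and $\theta_X(\rename(P'))\bisim\theta_X(\rename(\theta_{\rename^{-1}(X)}(P')))$, and to use them it must weaken the verification target to a \tb time-out bisimulation \emph{up to} $\bisim$ (Definition~\ref{def:up to}, Proposition~\ref{prop:up to}). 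Your proposal never invokes an up-to technique, so the Clause~2 verification for these three operators cannot be completed as stated; the ``time-out analogue of Lemma~\ref{lem:obvious}.1'' you appeal to does not supply the missing rewriting of $\theta_X$ past the context.

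Two secondary points. First, even with the up-to repair, a per-operator relation $\tbisim_f$ is insufficient: after applying Lemma~\ref{lem:strong identities} the residual pairs mix $\theta_{X'}$ applied to a component, $\parallel_S$, and an outer $\theta_X$, so you need one relation closed under all the listed operators simultaneously (this is why the paper takes the smallest relation closed under the whole signature). Second, your transfers of side conditions --- $\deadend{\cdot}{L}$ for $\theta_L^U$, and the computation of $X'$ from the stable non-moving component in $\parallel_S$ --- need a stability property of the closure relation itself, namely that $P\tbisim Q$ and $P\nsteptau$ imply $Q\pathtau Q'$ with $P\tbisim Q'$ and $\init{Q'}=\init{P}$, which must be proven by induction on the construction of the closure (as the paper does); Lemma~\ref{lem:obvious}, which speaks about \tb reactive bisimulations rather than about your syntactically closed relation, does not directly give this.
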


\begin{proof}
    Let $\tbisim$ be the smallest relation such that, for all $P,Q \in \closed$,
    \begin{itemize}
        \item if $P \bisimtbr Q$ then $P \tbisim Q$;
        \item if $P \tbisim Q$ then, for all $\alpha \in Act$, $I\subseteq A$, $\rename \in A\times A$ and $L \subseteq U \subseteq A$, $\alpha.P \tbisim \alpha.Q$, $\tau_I(P) \tbisim \tau_I(Q)$, $\rename(P) \tbisim \rename(Q)$ and $\theta_L^U(P) \tbisim \theta_L^U(Q)$; 
        \item if $P_1 \tbisim Q_1$, $P_2 \tbisim Q_2$ and $S \subseteq A$ then $P_1 \parallel_S P_2 \tbisim Q_1 \parallel_S Q_2$.
    \end{itemize}
    It suffices to show that $\tbisim$ is a \tb time-out bisimulation up to $\bisim$\,, which implies ${\tbisim}\subseteq{\bisimtbr}\,$. A bisimulation ``up to'' is a notion introduced by Milner in \cite{Mi90ccs}; it is commonly used when proving congruence properties. The proof uses some lemmas which were obtained in \cite{strongreactivebisimilarity}. Details can be found in Appendix \ref{app:stability}. A similar proof yields the result for $\bisimtb$\,. \end{proof}

\begin{theorem} \label{thm:congruence}
    $\bisimrtbr$ and $\bisimrtb$ are full congruences.
\end{theorem}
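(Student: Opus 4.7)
The plan is to verify both clauses of Definition~\ref{def:congruence} for $\bisimrtbr$ and $\bisimrtb$, handling operators first and then recursion. For the five operators covered by Proposition~\ref{prop:stability} (prefixing, $\parallel_S$, $\tau_I$, renaming, and $\theta_L^U$), the rooted upgrade is almost free: if $P_i \bisimrtbr Q_i$ for each argument, then every initial transition of $f(\vec P)$ is derived from an initial transition of one of the $P_i$'s via a rule in Figure~\ref{fig:ccsp semantics}; the rooted property delivers a matching initial transition on the $Q_i$ side whose derivative is $\bisimtbr$-related to the $P_i$-derivative, and Proposition~\ref{prop:stability} lifts this to a $\bisimtbr$-relation between the full $f$-derivatives.

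For the two remaining $\ccsp$-operators, $+$ and $\psi_X$, I would argue directly from the operational semantics. For $+$, every initial transition of $P_1 + P_2$ is also an initial transition of one of the summands, so the rooted matching of each $P_i \bisimrtbr Q_i$ lifts immediately, with derivatives already related by $\bisimtbr$. For $\psi_X$, the initial transitions of $\psi_X(P)$ are either of the form $\psi_X(P) \step{\alpha} P'$ for $\alpha \neq \rt$ (inherited from $P$) or $\psi_X(P) \step{\rt} \theta_X(P')$ when $P \step{\rt} P'$ with $\deadend{P}{X}$; the rooted condition on $P \bisimrtbr Q$ supplies the corresponding transitions of $\psi_X(Q)$, and Proposition~\ref{prop:stability} applied to $\theta_X$ yields $\theta_X(P') \bisimtbr \theta_X(Q')$. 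The analogous arguments for $\bisimrtb$ reuse the non-rooted congruences supplied by Proposition~\ref{prop:stability}.

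The principal obstacle is the recursion clause: given specifications $\equa, \equa'$ with $V_\equa = V_{\equa'}$ and $\equa_y \bisimrtbr \equa'_y$ for all $y$, we must show $\langle x|\equa\rangle \bisimrtbr \langle x|\equa'\rangle$ whenever both are closed. Working with rooted \tb time-out bisimulations via Proposition~\ref{prop:time-out bisim} is advantageous since it eliminates triples. I would define a candidate relation $\tbisim$ relating $\langle E|\equa\rangle$ with $\langle E|\equa'\rangle$ for appropriate $\ccsp$-expressions $E$ obtainable by unfolding, closed under the transitions each side can perform. The subtlety is that a single transition of $\langle E|\equa\rangle$ may require several recursion-unfoldings on the $\equa'$ side (and conversely), so $\tbisim$ must be augmented by a \tb reactive bisimulation on the derivatives, matching the semi-branching format of Definition~\ref{def:intuitive}. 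The delicate step is verifying Clause~2 of Definition~\ref{def:time-out bisim}---in particular matching sequences of $\rt$-transitions and respecting stability---while tracking the possibly mismatched unfoldings; the Stuttering Lemma (Lemma~\ref{lem:stuttering}) will likely absorb the spurious $\tau$-unfoldings without disturbing branching structure. The proof for $\bisimrtb$ proceeds along the same pattern, and could alternatively be routed through Pohlmann's encoding (Appendix~\ref{app:Pohlmann}) to reduce to the reactive case already established.
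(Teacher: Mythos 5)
Your treatment of the operator cases is essentially the paper's: match the first transition strongly using rootedness and close the derivatives with the congruence of the non-rooted relation. One omission there is worth flagging, though it is fixable: for the time-out clause of Definition~\ref{def:rooted time-out bisim} the derivative pair under $\parallel_S$, $\tau_I$ and $\rename$ is \emph{not} a direct instance of Proposition~\ref{prop:stability}. If $P_1\parallel_S P_2 \step{\rt} P_1\parallel_S P_2'$ with $\deadend{P_1\parallel_S P_2}{X}$, rootedness of $P_2\bisimrtbr Q_2$ only yields $\theta_Y(P_2')\bisimtbr\theta_Y(Q_2')$ for the adjusted set $Y=X\setminus(S\setminus\init{P_1})$, and to conclude $\theta_X(P_1\parallel_S P_2')\bisimtbr\theta_X(Q_1\parallel_S Q_2')$ you must redistribute $\theta_X$ over the composition via the strong identities of Lemma~\ref{lem:strong identities} (similarly $X\cup I$ for $\tau_I$ and $\rename^{-1}(X)$ for $\rename$), using also that rooted bisimilar processes have equal sets of initial actions.

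The genuine gap is the recursion clause, which is the heart of ``full congruence''. First, your premise is wrong: in the semantics of Figure~\ref{fig:ccsp semantics} recursion is unfolded transparently ($\langle x|\equa\rangle\step{\alpha}P'$ iff $\langle\equa_x|\equa\rangle\step{\alpha}P'$), so mismatched unfoldings generate no $\tau$-transitions, there is nothing for the Stuttering Lemma~\ref{lem:stuttering} to absorb, and since you are building a \emph{rooted} bisimulation the first transition must be matched strongly anyway, so Clause~2 of Definition~\ref{def:time-out bisim} never has to be verified at the root. Second, and more importantly, you never say how the derivatives are to be related by $\bisimtbr$: after $\langle x|\equa\rangle\step{\alpha}P'$, the matching derivative of $\langle x|\equa'\rangle$ is in general not of the form $\langle E|\equa'\rangle$ for the same $E$, and ``augmenting $\tbisim$ by a branching reactive bisimulation on the derivatives'' is precisely the object that has to be constructed, not assumed. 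The paper resolves this by defining one relation $\tbisim$ inductively closed under \emph{all} operators, under substitution instances $\langle z|\equa\rangle[\rho]\tbisim\langle z|\equa\rangle[\nu]$ for componentwise related closed $\rho,\nu$ (this clause, absent from your plan, is also what yields lean congruence), and under the clause for $\equa,\equa'$ with $\equa_y\bisimrtbr\equa'_y$; it then proves $\tbisim$ to be a rooted time-out bisimulation \emph{up to} $\bisimtbr$ (after establishing soundness of that up-to technique), by induction on the derivation of the transition. The recursion case is a relay: the transition of $\langle x|\equa\rangle$ stems from $\langle\equa_x|\equa\rangle$, the induction hypothesis (applied to $\langle\equa_x|\equa\rangle\tbisim\langle\equa_x|\equa'\rangle$) matches it in $\langle\equa_x|\equa'\rangle$, and then $\equa_x\bisimrtbr\equa'_x$, instantiated with the closed substitution $\langle\,\_\,|\equa'\rangle$, transfers it to $\langle\equa'_x|\equa'\rangle$, i.e.\ to $\langle x|\equa'\rangle$, transitivity of $\bisimtbr$ closing the derivatives. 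Without this machinery (or an equivalent), the recursion half of the theorem is not proved. Finally, the suggested detour through Pohlmann's encoding for $\bisimrtb$ does not obviously help: the encoding relates bisimilarities under insertion of the $\vartheta$-operators, but it does not commute with $\ccsp$ contexts, so congruence of $\bisimrtb$ still needs the direct argument.
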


\begin{proof}
    Let ${\tbisim} \subseteq \closed\times\closed$ be the smallest relation such that 
    \begin{itemize}
        \item if $P \bisimrtbr Q$ then $P \tbisim Q$;
        \item if $P_1 \tbisim Q_1$ and $P_2 \tbisim Q_2$ then $P_1 + P_2 \tbisim Q_1 + Q_2$ and $\forall S \subseteq A,\; P_1 \parallel_S P_2 \tbisim Q_1 \parallel_S Q_2$;
        \item if $P \tbisim Q$ then $\forall \alpha \mathbin\in Act,\; \alpha.P \tbisim \alpha.Q$, $\forall I \subseteq A,\; \tau_I(P) \tbisim \tau_I(Q)$, $\forall \rename \subseteq A\times A,\; \rename(P) \tbisim \rename(Q)$, $\forall L \mathbin\subseteq U \mathbin\subseteq A,\;\theta_L^U(P) \tbisim \theta_L^U(Q)$ and $\forall X \subseteq A,\; \psi_X(P) \tbisim \psi_X(Q)$;
        \item if $\equa$ is a recursive specification with $z \in V_\equa$ and $\rho, \nu \in V\setminus V_\equa \rightarrow \closed$ are substitutions such that $\forall x \in V\setminus V_\equa,\; \rho(x) \tbisim \nu(x)$, then $\langle z|\equa\rangle[\rho] \tbisim \langle z|\equa\rangle[\nu]$;
        \item if $\equa$ and $\equa'$ are recursive specifications and $x \in V_\equa=V_{\equa'}$ with $\langle x|\equa\rangle, \langle x|\equa'\rangle \in \closed$ such that $\forall y \in V_\equa,\; \equa_y \bisimrtbr \equa'_y$, then $\langle x|\equa\rangle \tbisim \langle x|\equa'\rangle$.
    \end{itemize}
    Since ${\bisimrtbr} \subseteq {\tbisim}$, it suffices to prove that $\tbisim$ is a rooted \tb time-out bisimulation up to $\bisimtbr$\,, as done in Appendix \ref{app:congruence}. This implies ${\tbisim} = {\bisimrtbr}$\, and the definition will then give us that $\bisimrtbr$ is a lean congruence. Moreover, the last condition of $\tbisim$ adds that it is a full congruence. A similar proof yields the result for $\bisimrtb$\,.
\end{proof}

\section{Axiomatisation} \label{sec:axiom}

We will provide complete axiomatisations for $\bisimrtbr$ and $\bisimrtb$ on various fragments of $\ccsp$.

\subsection{Recursive Principles} \label{subsec:recursive}

The expression $\langle x |\equa\rangle$ is intuitively defined as the $x$-component of the solution of $\equa$. However, $\equa$ could perfectly well have multiple solutions that are not bisimilar to each other. For instance, take $\equa = \{x \mathbin= x\}$; any expression is an $x$-component of a solution of $\equa$.
For our complete\linebreak[3] axiomatisation, we need to restrict attention to recursive specifications which have a unique solution with respect to our notion of bisimilarity. This property can be decomposed into two principles \cite{BW90,Fok00}: the \emph{recursive definition principle} (RDP) states that a system of recursive equations has at least one solution and the \emph{recursive specification principle} (RSP) that it has at most one solution. The latter holds under a condition traditionally called \emph{guardedness}.

\begin{definition}\rm \label{def:solution}
    Let $\equa$ be a recursive specification and ${\sim} \subseteq \closed\times\closed$, a \emph{solution up to $\sim$} of $\equa$ is a substitution $\rho \in \expr^{V_\equa}$ such that $\rho \sim \equa[\rho]$. Here $\rho$ and ${\equa} \in\expr^{V_\equa}$ are seen as $V_\equa$-tuples.
\end{definition}

\noindent
In \cite{BW90,Fok00} RDP was proven for the classical notion of strong bisimilarity $\bisim$. Since $\bisimrtbr$ and $\bisimrtb$ are included in $\bisim$, it holds for both of these relations as well.

\begin{proposition}[RDP] \label{prop:rdp}
    Let $\equa$ be a recursive specification. The substitution $\rho: x \mapsto \langle x|\equa\rangle$ for all $x \mathbin\in V_\equa$ is a solution of $\equa$ up to $\bisim$. It is called the \emph{default solution} of $\equa$.
\end{proposition}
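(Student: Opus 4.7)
The plan is to unwind the definition of $\bisim$ on open expressions and then exhibit a one-step strong bisimulation that relies entirely on the operational rule for recursion constants. Unfolding $\rho \bisim \equa[\rho]$, the task is to show $\rho(x) \bisim \equa_x[\rho]$ for every $x \in V_\equa$. By the definition of $\rho$ and the shorthand $\langle \equa_x|\equa\rangle = \equa_x[\rho]$, this reduces to $\langle x|\equa\rangle \bisim \langle \equa_x|\equa\rangle$ as expressions. Since $\bisim$ on expressions is defined pointwise over closed substitutions, I would then fix an arbitrary $\nu : V\setminus V_\equa \to \closed$ and push it through the binder (up to $\alpha$-renaming of bound variables in $V_\equa$) to obtain a closed recursive specification $\equa^\nu$ with $\equa^\nu_y := \equa_y[\nu]$. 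It then suffices to prove
\[
\langle x|\equa^\nu\rangle \;\bisim\; \langle \equa^\nu_x|\equa^\nu\rangle
\]
as processes, for every $x \in V_{\equa^\nu}$.

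For the bisimulation itself, I would take
\[
\R \;:=\; \bigl\{(\langle y|\equa^\nu\rangle,\;\langle \equa^\nu_y|\equa^\nu\rangle) : y \in V_{\equa^\nu}\bigr\} \;\cup\; \bigl\{(Q,Q) : Q \in \closed\bigr\},
\]
together with its symmetric closure. The verification is immediate from the recursion rule
\[
\frac{\langle \equa_y|\equa\rangle \step\alpha Q'}{\langle y|\equa\rangle \step\alpha Q'}
\]
of Figure~\ref{fig:ccsp semantics}: this is the only operational rule whose conclusion has a recursion constant $\langle y|\equa^\nu\rangle$ on the left, and its premise is literally a transition of $\langle \equa^\nu_y|\equa^\nu\rangle$. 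Hence $\langle y|\equa^\nu\rangle \step\alpha Q'$ iff $\langle \equa^\nu_y|\equa^\nu\rangle \step\alpha Q'$, so the two sides of every non-diagonal pair have the same outgoing transitions to the same targets $Q'$, and the matching transition lands safely in the diagonal part of $\R$. The diagonal pairs are trivially matched, so $\R$ is indeed a strong bisimulation.

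The only obstacle worth naming is bookkeeping around substitution and binding, namely checking that $\nu$ pushes cleanly through the recursion constant without capturing variables in $V_\equa$, and that $\equa^\nu$ is a well-defined recursive specification of closed bodies (so that $\langle x|\equa^\nu\rangle \in \closed$). Once that is in place, there is no structural induction and no case analysis to perform: the result is essentially built into the operational semantics of the recursion constant, via the one rule that unfolds it once.
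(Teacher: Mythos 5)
Your proposal is correct. The paper does not spell out a proof of RDP at all -- it simply observes that the result is standard and cites \cite{BW90,Fok00} -- and the argument you give is precisely the standard one: since the recursion rule is the only rule whose conclusion has $\langle x|\equa\rangle$ as its source, $\langle x|\equa\rangle$ and $\langle \equa_x|\equa\rangle$ have literally the same outgoing transitions to the same targets, so your relation (pairs of constants with their one-step unfoldings, plus the diagonal) is a strong bisimulation, and the reduction of the open-expression statement to closed instances via an arbitrary closing substitution $\nu$ is exactly the bookkeeping needed by the paper's definition of $\bisim$ on $\expr$.
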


\noindent
An occurrence of a variable $x$ in an expression $E$ is \emph{well-guarded} if $x$ occurs in a subexpression $a.F$ of $E$, with $a\mathbin\in A$. Here we do not allow $\tau$ and $\rt$ as guards. An expression $E$ is \emph{well-guarded} if no operator $\tau_I$ occurs in $E$ and all free occurrences of variables in $E$ are well-guarded.\linebreak[3]
A recursive specification $\equa$ is \emph{manifestly well-guarded} if no operator $\tau_I$ occurs in $\equa$ and for all $x,y\in V_\equa$ all occurrences of $x$ in the expression $\equa_y$ are well-guarded; it is \emph{well-guarded} if it can be made manifestly well-guarded by repeated substitution of $\equa_y$ for $y$ within terms $\equa_x$.\linebreak[3]
A $\ccsp$ process $P\in\closed$ is \emph{guarded} if each recursive specification occurring in $E$ is well-guarded. It is \emph{strongly guarded} if moreover there is no infinite path of $\tau$ and $\rt$-transitions $P_0 \step{\alpha_1} P_1 \step{\alpha_2} P_2 \step{\alpha_1} \dots$ with $\alpha_i\mathbin\in \{\tau,\!\rt\}$ for all $i\mathbin>0$, starting in a state $P_0$ reachable from $P\!$.

\begin{proposition}[RSP] \label{prop:rsp}
    Let $\equa$ be a well-guarded recursive specification and $\rho, \nu \mathbin\in \expr^{V_\equa}\!\!$. If $\rho$ and $\nu$ are solutions of $\equa$ up to $\bisimrtbr$ (or $\bisimrtb$) then $\rho \bisimrtbr \nu$ (resp.\ $\rho \bisimrtb \nu$).
\end{proposition}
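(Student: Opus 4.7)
The plan is to build a candidate relation by substituting the two solutions into the same contexts, and to exploit well-guardedness to analyse where transitions can come from.

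First, I would reduce to the manifestly well-guarded case. Repeated substitution of $\equa_y$ for each free $y$ in the bodies $\equa_x$ yields, after finitely many iterations, a manifestly well-guarded specification $\equa'$ with $V_{\equa'} = V_\equa$. Using the full congruence property of $\bisimrtbr$ (Theorem~\ref{thm:congruence}) and RDP (Proposition~\ref{prop:rdp}), any solution of $\equa$ up to $\bisimrtbr$ is also a solution of $\equa'$ up to $\bisimrtbr$\,. So I may assume $\equa$ is itself manifestly well-guarded.

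Second, I would define the candidate relation
\[ \R \;:=\; \{\,(E[\rho],E[\nu]) \mid E\in\expr,\ \text{free variables of } E \text{ lie in } V_\equa\,\}, \]
which contains $(\rho(x),\nu(x))$ for each $x\in V_\equa$ by taking $E=x$. The technical core is a \emph{transition-dissection lemma}: for any expression $E$ whose free $V_\equa$-variables are all well-guarded, every transition $E[\rho]\step{\alpha}P'$ has the form $P' = F[\rho]$ for some expression $F$ with free variables in $V_\equa$, and moreover $E[\nu]\step{\alpha}F[\nu]$ is derivable by the same operational rules. This is proved by induction on the derivation in Figure~\ref{fig:ccsp semantics}. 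The delicate cases are the rules for $\theta_L^U$ and $\psi_X$ whose premises feature $\deadend{x}{L}$: one needs the companion lemma that, for such guarded $E$, the set $\init{E[\rho]}$ is independent of $\rho$ (the validity condition on $\ccsp$ expressions ensures guards are preserved inside environment operators).

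Third, to establish that $\R$ witnesses $\rho(x)\bisimrtbr\nu(x)$, I would use the rooted time-out bisimulation reformulation (Proposition~\ref{prop:time-out bisim}.3). Given $\rho(x)\step{\alpha}P'$, applying the rooted matching of $\rho(x)\bisimrtbr\equa_x[\rho]$ yields $\equa_x[\rho]\step{\alpha}Q''$ with $P'\bisimtbr Q''$; since $\equa_x$ is manifestly well-guarded, the dissection lemma produces $F$ with $Q'' = F[\rho]$ and $\equa_x[\nu]\step{\alpha}F[\nu]$; then $\nu(x)\bisimrtbr\equa_x[\nu]$ gives $\nu(x)\step{\alpha}Q'$ with $Q'\bisimtbr F[\nu]$. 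Thus the rooted obligation is discharged \emph{provided} $F[\rho]\bisimtbr F[\nu]$, i.e.\ provided $\R\subseteq{\bisimtbr}$. The $\rt$-clause is analogous, working in $\theta_X(\rho(x))\bisimrtbr\theta_X(\equa_x[\rho])$ and using congruence of $\bisimtbr$ under $\theta_X$ (Proposition~\ref{prop:stability}).

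Fourth, to close the loop I would show that $\R$ is also a (non-rooted) \tb time-out bisimulation. For an arbitrary pair $(E[\rho],E[\nu])\in\R$ and transition $E[\rho]\step{\alpha}P'$, the derivation either descends through operator rules only (and $E[\nu]$ has the identical transition with target in $\R$), or it reaches a free-variable position $x$ and uses $\rho(x)\step{\alpha}R$; that sub-transition is matched branchingly by $\rho(x)\bisimrtbr\equa_x[\rho]$ and the dissection lemma on $\equa_x$, then propagated back through the operator rules, yielding a matching path $E[\nu]\pathtau\!\step{\opt{\alpha}}$ ending in an $\R$-related state. The branching $\rt$-clause is handled identically with $\theta_X$-wrapping, and the stability clause (2.e) follows because $E[\rho]\nsteptau$ forces $E[\nu]\nsteptau$ up to $\pathtau$-closure by the same dissection. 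The argument for $\bisimrtb$ is completely parallel, using that $\bisimrtb$ is a full congruence and replaying the same dissection; one no longer wraps with $\theta_X$ since $\bisimrtb$ is non-reactive.

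The main obstacle is the transition-dissection lemma combined with the invariance of the idle predicate under guarded substitution: the rules for $\theta_L^U$ and $\psi_X$ break strict structural inductions because their applicability conditions depend on the \emph{entire} subexpression's initial set. Making this precise — and verifying that variables occurring freely beneath these operators are nonetheless controllable thanks to the validity restriction of Definition~\ref{def:ccsp syntax} — is the one step that truly uses all the idiosyncrasies of $\ccsp$; the rest of the proof is a standard guarded-recursion bisimulation argument adapted to the reactive/time-out setting.
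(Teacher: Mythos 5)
Your overall strategy (reduce to the manifestly well-guarded case, dissect transitions of guarded instantiations, use the solution property plus congruence to close the loop) is in the spirit of the paper's proof, but as written it has a genuine gap in the closure step. Your relation $\R=\{(E[\rho],E[\nu])\}$ ranges over contexts with possibly \emph{unguarded} occurrences of the recursion variables, so a challenge transition of $E[\rho]$ may originate inside some $\rho(x)$. You match it via $\rho(x)\bisimrtbr\equa_x[\rho]$, dissect in $\equa_x$, and transfer back via $\equa_x[\nu]\bisimrtbr\nu(x)$; but then the resulting targets are related only by ${\bisimtbr}\circ\R\circ{\bisimtbr}$, not by $\R$. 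Hence your claim in the fourth step that the matching path ends ``in an $\R$-related state'' is false as stated, and $\R$ is not literally a (rooted) branching time-out bisimulation. What is missing is a formulated and \emph{proved} ``up to $\bisimtbr$'' principle; in the branching/weak setting such up-to techniques are not free (naive bisimulation-up-to-bisimilarity is unsound there), and the paper spends real effort on exactly this point: Definition~\ref{def:up to b} builds the $\bisimtbr$-closure into the challenge side as well (clauses of the form $P\pathtau P'\step{\alpha}P''$ with $P\bisimtbr P'$, and an elaborate clause for $\rt$-sequences), and Proposition~\ref{prop:up to b} proves its soundness. The same omission also undermines your third step, which explicitly assumes $\R\subseteq{\bisimtbr}$.

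Two further points of divergence are worth noting, since they are what make the paper's route go through smoothly. First, the paper does not relate $E[\rho]$ with $E[\nu]$ but $H[\equa[\rho]]$ with $H[\equa[\nu]]$, i.e.\ it inserts one unfolding of $\equa$; since $\equa$ is manifestly well-guarded, $H[\equa]$ is well-guarded, so \emph{every} challenge transition dissects syntactically (Lemma~\ref{lem:guarded}, Corollary~\ref{cor:guarded path}) and no transition ever has to be fetched from inside $\rho(x)$ and then ``propagated back through the operator rules''---a propagation that in your version is problematic for $+$, $\theta_L^U$ and $\psi_X$ because of their side conditions and because the intermediate states of a branching match are not of the shape $E'[\nu]$. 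Second, the contexts in the paper's relation exclude $\tau_I$, and Lemma~\ref{lem:guarded}.2 records that the residual context is again well-guarded after a $\tau$- or $\rt$-step; both facts are needed to iterate the dissection along $\tau$/$\rt$-paths, and your dissection lemma would need the same refinements (your relation as defined even admits $\tau_I$-contexts, for which guardedness is not preserved under unfolding a renamed action into $\tau$). With the relation changed to the paper's one-unfolding form and with a sound up-to principle in place, your argument essentially becomes the paper's proof.
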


\begin{proof}
    Modifying $\equa$ by substituting $\equa_y$ for $y$ within terms $\equa_x$ with $x,y\in V_\equa$ does not affect the set of its solutions. Hence we can restrict attention to manifestly well-guarded $\equa$.

    Thanks to the composition of substitutions, it suffices to prove the proposition when $\rho, \nu \in \closed^{V_\equa}$ and only variables of $V_\equa$ can occur in $\equa_x$ for $x \in V_\equa$. It suffices to show that the symmetric closure of \({\tbisim} := \{(H[\equa[\rho]],H[\equa[\nu]]) \mid H \in \expr\) is without $\tau_I$ operators and with free variables from $V_\equa \mbox{ only}\}$ is a rooted \tb time-out bisimulation up to $\bisimtbr$\,. Here $\equa[\rho]\in\closed^{V_\equa}$ is seen as a substitution. Details can be found in Appendix \ref{app:RSP}. An almost identical strategy can be applied to get RSP for $\bisimrtb$\,.
\end{proof}

\noindent
The following lemma, whose proof can be found in Appendix \ref{app:contraintuitive}, states that, when considering strongly guarded processes, eliding a time-out is independent of the set of allowed actions. 

\begin{lemma} \label{lem:independent}
    Let $P$ be a strongly {guarded} $\ccsp$ process and $X \subseteq A$. If $\deadend{P}{X}$, $P \step{\rt} P'$ and $\theta_X(P) \bisimtbr \theta_X(P')$ then $\forall Y \subseteq A,\; \deadend{P}{Y} \Rightarrow \theta_Y(P) \bisimtbr \theta_Y(P')$.
\end{lemma}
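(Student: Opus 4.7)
The plan is to recast the goal using Proposition~\ref{prop:time-out bisim}.\ref{corr}: the hypothesis $\theta_X(P) \bisimtbr \theta_X(P')$ becomes $P \bisimtbr[X] P'$ and the conclusion $\theta_Y(P) \bisimtbr \theta_Y(P')$ becomes $P \bisimtbr[Y] P'$. Fix a \tb reactive bisimulation $\R$ with $\R(P,X,P')$.

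Apply Lemma~\ref{lem:obvious}.4 to $\R(P,X,P')$ and $\deadend{P}{X}$: there exists a process $P_0$ with $P' \pathtau P_0$, $\R(P,P_0)$, $P_0 \nsteptau$ and $\init{P_0}=\init{P}$. Hence $\deadend{P_0}{Y}$ and $P \bisimtbr P_0$. Clause~1.b of Definition~\ref{def:intuitive} then yields $P \bisimtbr[Y] P_0$, and by transitivity of $\bisimtbr[Y]$ (Proposition~\ref{prop:equivalence}) the goal $P \bisimtbr[Y] P'$ reduces to showing $P' \bisimtbr[Y] P_0$.

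For this residual claim, I would construct an explicit \tb reactive bisimulation $\R'$ containing the triple $(P',Y,P_0)$ together with its symmetric counterpart, closed under the clauses of Definition~\ref{def:intuitive}. The construction proceeds by well-founded induction on the finite tree of $\tau$- and $\rt$-transitions rooted at $P'$ (finite by strong guardedness of $P$). In the base case $P' \nsteptau$, Lemma~\ref{lem:obvious}.4 forces $P' = P_0$, and the claim is immediate. For the inductive step, the key facts exploited are: (i) every $\tau$-successor $P''$ of $P'$ satisfies $P''\bisimtbr P$ (from the symmetric form of Clause~2.a applied to $\R(P,X,P')$, using $P \nsteptau$), hence $P''\bisimtbr P_0$; (ii) the $X$-bisimulation $\R$ rules out $X$-visible initial transitions from $P'$; (iii) $\init{P_0}=\init{P}$, so the visible transitions of $P_0$ match those required from $P$ and are disjoint from $Y$.

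The principal obstacle is the matching of visible transitions $P' \step{a} P''$ with $a \in Y \setminus X$, and of $\rt$-transitions from $\tau,\rt$-descendants of $P'$ that idle in $Y$ but not in $X$: the $X$-bisim $\R$ does not directly supply matchings in environment $Y$ for either kind. Strong guardedness is crucial here: within finitely many $\tau$- or $\rt$-steps one reaches a stable descendant $P^*$ whose initials coincide with $\init{P}$, so $P^*$ idles in both $X$ and $Y$. At such stable descendants the matchings furnished by $\R$ for the $X$-environment lift to matchings for the $Y$-environment, and the inductive construction then propagates these matchings back up the $\tau,\rt$-tree of $P'$.
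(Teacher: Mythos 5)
Your reduction (via Proposition~\ref{prop:time-out bisim}.2, Lemma~\ref{lem:obvious}.4 and transitivity) is fine, but the core of the argument has a genuine gap, and one of your ``key facts'' is false. From the symmetric instance of Clause~2.a of Definition~\ref{def:intuitive} applied to $\R(P,X,P')$ with $P\nsteptau$ you only obtain $\R(P'',X,P)$, i.e.\ $P''\bisimtbr[X] P$, for a $\tau$-successor $P''$ of $P'$; to promote this triple to the couple $P''\bisimtbr P$ you would need $P''$ to be stable (Lemma~\ref{lem:obvious}.3). Fact~(i) indeed fails: take $P=b.0+\rt.P'$ with $P'=\tau.(\tau.b.0+b.c.0)$ and $X=\emptyset$. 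All hypotheses of Lemma~\ref{lem:independent} hold (one easily exhibits a \tb reactive bisimulation containing $(P,\emptyset,P')$, since the branch $b.c.0$ creates no obligation at an unstable state in the empty environment), yet the $\tau$-successor $P''=\tau.b.0+b.c.0$ of $P'$ satisfies $P''\,\not\!\bisimtbr P$, because its $b$-transition to $c.0$ cannot be matched by $P$. So the $X$-bisimulation gives you no control at all over transitions of \emph{unstable} $\tau/\rt$-descendants of $P'$ whose labels lie outside $X\cup\{\tau\}$ --- which is exactly the ``principal obstacle'' you name.

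Your proposed resolution of that obstacle is not an argument. What has to be shown is, e.g., that no unstable $\tau$-descendant $D$ of $P'$ has a transition $D\step{a}D'$ with $a\in Y\setminus X$: such a transition would have to be matched by the stable $P_0$, which has no $a$-transition since $\init{P_0}=\init{P}$ and $\deadend{P}{Y}$. Knowing that every branch of the finite $\tau/\rt$-tree below $D$ eventually reaches a stable state with initials $\init{P}$ says nothing about the transitions available \emph{at} $D$, so nothing can be ``propagated back up''; and the situation to be excluded is locally consistent --- the paper's own unguarded example $P=\langle x\mid\{x=\rt.(a+\tau.x)\}\rangle$, $P'=a+\tau.P$, has precisely such an $a$-transition at an unstable state while $\theta_\emptyset(P)\bisimtbr\theta_\emptyset(P')$. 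Hence any proof must use strong guardedness in an essentially global way, and this is what the paper does: arguing by contradiction, it shows that if $\theta_Y(P)\,\not\!\bisimtbr\theta_Y(P')$, then every stable state $Q$ with $P\bisimtbr Q$ and $\init{Q}=\init{P}$ must match $P\step{\rt}P'$ in environment $Y$ by a \emph{nonempty} $\tau/\rt$-path whose endpoint stabilises to yet another such state $Q'$; iterating from the stable $\tau$-descendant $P''$ of $P'$ produces an infinite $\tau/\rt$-path reachable from $P$, contradicting strong guardedness. This re-matching of the time-out $P\step{\rt}P'$ by every coupled stable descendant is the engine of the proof, and it is absent from your sketch; without it (or an equivalent well-founded recursion on that re-matching) your inductive step remains unproved.
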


\noindent
Actually, this lemma holds because our restriction of strong guardedness is too strong. Indeed, the equation $x = t.(a + \tau.x)$ has a single solution, but it is not well-guarded. The process $P= \langle x \mid \{x = t.(a + \tau.x)\}\rangle$ is not guarded, yet satisfies \raisebox{0pt}[0pt][0pt]{$P \step\rt P' := a + \tau.P$} and  \mbox{$\theta_\emptyset(P) \bisimtbr \theta_\emptyset(P')$}, while $\theta_{\{a\}}(P) \,\not\!\bisimtbr \theta_{\{a\}}(P')$. Even if we write $P$ as $\tau_{\{b\}}(\langle x \mid \{x = t.(a + b.x)\}\rangle$ it fails to be strongly guarded. This restriction was kept because being more precise is very challenging. For instance, the equation $x = t.(a + \tau.x) + t.a$ has multiple solutions: the default one, $\langle x \mid \{x = t.(a + \tau.x) + t.a + t.(a + t.b)\}\rangle$ and others. Notice that adding a branch $t.a$ to an equation with one solution can lead to it having multiple ones. Intuitively, there are situations where time-out contraction enables to hide the existence of other time-outs. Characterising these situations requires the use of semantic conditions that are difficult to verify, thus, making them undesirable. Moreover, applying the Pohlmann encoding to the processes in order to, then, use the axiomatisation of $\bisimrtb$ leads to the similar complications. This limitation deserves to be studied properly because it will appear for all bisimilarities authorising time-out contraction.

\subsection{Axioms and Soundness} \label{subsec:soundness}

The set of axioms provided is composed of the axiomatisation of $\bisim_r$ \cite{strongreactivebisimilarity}, together with three branching axioms. The \emph{branching axiom} is well-known since it is used in the axiomatisation of rooted branching bisimilarity \cite{branching}. The \emph{$\rt$-branching axiom} and the \emph{$\tau/\rt$-branching axiom} are newly introduced; they are the adaptation of the branching axiom to time-out contraction.

\renewcommand{\arraystretch}{1.2}
\begin{table}[ht]
    \centering
    \begin{tabular}{|l l l|}
        \hline
        $x+(y+z) = (x+y)+z$ & $\tau_I(x+y) = \tau_I(x) + \tau_I(y)$ & $\rename(x+y) = \rename(x) + \rename(y)$ \\
        $x+y = y+x$ & $\tau_I(\alpha.x) = \alpha.\tau_I(x)$ if $\alpha \not\in I$ & $\rename(\tau.x) = \tau.\rename(x)$\\
        $x+x = 0$ & $\tau_I(\alpha.x) = \tau.\tau_I(x)$ if $\alpha \in I$ & $\rename(\rt.x) = \rt.\rename(x)$ \\
        $x+0 = x$ & & $\rename(a.x) = \sum_{\{b \mid \rename(a,b)\}}b.\rename(x)$ \\
        \multicolumn{3}{|l|}{\textbf{Expansion Theorem:} if $P = \sum\limits_{i \in I}\alpha_i.P_i$ and $Q = \sum\limits_{j \in J} \beta_j.Q_j$ then} \\
        \multicolumn{3}{|c|}{$P \parallel_S Q = \sum\limits_{i \in I,\alpha_i \not\in S}(\alpha_i.P_i \parallel_S Q) + \sum\limits_{j \in J, \beta_j \not\in S}(P \parallel_S \beta_j.Q_j) + \sum\limits_{i \in I, j \in J, \alpha_i = \beta_j \in S}\alpha_i.(P_i \parallel_S Q_j)$} \\
        \multicolumn{3}{|c|}{$\alpha.(\tau.(x+y)+x) = \alpha.(x+y)$ \quad\textbf{(Branching Axiom)}} \\
        \multicolumn{3}{|c|}{$\alpha.(\rt.(x+\sum_{i \in I}\rt.y_i)+x) = \alpha.(x + \sum_{i \in I}\rt.y_i)$ \quad\textbf{($\rt$-Branching Axiom)}} \\
        \multicolumn{3}{|c|}{$\alpha.(\tau.(x+y)+\rt.(x+y)+x) = \alpha.(x+y)$ \quad\textbf{($\tau/\rt$-Branching Axiom)}} \\
        $\langle x |\equa\rangle = \langle \equa_x |\equa\rangle$ \quad\textbf{(RDP)} & \multicolumn{2}{c|}{$\equa \Rightarrow x = \langle x | \equa\rangle$ \quad with $\equa$ well-guarded\quad\textbf{(RSP)}} \\
        \hline
        \hline
        \multicolumn{2}{|l}{$\theta_L^U(\sum_{i \in I}\alpha_i.x_i) = \sum_{i\in I} \alpha_i.x_i$} & $(\forall i \in I, \alpha_i \not\in L\cup\{\tau\})$ \\
        \multicolumn{2}{|l}{$\theta_L^U(x + \alpha.y + \beta.z) = \theta_L^U(x + \alpha.y)$} & $(\alpha \in L\cup\{\tau\} \wedge \beta \not\in U\cup\{\tau\})$ \\
        \multicolumn{2}{|l}{$\theta_L^U(x + \alpha.y + \beta.z) = \theta^U_L(x + \alpha.y) + \theta_L^U(\beta.z)$} & $(\alpha \in L\cup\{\tau\} \wedge \beta\in U\cup\{\tau\})$ \\
        \multicolumn{2}{|l}{$\theta_L^U(\alpha.x) = \alpha.x$} & $(\alpha \neq \tau)$ \\
        $\theta_L^U(\tau.x) = \tau.\theta_L^U(x)$ & & \\
        \multicolumn{2}{|l}{$\psi_X(x + \alpha.y) = \psi_X(x) + \alpha.y$} & $(\alpha \not\in X\cup\{\tau,t\})$ \\
        \multicolumn{2}{|l}{$\psi_X(x + \alpha.y + \rt.z) = \psi_X(x + \alpha.y)$} & $(\alpha \in X\cup\{\tau\})$ \\
        \multicolumn{2}{|l}{$\psi_X(x + \alpha.y + \beta.z) = \psi_X(x + \alpha.y) + \beta.z$} & $(\alpha, \beta \in X\cup\{\tau\})$ \\
        \multicolumn{2}{|l}{$\psi_X(\alpha.x) = \alpha.x$} & $(\alpha \neq \rt)$ \\
        $\psi_X(\sum_{i \in I}\rt.y_i) = \sum_{i \in I}\rt.\theta_X(y_i)$ & & \\
        \hline
        \hline
        \multicolumn{3}{|c|}{$(\forall X \subseteq A,\; \psi_X(x) = \psi_X(y)) \Rightarrow x= y$ \quad\textbf{(Reactive Approximation Axiom)}} \\
        \hline
    \end{tabular}
\vspace{2ex}
    \caption{Axiomatisation of $\bisimrtbr$ and $\bisimrtb$}
    \label{tab:axioms}
\end{table}

Let $Ax^\infty$ be the set of all axioms in the first two rectangles in Table \ref{tab:axioms} and $Ax := Ax^\infty \setminus \{\mbox{RDP},\mbox{RSP}\}$. Let $Ax^\infty_r$ be the set of all axioms in Table \ref{tab:axioms} except the $\tau/\rt$-branching one and $Ax_r := Ax^\infty_r \setminus \{\mbox{RDP},\mbox{RSP}\}$. The $\tau/\rt$-branching axiom is removed from $Ax^\infty_r$ because the law \hypertarget{Lt}{$\mbox{\bf L}\tau$: $\tau.x + \rt.y = \tau.x$} can be derived from the reactive approximation axiom \cite{strongreactivebisimilarity}, and applying $\mbox{\bf L}\tau$ to the branching axiom yields the $\tau/\rt$-branching axiom, thus making it redundant.

\begin{proposition} \label{prop:soundness}
    Let $P,Q$ be two $\ccsp$ processes.
    \begin{itemize}
        \item If $Ax^\infty \vdash P = Q$ then $P \bisimrtb Q$.
        \item If $Ax^\infty_r \vdash P = Q$ then $P \bisimrtbr Q$.
    \end{itemize}
\end{proposition}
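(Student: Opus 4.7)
The plan is to use that $\bisimrtbr$ and $\bisimrtb$ are equivalences (Proposition~\ref{prop:rooted equivalence} and its analogue) and full congruences (Theorem~\ref{thm:congruence}). The usual rules of equational reasoning---reflexivity, symmetry, transitivity, and substitutivity into any $\ccsp$-context, including inside recursion---therefore preserve the bisimilarity, so soundness of the calculus reduces to validating each axiom of Table~\ref{tab:axioms} separately under every closed substitution of its free variables. The proof then proceeds by a case analysis over the axioms.

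The standard CCS/ACP axioms for $+$, the distributivity laws for $\tau_I$ and $\mathcal{R}$, and the Expansion Theorem are sound already for strong bisimilarity, and are discharged by routine strong bisimulations read off Figure~\ref{fig:ccsp semantics}. The recursive principles RDP and RSP are respectively Propositions~\ref{prop:rdp} and~\ref{prop:rsp}. The Reactive Approximation Axiom and the entire second block of axioms governing $\theta_L^U$ and $\psi_X$ were shown sound for strong reactive bisimilarity in~\cite{strongreactivebisimilarity}, each being witnessed by a strong bisimulation obtained from the operational rules of Figure~\ref{fig:ccsp semantics}; those arguments use only the shape of the rules for the environment operators and transport verbatim to $\bisimrtbr$ and $\bisimrtb$.

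The main obstacle is the three branching-flavoured axioms, which are new. For the Branching Axiom $\alpha.(\tau.(x+y)+x) = \alpha.(x+y)$, rootedness lets one match the leading $\alpha$-step strongly, after which it suffices to show $\tau.(P+Q)+P \bisimtbr P+Q$ for arbitrary closed $P,Q$. I would take as branching reactive bisimulation the symmetric closure of the identity together with the pair $(\tau.(P+Q)+P,\, P+Q)$ and the triples $(\tau.(P+Q)+P,\, X,\, P+Q)$ for all $X \subseteq A$, and verify Definition~\ref{def:intuitive} clause by clause: the distinguishing $\tau$-step on the left is matched by the empty path on the right with $Q_1 = Q_2 = P+Q$; any initial action of the $P$-summand is matched identically from $P+Q$; and the right-to-left halves of Clauses~2.c, 2.d and 2.e are discharged via the left's $\tau$-step to $P+Q$.

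For the $\rt$-Branching Axiom one exploits the freedom of Clause~2.d to match a single $\rt$-step on the right by the two-step chain $\rt.(P+\sum_{i \in I}\rt.Q_i)+P \step{\rt} P+\sum_{i \in I}\rt.Q_i \step{\rt} Q_j$ on the left; the side-condition on the intermediate state is met because the premise $\deadend{\rt.(P+\sum_{i \in I}\rt.Q_i)+P}{X}$ licensing the outer $\rt$ also gives $\deadend{P+\sum_{i \in I}\rt.Q_i}{X}$. The $\tau/\rt$-Branching Axiom, needed in $Ax^\infty$ but not in $Ax^\infty_r$ (where it is derivable from the law $\mathbf{L}\tau$ obtained via Reactive Approximation~\cite{strongreactivebisimilarity}), combines both matchings in a single bisimulation built along the same lines.
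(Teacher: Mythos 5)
Most of your proposal follows the paper's own route: reduce soundness to a per-axiom check via the congruence property, inherit the unconditional axioms from their soundness with respect to strong bisimilarity (which is finer than $\bisimrtb$ and $\bisimrtbr$), invoke Propositions~\ref{prop:rdp} and~\ref{prop:rsp} for the recursive principles, and verify the three branching axioms directly (the paper dismisses these as ``trivial''; your explicit relations for the branching and $\rt$-branching axioms are correct, modulo also checking the analogous $\bisimtb$ statements needed for the $Ax^\infty$ half).

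There is, however, a genuine gap in your treatment of the Reactive Approximation Axiom. You group it with the unconditional $\theta_L^U$/$\psi_X$ laws and claim the strong-case argument ``transports verbatim''. That transfer mechanism only works for equational axioms: if the two sides are strongly bisimilar, they are related by every coarser equivalence. The Reactive Approximation Axiom is an implication whose hypothesis mentions the equivalence being axiomatised: for $\bisimrtbr$ one must show that $\forall X \subseteq A,\; \psi_X(P) \bisimrtbr \psi_X(Q)$ implies $P \bisimrtbr Q$. Soundness of the corresponding statement for strong (reactive) bisimilarity is of no help here, because the hypothesis in the branching setting is strictly weaker, and no single strong bisimulation can witness a conditional law. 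The paper therefore gives a dedicated argument (Appendix~\ref{app:RA}): it shows that the relation $\{(P,Q),(Q,P) \mid \forall X \subseteq A,\; \psi_X(P) \bisimrtbr \psi_X(Q)\}$ is a rooted \tb time-out bisimulation, using among other things that $\theta_X(P) \bisim \theta_X(\theta_X(P))$ to repair the nesting of $\theta_X$ created by the semantics of $\psi_X$. Your proposal as it stands has no argument covering this axiom, so the second bullet (soundness of $Ax^\infty_r$ for $\bisimrtbr$) is not established.
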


\begin{proof}
    Since $\bisimrtbr$ and $\bisimrtb$ are congruences, it suffices to prove that each axiom is sound, meaning that replacing, in each axiom, $=$ by the desired bisimilarity and each variable by a process produces a true statement. Most of these axioms were proven to be sound for the classical notion $\bisim$ of strong bisimilarity \cite{Mi90ccs} in \cite{strongreactivebisimilarity}. Thus, since both $\bisimrtbr$ and $\bisimrtb$ are included in $\bisim$, most of them are sound for $\bisimrtbr$ and $\bisimrtb$\,. 
    
    Only the branching axioms, RSP and the reactive approximation axiom remain to be proven sound. The soundness of the branching axioms is trivial and the soundness of RSP is exactly Proposition \ref{prop:rsp}. For the reactive approximation axiom, it suffices to show that ${\tbisim} := \; \bisimrtbr \cup \{(P,Q),(Q,P) \mid \forall X \subseteq A, \psi_X(P) \bisimrtbr \psi_X(Q)\}$ is a rooted \tb time-out bisimulation, as done in Appendix~\ref{app:RA}.
\end{proof}

\subsection{Completeness}

A well-known feature of most process algebras is that the standard collection of axioms allows one to bring any {guarded} process expression in the following normal form \cite{BW90,Fok00}.

\begin{definition}\rm \label{def:head-normal form}
    Let $P$ be a {guarded} $\ccsp$ process. The \emph{head-normal form} of $P$ is $\hat{P} := \sum_{\{(\alpha,Q) \mid P\step{\alpha}Q\}}\alpha.Q$.
\end{definition}

\noindent
In \cite{strongreactivebisimilarity}, it is proven that the axiomatisation of $\bisim_r$ enables one to equate any {guarded} process with its head-normal form (using a definition of guardedness that is more liberal than the one employed here, with $\tau$ and $\rt$ allowed as guards). Since the axiomatisation of $\bisim_r$ is included in $Ax^\infty$ and $Ax^\infty_r$, this yields the property for them as well.

\begin{lemma} \label{lem:head-normal form}
    Let $P$ be a guarded $\ccsp$ process. Then $Ax^\infty \vdash P = \hat{P}$ and $Ax^\infty_r \vdash P = \hat{P}$. Moreover, $Ax$ or $Ax_r$ are sufficient if $P$ is recursion-free.
\qed
\end{lemma}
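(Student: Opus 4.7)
The plan is to reduce the claim to the corresponding lemma already established for strong reactive bisimilarity in \cite{strongreactivebisimilarity}, as hinted at in the statement. The key observation is that the derivation of $P = \hat P$ is a purely syntactic manipulation that does not depend on which bisimilarity is being axiomatised: the same equational rewrites that work for $\bisim_r$ work for $\bisimrtbr$ and $\bisimrtb$, provided the rewrites remain derivable in $Ax^\infty_r$ and $Ax^\infty$ respectively. So after verifying that the axiomatisation of $\bisim_r$ from \cite{strongreactivebisimilarity} is literally a subset of $Ax^\infty_r$ (and hence of $Ax^\infty$), the lemma follows. The statement about $Ax$ and $Ax_r$ in the recursion-free case is then a matter of checking that RDP and RSP are only used in \cite{strongreactivebisimilarity} to deal with subterms of the form $\langle y\mid \equa\rangle$.

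To make the reduction self-contained, I would sketch the underlying induction on the structure of $P$. In the recursion-free case, the base cases $0$ and $\alpha.E$ are immediate ($\hat 0 = 0$ as an empty sum, and $\widehat{\alpha.E} = \alpha.E$). For compound $P$, apply the induction hypothesis to the immediate subterms to rewrite them as sums of action-prefixed subprocesses, and then push the outer operator inside using the relevant axioms: commutativity/associativity/idempotence of $+$ for $E+F$; the expansion theorem for $E\parallel_S F$; the $\tau_I$ axioms for $\tau_I(E)$; the $\rename$ axioms for $\rename(E)$; the $\theta_L^U$ axioms for $\theta_L^U(E)$, which split the sum into summands allowed or killed by the environment according to whether $E$ idles; and the $\psi_X$ axioms for $\psi_X(E)$, which similarly distribute through the sum and plunge each time-out branch into $\theta_X(\cdot)$. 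In each case, a direct transition-by-transition correspondence with the operational rules of Figure~\ref{fig:ccsp semantics} shows that what one obtains is exactly $\hat P$.

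For the recursive case $P = \langle y\mid \equa\rangle$ with $\equa$ well-guarded, I would first replace $\equa$ by a manifestly well-guarded specification (repeatedly substituting $\equa_x$ for $x$ within $\equa_y$, as is done in the proof of Proposition~\ref{prop:rsp}); this does not change the meaning of $\langle y\mid \equa\rangle$ up to provable equality thanks to RDP. Manifest well-guardedness then ensures that $\equa_y$ is already a sum of action-prefixed terms modulo the equational treatment of the non-recursive operators, so one application of RDP ($\langle y\mid\equa\rangle = \langle \equa_y\mid\equa\rangle$) together with the inductive treatment of $\equa_y$ yields the head-normal form. RSP is not needed here, only RDP, which explains why the sharper $Ax$/$Ax_r$ suffice as soon as no recursion is present.

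The only delicate point is not the induction itself but checking the environment operators: the axioms for $\theta_L^U$ and $\psi_X$ branch on whether the argument idles under $L$ or $X$, so one has to verify that, given a head-normal form $\sum_i \alpha_i.E_i$ for the argument, exactly the summands predicted by the operational rules survive after exhaustively applying the axioms. This bookkeeping is precisely what is carried out in \cite{strongreactivebisimilarity}, so the cleanest write-up is to invoke that result rather than redo it here; all I need to add is a one-line observation that the newly introduced branching axioms do not interfere because they are never required for the head-normalisation step.
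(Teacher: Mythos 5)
Your proposal is correct and matches the paper's approach: the paper proves this lemma simply by invoking the head-normalisation result of \cite{strongreactivebisimilarity} (established there even for a more liberal notion of guardedness, with $\tau$ and $\rt$ allowed as guards) and observing that the axiomatisation of $\bisim_r$ is contained in $Ax^\infty_r$ and $Ax^\infty$, with RDP/RSP only needed in the presence of recursion. Your additional sketch of the structural induction and the treatment of $\theta_L^U$ and $\psi_X$ is sound but not required, since the paper delegates exactly that bookkeeping to \cite{strongreactivebisimilarity}.
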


\noindent
This lemma is used extensively in the proof of the following completeness results.

\begin{proposition} \label{prop:collapse}
    Let $P,Q$ be two recursion-free $\ccsp$ processes. If $P \bisimtbr Q$ (resp.\ $P \bisimtb Q$) then, for all $\alpha \in Act$, $Ax_r \vdash \alpha.\hat{P} = \alpha.\hat{Q}$ (resp.\ $Ax \vdash \alpha.\hat{P} = \alpha.\hat{Q}$).
\end{proposition}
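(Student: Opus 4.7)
My plan is to induct on $d(P)+d(Q)$, where $d(R)$ is the maximum length of a transition path out of a recursion-free process $R$ (a finite ordinal in this setting). In the base case, $P$ and $Q$ have no outgoing transitions, so $\hat P = \hat Q = 0$ and the statement is immediate. For the inductive step, I apply Lemma~\ref{lem:head-normal form} to work directly with $\hat P = \sum_{i\in I}\beta_i.P_i$ and $\hat Q = \sum_{j\in J}\gamma_j.Q_j$. Using commutativity, associativity and idempotence of $+$, it suffices to show that each summand $\beta.P'$ of $\hat P$ can be absorbed into $\hat Q$, i.e.\ that $Ax_r \vdash \alpha.(\hat Q + \beta.P') = \alpha.\hat Q$ (and symmetrically for summands of $\hat Q$).

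If $\beta \in A_\tau$, Clause~1.a of Definition~\ref{def:intuitive} yields a path $Q \pathtau Q_1 \step{\opt\beta} Q_2$ with $P \bisimtbr Q_1$ and $P' \bisimtbr Q_2$, while Lemma~\ref{lem:stuttering} ensures that every intermediate state on $Q \pathtau Q_1$ is also $\bisimtbr$-related to $P$. All such states, as well as $Q_2$, have strictly smaller depth than $Q$, so the induction hypothesis supplies derivable equations such as $\tau.\hat{Q'} = \tau.\hat Q$ for each intermediate $Q'$ and $\beta.\hat{P'} = \beta.\hat{Q_2}$. Iterating the branching axiom $\alpha.(\tau.(x+y)+x)=\alpha.(x+y)$ collapses the sequence of $\tau$-steps, and the idempotence of $+$ then absorbs $\beta.P'$ into $\hat Q$.

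If $\beta = \rt$, there are two sub-cases. When $P \steptau$, $\hat P$ contains a $\tau$-summand which, by the previous case, matches a $\tau$-summand in $\hat Q$; the $\tau/\rt$-branching axiom (in $Ax$), or the combination of the ordinary branching axiom with the law \hyperlink{Lt}{$\mbox{\bf L}\tau$} derived from the reactive approximation axiom (in $Ax_r$), then eliminates the summand $\rt.P'$. When $P \nsteptau$, choose $X := A \setminus \init{P}$, so $\deadend{P}{X}$ holds. Clause~1.b gives $P \bisimtbr[X] Q$, and Clause~2.d yields a zig-zag path $Q \pathtau Q_1 \step{\rt} Q_2 \pathtau \cdots \step{\opt\rt} Q_{2r}$ with $P \bisimtbr[X] Q_{2i}$, $\deadend{Q_{2i+1}}{X}$ for $i<r$, and $P' \bisimtbr[X] Q_{2r}$; Lemma~\ref{lem:obvious} strengthens these relations into $\bisimtbr$-relations between stable states where needed. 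The induction hypothesis combined with iterated use of the $\rt$-branching axiom $\alpha.(\rt.(x+\sum_i\rt.y_i)+x) = \alpha.(x+\sum_i\rt.y_i)$ collapses the zig-zag to a single $\rt$-summand matching $\beta.P'$. The case $P \bisimtb Q$ is handled analogously using Definition~\ref{def:non-reactive} and the axioms of $Ax$.

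The hard part will be the $\beta=\rt$ case: one must carefully choose among the three branching axioms depending on whether $P$ is stable, and verify that every intermediate state along the zig-zag remains bisimulation-related to $P$ or $P'$ in the appropriate mode ($\bisimtbr$ versus $\bisimtbr[X]$), so that the induction hypothesis genuinely applies; this is where Lemma~\ref{lem:obvious} and the stuttering lemma carry most of the bookkeeping. The final collapse also depends on the $\rt$-branching axiom being applicable precisely to sums of the shape produced by Clause~2.d, so the inductive equations must be arranged so that the contracted context presents exactly this shape before the axiom fires.
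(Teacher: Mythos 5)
There is a genuine gap, and it sits exactly where you predicted the difficulty would be: the $\beta=\rt$ case with $P\nsteptau$. The matching information that $P \bisimtbr Q$ gives for a summand $\rt.P'$ is inherently indexed by the environment: for \emph{each} $X$ with $\deadend{P}{X}$, Clause~2.d (or Definition~\ref{def:time-out bisim}.2) yields a possibly \emph{different} zig-zag in $Q$ and only the relativised conclusion $\theta_X(P') \bisimtbr \theta_X(Q_{2r})$. Fixing the single set $X := A\setminus\init{P}$ loses information: $\theta_X$-bisimilarity for the largest admissible $X$ does not imply it for smaller $Y$ (e.g.\ $P'=a.0+\rt.b.0$ and $Q''=a.0+\rt.0$ satisfy $\theta_{\{a,b\}}(P')\bisimtbr\theta_{\{a,b\}}(Q'')$ but not $\theta_\emptyset(P')\bisimtbr\theta_\emptyset(Q'')$), so the remaining environments cannot be ignored. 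Moreover, the equations your induction hypothesis delivers are of the form $\gamma.\widehat{\theta_X(P')} = \gamma.\widehat{\theta_X(Q'')}$, i.e.\ about $\theta_X$-wrapped processes, whereas the summands of $\hat Q$ are $\rt.Q''$, not $\rt.\theta_X(Q'')$, and $P'$ need not be (unrelativised) branching reactive bisimilar to \emph{any} $\rt$-derivative of $Q$. The $\rt$-branching axiom alone therefore cannot absorb $\rt.P'$ into $\hat Q$. The missing ingredient is the reactive approximation axiom: the paper uses it (together with $\psi_X$, whose semantics wraps $\rt$-derivatives in $\theta_X$, and Lemma~\ref{lem:independent}, which guarantees that for strongly guarded processes elidability of a time-out is uniform in $X$) precisely to aggregate the per-$X$ provable equalities into absolute equations between head normal forms. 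Your plan invokes RAA only to obtain $\mbox{\bf L}\tau$, never mentions $\psi_X$ or Lemma~\ref{lem:independent}, and so the time-out case cannot be completed as described.

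For comparison, the paper's proof does not absorb summands of $\hat P$ into $\hat Q$ at all. It first pushes $P$ (under the prefix $\alpha$) along inert $\tau$- and $\rt$-steps to a \emph{brb-stable} residue $P_0$, proving $Ax_r \vdash \alpha.\hat P = \alpha.\hat P_0$ with the branching and $\rt$-branching axioms plus RAA; it does the same for $Q$; then Lemma~\ref{lem:brb-stable} upgrades $P_0 \bisimtbr Q_0$ to $P_0 \bisimrtbr Q_0$, after which summand-by-summand matching is legitimate and is carried out via $\psi_X$ and RAA. This order of collapse also sidesteps a secondary weakness of your $A_\tau$ case: to fire $\alpha.(\tau.(x+y)+x)=\alpha.(x+y)$ at the top of $\hat Q + \beta.P'$ you would need the target of each inert $\tau$ of $Q$ to provably contain all sibling summands, which is not true in general (the match may sit several inert steps deep); collapsing towards the stable end, as the paper does, avoids having to expose $\beta.P'$ at the top level of $\hat Q$. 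So the overall architecture needs to be reorganised along these lines, not merely patched.
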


\begin{proof}
    The \emph{depth} $d(p)$ of a process $P$ is the length of the longest path starting from $P$. Note that it is properly defined for recursion-free processes only. The proof proceeds by induction on $\max(d(P),d(Q))$. The technique is fairly standard and the details can be found in Appendix \ref{app:completeness finite}.
\end{proof}

\begin{theorem} \label{thm:completeness finite}
    Let $P,Q$ be two recursion-free $\ccsp$ processes. If $P \bisimrtbr Q$ (resp.\ $P \bisimrtb Q$) then $Ax_r \vdash P = Q$ (resp.\ $Ax \vdash P = Q$).
\end{theorem}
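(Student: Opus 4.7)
The plan is to apply Lemma~\ref{lem:head-normal form} first, reducing $P$ and $Q$ to their head-normal forms $\hat{P}$ and $\hat{Q}$ inside the axiom system, so the task becomes $Ax_r \vdash \hat{P} = \hat{Q}$ (resp.\ $Ax \vdash \hat{P} = \hat{Q}$). Since $\hat{P}$ and $\hat{Q}$ are sums of summands $\alpha.R$ indexed by the outgoing transitions of $P$ and $Q$, the remaining work is a summand-matching argument that offloads the inductive content onto Proposition~\ref{prop:collapse}.

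For each summand $\alpha.P'$ of $\hat{P}$ with $\alpha \in A_\tau$, Clause~1.a of Definition~\ref{def:rooted intuitive} (resp.\ Clause~1 of Definition~\ref{def:rooted non-reactive}) supplies a transition $\hat{Q} \step{\alpha} Q'$ with $P' \bisimtbr Q'$ (resp.\ $P' \bisimtb Q'$). Proposition~\ref{prop:collapse} yields $\alpha.\hat{P'} = \alpha.\hat{Q'}$ in $Ax_r$ (resp.\ $Ax$), and a second application of Lemma~\ref{lem:head-normal form} rewrites both sides to $\alpha.P' = \alpha.Q'$. Combined with the symmetric direction and the idempotence of $+$, this matches the non-$\rt$ parts of $\hat{P}$ and $\hat{Q}$. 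In the $\bisimrtb$ case Definition~\ref{def:rooted non-reactive} matches $\rt$-transitions directly in the same way, so the $\bisimrtb$ half is complete at this point.

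The $\bisimrtbr$ case is subtler because Definition~\ref{def:rooted intuitive} only matches $\rt$-transitions through Clauses~1.b and~2.d, which require an environment $X$ with $\deadend{\hat{P}}{X}$ and yield only the triple-indexed equivalence $\bisimtbr[X]$. The plan here is to invoke the reactive approximation axiom and prove $Ax_r \vdash \psi_X(\hat{P}) = \psi_X(\hat{Q})$ for every $X \subseteq A$. When $\tau \in \init{\hat{P}}$ or $X \cap \init{\hat{P}} \neq \emptyset$, the $\psi_X$-axioms absorb every $\rt$-summand and $\psi_X$ reduces to the identity on the remainder, so the task collapses to the non-$\rt$ matching of the previous paragraph. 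Otherwise $\deadend{\hat{P}}{X}$ holds, and Clauses~1.b and~2.d provide, for each $\rt.P_j^t$ in $\hat{P}$, a matching $\hat{Q} \step{\rt} Q'$ with $P_j^t \bisimtbr[X] Q'$, i.e.\ $\theta_X(P_j^t) \bisimtbr \theta_X(Q')$ by Proposition~\ref{prop:time-out bisim}; Proposition~\ref{prop:collapse} with $\alpha = \rt$ then gives $\rt.\theta_X(P_j^t) = \rt.\theta_X(Q')$ in $Ax_r$. The axiom $\psi_X(\sum_i \rt.y_i) = \sum_i \rt.\theta_X(y_i)$ rewrites $\psi_X(\hat{P})$ and $\psi_X(\hat{Q})$ so that these pointwise equalities, together with the symmetric direction and idempotence of $+$, collapse into $\psi_X(\hat{P}) = \psi_X(\hat{Q})$.

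The main obstacle is precisely this $X$-parameterised matching of $\rt$-transitions for $\bisimrtbr$: the partner $Q'$ in Clause~2.d may depend on $X$ and the resulting equivalence holds only inside $\theta_X$, so one cannot pick a single summand of $\hat{Q}$ as the counterpart of $\rt.P_j^t$. The reactive approximation axiom is what unlocks the argument, by letting the equality be established $X$ by $X$ and then recombined; carefully bookkeeping the two regimes of $\psi_X$-simplification (when $\rt$-summands survive versus when they are absorbed) is the most delicate part of the calculation.
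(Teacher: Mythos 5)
Your proposal is correct and takes essentially the same route as the paper: the paper also reduces both processes to head-normal form, matches $A_\tau$-derivatives (and, for $\bisimrtb$, $\rt$-derivatives) via Proposition~\ref{prop:collapse} plus Lemma~\ref{lem:head-normal form}, and for $\bisimrtbr$ handles time-outs through the reactive approximation axiom by proving $Ax_r \vdash \psi_X(P)=\psi_X(Q)$ for every $X$, with the same case split on whether $\init{P}\cap(X\cup\{\tau\})$ is empty and the same use of $\theta_X$-prefixed targets from Clauses 1.b/2.d and Proposition~\ref{prop:time-out bisim}. The only cosmetic differences are that the paper delegates this argument to the final part of the proof of Proposition~\ref{prop:collapse} (equating the brb-stable $P_0,Q_0$ there) and routes the summand absorption through $\psi_X(P+Q)$, whereas you collapse directly to $\psi_X(\hat P)=\psi_X(\hat Q)$; the ingredients are identical.
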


\begin{proof}
    It suffices to express both processes in their head-normal form and then to equate each pair of matching branches using Proposition \ref{prop:collapse}. Details are in Appendix \ref{app:completeness finite}.
\end{proof}

\noindent
The following theorem lifts this result for $\bisimrtb$ from finite (recursion-free) processes to arbitrary (infinite) ones, subject to the restriction of strong guardedness.

\begin{theorem} \label{thm:completeness}
    Let $P,Q$ be strongly guarded $\ccsp$ processes.\\ If $P \bisimrtb Q$ then $Ax^\infty \vdash {P} = {Q}$.
\end{theorem}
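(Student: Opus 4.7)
The plan is to reduce completeness for strongly guarded processes to a single application of RSP (Proposition \ref{prop:rsp}) by exhibiting a common well-guarded recursive specification $\equa$ that is provably satisfied by both $P$ and $Q$. Since $\bisimrtb$ is a full congruence (Theorem \ref{thm:congruence}) and validates both RDP and RSP, locating such a specification is the essential content of the theorem.

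The canonical-representative construction proceeds as follows. For each $\bisimrtb$-equivalence class $[R]$ among the states reachable from $P$ and $Q$, and closed under taking $\theta_X$-images for $X \subseteq A$ (needed because the contraction of $\rt$-transitions interacts with the allowed set), choose a representative and introduce a fresh variable $x_{[R]}$. Strong guardedness forbids infinite $\tau/\rt$-paths, so each representative admits a head-normal form (Lemma \ref{lem:head-normal form}) in which maximal chains of $\tau$- and $\rt$-transitions are finite. Using the branching, $\rt$-branching and $\tau/\rt$-branching axioms of Table \ref{tab:axioms}, together with Lemma \ref{lem:independent} to handle the environment-dependent part uniformly, contract each $\tau$- or $\rt$-guarded branch into the visible-action guards coming from the head-normal form of its successors, until the equation $(\equa)_{x_{[R]}} = \sum_i a_i . x_{[R_i]}$ features only guards $a_i \in A$. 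This renders $\equa$ manifestly well-guarded.

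Next, verify that the assignment $\rho\colon x_{[R]} \mapsto R$ is a derivable solution: $Ax^\infty \vdash R = (\equa)_{x_{[R]}}[\rho]$ for each class, using Proposition \ref{prop:collapse} and Theorem \ref{thm:completeness finite} on the finite recursion-free sub-expressions arising at each step. By RSP, $Ax^\infty \vdash P = \langle x_{[P]} \mid \equa\rangle$, and symmetrically $Ax^\infty \vdash Q = \langle x_{[P]} \mid \equa\rangle$ using $[P] = [Q]$. Hence $Ax^\infty \vdash P = Q$.

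The hard part will be the contraction step: showing that the three branching axioms really suffice to eliminate every $\tau$- and $\rt$-guard from the body of each equation while preserving derivable equality, and that the $\theta_X$-bookkeeping never introduces a free occurrence of a variable that is not prefixed by some $a \in A$. This is precisely where the \emph{canonical representatives} method earns its name: fixing one representative per class once and for all ensures that the contraction terminates consistently across syntactically distinct but bisimilar occurrences, so that the resulting $\equa$ is genuinely well-guarded and RSP can legitimately be applied.
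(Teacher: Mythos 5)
There is a genuine gap, and it sits exactly where you locate the ``hard part''. Under $\bisimtb$ a $\rt$-transition can never be eliminated outright: Clause 2 of Definition~\ref{def:non-reactive} demands that a $\rt$-step be matched by a path containing at least one $\rt$ ($r>0$), and the $\rt$-branching axiom only contracts \emph{consecutive} time-outs, just as the branching axiom only removes \emph{inert} $\tau$s. Hence your contraction step, which is supposed to end with every equation body of the form $\sum_i a_i.x_{[R_i]}$ with all $a_i\in A$, is unachievable: the strongly guarded process $\rt.a.0$ is not rooted $\rt$-branching bisimilar to any process all of whose initial actions are visible, so no sound derivation can bring its defining equation into that shape. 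The step is also unnecessary: the paper's notion of well-guardedness only requires that, after repeated substitution of equation bodies, every variable occurrence ends up under a guard $a\in A$; this follows directly from strong guardedness (no infinite $\tau/\rt$-paths) even when the equations themselves carry $\tau$- and $\rt$-prefixes. Moreover, the $\theta_X$-closure of classes and Lemma~\ref{lem:independent} belong to the reactive bisimilarity $\bisimtbr$ and play no role for the non-reactive relation $\bisimrtb$ treated in this theorem.

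The second gap is circularity in verifying that $x_{[R]}\mapsto R$ (or a chosen representative) is a \emph{provable} solution. To check an equation you must derive equalities such as $\alpha.R' = \alpha.\chi([R'])$ between bisimilar successors, and these successors are in general not recursion-free, so Proposition~\ref{prop:collapse} and Theorem~\ref{thm:completeness finite} do not apply; what you need there is an instance of the very completeness statement being proved. This is precisely why the paper reserves the canonical-representative method for $\bisimrtbr$ (Theorem~\ref{thm:canonical}), whose solution-verification explicitly invokes Theorem~\ref{thm:completeness} as an already established fact, and proves Theorem~\ref{thm:completeness} itself by \emph{equation merging}: variables $x_{EF}$ indexed by pairs of $\bisimtb$-related states reachable from $P$ and $Q$, equations collecting the jointly matched $\alpha$-transitions together with one-sided elidable $\tau$- and $\rt$-steps, and two explicitly constructed solution families $G_{EF}$ (head-normal forms of the $E$-side augmented with summands $\tau.E$ and/or $\rt.E$) whose solution property is verified using Lemma~\ref{lem:head-normal form} and the three branching axioms; a single application of RSP then yields $Ax^\infty\vdash P=Q$. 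To salvage your plan you would have to replace both the contraction step and the solution verification along these lines.
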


\begin{proof}
A well-known technique called \emph{equation merging} can be applied. Details can be found in Appendix \ref{app:completeness}.
\end{proof}

\subsection{Canonical Representative} \label{subsec:canonical}

Unfortunately, equation merging does not work on reactive bisimulations \cite{strongreactivebisimilarity}. Thus, another technique is used \cite{GF20,LY20}, called \emph{canonical representatives}. The idea is to build the simplest process for each equivalence class of $\bisimrtbr$ and use them as intermediary to equate processes.

Let us denote with $\closed^g$ the strongly guarded fragment of $\closed$. For all $P \in \closed^g$, $[P] := \{Q \in \closed^g \mid P \bisimtbr Q\}$ is the $\bisimtbr$-equivalence class of $P$. $[\closed^g]$ denotes the set of all $\bisimtbr$-equivalence classes. Using the axiom of choice, a choice function $\chi: [\closed^g] \rightarrow \closed^g$ can be defined such that $\forall R \in [\closed^g], \chi(R) \in R$. A transition relation can be defined between $\bisimtbr$-equivalence classes:
\begin{align*}
    \forall \alpha \in A_\tau, (R \step{\alpha} R' \Leftrightarrow\; & \chi(R) \pathtau P_1 \step{\alpha} P_2 \wedge P_1\in R \wedge P_2\in R' \wedge (\alpha \in A \vee R \ne R')) \\
    R \step{\rt} R' \Leftrightarrow\; & \exists X \mathbin\subseteq A,r\mathbin>0, \chi(R) \pathtau P_1 \step{\rt} P_2 \pathtau P_3 \step{\rt} ... \pathtau P_{2r{-}1} \step{\rt} P_{2r} \\
    & \wedge \forall i \in [0,r{-}1], \theta_X(P_{2i}) \in [\theta_X(\chi(R))] \wedge \deadend{P_{2i+1}}{X} \\
    &\wedge P_1 \in R \wedge P_{2r} \in R' \wedge [\theta_X(\chi(R))] \ne [\theta_X(\chi(R'))]
\end{align*}

\noindent
All bisimulations can be extended to $\bisimtbr$-equivalence classes. It suffices to consider the set of states $\closed^g \uplus [\closed^g] \uplus \{\theta_X([P]) \mid X \subseteq A \wedge P\in \closed^g\}$.

\begin{proposition} \label{prop:class}
    Let $P \in \closed^g$, $P \bisimtbr[] [P]$.
\end{proposition}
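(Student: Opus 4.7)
The plan is to exhibit a generalised \tb reactive bisimulation (as characterised in Appendix~\ref{app:gbrb}) on the extended state space $\closed^g \cup [\closed^g] \cup \{\theta_X([R]) \mid X \subseteq A,\ R\in[\closed^g]\}$. The natural candidate is
\[
\R \mathbin{:=} \{(Q,[Q]) \mid Q\in\closed^g\}\ \cup\ \{(Q,X,[Q]) \mid Q\in\closed^g,\ X\subseteq A\}
\]
closed under symmetry. The only fact needed about canonical representatives is that $\chi([Q])\in[Q]$, so $\chi([Q])\bisimtbr Q$, and that the labelled transitions out of the class state $[Q]$ are, by construction, extracted from paths starting in $\chi([Q])$.

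Most clauses then reduce to transporting a matching move through the intermediary $\chi([Q])$. For Clause~1(a), a transition $Q\step\alpha Q'$ with $\alpha\in A_\tau$ is matched, via $Q\bisimtbr \chi([Q])$, by a path $\chi([Q])\pathtau P_1\step{\opt\alpha}P_2$ with $P_1\in[Q]$ and $P_2\in[Q']$. If $\alpha\in A$, or if $\alpha=\tau$ and $[Q]\neq[Q']$, this path witnesses the class-level transition $[Q]\step\alpha [Q']$, and the pairs $\R(Q,[Q]),\R(Q',[Q'])$ supply the required reactivity witnesses; if $\alpha=\tau$ with $[Q]=[Q']$, the step is absorbed and a zero-length class path suffices (stuttering). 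The converse direction uses the very definition of $\to$ on $[\closed^g]$: any $[Q]\step\alpha R$ unfolds to a concrete path from $\chi([Q])$, which $Q$ matches back via $Q\bisimtbr\chi([Q])$. Clause~1(b) holds by construction of $\R$, and Clauses~2(a)--(b) follow the same pattern using $\chi([Q])\bisimtbr[X] Q$ (in the generalised form).

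Clauses~2(c) and~2(e) are settled via Lemma~\ref{lem:obvious}: whenever $\deadend{Q}{X}$ or $Q\nsteptau$, the corresponding stability lifts to some $\tau$-reduct of $\chi([Q])$, and hence, since intra-class $\tau$-steps are absorbed in the class LTS, to $[Q]$ itself via a zero-length class path. The real technical point is Clause~2(d): a time-out $Q\step\rt Q'$ with $\deadend{Q}{X}$ must be mirrored by an alternating $(\pathtau,\step\rt)$-sequence out of $[Q]$ ending in a state related to $Q'$. The class-level $\rt$-transition relation is precisely engineered, through the side condition $[\theta_X(\chi(R))]\neq[\theta_X(\chi(R'))]$, so that a single class-level $\rt$-step encodes an entire block of concrete $\rt$-steps that preserve the $\theta_X$-class and end with one that crosses it. Taking the alternating path delivered by $\chi([Q])\bisimtbr[X] Q$ and grouping its $\rt$-steps according to $\theta_X(\chi(\cdot))$-classes yields the desired class-level path, with intermediate reactivity $\R(Q,X,R_{2i})$ and dead-end conditions $\deadend{R_{2i+1}}{X}$ inherited from the concrete witnesses.

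The main obstacle I expect is Clause~2(d): one must check carefully that the block-grouping of the concrete alternating path produces a valid class-level path (in particular, that each maximal $\theta_X$-class-changing segment matches the side condition in the definition of $R\step\rt R'$, and that any trailing segment remaining within a single $\theta_X$-class is correctly absorbed by the $\pathtau$-prefix on the class side or by the optional final $\opt\rt$). The converse direction -- lifting an alternating class-level path back to $Q$ -- is symmetric, since each class-level $\rt$-transition comes by definition with a witnessing concrete alternating path that $Q$ can match through $Q\bisimtbr[X]\chi([Q])$.
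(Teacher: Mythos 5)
There is a genuine gap, and it sits exactly where you flagged ``the real technical point''. Your relation $\R=\{(Q,[Q])\}\cup\{(Q,X,[Q])\}$ is too small to satisfy the time-out clause. Consider a time-out $Q\step{\rt}Q'$ with $\deadend{Q}{X}$ that is \emph{elidable}, i.e.\ $\theta_X(Q)\bisimtbr\theta_X(Q')$ but $Q\,\not\!\bisimtbr Q'$. By design the class LTS omits such a step (the side condition $[\theta_X(\chi(R))]\neq[\theta_X(\chi(R'))]$ filters it out), so the matching class-level path out of $[Q]$ must be the degenerate one ending in $[Q]$ itself; the clause then demands $\R(Q',X,[Q])$, but your relation only contains $(Q',X,[Q'])$, and $[Q']\neq[Q]$ since only the $\theta_X$-images are bisimilar. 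Repairing this forces you to add triples like $(Q',X,R)$ whenever $\theta_X(Q')\bisimtbr\theta_X(\chi(R))$, and verifying the clauses for those extra triples is essentially the content of Lemma~\ref{lem:theta class} ($\theta_X([P])\bisimtbr[\theta_X(P)]$), which the paper has to establish separately via a dedicated up-to-reflexivity-and-transitivity argument before it can run the proof of Proposition~\ref{prop:class} (as a time-out bisimulation up to $\bisimtbr$, Definition~\ref{def:up to b}, rather than a plain generalised reactive bisimulation).

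The second gap is in the converse direction, which you dismiss as ``symmetric''. A class transition $[Q]\step{\rt}R'$ comes with a witnessing concrete path relative to some \emph{existentially chosen} set $Z$ in the definition of $\step{\rt}$ on $[\closed^g]$; when the bisimulation clause is invoked for a different set $X$ with $\deadend{[Q]}{X}$, that witness tells you nothing about $\theta_X$-classes or about $\deadend{\cdot}{X}$ along the path. Bridging $Z$ and $X$ is precisely what Lemma~\ref{lem:independent} (elision of a time-out is independent of the allowed set, for strongly guarded processes) is for, and it is used in the paper inside Lemma~\ref{lem:transition class}.4; this is also the only place strong guardedness of $\closed^g$ enters. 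Your sketch never invokes Lemma~\ref{lem:independent}, and without it the ``block-grouping'' argument does not go through: neither the trailing intra-$\theta_X$-class block (first gap) nor the $Z$-versus-$X$ mismatch (second gap) is resolved by regrouping alone.
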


\begin{proof}
    It suffices to prove that $\tbisim := \{(P,[P]),([P],P) \mid P \in \closed^g\}$ is a branching time-out bisimulation up to $\bisimtbr$\,. Details can be found in Appendix \ref{app:canonical rep}.
\end{proof}

\begin{definition}\rm \label{def:canonical representative}
    Let $P, Q \in \closed^g$, the \emph{canonical representative} of $P$ and $Q$ is a recursive specification $\equa$ such that $V_{\equa} := \{x_P,x_Q\} \cup \{x_R \mid R \in \bigcup_{P' \in \reach{P} \cup \reach{Q}}\reach{[P']}\}$, and $\forall R \in \bigcup_{P' \in \reach{P}\cup \reach{Q}}\reach{[P']}$,
    \begin{align*}
        \equa_{x_P} :=  \hspace{-10pt}\sum_{\{(\alpha,P') \mid P \step{\alpha} P'\}} \hspace{-10pt}\alpha.x_{[P']} \mbox{ ; } \equa_{x_Q} := \hspace{-10pt}\sum_{\{(\alpha,Q') \mid Q \step{\alpha} Q'\}} \hspace{-10pt}\alpha.x_{[Q']} \mbox{ and } \equa_{x_R} :=  \hspace{-10pt}\sum_{\{(\alpha,R') \mid R \step{\alpha} R'\}} \hspace{-10pt}\alpha.x_{R'}
    \end{align*}
\end{definition}

\noindent
The canonical representative is well-defined since $P$, $Q$, as well as processes $[P']\in[\closed^g]$ are finitely branching~\cite{strongreactivebisimilarity}. Additionally, $\bigcup_{P' \in \reach{P}\cup\reach{Q}}\reach{[P']}$ is countable. Moreover, $\equa$ is strongly guarded. Furthermore, \hypertarget{recall}{by construction $\langle x_R |\equa\rangle \bisim R$} for all $R \in \bigcup_{P' \in \reach{P}\cup\reach{Q}}\reach{[P']}$. 

\begin{proposition} \label{prop:canonical}
    Let $P,Q \in \closed^g$ and $\equa$ be the canonical representative of $P$ and $Q$. $Ax^\infty_r \vdash P = \langle x_P |\equa\rangle$.
\end{proposition}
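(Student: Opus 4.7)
The strategy is to apply RSP (Proposition~\ref{prop:rsp}), which is available since $\equa$ is strongly guarded by construction. I would construct a substitution $\nu \colon V_\equa \to \closed^g$ by setting $\nu(x_P) := P$, $\nu(x_Q) := Q$, and $\nu(x_R) := \chi(R)$ for each reached class-state $R$. The default substitution $\rho \colon x \mapsto \langle x \mid \equa\rangle$ solves $\equa$ provably by RDP, so if I can show that $\nu$ also provably solves $\equa$ in $Ax^\infty_r$, then RSP will force $Ax^\infty_r \vdash \nu(x) = \rho(x)$ for every $x \in V_\equa$, yielding in particular $Ax^\infty_r \vdash P = \langle x_P \mid \equa\rangle$.

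The remaining task is to derive $Ax^\infty_r \vdash \nu(x) = \equa_x[\nu]$ for each $x \in V_\equa$. I would start by rewriting $\nu(x)$ into its head-normal form via Lemma~\ref{lem:head-normal form}. For $x_P$ this yields $\hat{P} = \sum_{\{(\alpha,P') \mid P \step{\alpha} P'\}} \alpha.P'$, which differs from $\equa_{x_P}[\nu] = \sum \alpha.\chi([P'])$ only by replacing each summand $\alpha.P'$ by $\alpha.\chi([P'])$; the same reduction applies to $x_Q$. Since $P' \bisimtbr \chi([P'])$ by Proposition~\ref{prop:class} (both lie in the class $[P']$), it suffices here to prove $Ax^\infty_r \vdash \alpha.P' = \alpha.\chi([P'])$, that is, to establish a provable congruence-under-prefix result for $\bisimtbr$ on strongly guarded processes. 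For $x_R$ one must match $\widehat{\chi(R)}$ with $\sum_{\{(\alpha,R') \mid R \step{\alpha} R'\}} \alpha.\chi(R')$; these sums differ more substantially because the filtered class-transition relation forbids $\tau$-self-loops on classes and imposes a specific alternating $\tau$/$\rt$ pattern for time-outs. The three branching axioms (standard, $\rt$-, and $\tau/\rt$-branching) are precisely tailored to absorb such redundant transitions under an action prefix.

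The principal obstacle is that all three branching axioms require an outer action prefix, and therefore cannot be applied at the top level of an equation like $\chi(R) = \sum \alpha.\chi(R')$ directly. I would organise the derivation so that every invocation of a branching axiom occurs inside some prefix context: rather than deriving $\nu(x_R) = \equa_{x_R}[\nu]$ in isolation, one should match each summand $\alpha.\nu(x_R)$ appearing in a parent equation $\equa_{x_{R_0}}[\nu]$ (for a transition $R_0 \step{\alpha} R$) against the corresponding expansion produced by $\widehat{\chi(R_0)}$, performing the branching-axiom collapses there under the prefix $\alpha$. Distributing the proof obligations across the prefix contexts in which each variable $x_R$ occurs is the heart of the canonical-representative method of~\cite{GF20,LY20}, and the technically delicate work lies in bookkeeping these rewrites so that the resulting system of provable equalities exactly witnesses $\nu$ as a solution of $\equa$; once this is in place, RSP concludes the proof.
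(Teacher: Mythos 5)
Your overall plan---RSP with the default solution on one side and a hand-built substitution on the other---is the right family of ideas, but the substitution you propose does not work, and not merely for bookkeeping reasons. To use RSP as an axiom you must first derive the top-level equations $Ax^\infty_r \vdash \nu(x_R) = \equa_{x_R}[\nu]$ for every class variable, and by soundness (Proposition~\ref{prop:soundness}) this requires $\nu(x_R) \bisimrtbr \equa_{x_R}[\nu]$. With $\nu(x_R) := \chi(R)$ this fails in general: the transition relation on classes deliberately suppresses elidable $\tau$- and $\rt$-steps, while $\chi(R)$ may still have to perform such a step \emph{first}, and rootedness matches first steps strongly. Concretely, if $\chi(R) = \tau.a.0$ then the class-state $R$ has the single transition $R \step{a} [0]$, so $\equa_{x_R}[\nu] = a.\chi([0]) \bisimrtbr a.0$, whereas $\tau.a.0 \,\not\!\bisimrtbr a.0$. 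Hence the equation you need is not even sound, so it is certainly not derivable, and no redistribution of branching-axiom applications into prefix contexts can rescue it, because the axiomatic RSP needs the unprefixed equations themselves. (The choice function $\chi$ is an arbitrary consequence of the axiom of choice, so you cannot assume it picks brb-stable representatives, and an analogous problem arises with elidable time-outs.) Moreover, even for the variable $x_P$ your reduction to ``$Ax^\infty_r \vdash \alpha.P' = \alpha.\chi([P'])$ whenever $P' \bisimtbr \chi([P'])$'' is circular: for strongly guarded, possibly infinite-state processes there is no depth to induct on, and that statement is essentially the completeness result that Proposition~\ref{prop:canonical} and Theorem~\ref{thm:canonical} jointly establish.

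The paper avoids both problems by applying RSP not to $\equa$ but to an auxiliary specification $\equa'$ with one variable $y_{P'}$ for each $P' \in \reach{P}$ and equations given by the head-normal forms of the $P'$. Then $P$ itself is a solution of $\equa'$ trivially (Lemma~\ref{lem:head-normal form}), and the substitution $\nu(y_{P'}) := \sum_{\{(\alpha,P'')\mid P'\step{\alpha}P''\}} \alpha.\langle x_{[P'']}\mid\equa\rangle$ is shown to be another one, whose $y_P$-component equals $\langle x_P\mid\equa\rangle$ by RDP. The identities needed there, $Ax^\infty_r \vdash \alpha.\nu(y_{P'}) = \alpha.\langle x_{[P']}\mid\equa\rangle$, sit under a prefix by construction, so the branching and $\rt$-branching axioms do apply; the remaining comparisons of $\theta$-wrapped canonical terms are discharged using the reactive approximation axiom, Lemmas~\ref{lem:independent}, \ref{lem:transition class}, \ref{lem:theta class} and \ref{lem:simplication}, and crucially the already-proved completeness of $Ax^\infty$ for $\bisimrtb$ (Theorem~\ref{thm:completeness}); that last ingredient is what replaces the ``provable congruence-under-prefix'' step your proposal leaves open.
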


\begin{proof}
  It suffices to show that $P$ and $\langle x_P|\equa\rangle$ are $y_P$-components of solutions of
  $\{y_{P^\dag} = \sum_{\{(\alpha,P^\ddag) \mid P^\dag\step{\alpha} P^\ddag\}}\alpha.y_{P^\ddag} \mid P^\dag \in \reach{P}\}$.
  Details can be found in Appendix \ref{app:canonical}.
\end{proof}

\begin{theorem} \label{thm:canonical}
    Let $P,Q \in \closed^g$. If $P \bisimrtbr Q$ then $Ax^\infty_r \vdash P = Q$.
\end{theorem}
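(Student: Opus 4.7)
The approach I would take is to combine Proposition~\ref{prop:canonical} with the recursive specification principle. Let $\equa$ be the canonical representative of $P$ and $Q$. By Proposition~\ref{prop:canonical}, $Ax_r^\infty \vdash P = \langle x_P|\equa\rangle$; by the symmetric construction (swapping the roles of $P$ and $Q$), $Ax_r^\infty \vdash Q = \langle x_Q|\equa\rangle$. Since $\bisimrtbr \subseteq \bisimtbr$, we have $[P]=[Q]$, so $x_{[P]}=x_{[Q]}$ as variables in $V_\equa$, and the sub-specification of $\equa$ obtained by restricting to class variables reachable from $[P]$ is shared between the two processes. The task therefore reduces to showing $Ax_r^\infty \vdash \langle x_P|\equa\rangle = \langle x_Q|\equa\rangle$.

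To obtain this, I will introduce a fresh variable $z$ and a merged specification $\equa^\sharp$ over variables $\{z\}\cup\{x_R \mid R \in \reach{[P]}\}$, taking $\equa^\sharp_{x_R} := \equa_{x_R}$ (the equations inherited from $\equa$) and $\equa^\sharp_z := \equa_{x_{[P]}}$. This specification is well-guarded: any infinite $\tau/\rt$-path in the class LTS reachable from $[P]$ would, via $\chi$ and Proposition~\ref{prop:class}, lift to an infinite $\tau/\rt$-path from some state in $\reach{P}$, contradicting the strong guardedness of $P$. The plan is to prove that both $\langle x_P|\equa\rangle$ and $\langle x_Q|\equa\rangle$ are $z$-components of solutions of $\equa^\sharp$ up to $Ax_r^\infty$-provable equality; the RSP axiom will then force both of them to equal $\langle z|\equa^\sharp\rangle$, and thus each other.

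The crux is the axiomatic derivation of $\langle x_P|\equa\rangle = \sum_{[P]\step{\alpha}R'}\alpha.\langle x_{R'}|\equa\rangle$ (and symmetrically for $Q$). By RDP, the left-hand side unfolds to $\sum_{P\step{\alpha}P'}\alpha.\langle x_{[P']}|\equa\rangle$, so the task becomes axiomatically equating two finite sums, the first indexed by $P$'s state-level transitions and the second by $[P]$'s class-level transitions. These sums differ in exactly the two ways that Proposition~\ref{prop:class} quietly bridges semantically: $P$ may have $\tau$-transitions into $[P]$ itself, suppressed in the class LTS, and $P$'s $\rt$-transitions may form chains through intermediate states belonging to a common $[\theta_X(\cdot)]$-class, which the class LTS consolidates into a single $\rt$-step. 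The first discrepancy is handled by the branching axiom, which absorbs redundant $\tau$-branches to equivalent targets; the second by iterating the $\rt$-branching axiom, aided by Lemma~\ref{lem:independent} to guarantee the environment-independence of the consolidation, and, if necessary, the reactive approximation axiom to assemble pointwise $\psi_X$-equalities into a global one.

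Once these sum-equalities are established (symmetrically for $Q$), RSP applied to $\equa^\sharp$ yields $Ax_r^\infty \vdash \langle x_P|\equa\rangle = \langle z|\equa^\sharp\rangle = \langle x_Q|\equa\rangle$, and combining with Proposition~\ref{prop:canonical} completes the proof. The hardest part will be the axiomatic reduction of the state-level sum to the class-level sum in the $\rt$-case: it requires applying the $\rt$-branching axiom across $\rt$-chains of arbitrary finite length whose intermediate states lie in a common $[\theta_X(\cdot)]$-class, and ensuring that the choice of witness $X$ does not affect the outcome---this is precisely where Lemma~\ref{lem:independent} (together with the strong guardedness of $P$) enters the picture.
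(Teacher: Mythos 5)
Your route breaks down at its declared crux: the equation $Ax^\infty_r \vdash \langle x_P|\equa\rangle = \sum_{\{(\alpha,R') \mid [P]\step{\alpha}R'\}}\alpha.\langle x_{R'}|\equa\rangle$ is in general not sound for $\bisimrtbr$, and hence, by Proposition~\ref{prop:soundness}, not derivable. By RDP the left-hand side unfolds to $\sum_{\{(\alpha,P') \mid P\step{\alpha}P'\}}\alpha.\langle x_{[P']}|\equa\rangle$, which retains the \emph{state-level} initial transitions of $P$, whereas the right-hand side carries only the \emph{class-level} ones, in which inert initial $\tau$-transitions and elidable initial time-outs of $P$ have been removed. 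Rooted branching reactive bisimilarity does not permit eliding an initial transition, and the branching, $\rt$-branching and $\tau/\rt$-branching axioms perform such elisions only under an action prefix. Concretely, take $P=Q=\tau.a.0$ (strongly guarded, and trivially $P \bisimrtbr Q$): then $\langle x_P|\equa\rangle \bisim \tau.\langle x_{[a.0]}|\equa\rangle$, while $[P]=[a.0]$ has the single class transition $[P]\step{a}[0]$, so your right-hand side is strongly bisimilar to $a.\langle x_{[0]}|\equa\rangle$; these two processes are not rooted branching reactive bisimilar (their initial action sets already differ), so the equation cannot be proved, and no other instantiation of the variables $x_{R'}$ can repair this, since provable equality would force the initial actions of $\langle x_P|\equa\rangle$ to be those of $[P]$. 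Consequently $\langle x_P|\equa\rangle$ is not a $z$-component of any solution of your merged specification $\equa^\sharp$, and RSP cannot be invoked. This is precisely why the canonical representative keeps the uncollapsed root equations $\equa_{x_P}$ and $\equa_{x_Q}$ alongside the class equations: collapsing the roots to class level destroys rootedness.

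The paper instead closes the argument without any further appeal to RSP (RSP has already done its work inside Proposition~\ref{prop:canonical}): it equates $\langle x_P|\equa\rangle$ and $\langle x_Q|\equa\rangle$ directly. After one RDP unfolding of each, the rootedness of $P \bisimrtbr Q$ matches initial transitions one-to-one. For $\alpha\in A_\tau$ the matched targets satisfy $P'\bisimtbr Q'$, hence $[P']=[Q']$ and the corresponding summands are syntactically identical. For time-out transitions one obtains, for each $X$ with $\deadend{P}{X}$, $\theta_X(P')\bisimtbr\theta_X(Q')$, hence $\theta_X([P'])\bisimtbr\theta_X([Q'])$ by Proposition~\ref{prop:class}, then $\theta_X([P'])\bisimtb\theta_X([Q'])$ by Lemma~\ref{lem:simplication}, and finally $Ax^\infty_r \vdash \rt.\theta_X(\langle x_{[P']}|\equa\rangle)=\rt.\theta_X(\langle x_{[Q']}|\equa\rangle)$ via the completeness theorem for $\bisimrtb$ (Theorem~\ref{thm:completeness}). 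These per-environment identities are assembled into $Ax^\infty_r \vdash \psi_X(\langle x_P|\equa\rangle)=\psi_X(\langle x_Q|\equa\rangle)$ for all $X\subseteq A$ and discharged by the reactive approximation axiom. Your instinct that the hard collapsing work involves the $\rt$-branching axiom along $\rt$-chains and Lemma~\ref{lem:independent} is right, but that work belongs to (and is already done in) the proofs of Propositions~\ref{prop:collapse} and~\ref{prop:canonical}, always under a prefix; it is not available, and not needed, at the roots.
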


\begin{proof}
    It suffices to equate $\langle x_P|\equa\rangle$ and $\langle x_Q|\equa\rangle$ using RDP and the reactive approximation axiom. Details can be found in Appendix \ref{app:canonical}.
\end{proof}

\section*{Conclusion}

This paper defined a form of branching bisimilarity for processes with time-out transitions, and provided a modal characterisation, congruence results, and a complete axiomatisation for strongly {guarded} processes. The restriction to strongly {guarded} processes is rather severe; it rules out processes that may engage in an infinite sequence of time-out transitions, interspersed with $\tau$s. Relaxing this restriction is a suitable topic for further work. Another task is to combine this work with the ideas behind \emph{justness} \cite{GH19}, a weaker form of fairness that allows the formulation and derivation of useful liveness properties. In a setting with time-outs, justness would demand that once a parallel component reaches a state in which a time-out transition is enabled, it cannot stay in that state forever after.

As an example of the use of branching reactive bisimulation, one could verify the correctness of a non-trivial system, such as Peterson's mutual exclusion protocol, as modelled in \cite{vG23a}. There it was argued that a similar model without time-out transitions is not possible. The model from \cite{vG23a} features eight visible actions of entering or leaving the critical or non-critical section of process A or B\@. Abstracting from all actions pertaining to process B yields a protocol that only deals with process A, and a correctness claim could be validated by showing it branching reactive bisimilar with a simple specification of the intended behaviour of A that would apply when B were not around. Although doing such a verification is entirely feasible, for now, it can not be achieved by algebraic means, using our complete axiomatisation. The reason is that abstraction from process B yields infinite sequences of unobservable actions, which are currently not covered by our work.

\bibliography{biblio}

\newpage
\appendix

\section{Examples} \label{app:examples}

\paragraph*{Scope of the First Clause of Definition \ref{def:intuitive}}

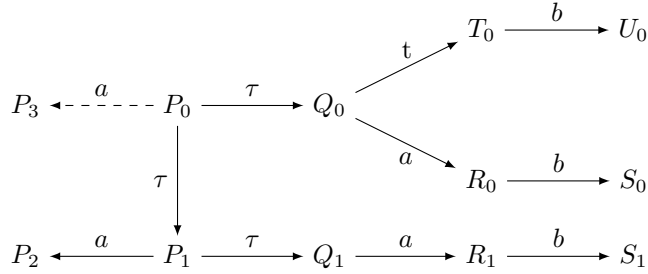
\begin{figure}[ht]
    \centering
    \begin{tikzpicture}
        \node (a) at (0,0) {$P_0$};
        \node (b) at (0,-2) {$P_1$};
        \node (c) at (-2,-2) {$P_2$};
        \node (c1) at (-2,0) {$P_3$};
        \node (d) at (2,0) {$Q_0$};
        \node (e) at (4,1) {$T_0$};
        \node (e1) at (6,1) {$U_0$};
        \node (f) at (4,-1) {$R_0$};
        \node (g) at (6,-1) {$S_0$};
        \node (d1) at (2,-2) {$Q_1$};
        \node (f1) at (4,-2) {$R_1$};
        \node (g1) at (6,-2) {$S_1$};

        \draw[->, >=latex] (a) -- node[midway,left]{$\tau$} (b);
        \draw[->, >=latex, dashed] (a) -- node[midway,above]{$a$} (c1);
        \draw[->, >=latex] (b) -- node[midway,above]{$a$} (c);
        \draw[->, >=latex] (a) -- node[midway,above]{$\tau$} (d);
        \draw[->, >=latex] (d) -- node[midway,above]{$\rt$} (e);
        \draw[->, >=latex] (e) -- node[midway,above]{$b$} (e1);
        \draw[->, >=latex] (d) -- node[midway,below]{$a$} (f);
        \draw[->, >=latex] (f) -- node[midway,above]{$b$} (g);
        \draw[->, >=latex] (b) -- node[midway,above]{$\tau$} (d1);
        \draw[->, >=latex] (d1) -- node[midway,above] {$a$} (f1);
        \draw[->, >=latex] (f1) -- node[midway,above]{$b$} (g1);
    \end{tikzpicture}
    \caption{Counter-Example to a Naive Clause 1.a.}
    \label{fig:counter-example 1a}
\end{figure}

\noindent
In Figure \ref{fig:counter-example 1a}, the process $a.0 + \tau.(\rt.b.0 + a.b.0) + \tau.(\tau.a.b.0 + a.0)$ is represented as an LTS\@. Let $A := \{a,b\}$. Removing the dashed $a$-transition generates the process $\tau.(\rt.b.0 + a.b.0) + \tau.(\tau.a.b.0 + a.0)$.

First, we are going to show that these two processes are not branching reactive bisimilar. Let's try to build a branching reactive bisimulation between them. The only way to match the dashed $a$-transition of $a.0 + \tau.(\rt.0 + a.b.0) + \tau.(\tau.a.b.0 + a.0)$ is by the $a$-transition between $P_1$ and $P_2$, because all other $a$-transitions are followed by a $b$-transition. This requires to elide the $\tau$-transition between $P_0$ and $P_1$, who must be branching reactive bisimilar. Since $P_0 \bisimtbr P_1$, when considering the $\tau$-transition between $P_0$ and $Q_0$, $Q_0$ has to be branching reactive bisimilar to $P_1$ or $Q_1$. Now, the $a$-transition between $Q_0$ and $R_0$ has to be matched by the $a$-transition between $Q_1$ and $R_1$ because of the following $b$-transition. This implies $Q_0 \bisimtbr Q_1$, thus, $Q_0 \bisimtbr[\emptyset] Q_1$. One has $\deadend{Q_0}{\emptyset}$ and $Q_0 \step{\rt} T_0$, i.e., when the environment temporary allows no visible actions, $Q_0$ can time-out into a state in which $b$ is possible. This behaviour cannot be matched by $Q_1$---a contradiction.

Now, consider the alternative to Definition \ref{def:intuitive} in which the first clause has been changed to 
\begin{enumerate}
    \item \begin{enumerate}
        \item if $P \steptau P'$ then there is a path $Q \pathtau Q_1 \step{\opt{\tau}} Q_2$ with $\R(P,Q_1)$ and $\R(P',Q_2)$.
    \end{enumerate}
\end{enumerate}
In other words, the scope of the first clause is restricted to $\tau$-transitions. This modification enables building a bisimulation between the two processes. Indeed, the dashed $a$-transition is only considered when the environment allows $a$. Thus, it is sufficient to get $P_0 \bisimtbr[A] P_1$ and $P_0 \bisimtbr[\{a\}] P_1$ and not $P_0 \bisimtbr P_1$ anymore. Therefore, it is sufficient to match $Q_0$ and $Q_1$ in environments allowing $a$. As a result, the outgoing time-out transition of $Q_0$ is never considered when matching $Q_0$ with $Q_1$, solving our previous issue. Once this observation is made, building the bisimulation is trivial.

Finally, place both processes in the context $\_\!\_ \parallel_{\{a\}} (\tau.0 + a.0)$. It behaves like a one-way switch enabling to block all $a$-transitions forever as soon as the $\tau$-transition is performed. Let's try to build a branching reactive bisimulation between the two processes. Following the same reasoning as before, it is necessary to get $P_0 \parallel_{\{a\}} (\tau.0 + a.0) \bisimtbr[A] P_1 \parallel_{\{a\}} (\tau.0 + a.0)$ because of the dashed $a$-transition, and then $Q_0 \parallel_{\{a\}} (\tau.0 + a.0) \bisimtbr[A] Q_1 \parallel_{\{a\}} (\tau.0 + a.0)$ because of the $a$-transition between $Q_0$ and $R_0$. Note that $Q_0 \parallel_{\{a\}} (\tau.0 + a.0) \steptau Q_0 \parallel_{\{a\}} 0\linebreak[2] \step{\rt} T_0 \parallel_{\{a\}} 0 \step{b} U_0 \parallel_{\{a\}} 0$ and $\deadend{Q_0 \parallel_{\{a\}} 0}{A}$. As before, $Q_0 \parallel_{\{a\}} (\tau.0 + a.0)$ can time-out into a state in which $b$ is executable, whereas this behaviour is impossible in $Q_1 \parallel_{\{a\}} (\tau.0 + a.0)$. As a result, restricting the scope of the first clause of Definition \ref{def:intuitive} to $\tau$-transitions prevents $\bisimtbr$ from being a congruence for parallel composition.

\paragraph*{Necessity of the Stability Respecting Clause}

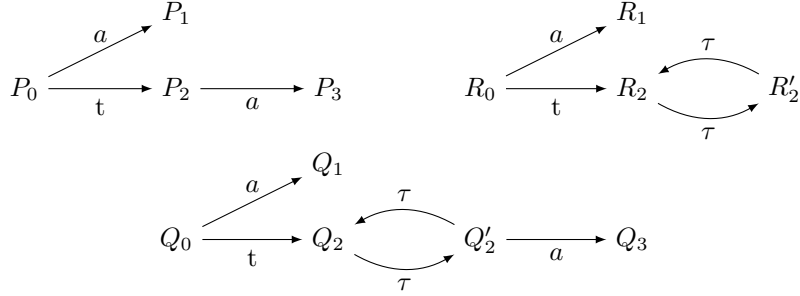
\begin{figure}[ht]
    \centering
    \begin{tikzpicture}
        \node (a) at (0,0) {$P_0$};
        \node (b) at (2,0) {$P_2$};
        \node (c) at (2,1) {$P_1$};
        \node (d) at (4,0) {$P_3$};
        \draw[->, >=latex] (a) -- node[midway,below]{$\rt$} (b);
        \draw[->, >=latex] (a) -- node[midway,above]{$a$} (c);
        \draw[->, >=latex] (b) -- node[midway,below]{$a$} (d);
        \node (a) at (2,-2) {$Q_0$};
        \node (b) at (4,-2) {$Q_2$};
        \node (b') at (6,-2) {$Q'_2$};
        \node (c) at (4,-1) {$Q_1$};
        \node (d) at (8,-2) {$Q_3$};
        \draw[->, >=latex] (a) -- node[midway,below]{$\rt$} (b);
        \draw[->, >=latex] (a) -- node[midway,above]{$a$} (c);
        \draw[->, >=latex] (b) to[bend right] node[midway,below]{$\tau$} (b');
        \draw[->, >=latex] (b') to[bend right] node[midway,above]{$\tau$} (b);
        \draw[->, >=latex] (b') -- node[midway,below]{$a$} (d);
        \node (a) at (6,0) {$R_0$};
        \node (b) at (8,0) {$R_2$};
        \node (b') at (10,0) {$R'_2$};
        \node (c) at (8,1) {$R_1$};
        \draw[->, >=latex] (a) -- node[midway,below]{$\rt$} (b);
        \draw[->, >=latex] (a) -- node[midway,above]{$a$} (c);
        \draw[->, >=latex] (b) to[bend right] node[midway,below]{$\tau$} (b');
        \draw[->, >=latex] (b') to[bend right] node[midway,above]{$\tau$} (b);
    \end{tikzpicture}
    \caption{Counter-Example to the Absence of a Stability Respecting Clause}
    \label{fig:example stability}
\end{figure}

\noindent
In Figure \ref{fig:example stability}, three processes are represented as LTSs. Take $A := \{a\}$. According to Definition~\ref{def:intuitive}, $\neg(P_0 \bisimtbr Q_0)$ and $Q_0 \bisimtbr R_0$. 

Let's try to build a branching reactive bisimulation between the top-left and bottom processes. Matching the time-out between $Q_0$ and $Q_2$ implies that $Q_2 \bisimtbr[\emptyset] P_0$ or $Q_2 \bisimtbr[\emptyset] P_2$. However, $P_0 \nsteptau$ and $P_2 \nsteptau$, thus, there should be a path $Q_2 \pathtau Q_2' \nsteptau$, but this is not the case.

The symmetric closure of
\begin{align*}
    \R := \{(Q_0,R_0), (Q_1,R_1), (Q_2,\emptyset,R_2), (Q_2',\emptyset,R_2')\} \cup \{(Q_0,X,R_0),(Q_1,X,R_1) \mid X \subseteq A\}
\end{align*}
is a branching reactive bisimulation. The $a$-transition between $Q_2'$ and $Q_3$ does not have to be matched since $Q_2'$ is considered only when the environment disallows $a$.

Now, suppose that the stability respecting condition is removed from Definition \ref{def:intuitive}. As a result, a branching reactive bisimulation can be built between the top-left and bottom processes. The symmetric closure of
\begin{align*}
    \R' := &  \{(P_0,Q_0),(P_1,Q_1),(P_2,Q_2),(P_2,Q_2'),(P_3,Q_3)\} \\
    & \cup \{(P_0,X,Q_0),(P_1,X,Q_1),(P_2,X,Q_2),(P_2,X,Q_2'),(P_3,X,Q_3) \mid X \subseteq A\}
\end{align*}
would be a branching reactive bisimulation. Moreover, $\R$ would still be a branching reactive bisimulation, since Definition \ref{def:intuitive} has merely been weakened. Therefore, according to the modified Definition \ref{def:intuitive}, $P_0 \bisimtbr Q_0$ and $Q_0 \bisimtbr R_0$. However, when trying to construct a branching reactive bisimulation between $P_0$ and $R_0$, because of the time-out transition, $R_2$ has to be matched to $P_0$ or $P_2$ and no $a$-transition is reachable from $R_2$; therefore, $\neg(P_0 \bisimtbr R_0)$. As a result, removing the stability respecting clause from Definition~\ref{def:intuitive} prevents $\bisimtbr$ from being an equivalence relation.

\section{Concrete Time-out Version} \label{app:concrete time-out}

Before studying $\bisimtbr$\,, we looked at another version which is not eliding any time-out transitions. More formally, it is defined by replacing Clause 2.d of Definition \ref{def:intuitive} by 
\begin{enumerate}
    \setcounter{enumi}{1}
    \item \begin{enumerate}
        \setcounter{enumii}{3}
        \item if $\deadend{P}{X}$ and $P \step{\rt} P'$ then there exists a path $Q \pathtau Q_1 \step{\rt} Q_2$ with $\R(P',X,Q_2)$.
    \end{enumerate}
\end{enumerate}
It is not necessary to require to match with an executable time-out (i.e. $\deadend{Q_1}{X}$) since this is implied by the other clauses. It is also implied that $\R(P,X,Q_1)$ in the above clause. This bisimilarity has properties similar to $\bisimtbr$\,, to be recapped below. No proof will be provided here since they rely on the same strategies and are actually simpler because of the absence of time-out omission. However, a technical report \cite{Reghem24} is available. In the remainder of this appendix, $\bisimtbrc$ stands for the concrete time-out version.

The stuttering lemma (Lemma \ref{lem:stuttering}) still holds and $\bisimtbrc$ and $(\bisimtbrc[X])_{X \subseteq A}$ are still equivalence relations (Proposition \ref{prop:equivalence}). The rooted version $\bisimrtbrc$ of $\bisimtbrc$ is exactly Definition \ref{def:rooted intuitive} and $\bisimrtbrc$ and $(\bisimrtbrc[X])_{X \subseteq A}$ are still equivalence relations (Proposition \ref{prop:rooted equivalence}). The Pohlmann encoding (Table~\ref{tab:Pohlmann operator}) is simplified as the rooted variants are no longer needed. If $\bisimsb$ stands for the classical stability respecting branching bisimulation \cite{branching,vG93}, and $\bisimrb$ for its rooted version, $P \bisimtbrc Q \Leftrightarrow \vartheta(P) \bisimsb \vartheta(Q)$; $P \bisimtbrc[X] Q \Leftrightarrow \vartheta_X(P) \bisimsb \vartheta_X(Q)$; $P \bisimrtbrc Q \Leftrightarrow \vartheta(P) \bisimrb \vartheta(Q)$ and $P \bisimrtbrc[X] Q \Leftrightarrow \vartheta_X(P) \bisimrb \vartheta_X(Q)$.

The generalised definition of $\bisimtbrc$ can be obtained by replacing Clause 1.b. and 2.c. in  Definition \ref{def:generalised} by
\begin{enumerate}
    \item \begin{enumerate}
        \setcounter{enumii}{1}
        \item If $\deadend{P}{X}$ and $P \step{\rt} P'$ then there exists a path $Q \pathtau Q_1 \step{\rt} Q_2$ with $\R(P',X,Q_2)$
    \end{enumerate}
    \item \begin{enumerate}
        \setcounter{enumii}{2}
        \item If $\deadend{P}{X\cup Y}$ and $P \step{\rt} P'$ then there exists a path $Q \pathtau Q_1 \step{\rt} Q_2$ with $\R(P',Y,Q_2)$
    \end{enumerate}
\end{enumerate}
The rooted generalised version is exactly Definition \ref{def:generalised rooted} and they induce the same bisimilarities as the previous definitions (Proposition \ref{prop:generalised}). In
the modal characterisation, $X \varphi$ is not useful anymore, nor
$\varphi\langle\epsilon_X\rangle\varphi'$. Replacing the fifth induction rule of $\logic_b$ by
$\langle\epsilon\rangle\langle \rt_X\rangle\varphi$ yields the counterpart of Theorem~\ref{thm:modal characterisation}.

The corresponding time-out bisimulation can be obtained by replacing Clause 2.\ of Definition \ref{def:time-out bisim} by
\begin{enumerate}
    \setcounter{enumi}{1}
    \item if $\deadend{P}{X}$ and $P \step{\rt} P'$ then there exists a path $P \pathtau P_1 \step{\rt} P_2$ with $\theta_X(P') \tbisim \theta_X(Q_2)$.
\end{enumerate}
The rooted time-out bisimulation is exactly Definition \ref{def:rooted time-out bisim} and they agree with the previous definitions (Proposition \ref{prop:time-out bisim}). $\bisimtbrc$ is a congruence for prefixing, parallel composition, abstraction, renaming and the operator $\theta_L^U$ (Proposition \ref{prop:stability}). $\bisimrtbrc$ is a full congruence (Theorem~\ref{thm:congruence}).

As ${\bisimrtbrc} \subseteq {\bisim}\,$, RDP holds for $\bisimrtbrc$\,. The definition of well-guarded recursion can be weakened by allowing $\rt$ as a guard and RSP holds for $\bisimrtbrc$ on processes that are guarded in this sense. Lemma \ref{lem:independent} is not useful anymore since time-out omissions are not considered. The set of all axioms of Table \ref{tab:axioms} except the $\rt$-branching and $\tau/\rt$-branching ones is a complete axiomatisation of $\bisimtbrc$ (Theorem \ref{thm:canonical}). Moreover, to obtain the complete axiomatisation of $\bisimtbrc$ on recursion-free processes, it suffices to remove RDP and RSP.

\section{Generalised \tb reactive bisimulation} \label{app:gbrb}

The second clause of Definition~\ref{def:intuitive} is quite tedious to check; thus, an equivalent definition of the bisimilarity would be useful. Actually, it is possible to define the exact same notion in a more general way at the cost of some clear motivations.

\begin{definition}\rm \label{def:generalised}
    A \emph{generalised \tb reactive bisimulation} is a symmetric relation $\R \subseteq (\closed\times\closed)\cup(\closed\times\mathcal{P}(A)\times\closed)$ such that, for all $P,Q \in \closed$ and $X \subseteq A$,
    \begin{enumerate}
        \item if $\R(P,Q)$
        \begin{enumerate}
            \item if $P \step{\alpha} P'$ with $\alpha \in A_\tau$ then there is a path $Q \pathtau Q_1 \step{\opt{\alpha}} Q_2$ with $\R(P,Q_1)$ and $\R(P',Q_2)$,
            \item if $\deadend{P}{X}$ and $P \step{\rt} P'$ then there is a path $Q = Q_0 \pathtau Q_1 \step{\rt} Q_2 \pathtau Q_3 \step{\rt} ... \pathtau Q_{2r{-}1} \step{\opt{\rt}} Q_{2r}$ with $r > 0$, such that $Q_1  \nsteptau$, $\forall i \in [1,r{-}1],\linebreak[3] \R(P,X,Q_{2i}) \wedge \deadend{Q_{2i+1}}{X}$ and $\R(P',X,Q_{2r})$,
            \item if $P \nsteptau$ then there exists a path $Q \pathtau Q_0 \nsteptau$;
        \end{enumerate}
        \item if $\R(P,X,Q)$
        \begin{enumerate}
            \item if $P \steptau P'$ then there is a path $Q \pathtau Q_1 \step{\opt{\tau}} Q_2$ with $\R(P,X,Q_1)$ and $\R(P',X,Q_2)$,
            \item if $P \step{a} P'$ with $a \in X \vee \deadend{P}{X}$ then there is a path $Q \pathtau Q_1 \step{a} Q_2$ with $\R(P,X,Q_1)$ and $\R(P',Q_2)$,
            \item if $\deadend{P}{(X\cup Y)}$ and $P \step{\rt} P'$ then there is a path $Q = Q_0 \pathtau Q_1 \step{\rt} Q_2 \pathtau Q_3 \step{\rt} ... \pathtau Q_{2r{-}1} \step{\opt{\rt}} Q_{2r}$ with $r > 0$, such that $Q_1  \nsteptau$, $\forall i \in [1,r{-}1],\linebreak[3] \R(P,Y,Q_{2i}) \wedge \deadend{Q_{2i+1}}{Y}$ and $\R(P',Y,Q_{2r})$,
            \item if $P \nsteptau$ then there is a path $Q \pathtau Q_0 \nsteptau$.
        \end{enumerate}
    \end{enumerate}
\end{definition}

\noindent
The strong point of the generalised definitions is the restriction on the use of triplets, making use of them only after performing a time-out. A generalised version of rooted \tb reactive bisimulation can be defined in a similar fashion.

\begin{definition}\rm \label{def:generalised rooted}
    A \emph{generalised rooted \tb reactive bisimulation} is a symmetric relation $\R \subseteq (\closed\times\closed)\cup(\closed\times\mathcal{P}(A)\times\closed)$ such that, for all $P,Q \in \closed$ and $X \subseteq A$,
    \begin{enumerate}
        \item if $\R(P,Q)$
        \begin{enumerate}
            \item if $P \step{\alpha} P'$ with $\alpha \in A_\tau$ then there is a transition $Q \step{\alpha} Q'$ such that $P' \bisimtbr Q'$,
            \item if $\deadend{P}{X}$ and $P \step{\rt} P'$ then there is a transition $Q \step{\rt} Q'$ with $P' \bisimtbr[X] Q'$,
        \end{enumerate}
        \item if $\R(P,X,Q)$
        \begin{enumerate}
            \item if $P \steptau P'$ then there is a transition $Q \steptau Q'$ such that $P' \bisimtbr[X] Q'$,
            \item if $P \step{a} P'$ with $a \in X \vee \deadend{P}{X}$ then there is a transition $Q \step{a} Q'$ such that $P' \bisimtbr Q'$,
            \item if $\deadend{P}{(X\cup Y)}$ and $P \step{\rt} P'$ then there is a transition $Q \step{\rt} Q'$ such that $P' \bisimtbr[Y] Q'$.
        \end{enumerate}
    \end{enumerate}
\end{definition}

\noindent
Note that if a system has no time-out, then a generalised [rooted] \tb reactive bisimulation is a stability respecting [rooted] branching bisimulation, thus proving that [rooted] \tb reactive bisimilarity is indeed an extension of stability respecting [rooted] branching bisimilarity to reactive systems with time-outs.

\begin{proposition} \label{prop:generalised}
    Let $P, Q \in \closed$ and $X \subseteq A$,
    \begin{itemize}
        \item $P \bisimtbr Q$ (resp.\ $P \bisimtbr[X] Q$) iff there exists a generalised \tb reactive bisimulation $\R$ with $\R(P,Q)$ (resp.\ $\R(P,X,Q)$),
        \item $P \bisimrtbr Q$ (resp.\ $P \bisimrtbr[X] Q$) iff there exists a rooted generalised \tb reactive bisimulation $\R$ with $\R(P,Q)$ (resp.\ $\R(P,X,Q)$).
    \end{itemize}
\end{proposition}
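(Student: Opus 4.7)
The plan is to prove both directions (and both the unrooted and rooted variants) in parallel, with the forward direction being relatively routine and the backward direction requiring a closure construction.

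For the forward direction, I would show that any [rooted] \tb reactive bisimulation $\R$ is already a generalised [rooted] \tb reactive bisimulation by verifying each clause of Definition~\ref{def:generalised} or~\ref{def:generalised rooted}. Most matches are immediate; a few need Lemma~\ref{lem:obvious}. For example, generalised clause~1.b follows by first using intuitive clause~1.b to obtain the triple $\R(P,X,Q)$, then Lemma~\ref{lem:obvious}.4 to produce a stable descendant $Q_0$ of $Q$ with $\R(P,Q_0)$, $Q_0\nsteptau$ and $\init{Q_0}=\init{P}$, and finally intuitive clause~2.d applied at $Q_0$; since $Q_0\nsteptau$, the leading $\pathtau$-segment collapses, giving the required $Q_1\nsteptau$. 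Similarly, generalised clause~2.b in the case $\deadend{P}{X}$ reduces to intuitive~2.c followed by~1.a, and generalised clause~2.c for arbitrary $Y$ is obtained by intuitive~2.c, then~1.b to shift to $Y$, then~2.d applied with $Y$ in place of $X$ (using $\deadend{P}{Y}$ extracted from $\deadend{P}{X\cup Y}$). The rooted version is analogous but simpler since all moves are single transitions.

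For the backward direction, given a generalised [rooted] \tb reactive bisimulation $\R$, I would define an extended symmetric relation $\R'$ as the smallest symmetric relation containing $\R$ and closed under
\begin{enumerate}
\item $\R'(P,Q)\Rightarrow\R'(P,Y,Q)$ for all $Y\subseteq A$, and
\item $\R'(P,X,Q)\wedge\deadend{P}{X}\wedge Q\pathtau Q_0\wedge Q_0\nsteptau \Rightarrow \R'(P,Q_0)$
\end{enumerate}
(for the rooted variant, only $Q_0=Q$ is used in the second rule). I would then verify that $\R'$ satisfies every clause of Definition~\ref{def:intuitive} or~\ref{def:rooted intuitive} by case analysis on how each pair or triple in $\R'$ was obtained, invoking the corresponding generalised clause. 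A $\tau$-closure property of $\R$ analogous to Lemma~\ref{lem:obvious}.1 (derivable directly from the symmetry of $\R$ together with the generalised clauses~1.a and~2.a) is used to bridge the gap between a triple $\R(P,X,Q')$ and any $\pathtau$-descendant $Q$ of $Q'$.

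The main obstacle I anticipate is verifying intuitive clause~2.c for triples introduced via the pair-to-triple closure rule~(1), because there one must first walk $Q$ to a stable $Q_0\nsteptau$ using generalised clause~1.c (stability for pairs) or~2.d (stability for triples), depending on the origin of the pair, and then invoke closure rule~(2) to manufacture the required pair $\R'(P,Q_0)$. A related subtlety is the emulation of intuitive clause~2.d by generalised clause~2.c with $Y=X$: the indexing shift (generalised uses $i\in[1,r{-}1]$ and requires $Q_1\nsteptau$, intuitive uses $i\in[0,r{-}1]$) is resolved by observing that the extra requirement at $i=0$, namely $\R'(P,X,Q_0)\wedge\deadend{Q_1}{X}$, follows from $\R'(P,X,Q)$ together with Lemma~\ref{lem:obvious}.2 applied to the stable $Q_1$. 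Once these checks, together with their symmetric counterparts, are discharged, the two constructions compose to show that the pairs and triples related by generalised bisimulations coincide with those related by the original bisimulations, establishing the proposition.
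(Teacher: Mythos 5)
Your proposal follows essentially the same route as the paper: the forward direction is a clause-by-clause check that every [rooted] \tb reactive bisimulation already satisfies Definition~\ref{def:generalised} (resp.\ Definition~\ref{def:generalised rooted}), and the backward direction extends a generalised bisimulation $\R$ by adding all triples $(P,Y,Q)$ for related pairs and by extracting pairs from triples under an idleness condition, then verifies Definition~\ref{def:intuitive} (resp.\ Definition~\ref{def:rooted intuitive}) by case analysis on provenance. The only real difference is in the closure: the paper adds $(P,Q)$ and $(P,Y,Q)$ in one shot, and only when $(\init{P}\cup\init{Q})\cap(X\cup\{\tau\})=\emptyset$, keeping $Q$ itself, whereas your rule (2) requires only $\deadend{P}{X}$ and passes to a stable $\tau$-descendant $Q_0$. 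This variation is sound, since $\deadend{Q_0}{X}$ and in fact $\init{Q_0}=\init{P}$ are derivable (via symmetry and generalised clause 2.b, which covers actions in $X$ unconditionally and all actions once the relevant $\deadend{\cdot}{X}$ holds), so the two constructions generate essentially the same relation; your iterated closure also stabilises quickly because rule (2) only creates pairs whose right component is stable.

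One step needs patching, though the repair is routine: in the backward direction you appeal to Lemma~\ref{lem:obvious}.2 (and implicitly to~\ref{lem:obvious}.1) to obtain $\deadend{Q_1}{X}$ and related facts. That lemma is stated only for \tb reactive bisimulations; applying it to $\R'$ is circular (that $\R'$ is such a bisimulation is precisely what is being proved), and applying it to the generalised $\R$ is not covered by its statement. What is actually needed are the analogues for generalised bisimulations, proved exactly as you already do for the $\tau$-closure property: symmetry of $\R$ together with the shared clauses 1.a/2.a yields closure under $\tau$-descendants of the right component when the left one is stable, and symmetry together with generalised clause 2.b yields $\init{Q_1}\cap(X\cup\{\tau\})=\emptyset$ from $\deadend{P}{X}$ (and, under $\deadend{Q_1}{X}$, the full equality $\init{Q_1}=\init{P}$ needed for triples $(P,Y,Q_0)$ with arbitrary $Y$). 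With these derived properties substituted for the lemma, your case analysis, including its symmetric counterparts, goes through and matches the paper's argument in substance.
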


\begin{proof}
    Let $\R$ be a \tb reactive bisimulation. Let's check that it is a generalised \tb reactive bisimulation. Let $P,Q \in \closed$ and $X \subseteq A$.
    \begin{enumerate}
        \item If $\R(P,Q)$
        \begin{enumerate}
            \item this condition is shared by both definitions
            \item if $\deadend{P}{X}$ and $P \step{\rt} P'$ then, since $\R(P,Q)$, $\R(P,X,Q)$. Since $\deadend{P}{X}$ and $P \step{\rt} P'$, there exists a path $Q = Q_0 \pathtau Q_1 \step{\rt} Q_2 \pathtau Q_3 \step{\rt} ... \pathtau Q_{2r-1} \step{\opt{\rt}} Q_{2r}$ with $r>0$, such that $\forall i \in [0,r{-}1], \R(P,X,Q_{2i}) \wedge \deadend{Q_{2i+1}}{X}$ and $\R(P',X,Q_{2r})$. In particular, $Q_1 \nsteptau$.
            \item if $P \nsteptau$ then, since $\R(P,Q)$, $\R(P,\emptyset,Q)$, so there exists a path $Q \pathtau Q_0 \nsteptau$.
        \end{enumerate}
        \item If $\R(P,X,Q)$
        \begin{enumerate}
            \item this condition is shared by both definitions
            \item if $P \step{a} P'$ with $a \in X \vee \deadend{P}{X}$ then if $a \in X$ then there exists a path $Q \pathtau Q_1 \step{a} Q_2$ such that $\R(P,X,Q_1)$ and $\R(P',Q_2)$. Otherwise, $\deadend{P}{X}$ and so $P \nsteptau$, thus there exists a path $Q \pathtau Q_1 \nsteptau$. Since $\R(P,X,Q)$, $P \nsteptau$ and $Q \pathtau Q_1$, $\R(P,X,Q_1)$. As $\deadend{P}{X}$ and $Q_1 \nsteptau$, $\R(P,Q_1)$. Because $P \step{a} P_1$ and $Q_1 \nsteptau$, there exists a transition $Q_1 \step{a} Q_2$ such that $\R(P',Q_2)$. As a result, there exists a path $Q \pathtau Q_1 \step{a} Q_2$ such that $\R(P,X,Q_1)$ and $\R(P',Q_2)$.
            \item if $\deadend{P}{(X\cup Y)}$ and $P \step{\rt} P'$ then, since $P \nsteptau$, there exists a path $Q \pathtau Q_1 \nsteptau$. Furthermore, using Clause 2.a of Definition~\ref{def:intuitive}, $\R(P,X,Q_1)$. Moreover, $\deadend{P}{X}$ and $Q_1 \nsteptau$, thus, $\R(P,Q_1)$ and $\init{Q_1}=\init{P}$. Since $\R(P,Y,Q_1)$, $\deadend{P}{Y}$ and $P \step{\rt} P'$, there exists a path $Q_1 = Q_0' \pathtau Q_1' \step{\rt} Q_2' \pathtau Q_3' \step{\rt} ... \pathtau Q'_{2r-1} \step{\opt{\rt}} Q'_{2r}$ with $r>0$, such that $\forall i \in [0,r{-}1], \R(P,Y,Q'_{2i}) \wedge \deadend{Q'_{2i+1}}{Y}$ and $\R(P',Y,Q'_{2r})$. Since $Q_1 \nsteptau$, $Q_1 = Q_1'$. As a result, there exists a path $Q = Q_0 \pathtau Q_1 \step{\rt} Q_2 \pathtau Q_3 \step{\rt} ... \pathtau Q_{2r-1} \step{\opt{\rt}} Q_{2r}$ such that $Q_1\nsteptau$, $\forall i \in [1,r{-}1],\; \R(P,Y,Q_{2i}) \wedge \deadend{Q_{2i+1}}{Y}$ and $\R(P',Y,Q_{2r})$.
            \item this condition is shared by both definitions.
        \end{enumerate}
    \end{enumerate}
    Let $\R$ be a generalised \tb reactive bisimulation and define 
    \begin{align*}
        \R' := \R & \cup \{(P,X,Q) \mid \R(P,Q) \wedge X \subseteq A\} \cup \{(P,Y,Q),(P,Q) \mid \exists X \subseteq A,\; \R(P,X,Q) \\
        & \wedge (\init{P}\cup\init{Q})\cap(X\cup\{\tau\}) = \emptyset \wedge Y \subseteq A\}
    \end{align*}
    $\R'$ is symmetric by definition. Let's check that $\R'$ is a \tb reactive bisimulation. Let $P,Q \in \closed$ and $X \subseteq A$.
    \begin{enumerate}
        \item If $\R'(P,Q)$ then $\R(P,Q)$ or there exists a set $Y \subseteq A$ such that $\R(P,Y,Q)$ and $(\init{P}\cup\init{Q})\cap(Y\cup\{\tau\}) = \emptyset$.
        \begin{enumerate}
            \item If $P \step{\alpha} P'$ then
            \begin{itemize}
                \item if $\R(P,Q)$ then there exists a path $Q \pathtau Q_1 \step{\opt{\alpha}} Q_2$ such that $\R(P,Q_1)$ and $\R(P,Q_2)$ and, since $\R \subseteq \R'$, $\R'(P,Q_1)$ and $\R'(P',Q_2)$
                \item if there exists $Y \subseteq A$ such that $\R(P,Y,Q)$ and $(\init{P}\cup\init{Q})\cap(Y\cup\{\tau\}) = \emptyset$ then, since $\R(P,Y,Q)$ and $\deadend{P}{Y}$, $\alpha \ne \tau$, so there exists a path $Q \pathtau Q_1 \step{\alpha} Q_2$ such that $\R(P,Y,Q_1)$ and $\R(P,Q_2)$. Since $\deadend{Q}{Y}$ and $\R\subseteq \R'$, $Q = Q_1$ so there exists a path $Q \step{\alpha} Q_2$ such that $\R'(P,Q)$ and $\R'(P',Q_2)$.
            \end{itemize}
            \item For all $Z \subseteq A$,
            \begin{itemize}
                \item if $\R(P,Q)$ then, by definition of $\R'$, $\R'(P,Z,Q)$
                \item if there exists $Y \subseteq A$ such that $\R(P,Y,Q)$ and $(\init{P}\cup\init{Q})\cap(Y\cup\{\tau\}) = \emptyset$ then, by definition of $\R'$, $\R'(P,Z,Q)$.
            \end{itemize}
        \end{enumerate}
        \item If $\R'(P,X,Q)$ then $\R(P,X,Q)$, or $\R(P,Q)$, or there exists $Y \subseteq A$ such that $\R(P,Y,Q)$ and $(\init{P}\cup\init{Q})\cap(Y\cup\{\tau\}) = \emptyset$.
        \begin{enumerate}
            \item If $P \steptau P'$ then
            \begin{itemize}
                \item if $\R(P,X,Q)$ then there exists a path $Q \pathtau Q_1 \step{\opt{\tau}} Q_2$ such that $\R(P,X,Q_1)$ and $\R(P,X,Q_2)$ and, since $\R \subseteq \R'$, $\R'(P,X,Q_1)$ and $\R'(P',X,Q_2)$
                \item if $\R(P,Q)$ then there exists a path $Q \pathtau Q_1 \step{\opt{\tau}} Q_2$ such that $\R(P,Q_1)$ and $\R(P,Q_2)$ and, by definition of $\R'$, $\R'(P,X,Q_1)$ and $\R'(P',X,Q_2)$
                \item if there exists $Y \subseteq A$ such that $\R(P,Y,Q)$ and $(\init{P}\cup\init{Q})\cap(Y\cup\{\tau\}) = \emptyset$ then $P \nsteptau$, so this case is impossible.
            \end{itemize}
            \item If $P \step{a} P'$ with $a \in X$ then
            \begin{itemize}
                \item if $\R(P,X,Q)$ then there exists a path $Q \pathtau Q_1 \step{a} Q_2$ such that $\R(P,X,Q_1)$ and $\R(P,Q_2)$ and, since $\R \subseteq \R'$, $\R'(P,X,Q_1)$ and $\R'(P',Q_2)$
                \item if $\R(P,Q)$ then there exists a path $Q \pathtau Q_1 \step{a} Q_2$ such that $\R(P,Q_1)$ and $\R(P,Q_2)$ and, by definition of $\R'$, $\R'(P,X,Q_1)$ and $\R'(P',Q_2)$
                \item if there exists $Y \subseteq A$ such that $\R(P,Y,Q)$ and $(\init{P}\cup\init{Q})\cap(Y\cup\{\tau\}) = \emptyset$ then, since $\deadend{P}{Y}$, there exists a path $Q \pathtau Q_1 \step{a} Q_2$ such that $\R(P,Y,Q_1)$ and $\R(P',Q_2)$. Since $\deadend{Q}{Y}$, $Q = Q_1$ so there exists a path $Q \step{a} Q_2$ such that $\R'(P,X,Q)$ and $\R'(P',Q_2)$.
            \end{itemize}
            \item If $\deadend{P}{X}$ then
            \begin{itemize}
                \item if $\R(P,X,Q)$ then, since $P \nsteptau$, there exists a path $Q \pathtau Q_0 \nsteptau$. By Clause 2.a of Definition~\ref{def:generalised}, $\R(P,X,Q_0)$. Since $\R(P,X,Q_0)$, $\deadend{P}{X}$ and $Q_0 \nsteptau$, $\deadend{Q}{X}$, therefore, by definition, $\R'(P,Q_0)$.
                \item if $\R(P,Q)$ then, since $\R \subseteq \R'$, $\R'(P,Q)$.
                \item if there exists $Y \subseteq A$ such that $\R(P,Y,Q)$ and $(\init{P}\cup\init{Q})\cap(Y\cup\{\tau\}) = \emptyset$ then, by definition of $\R'$, $\R'(P,Q)$.
            \end{itemize}
            \item If $\deadend{P}{X}$ and $P \step{\rt} P'$ then
            \begin{itemize}
                \item if $\R(P,X,Q)$ then, there exists a path $Q = Q_0 \pathtau Q_1 \step{\rt} Q_2 \pathtau Q_3 \step{\rt} ... \pathtau Q_{2r-1} \step{\opt{\rt}} Q_{2r}$ with $r>0$, such that $Q_1\nsteptau$, $\forall i \in [1,r{-}1],\; \R(P,X,Q_{2i}) \wedge \deadend{Q_{2i+1}}{X}$ and $\R(P',X,Q_{2r})$. Hence also $\R'(P,X,Q_{2i})$ and $\R'(P',X,Q_{2r})$. By Clause 2.a of Definition~\ref{def:generalised}, $\R(P,X,Q_1)$, so $\deadend{Q_1}{X}$.
                \item if $\R(P,Q)$ then there exists a path $Q = Q_0 \pathtau Q_1 \step{\rt} Q_2 \pathtau Q_3 \step{\rt} ... \pathtau Q_{2r-1} \step{\opt{\rt}} Q_{2r}$ with $r>0$, such that $Q_1\nsteptau$, $\forall i \in [1,r{-}1], \R(P,X,Q_{2i}) \wedge \deadend{Q_{2i+1}}{X}$ and $\R(P',X,Q_{2r})$. Hence also $\R'(P,X,Q_{2i})$ and $\R'(P',X,Q_{2r})$. By Clause 1.a of Definition~\ref{def:generalised}, $\R(P,Q_1)$, so $\deadend{Q_1}{X}$.
                \item if there exists $Y \subseteq A$ such that $\R(P,Y,Q)$ and $(\init{P}\cup\init{Q})\cap(Y\cup\{\tau\}) = \emptyset$ then $\deadend{P}{(Y\cup X)}$ so there exists a path $Q = Q_0 \pathtau Q_1 \step{\rt} Q_2 \pathtau Q_3 \step{\rt} ... \pathtau Q_{2r-1} \step{\opt{\rt}} Q_{2r}$ with $r>0$, such that $Q_1\nsteptau$, $\forall i \in [1,r{-}1],\linebreak[4] \R(P,X,Q_{2i}) \wedge \deadend{Q_{2i+1}}{X}$ and $\R(P',X,Q_{2r})$. Hence also $\R'(P,X,Q_{2i})$ and $\R'(P',X,Q_{2r})$. Since $Q \nsteptau$, $Q \mathbin= Q_1$. Since $P \nsteptau$, $Q_1 \nsteptau$, $\R(P,Y,Q_1)$ and $\deadend{P}{Y}$, Clause 2.b of Definition~\ref{def:generalised} yields $\deadend{Q_1}{X}$. As a result, there is a path $Q = Q_0 \pathtau Q_1 \step{\rt} Q_2 \pathtau Q_3 \step{\rt} ... \pathtau Q_{2r-1} \step{\opt{\rt}} Q_{2r}$ with $r>0$, such that $\forall i \in [0,r{-}1],\; \R(P,X,Q_{2i}) \wedge \deadend{Q_{2i+1}}{X}$ and $\R(P',X,Q_{2r})$.
            \end{itemize}
            \item If $P \nsteptau$ then 
            \begin{itemize}
                \item if $\R(P,X,Q)$ then there exists a path $Q \pathtau Q_0 \nsteptau$.
                \item if $\R(P,Q)$ then there exists a path $Q \pathtau Q_0 \nsteptau$.
                \item if there exists $Y \subseteq A$ such that $\R(P,Y,Q)$ and $(\init{P}\cup\init{Q})\cap(Y\cup\{\tau\}) = \emptyset$ then $Q \nsteptau$.
            \end{itemize}
        \end{enumerate}
    \end{enumerate}
    Let $\R$ be a rooted \tb reactive bisimulation. Let's check that it is a generalised rooted \tb reactive bisimulation. Let $P,Q \in \closed$ and $X \subseteq A$.
    \begin{enumerate}
        \item If $\R(P,Q)$
        \begin{enumerate}
            \item this condition is shared by both definitions
            \item if $\deadend{P}{X}$ and $P \step{\rt} P'$ then, since $\R(P,Q)$, $\R(P,X,Q)$. Since $\deadend{P}{X}$ and $P \step{\rt} P'$, there exists a transition $Q \step{\rt} Q'$ such that $P' \bisimtbr[X] Q'$.
        \end{enumerate}
        \item If $\R(P,X,Q)$
        \begin{enumerate}
            \item this condition is shared by both definitions
            \item if $a\in X$, this condition is shared by both definitions; otherwise, apply Clauses 2.c and 1.a of Definition~\ref{def:rooted intuitive}
            \item if $\deadend{P}{(X\cup Y)}$ and $P \step{\rt} P'$ then, since $\deadend{P}{X}$, $\R(P,Q)$ and so $\R(P,Y,Q)$. Since $\deadend{P}{Y}$ and $P \step{\rt} P'$, there exists a transition $Q \step{\rt} Q'$ such that $P' \bisimtbr[Y] Q'$.
        \end{enumerate}
    \end{enumerate}
    Let $\R$ be a generalised rooted \tb reactive bisimulation and define 
    \begin{align*}
        \R' := \R & \cup \{(P,X,Q) \mid \R(P,Q) \wedge X \subseteq A\} \cup \{(P,Y,Q),(P,Q) \mid \exists X \subseteq A, \R(P,X,Q) \\
        & \wedge (\init{P}\cup\init{Q})\cap(X\cup\{\tau\}) = \emptyset \wedge Y \subseteq A\}
    \end{align*}
    $\R'$ is symmetric by definition. Let's check that $\R'$ is a rooted \tb reactive bisimulation. Let $P,Q \in \closed$ and $X \subseteq A$.
    \begin{enumerate}
        \item If $\R'(P,Q)$ then $\R(P,Q)$ or there exists $Y \subseteq A$ such that $\R(P,Y,Q)$ and $(\init{P}\cup\init{Q})\cap(Y\cup\{\tau\}) = \emptyset$.
        \begin{enumerate}
            \item If $P \step{\alpha} P'$ then
            \begin{itemize}
                \item if $\R(P,Q)$ then there exists a transition $Q \step{\alpha} Q'$ such that $P' \bisimtbr Q'$.
                \item if there exists $Y \subseteq A$ such that $\R(P,Y,Q)$ and $(\init{P}\cup\init{Q})\cap(Y\cup\{\tau\}) = \emptyset$ then, since $\R(P,Y,Q)$ and $\deadend{P}{Y}$, $\alpha \ne \tau$ so there exists a transition $Q \step{\alpha} Q'$ such that $P' \bisimtbr Q'$.
            \end{itemize}
            \item For all $Z \subseteq A$,
            \begin{itemize}
                \item if $\R(P,Q)$ then, by definition of $\R'$, $\R'(P,Z,Q)$
                \item if there exists $Y \subseteq A$ such that $\R(P,Y,Q)$ and $(\init{P}\cup\init{Q})\cap(Y\cup\{\tau\}) = \emptyset$ then, by definition of $\R'$, $\R(P,Z,Q)$.
            \end{itemize}
        \end{enumerate}
        \item If $\R'(P,X,Q)$ then $\R(P,X,Q)$, or $\R(P,Q)$, or there exists $Y \subseteq A$ such that $\R(P,Y,Q)$ and $(\init{P}\cup\init{Q})\cap(Y\cup\{\tau\}) = \emptyset$.
        \begin{enumerate}
            \item If $P \steptau P'$ then
            \begin{itemize}
                \item if $\R(P,X,Q)$ then there exists a transition $Q \step{\tau} Q'$ such that $P' \bisimtbr[X] Q'$,
                \item if $\R(P,Q)$ then there exists a step $Q \step{\tau} Q'$ such that $P' \bisimtbr Q'$ and so $P' \bisimtbr[X] Q'$
                \item if there exists $Y \subseteq A$ such that $\R(P,Y,Q)$ and $(\init{P}\cup\init{Q})\cap(Y\cup\{\tau\}) = \emptyset$ then $P \nsteptau$, so this case is impossible.
            \end{itemize}
            \item If $P \step{a} P'$ with $a \in X$ then
            \begin{itemize}
                \item if $\R(P,X,Q)$ then there exists a transition $Q \step{a} Q'$ such that $P' \bisimtbr Q'$
                \item if $\R(P,Q)$ then there exists a transition $Q \step{a} Q'$ such that $P' \bisimtbr Q'$
                \item if there exists $Y \subseteq A$ such that $\R(P,Y,Q)$ and $(\init{P}\cup\init{Q})\cap(Y\cup\{\tau\}) = \emptyset$ then, since $\deadend{P}{Y}$, there exists a transition $Q \step{a} Q'$ such that $P' \bisimtbr Q'$.
            \end{itemize}
            \item If $\deadend{P}{X}$ then 
            \begin{itemize}
                \item if $\R(P,X,Q)$ then, since $\deadend{P}{X}$, $\deadend{Q}{X}$, therefore, by definition, $\R'(P,Q)$,
                \item if $\R(P,Q)$ then, by definition of $\R'$, $\R'(P,Q)$,
                \item if there exists $Y \subseteq A$ such that $\R(P,Y,Q)$ and $(\init{P}\cup\init{Q})\cap(Y\cup\{\tau\}) = \emptyset$ then, by definition of $\R'$, $\R'(P,Q)$,
            \end{itemize}
            \item If $\deadend{P}{X}$ and $P \step{\rt} P'$ then
            \begin{itemize}
                \item if $\R(P,X,Q)$ then there exists a transition $Q \step{\rt} Q'$ such that $P' \bisimtbr[X] Q'$.
                \item if $\R(P,Q)$ then there exists a transition $Q  \step{\rt} Q'$ such that $P' \bisimtbr[X] Q'$.
                \item if there exists $Y \subseteq A$ such that $\R(P,Y,Q)$ and $(\init{P}\cup\init{Q})\cap(Y\cup\{\tau\}) = \emptyset$ then $\deadend{P}{(Y\cup X)}$ so there exists a step $Q \step{\rt} Q'$ such that $P' \bisimtbr[X] Q'$.
            \popQED
            \end{itemize}
        \end{enumerate}
    \end{enumerate}
\end{proof}

\section{Pohlmann Encoding} \label{app:Pohlmann}

Reactive bisimulations are sometimes complicated to check because of the large number of potential sets of allowed actions. In \cite{Pohlmann}, Pohlmann introduces an encoding which reduces strong reactive bisimilarity to strong bisimilarity. To this end he introduces unary operators $\vartheta$ and $\vartheta_X$ for $X\subseteq A$
that model placing their argument process in an environment that is triggered to change, or allows
exactly the actions in $X$, respectively. The actions $\rt_\varepsilon\notin A$ and $\varepsilon_X\notin A$ for $X \subseteq A$ are generated by the new operators, but may not be used by processes substituted for their arguments $P$.
They model a time-out action taken by the environment, and the stabilisation of an environment into
one that allows exactly the set of actions $X$, respectively. After a slight modification of the encoding, a similar result can be obtained for \tb reactive bisimilarity. We also introduce variants $\vartheta^r$ and $\vartheta^r_X$ of these operators that are targeting rooted \tb reactive bisimilarity.

\begin{table}[ht]
\vspace{-1.5ex}
    \[
    \begin{array}{l c l l l}
        \vartheta(P) \step{\alpha} \vartheta(P') &\wedge& \vartheta^r(P) \step{\alpha} \vartheta(P') & \Leftrightarrow & P \step{\alpha} P' \wedge \alpha \in A_\tau \\
        &&
        \vartheta^r(P) \step{\rt_X} \vartheta_X(P') & \Leftrightarrow & \deadend{P}{X} \wedge P \step{\rt} P'\\
        \vartheta(P) \step{\varepsilon_X} \vartheta_X(P) &\wedge& \vartheta^r(P) \step{\varepsilon_X} \vartheta_X(P) & & \\
        \vartheta_X(P) \steptau \vartheta_X(P') &\wedge& \vartheta^r_X(P) \steptau \vartheta_X(P') & \Leftrightarrow & P \steptau P' \\
        \vartheta_X(P) \step{a} \vartheta(P') &\wedge& \vartheta^r_X(P) \step{a} \vartheta(P') & \Leftrightarrow & P \step{a} P' \wedge \alpha \in X \\
        \vartheta_X(P) \step{\rt_\varepsilon} \vartheta(P) &\wedge& \vartheta^r_X(P) \step{\rt_\varepsilon} \vartheta^r(P) & \Leftrightarrow & \deadend{P}{X} \\
        \vartheta_X(P) \step{\rt} \vartheta_X(P') &\wedge& \vartheta^r_X(P) \step{\rt} \vartheta_X(P') & \Leftrightarrow & \deadend{P}{X} \wedge P \step{\rt} P'
    \end{array}
    \]
    \caption{Operational semantics of $\vartheta$, $\vartheta^r$, $(\vartheta_X)_{X \subseteq A}$ and $(\vartheta^r_X)_{X \subseteq A}$}
    \label{tab:Pohlmann operator}
\vspace{-1.5ex}
\end{table}

\noindent
In \cite{Pohlmann}, the first rule only applies to $\tau$-transitions; this echoes the previous remark about applying the first clause of Definition \ref{def:intuitive} only to invisible actions. As the intermediary actions $\rt_\epsilon$ and $(\epsilon_X)_{X\subseteq A}$ interfere with rootedness, the actions $(\rt_X)_{X\subseteq A}$ are added when rootedness has to be preserved. One can think of these as doing the actions $\varepsilon_X$ and $\rt$ in one (instead of two) steps. Note that the encoding rules mirror the clauses of Definition~\ref{def:intuitive}. The encoding transforms $\bisimtbr$ into $\bisimtb$ (see Definition~\ref{def:non-reactive}), and $\bisimrtbr$ in $\bisimrtb$ (Definition~\ref{def:rooted non-reactive}).

\begin{proposition} \label{prop:reduction}
  Let $P, Q \in \closed$.\\[1ex]
    \begin{minipage}{2.5in}
    \begin{itemize}
        \item $P \bisimtbr Q \Leftrightarrow \vartheta(P) \bisimtb \vartheta(Q)$
        \item $P \bisimrtbr Q \Leftrightarrow \vartheta^r(P) \bisimrtb \vartheta^r(Q)$
    \end{itemize}
    \end{minipage}\hfill
    \begin{minipage}{2.5in}
    \begin{itemize}
        \item $P \bisimtbr[X] Q \Leftrightarrow \vartheta_X(P) \bisimtb \vartheta_X(Q)$
        \item $P \bisimrtbr[X] Q \Leftrightarrow \vartheta^r_X(P) \bisimrtb \vartheta^r_X(Q)$
    \end{itemize}
    \end{minipage}
\end{proposition}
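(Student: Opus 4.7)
The plan is to prove all four equivalences by direct bisimulation constructions, exploiting the fact that the operational rules in Table~\ref{tab:Pohlmann operator} mirror, one-for-one, the clauses of Definitions~\ref{def:intuitive} and~\ref{def:rooted intuitive}. We handle the unrooted cases first and then explain the modifications for the rooted ones.

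For the direction $(\Rightarrow)$, given a \tb reactive bisimulation $\R$, I would define
\[
  \R^{\!*} := \{(\vartheta(P),\vartheta(Q))\mid \R(P,Q)\}\cup\{(\vartheta_X(P),\vartheta_X(Q))\mid \R(P,X,Q)\}
\]
and verify that its symmetric closure is a \rt-branching bisimulation. Each transition available to an encoded process corresponds directly to one case of Definition~\ref{def:intuitive}: transitions $\vartheta(P)\step{\alpha}\vartheta(P')$ with $\alpha\in A_\tau$ are matched via Clause~1.a; transitions $\vartheta(P)\step{\varepsilon_X}\vartheta_X(P)$ via Clause~1.b, which always delivers $\R(P,X,Q)$ and hence the corresponding encoded step; transitions out of $\vartheta_X(P)$ labelled $\tau$, $a\in X$, $\rt_\varepsilon$, $\rt$ are matched by Clauses~2.a, 2.b, 2.c (plus Lemma~\ref{lem:obvious}.4), and~2.d respectively; and the stability requirement (Clause~3 of Definition~\ref{def:non-reactive}) follows from Clause~2.e. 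A similar construction with $\R^{\!*}$ built from a single pair $(\vartheta_X(P),\vartheta_X(Q))$ handles the $X$-bisimilarity case.

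For the direction $(\Leftarrow)$, given a \rt-branching bisimulation $\R^{\!*}$ on encoded processes, I would set
\[
  \R := \{(P,Q)\mid \vartheta(P)\mathrel{\R^{\!*}}\vartheta(Q)\}\cup\{(P,X,Q)\mid \vartheta_X(P)\mathrel{\R^{\!*}}\vartheta_X(Q)\}
\]
and check that it is a generalised \tb reactive bisimulation in the sense of Definition~\ref{def:generalised}; the desired conclusion then follows from Proposition~\ref{prop:generalised}. Working with the generalised definition is a simplification, because triples only appear after a $\rt$-step. The key structural fact, proven by induction on the length of the paths, is that every state reachable from $\vartheta(P)$ has the form $\vartheta(P')$ or $\vartheta_Y(P')$ for some $P'$ reachable from $P$; this lets us translate paths in the encoded LTS back into paths in the original one without ambiguity, and in particular guarantees that the intermediate states in a matching $\tau/\rt$-alternating path are again encoded processes of the right shape.

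For the rooted versions, the constructions are the same but with $\vartheta^r,\vartheta^r_X$ in place of $\vartheta,\vartheta_X$ at the outermost position; the extra action $\rt_X$ is designed precisely so that the composite step ``$\varepsilon_X$ followed by $\rt$'' is performed in a single, rooted-matchable transition, which is what Clause~1.b of Definition~\ref{def:rooted intuitive} demands. The main obstacle I anticipate is the alternating-path clause (2.d in Definition~\ref{def:intuitive} and~2 in Definition~\ref{def:non-reactive}), because the encoding interleaves $\rt$-transitions between $\vartheta_X$-states with potential $\rt_\varepsilon$ side-trips to $\vartheta$-states; one must show that the witnessing path produced by Definition~\ref{def:non-reactive}.2 can be taken to stay in $\vartheta_X$-states (and conversely that a reactive witness translates to such a path), which is a consequence of the observation that from any state $\vartheta_X(P')$ with $\deadend{P'}{X}$ the only $\tau$-successors are again of the form $\vartheta_X(P'')$. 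Once this normalisation step is in place, the remaining verifications are routine case analyses on the operational rules.
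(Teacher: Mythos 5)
Your forward direction is exactly the paper's construction, and verifying the generalised definitions for the converse (so that triples arise only after a time-out, via Proposition~\ref{prop:generalised}) is a legitimate variant of the paper's direct verification of Definitions~\ref{def:intuitive} and~\ref{def:rooted intuitive}; it saves the extra relation components the paper needs for Clauses 1.b and 2.c. However, the point you single out as the main obstacle is not an obstacle at all: any path of $\tau$- and $\rt$-transitions starting in a state $\vartheta_X(Q_0)$ automatically stays within $\vartheta_X$-states, because by the semantics of Table~\ref{tab:Pohlmann operator} every $\tau$- or $\rt$-labelled transition of a $\vartheta_X$-state leads again to a $\vartheta_X$-state ($\rt_\epsilon$- and $a$-labelled excursions simply cannot occur inside such a path); moreover your supporting observation is vacuous, since a state $\vartheta_X(P')$ with $\deadend{P'}{X}$ has no $\tau$-successors at all.

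The genuine difficulty, which your proposal never addresses, lies elsewhere. The reactive time-out clauses you must verify (Clause 2.d of Definition~\ref{def:intuitive}, respectively Clauses 1.b and 2.c of Definition~\ref{def:generalised}) impose side conditions that Clause 2 of Definition~\ref{def:non-reactive} does not deliver: $\deadend{Q_{2i+1}}{X}$ at every odd-indexed state \emph{including the last one}, and, in the generalised form, $Q_1 \nsteptau$. For odd states that actually perform a $\rt$-transition these follow from the semantics of $\vartheta_X$; but Definition~\ref{def:non-reactive}.2 permits the final time-out to be elided ($Q_{2r-1} \step{\opt{\rt}} Q_{2r}$ with $Q_{2r-1}=Q_{2r}$, in particular $r=1$ with no $\rt$ performed at all), and then nothing guarantees that this final state is stable or deadlocked under $X$, since \rt-branching-related states need not have the same initial actions when one of them is unstable. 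The paper closes exactly this hole with an explicit extra argument: in the elided case it arranges the last state to be stable and then uses the matching of the environment time-out $\vartheta_X(P)\step{\rt_\epsilon}\vartheta(P)$ under Clause 1 of Definition~\ref{def:non-reactive} to conclude $\deadend{Q_{2r-1}}{X}$. Without an argument of this kind your verification of the time-out clause (and of the requirement $Q_1\nsteptau$ in your generalised route) is incomplete; the same mechanism — stability of the $\rt_\epsilon$-source obtained from $\deadend{Q}{X}$ — is also what you need in the rooted converse to turn the branching match of the $\rt_Y$-step after an $\rt_\epsilon$-step into the single-step match demanded by Definition~\ref{def:generalised rooted}.
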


\begin{proof}
    It suffices to prove that: if $\R$ is a \tb reactive bisimulation then $\R' := \{(\vartheta(P),\vartheta(Q)) \mid \R(P,Q)\} \cup \{(\vartheta_X(P),\vartheta_X(Q)) \mid \R(P,X,Q)\}$ is a \rt-branching bisimulation; and if $\R$ is a \rt-branching bisimulation then $\R' := \{(P,Q), (P,X,Q) \mid \R(\vartheta(P),\vartheta(Q)) \wedge X\subseteq A\} \cup \{(P,X,Q) \mid \R(\vartheta_X(P),\vartheta_X(Q))\}$ is a \tb reactive bisimulation. The rooted case is very similar.
    
    Let $\R$ be a \tb reactive bisimulation and define 
    \begin{align*}
        \R' := \{(\vartheta(P),\vartheta(Q)) \mid \R(P,Q)\} \cup \{(\vartheta_X(P),\vartheta_X(Q)) \mid \R(P,X,Q)\}
    \end{align*}
    We are going to check that $\R'$ is a $\rt$-branching bisimulation. Let $P,Q \in \closed$ such that $\R'(P,Q)$.
    \begin{itemize}
        \item If $P = \vartheta(P^\dag)$ and $Q = \vartheta(Q^\dag)$ then, by definition of $\R'$, $\R(P^\dag,Q^\dag)$.
        \begin{enumerate}
            \item If $P \step{\alpha} P'$ with $\alpha \in A_\tau \cup\{\rt_\epsilon,\epsilon_X \mid X \subseteq A\}$ then
            \begin{itemize}
                \item if $\alpha \in A_\tau$ then, by the semantics of $\vartheta$, $P' = \vartheta(P^\ddag)$ and $P^\dag \step{\alpha} P^\ddag$. Since $\R(P^\dag,Q^\dag)$, there exists a path $Q^\dag \pathtau Q^\star \step{\opt{\alpha}} Q^\ddag$ such that $\R(P^\dag,Q^\star)$ and $\R(P^\ddag,Q^\ddag)$. By the semantics, there exists a path $Q \pathtau \vartheta(Q^\star) \step{\opt{\alpha}} \vartheta(Q^\ddag)$ such that, by definition of $\R'$, $\R'(P, \vartheta(Q^\star))$ and $\R'(P', \vartheta(Q^\ddag))$.
                \item if $\alpha = \rt_\epsilon$ then this case is not possible according to the semantics of $\vartheta$.
                \item if $\alpha = \epsilon_X$ with $X \subseteq A$ then, by the semantics of $\vartheta$, $P' = \vartheta_X(P^\dag)$. Since $\R(P^\dag,Q^\dag)$, $\R(P^\dag,X,Q^\dag)$. By the semantics, $Q \step{\epsilon_X} \vartheta_X(Q^\dag)$ such that, by the definition of $\R'$, $\R'(P',\vartheta_X(Q^\dag))$.
            \end{itemize}
            \item If $P \step{\rt} P'$ then, by the semantics, this is impossible.
            \item If $P \nsteptau$ then, by the semantics of $\vartheta$, $P^\dag \nsteptau$. Since $\R(P^\dag,Q^\dag)$, $\R(P^\dag,\emptyset,Q^\dag)$, so there exists a path $Q^\dag \pathtau Q^\star \nsteptau$. By the semantics, there exists a path $Q \pathtau \vartheta(Q^\star) \nsteptau$.
        \end{enumerate}
        \item If there exists $X \subseteq A$ such that $P = \vartheta_X(P^\dag)$ and $Q = \vartheta_X(Q^\dag)$ then, by definition of $\R'$, $\R(P^\dag,X,Q^\dag)$.
        \begin{enumerate}
            \item If $P \step{\alpha} P'$ with $\alpha \in A_\tau \cup \{\rt_\epsilon,\epsilon_X \mid X\subseteq A\}$ then
            \begin{itemize}
                \item if $P \steptau P'$ then, by the semantics, $P' = \vartheta_X(P^\ddag)$ and $P^\dag \steptau P^\ddag$. Since $\R(P^\dag,X,Q^\dag)$, there exists a path $Q^\dag \pathtau Q^\star \step{\opt{\tau}} Q^\ddag$ such that $\R(P^\dag,X,Q^\star)$ and $\R(P^\ddag,X,Q^\ddag)$. By the semantics, there exists a path $Q \pathtau \vartheta_X(Q^\star) \step{\opt{\tau}} \vartheta_X(Q^\ddag)$ such that, by the definition of $\R'$, $\R'(P,\vartheta_X(Q^\star))$ and $\R'(P',\vartheta_X(Q^\ddag))$.
                \item if $P \step{a} P'$ with $a \in A$ then, by the semantics, $a \in X$, $P' = \vartheta(P^\ddag)$ and $P^\dag \step{a} P^\ddag$. Since $\R(P^\dag,X,Q^\dag)$, there exists a path $Q^\dag \pathtau Q^\star \step{a} Q^\ddag$ such that $\R(P^\dag,X,Q^\star)$ and $\R(P^\ddag,Q^\ddag)$. By the semantics, there exists a path $Q \pathtau \vartheta_X(Q^\star) \step{a} \vartheta(Q^\ddag)$ such that, by the definition of $\R'$, $\R'(P,\vartheta_X(Q^\star))$ and $\R'(P',\vartheta(Q^\ddag))$.
                \item if $P \step{\rt_\epsilon} P'$ then, by the semantics, $P' = \vartheta(P^\dag)$ and $\deadend{P^\dag}{X}$. Since $\R(P^\dag,X,Q^\dag)$ and $P^\dag \nsteptau$, there exists a path $Q^\dag \pathtau Q^\star \nsteptau$. Moreover, $\R(P^\dag,X,Q^\star)$. Since $\deadend{P^\dag}{X}$, $\R(P^\dag,X,Q^\star)$ and $Q^\star \nsteptau$, $\deadend{Q^\star}{X}$ and $\R(P^\dag,Q^\star)$. By the semantics, there exists a path $Q \pathtau \vartheta_X(Q^\star) \step{\rt_\epsilon} \vartheta(Q^\star)$ such that, by the definition of $\R'$, $\R'(P,\vartheta_X(Q^\star))$ and $\R'(P',\vartheta(Q^\star))$.
                \item if $\alpha = \epsilon_X$ with $X \subseteq A$ then this case is impossible according to the semantics of $\vartheta_X$.
            \end{itemize}
            \item if $P \step{\rt} P'$ then, by the semantics, $P' = \vartheta_X(P^\ddag)$, $\deadend{P^\dag}{X}$ and $P^\dag \step{\rt} P^\ddag$. Since $\R(P^\dag,X,Q^\dag)$, there exists a path $Q^\dag = Q^\dag_0 \pathtau Q^\dag_1 \step{\rt} Q^\dag_2 \pathtau Q^\dag_3 \step{\rt} ... \pathtau Q^\dag_{2r-1} \step{\opt{\rt}} Q^\ddag_{2r}$ with $r>0$, such that $\forall i \in [0,r{-}1], \R(P^\dag,X,Q^\dag_{2i}) \wedge \deadend{Q^\dag_{2i+1}}{X}$ and $\R(P^\ddag,X,Q^\ddag_{2r})$. For all $i \in [0,r{-}1]$, since $P^\dag \nsteptau$, $Q^\dag_{2i} \pathtau Q^\dag_{2i+1}$ and $\R(P^\dag,X,Q^\dag_{2i})$, $\R(P^\dag,X,Q_{2i+1}^\dag)$. By the semantics, there exists a path $Q \pathtau \vartheta_X(Q^\dag_1) \step{\rt} \vartheta_X(Q^\dag_2) \pathtau \vartheta_X(Q^\dag_3) \step{\rt} ... \pathtau \vartheta_X(Q_{2r-1}) \step{\opt{\rt}} \vartheta_X(Q_{2r})$ such that, by definition of $\R'$, $\forall i \in [0,2r{-}1], \R'(P,\vartheta_X(Q^\dag_i))$ and $\R'(P',\vartheta_X(Q^\ddag_{2r}))$.
            \item if $P \nsteptau$ then, by the semantics of $\vartheta_X$, $P^\dag \nsteptau$. Since $\R(P^\dag,X,Q^\dag)$, there exists a path $Q^\dag \pathtau Q_0 \nsteptau$. By the semantics, there exists a path $Q \pathtau \vartheta_X(Q_0) \nsteptau$.
        \end{enumerate}
    \end{itemize}
    Let $\R$ be a $\rt$-branching bisimulation and define
    \begin{align*}
        \R' := & \{(P,Q), (P,X,Q) \mid \R(\vartheta(P),\vartheta(Q)) \wedge X \subseteq A\} \cup \{(P,X,Q) \mid \R(\vartheta_X(P),\vartheta_X(Q))\}
    \end{align*}
    We are going to show that $\R'$ is a \tb reactive bisimulation. Let $P, Q \in \closed$ and $X \subseteq A$.
    \begin{enumerate}
        \item If $\R'(P,Q)$ then $\R(\vartheta(P),\vartheta(Q))$.
        \begin{enumerate}
            \item If $P \step{\alpha} P'$ with $\alpha \in A_\tau$ then, by the semantics, $\vartheta(P) \step{\alpha} \vartheta(P')$. Since $\R(\vartheta(P),\vartheta(Q))$, there exists a path $\vartheta(Q) \pathtau Q^\star \step{\opt{\alpha}} Q^\ddag$ such that $\R(\vartheta(P),Q^\star)$ and $\R(\vartheta(P'),Q^\ddag)$. By the semantics, $Q^\star = \vartheta(Q_1)$, $Q^\ddag = \vartheta(Q_2)$ and $Q \pathtau Q_1 \step{\opt{\alpha}} Q_2$ such that, by definition of $\R'$, $\R'(P,Q_1)$ and $\R'(P',Q_2)$.
            \item For all $Y \subseteq A$, by definition of $\R'$, $\R'(P,Y,Q)$.
        \end{enumerate}
        \item If $\R'(P,X,Q)$ then $\R(\vartheta(P),\vartheta(Q))$ or $\R(\vartheta_X(P),\vartheta_X(Q))$. If $\R(\vartheta(P),\vartheta(Q))$ then $\vartheta(P) \!\step{\epsilon_X} \vartheta_X(P)$, thus there exists a path $\vartheta(Q) \pathtau Q^\star \step{\epsilon_X} Q^\ddag$ such that $\R(\vartheta(P),Q^\star)$ and $\R(\vartheta_X(P),Q^\ddag)$. By the semantics, $Q^\star = \vartheta(Q_0)$, $Q^\ddag = \vartheta_X(Q_0)$ and $Q \pathtau Q_0$. Therefore, there exists a path $Q \pathtau Q_0$ such that $\R(\vartheta_X(P),\vartheta_X(Q_0))$.
        \begin{enumerate}
            \item If $P \step{\tau} P'$ then, by the semantics, $\vartheta_X(P) \step{\tau} \vartheta_X(P')$. Since $\R(\vartheta_X(P),\vartheta_X(Q_0))$, there exists a path $\vartheta_X(Q_0) \pathtau Q^\star \step{\opt{\tau}} Q^\ddag$ such that $\R(\vartheta_X(P),Q^\star)$ and $\R(\vartheta_X(P'),Q^\ddag)$. By the semantics, $Q^\star = \vartheta_X(Q_1)$, $Q^\ddag = \vartheta_X(Q_2)$ and $Q \pathtau Q_1 \step{\opt{\tau}} Q_2$ such that, by definition of $\R'$, $\R'(P,X,Q_1)$ and $\R'(P',X,Q_2)$.
            \item If $P \step{a} P'$ with $a \in X$ then, by the semantics, $\vartheta_X(P) \step{a} \vartheta(P')$. As $\R(\vartheta_X(P),\vartheta_X(Q_0))$, there exists a path $\vartheta_X(Q_0) \pathtau Q^\star \step{a} Q^\ddag$ such that $\R(\vartheta_X(P),Q^\star)$ and $\R(\vartheta(P'),Q^\ddag)$. By the semantics, $Q^\star = \vartheta_X(Q_1)$, $Q^\ddag = \vartheta(Q_2)$ and $Q \pathtau Q_1 \step{a} Q_2$ such that, by definition of $\R'$, $\R'(P,X,Q_1)$ and $\R'(P',Q_2)$.
            \item If $\deadend{P}{X}$ then, by the semantics, $\vartheta_X(P) \step{\rt_\epsilon} \vartheta(P)$. As $\R(\vartheta_X(P),\vartheta_X(Q_0))$, there exists a path $\vartheta_X(Q_0) \pathtau Q^\star \step{\rt_\epsilon} Q^\ddag$ such that $\R(\vartheta_X(P),Q^\star)$ and $\R(\vartheta(P),Q^\ddag)$. By the semantics, $Q^\star = \vartheta_X(Q_0')$, $Q^\ddag = \vartheta(Q_0')$ and $Q \pathtau Q_0'$ such that, by definition of $\R'$, $\R'(P,Q_0')$.
            \item If $\deadend{P}{X}$ and $P \step{\rt} P'$ then, by the semantics, $\vartheta_X(P) \step{\rt} \vartheta_X(P')$. Since $\R(\vartheta_X(P),\vartheta_X(Q_0))$, there exists a path $\vartheta_X(Q_0) \pathtau Q^\dag_1 \step{\rt} Q^\dag_2 \pathtau Q^\dag_3 \step{\rt} ... \pathtau Q^\dag_{2r-1} \step{\opt{\rt}} Q^\ddag_{2r}$ with $r>0$, such that $\forall i \in [0,2r{-}1], \R(\vartheta_X(P),Q^\dag_i)$ and $\R(\vartheta_X(P'),Q^\ddag_{2r})$. By the semantics, there exists a path $Q_0 \pathtau Q_1 \step{\rt} Q_2 \pathtau Q_3 \step{\rt} ... \pathtau Q_{2r-1} \step{\opt{\rt}} Q_{2r}$ such that $\forall i \in [0,r{-}1], Q^\dag_{2i} = \vartheta_X(Q_{2i}) \wedge Q^\dag_{2i+1} = \vartheta_X(Q_{2i+1})$ and $Q^\ddag_{2r} = \vartheta_X(Q_{2r})$. Thus, by definition of $\R'$, $\forall i \in [0,r{-}1], \R'(P,X,Q_{2i})$ and $\R'(P',X,Q_{2r})$. With the possible exception of $i=r{-}1$, for all $i \in [0,r{-}1]$ we have $\deadend{Q_{2i+1}}{X}$. In the case that $Q_{2r}=Q_{2r-1}$ we can choose $Q_{2r-1}$ such that $Q_{2r-1}\nsteptau$, and hence $Q^\dag_{2r-1}\nsteptau$. Since $\R(\vartheta_X(P),Q^\dag_{2r-1})$ and $\vartheta_X(P)\step{\rt_\epsilon}$, also $Q^\dag_{2r-1}\step{\rt_\epsilon}$. Thus $\deadend{Q_{2r-1}}{X}$.
            \item If $P \nsteptau$ then, by the semantics, $\vartheta_X(P) \nsteptau$. Since $\R(\vartheta_X(P),\vartheta_X(Q_0))$, there exists a path $\vartheta_X(Q_0) \pathtau Q^\star \nsteptau$. By the semantics, $Q^\star = \vartheta_X(Q_1)$ and $Q \pathtau Q_1 \nsteptau$.
        \end{enumerate}
    \end{enumerate}
    Let $\R$ be a rooted \tb reactive bisimulation and define 
    \begin{align*}
        \R' := \{(\vartheta^r(P),\vartheta^r(Q)) \mid \R(P,Q)\} \cup \{(\vartheta^r_X(P),\vartheta^r_X(Q)) \mid \R(P,X,Q)\}
    \end{align*}
    We are going to check that $\R'$ is a rooted $\rt$-branching bisimulation. Let $P,Q \in \closed$ such that $\R'(P,Q)$.
    \begin{itemize}
        \item If $P = \vartheta^r(P^\dag)$ and $Q = \vartheta^r(Q^\dag)$ then, by definition of $\R'$, $\R(P^\dag,Q^\dag)$.
        \begin{enumerate}
            \item Let $P \step{\alpha} P'$ with $\alpha \in Act \cup\{\rt_\epsilon,\rt_X,\epsilon_X \mid X \mathop\subseteq A\}$.
            \begin{itemize}
                \item If $\alpha \in A_\tau$ then, by the semantics of $\vartheta^r$, $P' = \vartheta(P^\ddag)$ and $P^\dag \step{\alpha} P^\ddag$. Since $\R(P^\dag,Q^\dag)$, there exists a transition $Q^\dag \step{\alpha} Q^\ddag$ such that $P^\ddag \bisimtbr Q^\ddag$. By the semantics, there exists a transition $Q \step{\alpha} \vartheta(Q^\ddag)$ such that, by the first part of this proof, $\vartheta(P') \bisimtb \vartheta(Q^\ddag)$.
                \item The case $\alpha = \rt_\epsilon$ is not possible according to the semantics of $\vartheta^r$.
                \item If $\alpha = \epsilon_X$ with $X \subseteq A$ then, by the semantics of $\vartheta^r$, $P' = \vartheta_X(P^\dag)$. Since $\R(P^\dag,Q^\dag)$, $P^\dag \bisimrtbr Q^\dag$ so $P^\dag \bisimtbr[X] Q^\dag$. By the semantics, $Q \step{\epsilon_X} \vartheta_X(Q^\dag)$ such that, by the first part of this proof, $P' \bisimtb \vartheta_X(Q^\dag)$.
                \item The case $\alpha=\rt$, by the semantics, is not possible.
                \item If $P \step{\rt_X} P'$ then, by the semantics, $\deadend{P^\dag}{X}$, $P^\dag \step{t} P^\ddag$ and $P' = \vartheta_X(P^\ddag)$. Since $\R(P^\dag,Q^\dag)$, $\deadend{Q^\ddag}{X}$ and there is a transition $Q^\dag \step{t} Q^\ddag$ such that $P^\ddag \bisimtbr[X] Q^\ddag$. By the semantics, $Q \step{\rt_X} \vartheta_X(Q^\ddag)$ and, by the first part of this proof, $P' \bisimtb \vartheta_X(Q^\ddag)$.
            \end{itemize}
        \end{enumerate}
        \item If there exists $X \subseteq A$ such that $P = \vartheta^r_X(P^\dag)$ and $Q = \vartheta^r_X(Q^\dag)$ then, by definition of $\R'$, $\R(P^\dag,X,Q^\dag)$.
        \begin{enumerate}
            \item Let $P \step{\alpha} P'$ with $\alpha \in Act \cup\{\rt_\epsilon,\rt_X,\epsilon_X \mid X \mathop\subseteq A\}$.
            \begin{itemize}
                \item If $P \steptau P'$ then, by the semantics, $P' = \vartheta_X(P^\ddag)$ and $P^\dag \steptau P^\ddag$. Since $\R(P^\dag,X,Q^\dag)$, there exists a transition $Q^\dag \steptau Q^\ddag$ such that $P^\ddag \bisimtbr[X] Q^\ddag$. By the semantics, there exists a transition $Q \steptau \vartheta_X(Q^\ddag)$ such that, by the first part, $P' \bisimtb \vartheta_X(Q^\ddag)$.
                \item If $P \step{a} P'$ with $a \in A$ then, by the semantics, $a \in X$, $P' = \vartheta(P^\ddag)$ and $P^\dag \step{a} P^\ddag$. Since $\R(P^\dag,X,Q^\dag)$, there exists a transition $Q^\dag \step{a} Q^\ddag$ such that $P^\ddag \bisimtbr Q^\ddag$. By the semantics, there exists a transition $Q \step{a} \vartheta(Q^\ddag)$ such that, by the first part, $P' \bisimtb \vartheta(Q^\ddag)$.
                \item If $P \step{\rt_\epsilon} P'$ then, by the semantics, $P' = \vartheta^r(P^\dag)$ and $\deadend{P^\dag}{X}$. Since $\R(P^\dag,X,Q^\dag)$ and $\deadend{P^\dag}{X}$, $\deadend{Q^\dag}{X}$ and $\R(P^\dag,Q^\dag)$. By the semantics, there exists a path $Q \step{\rt_\epsilon} \vartheta^r(Q^\dag)$ such that, by the definition of $\R'$, $\R'(P',\vartheta^r(Q^\dagger))$. Considering the previous case, this implies that $P' \bisimrtb \vartheta^r(Q^\dag)$ and so $P' \bisimtb \vartheta^r(Q^\dag)$.
                \item The case $\alpha = \epsilon_X$ or $\rt_X$ with $X \subseteq A$ is impossible according to the semantics of $\vartheta_X$.
                \item If $P \step{\rt} P'$ then, by the semantics, $P' = \vartheta_X(P^\ddag)$, $\deadend{P^\dag}{X}$ and $P^\dag \step{\rt} P^\ddag$. Since $\R(P^\dag,X,Q^\dag)$, there exists a path $Q^\dag \step{\rt} Q^\ddag$ such that $P^\ddag \bisimtbr[X] Q^\ddag$. Moreover, $\deadend{Q^\dag}{X}$. By the semantics, there exists a path $Q \step{\rt} \vartheta_X(Q^\ddag)$ such that, by the first part, $P' \bisimtb \vartheta_X(Q^\ddag)$.
            \end{itemize}
        \end{enumerate}
    \end{itemize}
    Let $\R$ be a rooted $t$-branching bisimulation and define
    \begin{align*}
        \R' := & \{(P,Q), (P,X,Q) \mid \R(\vartheta^r(P),\vartheta^r(Q)) \wedge X \subseteq A\} \cup \{(P,X,Q) \mid \R(\vartheta^r_X(P),\vartheta^r_X(Q))\} \\
        & \cup \left\{(P,Q), (P,X,Q) \left|\; \begin{array}{@{}l@{}} \R(\vartheta^r_Y(P),\vartheta^r_Y(Q)) \wedge X \subseteq A \wedge \mbox{}\\ (\init{\vartheta^r_Y(P)}\cup\init{\vartheta^r_Y(Q)})\cap(Y\cup\{\tau\})=\emptyset \end{array}\right.\right\}
    \end{align*}
    We are going to show that $\R'$ is a rooted \tb reactive bisimulation. Let $P, Q \in \closed$ and $X \subseteq A$.
    \begin{enumerate}
        \item If $\R'(P,Q)$ then $\R(\vartheta^r(P),\vartheta^r(Q))$ or $\R(\vartheta^r_Y(P),\vartheta^r_Y(Q))$ and $(\init{\vartheta^r_Y(P)}\cup\init{\vartheta^r_Y(Q)})\cap(X\cup\{\tau\})=\emptyset$.
        \begin{enumerate}
            \item If $P \step{\alpha} P'$ with $\alpha \in A_\tau$ then, by the semantics, $\vartheta^r(P) \step{\alpha} \vartheta(P')$. 
            \begin{itemize}
                \item If $\R(\vartheta^r(P),\vartheta^r(Q))$, there exists a transition $\vartheta^r(Q) \step{\alpha} Q^\ddag$ such that $\vartheta(P') \bisimtb Q^\ddag$. By the semantics, $Q^\ddag = \vartheta(Q')$ and $Q \step{\alpha} Q'$ such that, by the second part, $P' \bisimtbr Q'$.
                \item If $\R(\vartheta^r_Y(P),\vartheta^r_Y(Q))$ and $(\init{\vartheta^r_Y(P)}\cup\init{\vartheta^r_Y(Q)})\cap(X\cup\{\tau\})=\emptyset$ then $\vartheta_Y(P) \step{\rt_\epsilon} \vartheta^r(P)$, thus there exists a transition $\vartheta^r_Y(Q) \step{t_\epsilon} \vartheta^r(Q)$ with $\vartheta^r(P) \bisimtb \vartheta^r(Q)$. Since $\vartheta^r(P) \step{\alpha} \vartheta(P')$, there exists a path $\vartheta^r(Q) \pathtau Q_1 \step{(\alpha)} Q_2$ such that $\vartheta^r(P) \bisimtb Q_1$ and $\vartheta(P') \bisimtb Q_2$. Since $\vartheta^r(Q)\nsteptau$, $\vartheta^r(Q) \step{\alpha} Q_2$ so, by the semantics, $Q \step{\alpha} Q'$ and $Q_2 = \vartheta(Q')$. As a result, there exists a transition $Q \step{\alpha} Q'$ such that, by the second part, $P' \bisimtbr Q'$.
            \end{itemize}
            \item For all $Y \subseteq A$, by definition of $\R'$, $\R'(P,Y,Q)$.
        \end{enumerate}
        \item If $\R'(P,X,Q)$ then $\R(\vartheta^r(P),\vartheta^r(Q))$, $\R(\vartheta^r_X(P),\vartheta^r_X(Q))$ or\\ $\R(\vartheta^r_Y(P),\vartheta^r_Y(Q))$ and $(\init{\vartheta^r_Y(P)}\cup\init{\vartheta^r_Y(Q)})\cap(X\cup\{\tau\})=\emptyset$.
        \begin{enumerate}
            \item If $P \steptau P'$ then, by the semantics, $\vartheta^r(P) \steptau \vartheta(P')$ and $\vartheta^r_X(P) \steptau \vartheta_X(P')$.
            \begin{itemize}
                \item If $\R(\vartheta^r(P),\vartheta^r(Q))$, there exists a path $\vartheta^r(Q) \steptau Q^\ddag$ such that $\vartheta(P') \bisimtb Q^\ddag$. By the semantics, $Q^\ddag = \vartheta(Q')$ and $Q \steptau Q'$ so that, by the second part, $P' \bisimtbr Q'$ and thus $P' \bisimtbr[X] Q'$.
                \item If $\R(\vartheta^r_X(P),\vartheta^r_X(Q))$, there exists a path $\vartheta^r_X(Q) \steptau Q^\ddag$ such that $\vartheta_X(P') \bisimtb Q^\ddag$. By the semantics, $Q^\ddag = \vartheta_X(Q')$ and $Q \steptau Q'$ so that, by the second part, $P' \bisimtbr[X] Q'$.
                \item The case $\R(\vartheta^r_Y(P),\vartheta^r_Y(Q))$ and $(\init{\vartheta^r_Y(P)}\cup\init{\vartheta^r_Y(Q)})\cap(X\cup\{\tau\})=\emptyset$ is impossible.
            \end{itemize}
            \item If $P \step{a} P'$ with $a \in X$ then, by the semantics, $\vartheta^r(P) \step{a} \vartheta(P')$, $\vartheta^r_X(P) \step{a} \vartheta(P')$. 
            \begin{itemize}
                \item If $\R(\vartheta^r(P),\vartheta^r(Q))$, there exists a step $\vartheta^r(Q) \step{a} Q^\ddag$ such that $\vartheta(P') \bisimtb Q^\ddag$. By the semantics, $Q^\ddag = \vartheta(Q')$ and $Q \step{a} Q'$ such that, by the second part, $P' \bisimtbr Q'$.
                \item If $\R(\vartheta^r_X(P),\vartheta^r_X(Q))$, there exists a path $\vartheta^r_X(Q) \step{a} Q^\ddag$ such that $\vartheta(P') \bisimtb Q^\ddag$. By the semantics, $Q^\ddag = \vartheta(Q')$ and $Q \step{a} Q'$ such that, by the second part, $P' \bisimtbr Q'$.
                \item If $\R(\vartheta^r_Y(P),\vartheta^r_Y(Q))$ and $(\init{\vartheta^r_Y(P)}\cup\init{\vartheta^r_Y(Q)})\cap(X\cup\{\tau\})=\emptyset$ then $\vartheta^r_Y(P) \step{\rt_\epsilon} \vartheta^r(P)$, thus there exists a transition $\vartheta^r_Y(Q) \step{t_\epsilon} \vartheta^r(Q)$ with $\vartheta^r(P) \bisimtb \vartheta^r(Q)$. Since $\vartheta^r(P) \step{a} \vartheta(P')$, therefore, there exists a path $\vartheta^r(Q) \pathtau Q_1 \step{a} Q_2$ such that $\vartheta^r(P) \bisimtb Q_1$ and $\vartheta(P') \bisimtb Q_2$. Since $\vartheta^r(Q)\nsteptau$, $\vartheta^r(Q) \step{a} Q_2$ so, by the semantics, $Q \step{a} Q'$ and $Q_2 = \vartheta(Q')$. As a result, there exists a transition $Q \step{a} Q'$ such that, by the second part, $P' \bisimtbr Q'$.
            \end{itemize}
            \item If $\deadend{P}{X}$ then
            \begin{itemize}
                \item if $\R(\vartheta^r(P),\vartheta^r(Q))$ then, by definition of $\R'$, $\R'(P,Q)$.
                \item if $\R(\vartheta^r_X(P),\vartheta^r_X(Q))$ then $(\init{\vartheta^r_X(P)}\cup\init{\vartheta^r_X(Q)})\cap(X\cup\{\tau\})=\emptyset$ thus, by definition of $\R'$, $\R(P,Q)$.
                \item If $\R(\vartheta^r_Y(P),\vartheta^r_Y(Q))$ and $(\init{\vartheta^r_Y(P)}\cup\init{\vartheta^r_Y(Q)})\cap(X\cup\{\tau\})=\emptyset$ then, by definition of $\R'$, $\R(P,Q)$.
            \end{itemize}
            \item If $\deadend{P}{X}$ and $P \step{\rt} P'$ then, by the semantics, $\vartheta^r(P) \step{\rt_X} \vartheta_X(P')$, $\vartheta_X(P) \step{\rt} \vartheta_X(P')$.
            \begin{itemize}
                \item If $\R(\vartheta^r(P),\vartheta^r(Q))$ then there exists a transition $\vartheta^r(Q) \step{\rt_X} Q^\ddag$ with $\vartheta_X(P') \bisimtb Q^\ddag$. By the semantics, $Q^\ddag = \vartheta_X(Q')$ and $Q \step{\rt_X} Q'$. Moreover, by the second part, $P' \bisimtbr[X] Q'$.
                \item if $\R(\vartheta^r_X(P),\vartheta^r_X(Q))$ then there exists a transition $\vartheta^r_X(Q) \step{\rt} Q^\ddag$ with $\vartheta_X(P') \bisimtb Q^\ddag$. By the semantics, $Q^\ddag = \vartheta_X(Q')$ and $Q \step{\rt} Q'$. Moreover, by the second part, $P' \bisimtbr[X] Q'$.
                \item if $\R(\vartheta^r_Y(P),\vartheta^r_Y(Q))$ and $(\init{\vartheta^r_Y(P)}\cup\init{\vartheta^r_Y(Q)})\cap(X\cup\{\tau\})=\emptyset$ then $\vartheta^r_Y(P) \step{\rt_\epsilon} \vartheta^r(P)$, so there exists a transition $\vartheta^r_Y(Q) \step{\rt_\epsilon} \vartheta^r(Q)$ with $\vartheta^r(P) \bisimtb \vartheta^r(Q)$. Since $\vartheta^r(P) \step{\rt_X} \vartheta_X(P')$, there exists a path $\vartheta^r(Q) \pathtau Q_1 \step{\rt_X} Q_2$ such that $\vartheta^r(P) \bisimtb Q_1$ and $\vartheta_X(P') \bisimtb Q_2$. As $\vartheta^r(Q) \nsteptau$, $\vartheta^r(Q) \step{\rt_X} Q_2$ so, by the semantics, $Q \step{\rt} Q'$ and $Q_2 = \vartheta_X(Q')$. Moreover, by the second part, $P' \bisimtbr[X] Q'$.
            \popQED
            \end{itemize}
        \end{enumerate}
    \end{enumerate}
\end{proof}

\noindent
It would have been possible to define the \rt-branching bisimilarity differently while preserving the same result. The encoded processes are part of a sub-class with specific properties. For instance, an encoded process cannot have an outgoing $\tau$-transition and an outgoing time-out by definition of $\vartheta$ and $(\vartheta_X)_{X \subseteq A}$, i.e., for any encoded process $P$, ${P \step{\rt}} \Rightarrow {P\nsteptau}$. Thus, adding the condition $\forall i \in [0,r-1], Q_{2i+1} \nsteptau$ in  clause 2 of Definition \ref{def:non-reactive} does not interfere with our result even though it obviously defines a different bisimilarity. We settled on Definition 7 because it is the one that yields the simplest proofs.

\newpage
\section{Proofs of Stuttering Property and Transitivity} \label{app:intro}

\begin{proof}[Proof of Lemma \ref{lem:stuttering}]
    Let $\R$ be a \tb reactive bisimulation. Let's define
    \begin{align*}
        \R' := & \{(P^\dag,Q),(Q,P^\dag) \mid \exists P,P^\ddag \in \closed, P \pathtau P^\dag \pathtau P^\ddag \wedge \R(P,Q) \wedge \R(P^\ddag,Q)\} \cup {}\\
        & \{(P^\dag\!\!,X,Q),(Q,X,P^\dag) \mid \exists P,P^\ddag \mathbin\in \closed, P \mathbin{\pathtau} P^\dag \mathbin{\pathtau} P^\ddag \wedge \R(P,X,Q) \wedge \R(P^\ddag\!\!,X,Q)\}
    \end{align*}
    $\R'$ is symmetric by definition and we are going to prove that $\R'$ is a \tb reactive bisimulation. Note that $\R \subseteq \R'$ (by taking $P^\ddag=P^\dag$). Let $P,Q \in \closed$ and $X \subseteq A$.
    \begin{enumerate}
        \item Let $\R'(P,Q)$.
        \begin{enumerate}
            \item Suppose $P \step{\alpha} P'$ with $\alpha \in A_\tau$.
            \begin{itemize}
                \item Let there exist $P^\dag, P^\ddag \in \closed$ such that $P^\dag \pathtau P \pathtau P^\ddag$, $\R(P^\dag,Q)$ and $\R(P^\ddag,Q)$. Since $P^\dag \pathtau P$ and $\R(P^\dag,Q)$, there exists a path $Q \pathtau Q_0$ such that $\R(P,Q_0)$. Since $P \step{\alpha} P'$, there exists a path $Q_0 \pathtau Q_1 \step{\opt{\alpha}} Q_2$ such that $\R(P,Q_1)$ and $\R(P',Q_2)$. Thus, there exists a path $Q \pathtau Q_1 \step{\opt{\alpha}} Q_2$ such that, since $\R \subseteq \R'$, $\R'(P,Q_1)$ and $\R'(P',Q_2)$.
                \item Let there exist $Q^\dag, Q^\ddag \in \closed$ such that $Q^\dag \pathtau Q \pathtau Q^\ddag$, $\R(P,Q^\dag)$ and $\R(P,Q^\ddag)$. Since $\R(P,Q^\ddag)$, there exists a path $Q^\ddag \pathtau Q_1 \step{\opt{\alpha}} Q_2$ such that $\R(P,Q_1)$ and $\R(P',Q_2)$. Since $Q \pathtau Q^\ddag$, there exists a path $Q \pathtau Q_1 \step{\opt{\alpha}} Q_2$ such that, since $\R \subseteq \R'$, $\R'(P,Q_1)$ and $\R'(P',Q_2)$.
            \end{itemize}
            \item For all $Y \subseteq A$, $\R'(P,Y,Q)$ by definition of $\R'$.
        \end{enumerate}
        \item Let $\R'(P,X,Q)$.
        \begin{enumerate}
            \item Suppose $P \steptau P'$.
            \begin{itemize}
                \item Let there exist $P^\dag, P^\ddag \mathbin\in \closed$ such that $P^\dag \mathbin{\pathtau} P \mathbin{\pathtau} P^\ddag$, $\R(P^\dag\!,X,Q)$ and $\R(P^\ddag\!,X,Q)$. Since $P^\dag \pathtau P$ and $\R(P^\dag,X,Q)$, there exists a path $Q \pathtau Q_0$ such that $\R(P,X,Q_0)$. Since $P \steptau P'$, there exists a path $Q_0 \pathtau Q_1 \step{\opt{\tau}} Q_2$ such that $\R(P,X,Q_1)$ and $\R(P',X,Q_2)$. Thus, there exists a path $Q \pathtau Q_1 \step{\opt{\tau}} Q_2$ such that, since $\R \subseteq \R'$, $\R'(P,X,Q_1)$ and $\R'(P',X,Q_2)$.
                \item Let there exist $Q^\dag, Q^\ddag \mathbin\in \closed$ such that $Q^\dag \mathbin{\pathtau} Q \mathbin{\pathtau} Q^\ddag$, $\R(P,X,Q^\dag)$ and $\R(P,X,Q^\ddag)$. Since $\R(P,X,Q^\ddag)$, there exists a path $Q^\ddag \pathtau Q_1 \step{\opt{\tau}} Q_2$ such that $\R(P,X,Q_1)$ and $\R(P',X,Q_2)$. Since $Q \pathtau Q^\ddag$, there exists a path $Q \pathtau Q_1 \step{\opt{\tau}} Q_2$ such that, since $\R \subseteq \R'$, $\R'(P,X,Q_1)$ and $\R'(P',X,Q_2)$.
            \end{itemize}
            \item Suppose $P \step{a} P'$ with $a \in X$.
            \begin{itemize}
                \item Let there exist $P^\dag, P^\ddag \mathbin\in \closed$ such that $P^\dag \mathbin{\pathtau} P \mathbin{\pathtau} P^\ddag$, $\R(P^\dag\!,X,Q)$ and $\R(P^\ddag\!,X,Q)$. Since $P^\dag \pathtau P$ and $\R(P^\dag,X,Q)$, there exists a path $Q \pathtau Q_0$ such that $\R(P,X,Q_0)$. Since $P \step{a} P'$, there exists a path $Q_0 \pathtau Q_1 \step{a} Q_2$ such that $\R(P,X,Q_1)$ and $\R(P',Q_2)$. Thus, there exists a path $Q \pathtau Q_1 \step{a} Q_2$ such that, since $\R \subseteq \R'$, $\R'(P,X,Q_1)$ and $\R'(P',Q_2)$.
                \item Let there exist $Q^\dag, Q^\ddag \mathbin\in \closed$ such that $Q^\dag \mathbin{\pathtau} Q \mathbin{\pathtau} Q^\ddag$, $\R(P,X,Q^\dag)$ and $\R(P,X,Q^\ddag)$. Since $\R(P,X,Q^\ddag)$, there exists a path $Q^\ddag \pathtau Q_1 \step{a} Q_2$ such that $\R(P,X,Q_1)$ and $\R(P',Q_2)$. Since $Q \pathtau Q^\ddag$, there exists a path $Q \pathtau Q_1 \step{a} Q_2$ such that, since $\R \subseteq \R'$, $\R'(P,X,Q_1)$ and $\R'(P',Q_2)$.
            \end{itemize}
            \item Suppose $\deadend{P}{X}$.
            \begin{itemize}
                \item Let there exist $P^\dag, P^\ddag \mathbin\in \closed$ such that $P^\dag \mathbin{\pathtau} P \mathbin{\pathtau} P^\ddag$, $\R(P^\dag\!,X,Q)$ and $\R(P^\ddag\!,X,Q)$. Since $P^\dag \pathtau P$ and $\R(P^\dag,X,Q)$, there exists a path $Q \pathtau Q_0$ such that $\R(P,X,Q_0)$. Since $\deadend{P}{X}$, there exists a path $Q_0 \pathtau Q_0'$ such that $\R(P,Q_0')$. Thus, there exists a path $Q \pathtau Q_0'$ such that, since $\R \subseteq \R'$, $\R'(P,Q'_0)$.
                \item Let there exist $Q^\dag, Q^\ddag \mathbin\in \closed$ such that $Q^\dag \mathbin{\pathtau} Q \mathbin{\pathtau} Q^\ddag$, $\R(P,X,Q^\dag)$ and $\R(P,X,Q^\ddag)$. Since $\R(P,X,Q^\ddag)$, there exists a path $Q^\ddag \pathtau Q_0$ such that $\R(P,Q_0)$. Since $Q \pathtau Q^\ddag$, there exists a path $Q \pathtau Q_0$ such that, since $\R \subseteq \R'$, $\R'(P,Q_0)$.
            \end{itemize}
            \item Suppose $\deadend{P}{X}$ and $P \step{\rt} P'$.
            \begin{itemize}
                \item Let there exist $P^\dag, P^\ddag \mathbin\in \closed$ such that $P^\dag \mathbin{\pathtau} P \mathbin{\pathtau} P^\ddag$, $\R(P^\dag\!,X,Q)$ and $\R(P^\ddag\!,X,Q)$. Since $P^\dag \pathtau P$ and $\R(P^\dag,X,Q)$, there exists a path $Q \pathtau Q_0$ such that $\R(P,X,Q_0)$. Since $\deadend{P}{X}$ and $P \step{\rt} P'$, there exists a path $Q_0 \pathtau Q_1 \step{\rt} Q_2 \pathtau Q_3 \step{\rt} ... \pathtau Q_{2r-1} \step{\opt{\rt}} Q_{2r}$ with $r>0$, such that $\forall i \in [0,r{-}1],\linebreak \R(P,X,Q_{2i}) \wedge \deadend{Q_{2i+1}}{X}$ and $\R(P',X,Q_{2r})$. Thus, there exists a path $Q =: Q_0 \pathtau Q_1 \step{\rt} Q_2 \pathtau Q_3 \step{\rt} ... \pathtau Q_{2r-1} \step{\opt{\rt}} Q_{2r}$ with $r>0$, such that $\forall i \in [0,r{-}1], \R(P,X,Q_{2i}) \wedge \deadend{Q_{2i+1}}{X}$ and $\R(P',X,Q_{2r})$.
                \item Let there exist $Q^\dag, Q^\ddag \mathbin\in \closed$ such that $Q^\dag \mathbin{\pathtau} Q \mathbin{\pathtau} Q^\ddag$, $\R(P,X,Q^\dag)$ and $\R(P,X,Q^\ddag)$. Since $\R(P,X,Q^\ddag)$, there exists a path $Q^\ddag =: Q_0 \pathtau Q_1 \step{\rt} Q_2 \pathtau Q_3 \step{\rt} ... \pathtau Q_{2r-1} \step{\opt{\rt}} Q_{2r}$ with $r>0$, such that $\forall i \in [0,r{-}1], \R(P,X,Q_{2i}) \wedge \deadend{Q_{2i+1}}{X}$ and $\R(P',X,Q_{2r})$. Since $Q \pathtau Q^\ddag$, there exists a path $Q =: Q_0 \pathtau Q_1 \step{\rt} Q_2 \pathtau Q_3 \step{\rt} ... \pathtau Q_{2r-1} \step{\opt{\rt}} Q_{2r}$ with $r>0$, such that $\forall i \in [0,r{-}1], \R(P,X,Q_{2i}) \wedge \deadend{Q_{2i+1}}{X}$ and $\R(P',X,Q_{2r})$.
            \end{itemize}
            \item Suppose $P \nsteptau$.
            \begin{itemize}
                \item Let there exist $P^\dag, P^\ddag \mathbin\in \closed$ such that $P^\dag \mathbin{\pathtau} P \mathbin{\pathtau} P^\ddag$, $\R(P^\dag\!,X,Q)$ and $\R(P^\ddag\!,X,Q)$. Since $P^\dag \pathtau P$ and $\R(P^\dag,X,Q)$, there exists a path $Q \pathtau Q_0$. Since $P \nsteptau$, there exists a path $Q_0 \pathtau Q_1 \nsteptau$. Thus, there exists a path $Q \pathtau Q_1 \nsteptau$.
                \item Let there exist $Q^\dag, Q^\ddag \mathbin\in \closed$ such that $Q^\dag \mathbin{\pathtau} Q \mathbin{\pathtau} Q^\ddag$, $\R(P,X,Q^\dag)$ and $\R(P,X,Q^\ddag)$. Since $\R(P,X,Q^\ddag)$, there exists a path $Q^\ddag \pathtau Q_1 \nsteptau$. Since $Q \pathtau Q^\ddag$, there exists a path $Q \pathtau Q_1 \nsteptau$.
           \popQED
            \end{itemize}
        \end{enumerate}
    \end{enumerate}
\end{proof}

\begin{proof}[Proof of Proposition \ref{prop:equivalence}]
    Let $\R_1$ and $\R_2$ be two \tb reactive bisimulations and define
    \begin{align*}
        \R := (\R_1 \circ \R_2)\cup(\R_2\circ \R_1)
    \end{align*}
    $\R$ is clearly symmetric by definition. Let's check that $\R$ is a \tb reactive bisimulation. Let $P,Q \in \closed$ and $X \subseteq A$.
    \begin{enumerate}
        \item If $\R(P,Q)$ then there exists $R \in \closed$ such that $\R_1(P,R)$ and $\R_2(R,Q)$, or $\R_2(P,R)$ and $\R_1(R,Q)$. The two possibilities are similar; thus, suppose without loss of generality that $\R_1(P,R)$ and $\R_2(R,Q)$.
        \begin{enumerate}
            \item If $P \step{\alpha} P'$ with $\alpha \in A_\tau$ then, since $\R_1(P,R)$, there exists a path $R \pathtau R_1 \step{\opt{\alpha}} R_2$ such that $\R_1(P,R_1)$ and $\R_1(P',R_2)$. Since $\R_2(R,Q)$ and $R \pathtau R_1$, there exists a path $Q \pathtau Q_0$ such that $\R_2(R_1,Q_0)$. Since $R_1 \step{\opt{\alpha}} R_2$, there exists a path $Q_0 \pathtau Q_1 \step{\opt{\alpha}} Q_2$ such that $\R_2(R_1,Q_1)$ and $\R_2(R_2,Q_2)$. By definition of $\R$, there exists a path $Q \pathtau Q_1 \step{\opt{\alpha}} Q_2$ such that $\R(P,Q_1)$ and $\R(P',Q_2)$.
            \item For all $Y \subseteq A$, since $\R_1(P,R)$ and $\R_2(R,Q)$, $\R_1(P,Y,R)$ and $\R_2(R,Y,Q)$, thus, $\R(P,Y,Q)$.
        \end{enumerate}
        \item If $\R(P,X,Q)$ then there exists $R \in \closed$ such that $\R_1(P,X,R)$ and $\R_2(R,X,Q)$, or $\R_2(P,X,R)$ and $\R_1(R,X,Q)$. The two possibilities are similar; thus, suppose without loss of generality that $\R_1(P,X,R)$ and $\R_2(R,X,Q)$.
        \begin{enumerate}
            \item If $P \steptau P'$ then, since $\R_1(P,X,R)$, there exists a path $R \pathtau R_1 \step{\opt{\tau}} R_2$ such that $\R_1(P,X,R_1)$ and $\R_1(P',X,R_2)$. Since $\R_2(R,X,Q)$ and $R \pathtau R_1$, there exists a path $Q \pathtau Q_0$ such that $\R_2(R_1,X,Q_0)$. Since $R_1 \step{\opt{\tau}} R_2$, there exists a path $Q_0 \pathtau Q_1 \step{\opt{\tau}} Q_2$ such that $\R_2(R_1,X,Q_1)$ and $\R_2(R_2,X,Q_2)$. By definition of $\R$, there exists a path $Q \pathtau Q_1 \step{\opt{\tau}} Q_2$ such that $\R(P,X,Q_1)$ and $\R(P',X,Q_2)$.
            \item If $P \step{a} P'$ with $a \in X$ then, since $\R_1(P,X,R)$, there exists a path $R \pathtau R_1 \step{a} R_2$ such that $\R_1(P,X,R_1)$ and $\R_1(P',R_2)$. Since $\R_2(R,X,Q)$ and $R \pathtau R_1$, there exists a path $Q \pathtau Q_0$ such that $\R_2(R_1,X,Q_0)$. Since $R_1 \step{a} R_2$, there exists a path $Q_0 \pathtau Q_1 \step{a} Q_2$ such that $\R_2(R_1,X,Q_1)$ and $\R_2(R_2,Q_2)$. By definition of $\R$, there exists a path $Q \pathtau Q_1 \step{a} Q_2$ such that $\R(P,X,Q_1)$ and $\R(P',Q_2)$.
            \item If $\deadend{P}{X}$ then, since $P \nsteptau$, there exists a path $R \pathtau R_0 \nsteptau$. Moreover, using Clause 2.a, $\R_1(P,X,R_0)$. Moreover, there exists a path $R_0 \pathtau R'_0$ such that $\R_1(P,R'_0)$, but, since $R_0 \nsteptau$, $R_0 = R'_0$. By Clause 1.a, $\init{P} =\init{R_0}$, so $\deadend{R_0}{X}$. Since $\R_2(R,X,Q)$ and $R \pathtau R_0$, there exists a path $Q \pathtau Q_0$ such that $\R_2(R_0,X,Q_0)$. Moreover, since $\deadend{R_0}{X}$, there exists a path $Q_0 \pathtau Q_0'$ such that $\R_2(R_0,Q'_0)$. Thus, there exists a path $Q \pathtau Q_0'$ such that, by definition of $\R$, $\R(P,Q'_0)$.
            \item If $\deadend{P}{X}$ and $P \step{\rt} P'$ then, since $\R_1(P,X,R)$, there exists a path $R = R_0 \pathtau R_1 \step{\rt} R_2 \pathtau R_3 \step{\rt} ... \pathtau R_{2r-1} \step{\opt{\rt}} R_{2r}$ with $r>0$, such that $\forall i \in [0,r{-}1], \R_1(P,X,R_{2i}) \wedge \deadend{R_{2i+1}}{X}$ and $\R_1(P',X,R_{2r})$. For all $i \in [0,r{-}1]$, $\R_1(P,X,R_{2i})$, $R_{2i} \pathtau R_{2i+1}$ and $P \nsteptau$, therefore, for all $i \in [0,r{-}1]$, $\R_1(P,X,R_{2i+1})$. Since $\R_2(R,X,Q)$, $\deadend{R_{2r-1}}{X}$ and $R_{2r-1} \step{\opt{\rt}} R_{2r}$, there exists a path $Q = Q_0 \pathtau Q_1 \step{\rt} Q_2 \pathtau Q_3 \step{\rt} ... \pathtau Q_{2k-1} \step{\opt{\rt}} Q_{2k}$ with $k>0$, such that $\forall j \in [0,k{-}1], \exists i \in [0,2r{-}1], \R_2(R_i,X,Q_{2j}) \wedge \deadend{Q_{2j+1}}{X}$ and $\R_2(R_{2r},X,Q_{2k})$.
            By definition of $\R$, $\forall j \in [0,k{-}1], \R(P,X,Q_{2j})$ and $\R(P',X,Q_{2k})$.
            \item If $P \nsteptau$ then, since $\R_1(P,X,R)$, there exists a path $R \pathtau R_0 \nsteptau$. Since $\R_2(R,X,Q)$ and $R \pathtau R_0$, there exists a path $Q \pathtau Q_0$ such that $\R_2(R_0,X,Q_0)$. Since $R_0 \nsteptau$, there exists a path $Q_0 \pathtau Q_0' \nsteptau$. Hence there exists a path $Q \pathtau Q_0' \nsteptau$.
           \popQED
        \end{enumerate}
    \end{enumerate}
\end{proof}

\section{Proof of Modal Characterisation} \label{app:modal}

\begin{proof}[Proof of Theorem \ref{thm:modal characterisation}]
    $(\Rightarrow)$ We are going to prove by structural induction on $\logic_b$ and $\logic_b^r$ that, for all $P,Q \in \closed$, $X \subseteq A$, $\varphi \in \logic_b$ and $\psi \in \logic_b^r$,
    \begin{itemize}
        \item if $P \bisimtbr Q$ and $P \models \varphi$ then $Q \models \varphi$
        \item if $P \bisimtbr[X] Q$ and $P \models_X \varphi$ then $Q \models_X \varphi$
        \item if $P \bisimrtbr Q$ and $P \models \psi$ then $Q \models \psi$
        \item if $P \bisimrtbr[X] Q$ and $P \models_X \psi$ then $Q \models_X \psi$
    \end{itemize}
    Note that, in the four cases, we dispose of the contraposition. Let $P, Q \in \closed$, $X \subseteq A$, $\varphi \in \logic_b$ and $\psi \in \logic_b^r$.
    \begin{itemize}
        \item If $P \bisimtbr Q$ and $P \models \varphi$ then
        \begin{itemize}
            \item if $\varphi = \top$ then $Q \models \top$.
            \item if $\varphi = \bigwedge_{i\in I}\varphi_i$ with $(\varphi_i)_{i \in I} \in (\logic_b)^I$ then, for all $i \in I$, $P \models \varphi_i$. Thus, by induction, for all $i \in I$, $Q \models \varphi_i$. Therefore, $Q \models \bigwedge_{i\in I}\varphi_i$.
            \item if $\varphi = \neg\varphi'$ then $P \not\models \varphi'$. Thus, by induction, $Q \not\models \varphi'$. Therefore, $Q \models \neg\varphi'$.
            \item if $\varphi = \langle\epsilon\rangle (\varphi_1\langle\hat{\alpha}\rangle\varphi_2)$ then there exists a path $P \pathtau P_1 \step{\opt{\alpha}} P_2$ such that $P_1 \models \varphi_1$ and $P_2 \models \varphi_2$. Since $P \bisimtbr Q$, there exists a path $Q \pathtau Q_1 \step{\opt{\alpha}} Q_2$ such that $P_1 \bisimtbr Q_1$ and $P_2 \bisimtbr Q_2$. By induction, $Q_1 \models \varphi_1$ and $Q_2 \models \varphi_2$. Therefore, $Q \models \varphi$.
            \item if $\varphi = \varphi_1\langle\epsilon_X\rangle\varphi_2$ then there is a path $P \pathtau P_1 \step{\rt} P_2 \pathtau P_3 \step{\rt} ... \pathtau P_{2r{-}1} \step{\opt{\rt}} P_{2r}$\linebreak[3] with $r>0$, such that $P \models \varphi_1 \wedge \forall i \in [1,2r{-}1]\; P_{i} \models_X \varphi_1 \wedge P_{2r} \models_X \varphi_2$ and, moreover, $\forall i \in [0,r{-}1]\; \deadend{P_{2i+1}}{X}$. Since $P \bisimtbr Q$, there exists a path $Q \pathtau Q_1 \step{\rt} Q_2 \pathtau Q_3 \step{\rt} ... \pathtau Q_{2k-1} \step{\opt{\rt}} Q_{2k}$ with $k>0$, such that $\forall j \in [0,k{-}1]$ $\deadend{Q_{2j+1}}{X}$,  $P_{2r} \bisimtbr[X] Q_{2k}$ and $\forall j \in [1,2k{-}1]$, $\exists i \in [1,2r{-}1]$, $P_{i} \bisimtbr[X] Q_{j}$. By induction, $Q \models \varphi_1$, $\forall j \in [1,2k{-}1]$ $Q_{j} \models_X \varphi_1$ and $Q_{2k} \models_X \varphi_2$. Therefore, $Q \models \varphi_1\langle\epsilon_X\rangle\varphi_2$.
            \item if $\varphi = \langle\epsilon\rangle \neg\langle\tau\rangle\top$ then there exists a path $P \pathtau P_0 \nsteptau$. Since $P \bisimtbr Q$, there exists a path $Q \pathtau Q_0 \nsteptau$. Therefore, $Q \models \varphi$.
        \end{itemize}
        \item If $P \bisimtbr[X] Q$ and $P \models_X \varphi$ then
        \begin{itemize}
            \item if $\varphi = \top$ then $Q \models_X \top$.
            \item if $\varphi = \bigwedge_{i\in I}\varphi_i$ with $(\varphi_i)_{i \in I} \in (\logic_b)^I$ then, for all $i \in I$, $P \models_X \varphi_i$. Thus, by induction, for all $i \in I$, $Q \models_X \varphi_i$. Therefore, $Q \models_X \bigwedge_{i\in I}\varphi_i$.
            \item if $\varphi = \neg\varphi'$ then $P \not\models_X \varphi'$. Thus, by induction, $Q \not\models_X \varphi'$. Therefore, $Q \models_X \neg\varphi'$.
            \item if $\varphi = \langle\epsilon\rangle (\varphi_1\langle\hat{\alpha}\rangle\varphi_2)$ then
            \begin{itemize}
                \item if $\alpha = \tau$ then there exists a path $P \pathtau P_1 \step{\opt{\tau}} P_2$ such that $P_1 \models_X \varphi_1$ and $P_2 \models_X \varphi_2$. Since $P \bisimtbr[X] Q$, there exists a path $Q \pathtau Q_1 \step{\opt{\tau}} Q_2$ such that $P_1 \bisimtbr[X] Q_1$ and $P_2 \bisimtbr[X] Q_2$. By induction, $Q_1 \models_X \varphi_1$ and $Q_2 \models_X \varphi_2$. Therefore, $Q \models_X \varphi$.
                \item if $\alpha \in A$ then $a \in X$ or $\deadend{P}{X}$ and there exists a path $P \pathtau P_1 \step{a} P_2$ such that $P_1 \models_X \varphi_1$ and $P_2 \models \varphi_2$. Since $P \bisimtbr[X] Q$, there exists a path $Q \pathtau Q_1 \step{a} Q_2$ such that $P_1 \bisimtbr[X] Q_1$ and $P_2 \bisimtbr Q_2$. Moreover, with Lemma~\ref{lem:obvious}.4 we can get that $\deadend{P}{X} \Leftrightarrow \deadend{Q_1}{X}$. By induction, $Q_1 \models_X \varphi_1$ and $Q_2 \models \varphi_2$. Therefore, $Q \models_X \varphi$.
            \end{itemize}
            \item if $\varphi = \varphi_1\langle\epsilon_Y\rangle\varphi_2$ then there is a path $P \pathtau P_1 \step{\rt} P_2 \pathtau P_3 \step{\rt} ... \pathtau P_{2r{-}1} \step{\opt{\rt}} P_{2r}$\linebreak[3] with $r>0$, such that $\deadend{P_{1}}{X}$, $\forall i \in [1,2r{-}1]\; P_{i} \models_Y \varphi_1$, $P_{2r} \models_Y \varphi_2$ and $\forall i \in [0,r{-}1]\; \deadend{P_{2i+1}}{Y}$. Since $P \bisimtbr[X] Q$, there exists a path $Q \pathtau Q_1 \step{\rt} Q_2 \pathtau Q_3 \step{\rt} ... \pathtau Q_{2k-1} \step{\opt{\rt}} Q_{2k}$ with $k>0$, such that $\deadend{Q_{1}}{X\cup Y}$, $\forall j \in [1,k{-}1]$ $\deadend{Q_{2j+1}}{Y}$,  $P_{2r} \bisimtbr[Y] Q_{2k}$ and $\forall j \in [1,2k{-}1]$, $\exists i \in [1,2r{-}1]$, $P_{i} \bisimtbr[Y] Q_{j}$. By induction, $\forall j \in [1,2k{-}1]$ $Q_{2j} \models_Y \varphi_1$ and $Q_{2k} \models_Y \varphi_2$. Therefore, $Q \models_X \varphi_1\langle\epsilon_Y\rangle\varphi_2$.
            \item if $\varphi = \langle\epsilon\rangle \neg\langle\tau\rangle\top$ then there exists a path $P \pathtau P_0 \nsteptau$. Since $P \bisimtbr Q$, there exists a path $Q \pathtau Q_0 \nsteptau$. Therefore, $Q \models \varphi$.
        \end{itemize}
        \item If $P \bisimrtbr Q$ and $P \models \psi$ then
        \begin{itemize}
            \item if $\psi = \top$ then $Q \models \top$.
            \item if $\psi = \bigwedge_{i\in I}\psi_i$ with $(\psi_i)_{i \in I} \in (\logic_b^r)^I$ then, for all $i \in I$, $P \models \psi_i$. Thus, by induction, for all $i \in I$, $Q \models \psi_i$. Therefore, $Q \models \bigwedge_{i\in I}\psi_i$.
            \item if $\psi = \neg\psi'$ then $P \not\models \psi'$. Thus, by induction, $Q \not\models \psi'$. Therefore, $Q \models \neg\psi'$.
            \item if $\psi = \langle\alpha\rangle\varphi$ then there is a transition $P \step{\alpha} P'$ such that $P' \models \varphi$. Since $P \bisimrtbr Q$, there exists a path $Q \step{\alpha} Q'$ such that $P' \bisimtbr Q'$. By induction, $Q' \models \varphi$. Therefore, $Q \models \psi$.
            \item if $\psi = \langle t_X\rangle\varphi$ then $\deadend{P}{X}$ and there exists a transition $P \step{\rt} P'$ such that $P' \models_X \varphi$. Since $P \bisimrtbr Q$, $\deadend{Q}{X}$ and there exists a path $Q \step{\rt} Q'$ such that $P' \bisimtbr[X] Q'$. By induction, $Q' \models_X \varphi$. Therefore, $Q \models \psi$.
        \end{itemize}
        \item If $P \bisimrtbr[X] Q$ and $P \models_X \psi$ then
        \begin{itemize}
            \item if $\psi = \top$ then $Q \models_X \top$.
            \item if $\psi = \bigwedge_{i\in I}\psi_i$ with $(\psi_i)_{i \in I} \in (\logic_b^r)^I$ then, for all $i \in I$, $P \models_X \psi_i$. Thus, by induction, for all $i \in I$, $Q \models_X \psi_i$. Therefore, $Q \models_X \bigwedge_{i\in I}\psi_i$.
            \item if $\psi = \neg\psi'$ then $P \not\models_X \psi'$. Thus, by induction, $Q \not\models_X \psi'$. Therefore, $Q \models_X \neg\psi'$.
            \item if $\psi = \langle\alpha\rangle\varphi$
            \begin{itemize}
                \item if $\alpha = \tau$ then there exists a transition $P \step{\tau} P'$ such that $P' \models_X \varphi$. Since $P \bisimrtbr[X] Q$, there exists a transition $Q \step{\tau} Q'$ such that $P' \bisimtbr[X] Q'$. By induction, $Q' \models_X \varphi$. Therefore, $Q \models_X \psi$.
                \item if $\alpha \in A$ then $a \in X$ or $\deadend{P}{X}$ and there exists a transition $P \step{a} P'$ such that $P' \models \varphi$. Since $P \bisimrtbr[X] Q$, $\deadend{P}{X} \Leftrightarrow \deadend{Q}{X}$ and there exists a transition $Q \step{a} Q'$ such that $P' \bisimtbr Q'$. By induction, $Q' \models \varphi$. Therefore, $Q \models_X \psi$.
            \end{itemize}
            \item if $\psi = \langle t_Y\rangle\varphi$ then $\deadend{P}{(X\cup Y)}$ and there exists a transition $P \step{\rt} P'$ such that $P_1 \models_Y \varphi$. Since $P \bisimrtbr[X] Q$, $\deadend{Q}{(X\cup Y)}$ and there exists a transition $Q \step{\rt} Q'$ such that $P' \bisimtbr[Y] Q'$. By induction, $Q' \models_Y \varphi$. Therefore, $Q \models_X \psi$.
        \end{itemize}
    \end{itemize}

    $(\Leftarrow)$ Let $\equiv \; := \{(P,Q) \mid \forall \varphi \in \logic_{tb}, P \models \varphi \Leftrightarrow Q \models \varphi\} \cup \{(P,X,Q) \mid \forall \varphi \in \logic_{tb}, P \models_X \varphi \Leftrightarrow Q \models_X \varphi\}$, and $\equiv^r \; := \{(P,Q) \mid \forall \psi \in \logic_{tb}^r, P \models \psi \Leftrightarrow Q \models \psi\} \cup \{(P,X,Q) \mid \forall \psi \in \logic_{tb}^r,\linebreak[3] P \models_X \psi \Leftrightarrow Q \models_X \psi\}$. $(P,X,Q) \in {\equiv}$ will be denoted $P \equiv_X Q$ for clarity. Note that ${\equiv^r} \subseteq {\equiv}$. We are going to check that $\equiv$ is a generalised \tb bisimulation and $\equiv^r$ a generalised rooted \tb reactive bisimulation. Let $P,Q \in \closed$ and $X \subseteq A$.
    \begin{enumerate}
        \item If $P \equiv Q$
        \begin{enumerate}
            \item if $P \step{\alpha} P'$ then define $\mathcal{Q}^\dag := \{Q^\dag \mid Q \pathtau Q^\dag \wedge P \not\equiv Q^\dag\}$ and $\mathcal{Q}^\ddag := \{Q^\ddag \mid Q \pathtau Q^\dag \step{\opt{\alpha}} Q^\ddag \wedge P' \not\equiv Q^\ddag\}$. Since $\logic_b$ is closed under negation and conjunction, there exist two formulas $\varphi^\dag, \varphi^\ddag \in \logic_b$ such that $P \models \varphi^\dag$, $P' \models \varphi^\ddag$, for all $Q^\dag \in \mathcal{Q}^\dag$, $Q^\dag \not\models \varphi^\dag$ and, for all $Q^\ddag \in \mathcal{Q}^\ddag$, $Q^\ddag \not\models \varphi^\ddag$. Note that $P \models \langle\epsilon\rangle(\varphi^\dag\langle\hat{\alpha}\rangle\varphi^\ddag)$. Thus, $Q \models \langle\epsilon\rangle(\varphi^\dag\langle\hat{\alpha}\rangle\varphi^\ddag)$. Therefore, there exists a path $Q \pathtau Q_1 \step{\opt{\alpha}} Q_2$ such that $Q_1 \models \varphi^\dag$ and $Q_2 \models \varphi^\ddag$. By definition of $\mathcal{Q}^\dag$ and $\mathcal{Q}^\ddag$, $P \equiv Q_1$ and $P' \equiv Q_2$.
            \item if $\deadend{P}{X}$ and $P \step{\rt} P'$ then define $\mathcal{Q}^\dag := \{Q^\dag \mid Q \pathtau Q_1^\dag \step{\rt} Q_2^\dag \pathtau Q_3^\dag \step{\rt} ... \pathtau Q^\dag \wedge P \not\equiv_X Q^\dag\}$ and $\mathcal{Q}^\ddag := \{Q^\ddag \mid Q \pathtau Q_1^\dag \step{\rt} Q_2^\dag \pathtau Q_3^\dag \step{\rt} ... \pathtau Q^\dag_{2r-1} \step{\opt{\rt}} Q^\ddag \wedge P' \not\equiv_X Q^\ddag\}$. Since $\logic_b$ is closed under negation and conjunction, there exist two formulas $\varphi^\dag, \varphi^\ddag \mathbin\in \logic_b$ such that $P \models_X \varphi^\dag\!$, $P' \models_X \varphi^\ddag\!$, for all $Q^\dag \mathbin\in \mathcal{Q}^\dag$, $Q^\dag \mathbin{\not\models_X} \varphi^\dag$ and, for all $Q^\ddag \in \mathcal{Q}^\ddag$, $Q^\ddag \not\models_X \varphi^\ddag$. Note that $P \models \varphi^\dag\langle\epsilon_X\rangle\varphi^\ddag$. Thus, $Q \models \varphi^\dag\langle\epsilon_X\rangle\varphi^\ddag$. Therefore, there exists a path $Q \pathtau Q_1 \step{\rt} Q_2 \pathtau Q_3 \step{\rt} ... \pathtau Q_{2r-1} \step{\opt{\rt}} Q_{2r}$ with $r>0$, such that  $\forall i \in [0,r{-}1]$ $\deadend{Q_{2i+1}}{X}$, $\forall i \in [1,2r{-}1]$ $Q_{i} \models_X \varphi^\dag$ and $Q_{2r} \models_X \varphi^\ddag$.
            By definition of $\mathcal{Q}^\dag$ and $\mathcal{Q}^\ddag$, $\forall i \in [1,r{-}1]$, $P \equiv_X Q_{2i}$ and $P' \equiv_X Q_{2r}$.
            \item if $P \nsteptau$ then $P \models \langle\epsilon\rangle \neg\langle\tau\rangle\top$. Thus $Q \models \langle\epsilon\rangle \neg\langle\tau\rangle\top$. Therefore, $Q \pathtau Q_1 \nsteptau$.
       \end{enumerate}
        \item If $P \equiv_X Q$ then
        \begin{enumerate}
            \item if $P \step{\tau} P'$ then define $\mathcal{Q}^\dag := \{Q^\dag \mid Q \pathtau Q^\dag \wedge P \not\equiv_X Q^\dag\}$ and $\mathcal{Q}^\ddag := \{Q^\ddag \mid Q \pathtau Q^\ddag \wedge P' \not\equiv_X Q^\ddag\}$. Since $\logic_b$ is closed under negation and conjunction, there exist two formulas $\varphi^\dag, \varphi^\ddag \in \logic_b$ such that $P \models_X \varphi^\dag$, $P' \models_X \varphi^\ddag$, for all $Q^\dag \in \mathcal{Q}^\dag$, $Q^\dag \not\models_X \varphi^\dag$ and, for all $Q^\ddag \in \mathcal{Q}^\ddag$, $Q^\ddag \not\models_X \varphi^\ddag$. Note that $P \models_X \langle\epsilon\rangle(\varphi^\dag\langle\hat{\tau}\rangle\varphi^\ddag)$. Thus, $Q \models_X \langle\epsilon\rangle(\varphi^\dag\langle\hat{\tau}\rangle\varphi^\ddag)$. Therefore, there exists a path $Q \pathtau Q_1 \step{\opt{\tau}} Q_2$ such that $Q_1 \models_X \varphi^\dag$ and $Q_2 \models_X \varphi^\ddag$. By definition of $\mathcal{Q}^\dag$ and $\mathcal{Q}^\ddag$, $P \equiv_X Q_1$ and $P' \equiv_X Q_2$.
            \item if $P \step{a} P'$ with $a \in X$ or $\deadend{P}{X}$ then define $\mathcal{Q}^\dag := \{Q^\dag \mid Q \pathtau Q^\dag \wedge P \not\equiv_X Q^\dag\}$ and $\mathcal{Q}^\ddag := \{Q^\ddag \mid Q \pathtau Q^\dag \step{a} Q^\ddag \wedge P' \not\equiv Q^\ddag\}$. Since $\logic_b$ is closed under negation and conjunction, there exist two formulas $\varphi^\dag, \varphi^\ddag \in \logic_b$ such that $P \models_X \varphi^\dag$, $P' \models \varphi^\ddag$, for all $Q^\dag \in \mathcal{Q}^\dag$, $Q^\dag \not\models_X \varphi^\dag$ and, for all $Q^\ddag \in \mathcal{Q}^\ddag$, $Q^\ddag \not\models \varphi^\ddag$. Note that $P \models_X \langle\epsilon\rangle(\varphi^\dag\langle\hat{\alpha}\rangle\varphi^\ddag)$. Thus, $Q \models_X \langle\epsilon\rangle(\varphi^\dag\langle\hat{\alpha}\rangle\varphi^\ddag)$. Therefore, there exists a path $Q \pathtau Q_1 \step{\opt{\alpha}} Q_2$ such that $a \in X \vee \deadend{Q_1}{X}$, $Q_1 \models_X \varphi^\dag$ and $Q_2 \models \varphi^\ddag$. By definition of $\mathcal{Q}^\dag$ and $\mathcal{Q}^\ddag$, $P \equiv_X Q_1$ and $P' \equiv Q_2$.
            \item if $\deadend{P}{X\cup Y}$ and $P \step{\rt} P'$ then define $\mathcal{Q}^\dag := \{Q^\dag \mid Q \pathtau Q_1^\dag \step{\rt} Q_2^\dag \pathtau Q_3^\dag \step{\rt} ... \pathtau Q^\dag \wedge P \not\equiv_Y Q^\dag\}$ and $\mathcal{Q}^\ddag := \{Q^\ddag \mid Q \pathtau Q_1^\dag \step{\rt} Q_2^\dag \pathtau Q_3^\dag \step{\rt} ... \pathtau Q^\dag_{2r-1} \step{\opt{\rt}} Q^\ddag \wedge P' \not\equiv_Y Q^\ddag\}$. Since $\logic_b$ is closed under negation and conjunction, there exist two formulas $\varphi^\dag, \varphi^\ddag \mathbin\in \logic_b$ such that $P \models_Y \varphi^\dag\!$, $P' \models_Y \varphi^\ddag\!$, for all $Q^\dag \mathbin\in \mathcal{Q}^\dag$, $Q^\dag \mathbin{\not\models_Y} \varphi^\dag$ and, for all $Q^\ddag \in \mathcal{Q}^\ddag$, $Q^\ddag \not\models_Y \varphi^\ddag$. Note that $P \models_X \varphi^\dag\langle\epsilon_Y\rangle\varphi^\ddag$. Thus, $Q \models_X \varphi^\dag\langle\epsilon_Y\rangle\varphi^\ddag$. Therefore, there exists a path $Q \pathtau Q_1 \step{\rt} Q_2 \pathtau Q_3 \step{\rt} ... \pathtau Q_{2r-1} \step{\opt{\rt}} Q_{2r}$ with $r>0$, such that $\deadend{Q_1}{X}$, $\forall i \in [0,r{-}1]$ $\deadend{Q_{2i+1}}{Y}$, $\forall i \in [1,2r{-}1]$ $Q_{i} \models_Y \varphi^\dag$ and $Q_{2r} \models_Y \varphi^\ddag$. By definition of $\mathcal{Q}^\dag$ and $\mathcal{Q}^\ddag$, $\forall i \in [1,r{-}1]$, $P \equiv_Y Q_{2i}$ and $P' \equiv_Y Q_{2r}$.
            \item if $P \nsteptau$ then $P \models_X \langle\epsilon\rangle \neg\langle\tau\rangle\top$. Thus $Q \models_X \langle\epsilon\rangle \neg\langle\tau\rangle\top$. Therefore, $Q \pathtau Q_1 \nsteptau$.
        \end{enumerate}
    \end{enumerate}
    \begin{enumerate}
        \item If $P \equiv^r Q$ then
        \begin{enumerate}
            \item if $P \step{\alpha} P'$ with $\alpha \in A_\tau$ then define $\mathcal{Q}^\ddag := \{Q^\ddag \mid Q \step{\alpha} Q^\ddag \wedge P' \not\equiv Q^\ddag\}$. Since $\logic_b^r$ is closed under negation and conjunction, there exist a formula $\varphi^\ddag \in \logic_b^r$ such that $P' \models \varphi^\ddag$ and, for all $Q^\ddag \in \mathcal{Q}^\ddag$, $Q^\ddag \not\models \varphi^\ddag$. Note that $P \models \langle\alpha\rangle\varphi^\ddag$. Thus, $Q \models \langle\alpha\rangle\varphi^\ddag$. Therefore, there\linebreak[3] exists a transition $Q \step{\alpha} Q'$ such that $Q' \models \varphi^\ddag$. By definition of $\mathcal{Q}^\ddag$, $P' \equiv Q'$.
            \item if $\deadend{P}{X}$ and $P \step{\rt} P'$ then define $\mathcal{Q}^\ddag := \{Q^\ddag \mid Q \step{\rt} Q^\ddag \wedge P' \not\equiv_X Q^\ddag\}$. Since $\logic_b^r$ is closed under negation and conjunction, there exist a formula $\varphi^\ddag \in \logic_b^r$ such that $P' \models_X \varphi^\ddag$ and, for all $Q^\ddag \in \mathcal{Q}^\ddag$, $Q^\ddag \not\models_X \varphi^\ddag$. Note that $P \models \langle\rt_X\rangle\varphi^\ddag$. Thus, $Q \models \langle\rt_X\rangle\varphi^\ddag$. Therefore, there exists a transition $Q \step{\rt} Q'$ such that $Q' \models_X \varphi^\ddag$. By definition of $\mathcal{Q}^\ddag$, $P' \equiv_X Q'$.
        \end{enumerate}
        \item If $P \equiv^r_X Q$ then 
        \begin{enumerate}
            \item if $P \step{\tau} P'$ then define $\mathcal{Q}^\ddag := \{Q^\ddag \mid Q \step{\tau} Q^\ddag \wedge P' \not\equiv_X Q^\ddag\}$. Since $\logic_b^r$ is closed under negation and conjunction, there exist a formula $\varphi^\ddag \in \logic_b^r$ such that $P' \models_X \varphi^\ddag$ and, for all $Q^\ddag \in \mathcal{Q}^\ddag$, $Q^\ddag \not\models_X \varphi^\ddag$. Note that $P \models_X \langle\tau\rangle\varphi^\ddag$. Thus, $Q \models_X \langle\tau\rangle\varphi^\ddag$. Therefore, there exists a transition $Q \step{\tau} Q'$ such that $Q' \models_X \varphi^\ddag$. By definition of $\mathcal{Q}^\ddag$, $P' \equiv_X Q'$.
            \item if $P \step{a} P'$ with $a \in X \vee \deadend{P}{X}$ then define $\mathcal{Q}^\ddag := \{Q^\ddag \mid Q \step{a} Q^\ddag \wedge P' \not\equiv Q^\ddag\}$. Since $\logic_b^r$ is closed under negation and conjunction, there exist a formula $\varphi^\ddag \in \logic_b^r$ such that $P' \models \varphi^\ddag$ and, for all $Q^\ddag \in \mathcal{Q}^\ddag$, $Q^\ddag \not\models \varphi^\ddag$. Note that $P \models_X \langle a\rangle\varphi^\ddag$. Thus, $Q \models_X \langle a\rangle\varphi^\ddag$. Therefore, there exists a path $Q \step{\alpha} Q'$ such that $Q' \models \varphi^\ddag$. By definition of $\mathcal{Q}^\ddag$, $P' \equiv Q'$.
            \item if $\deadend{P}{(X\cup Y)}$ and $P \step{\rt} P'$ then define $\mathcal{Q}^\ddag := \{Q^\ddag \mid Q \step{\rt} Q^\ddag \wedge P' \not\equiv_Y Q^\ddag\}$. Since $\logic_b^r$ is closed under negation and conjunction, there exist a formula $\varphi^\ddag \in \logic_b^r$ such that $P' \models_Y \varphi^\ddag$ and, for all $Q^\ddag \in \mathcal{Q}^\ddag$, $Q^\ddag \not\models_Y \varphi^\ddag$. Note that $P \models_X \langle\rt_Y\rangle\varphi^\ddag$. Thus, $Q \models_X \langle\rt_Y\rangle\varphi^\ddag$. Therefore, there exists a path $Q \step{\rt} Q'$ such that $Q' \models_Y \varphi^\ddag$. By definition of $\mathcal{Q}^\ddag$, $P' \equiv_Y Q'$.
        \popQED
        \end{enumerate}
    \end{enumerate}
\end{proof}

\section{Correctness of Time-out Bisimulation} \label{app:time-out}

\begin{proof}[Proof of Proposition \ref{prop:time-out bisim}]
    Let $\R$ be a \tb reactive bisimulation, let's define
    \begin{align*}
        \tbisim := \{(P,Q) \mid \R(P,Q)\} \cup \{(\theta_X(P),\theta_X(Q)) \mid \R(P,X,Q)\}
    \end{align*}
    We are going to show that $\tbisim$ is a \tb time-out bisimulation. Let $P,Q \in \closed$ such that $P \tbisim Q$. By definition of $\tbisim$, $\R(P,Q)$ or $P = \theta_X(P^\dag)$, $Q = \theta_X(Q^\dag)$ and $\R(^\dag,X,Q^\dag)$.
    \begin{enumerate}
        \item If $P \step{\alpha} P'$ with $\alpha \in A_\tau$ then 
        \begin{itemize}
            \item if $\R(P,Q)$ then there exists a path $Q \pathtau Q_1 \step{\opt{\alpha}} Q_2$ such that $\R(P,Q_1)$ and $\R(P',Q_2)$. Thus, by definition of $\tbisim$, $P \tbisim Q_1$ and $P \tbisim Q_2$.
            \item if $P = \theta_X(P^\dag)$, $Q = \theta_X(Q^\dag)$ and $\R(P^\dag,X,Q^\dag)$ then
            \begin{itemize}
                \item if $\alpha = \tau$ then, by the semantics, $P' = \theta_X(P^\ddag)$ and $P^\dag \steptau P^\ddag$. Since $\R(P^\dag,X,Q^\dag)$, there exists a path $Q^\dag \pathtau Q^\dag_1 \step{\opt{\tau}} Q^\dag_2$ such that $\R(P^\dag,X,Q^\dag_1)$ and $\R(P^\ddag,X,Q^\dag_2)$. By the semantics, there exists a path $Q \pathtau \theta_X(Q^\dag_1) \step{\opt{\tau}} \theta_X(Q^\dag_2)$ such that, by the definition of $\tbisim$, $P \tbisim \theta_X(Q^\dag_1)$ and $P' \tbisim \theta_X(Q^\dag_2)$.
                \item if $\alpha = a \in A$ then, by the semantics, $P^\dag \step{a} P'$ and $a \in X \vee \deadend{P^\dag}{X}$.
                \begin{itemize}
                   \item if $a \in X$ then, since $\R(P^\dag,X,Q^\dag)$, there exists a path $Q^\dag \pathtau Q^\dag_1 \step{a} Q_2$ such that $\R(P^\dag,X,Q^\dag_1)$ and $\R(P',Q_2)$.  By the semantics, there exists a path $Q \pathtau \theta_X(Q^\dag_1) \step{a} Q_2$ such that, by the definition of $\tbisim$, $P \tbisim \theta_X(Q^\dag_1)$ and $P' \tbisim Q_2$.
                   \item if $\deadend{P^\dag}{X}$, then there is a path $Q^\dag \pathtau Q^\dag_0\nsteptau$ with $\R(P^\dag,Q^\dag_0)$. Now there exists a path $Q^\dag_0 \pathtau Q^\dag_1 \step{a} Q_2$ such that $\R(P^\dag,Q^\dag_{1})$ and $\R(P',Q_2)$. Moreover, we find that $\deadend{P^\dag}{X} \Leftrightarrow \deadend{Q^\dag_1}{X}$. By the semantics, there exists a path $Q \pathtau \theta_X(Q^\dag_1) \step{a} Q_2$ such that, by the definition of $\tbisim$, $P \tbisim \theta_X(Q_1)$ and $P' \tbisim Q_2$.
                \end{itemize}
            \end{itemize}
        \end{itemize}
        \item If $\deadend{P}{X}$ and $P \step{\rt} P'$ then 
        \begin{itemize}
            \item if $\R(P,Q)$ then $\R(P,X,Q)$, so there exists a path $Q = Q_0 \pathtau Q_1 \step{\rt} Q_2 \pathtau Q_3 \step{\rt} ... \pathtau Q_{2r-1} \step{\opt{\rt}} Q_{2r}$ with $r>0$, such that $\forall i \in [0,r{-}1],\; \R(P,X,Q_{2i}) \wedge \deadend{Q_{2i+1}}{X}$ and $\R(P',X,Q_{2r})$. So $Q_1\nsteptau$. By definition of $\tbisim$, $\forall i \in [1,r{-}1],\; \theta_X(P) \tbisim \theta_X(Q_{2i})$ and $\theta_X(P') \tbisim \theta_X(Q_{2r})$.
            \item if $P = \theta_Y(P^\dag)$, $Q = \theta_Y(Q^\dag)$ and $\R(P^\dag,Y,Q^\dag)$ then, by the semantics of $\theta_Y$, $\deadend{P^\dag}{Y}$ and $P^\dag \step{\rt} P'$. By Clause 2.c of Definition~\ref{def:intuitive}, there is a path $Q^\dag \pathtau Q'_0$ with $\R(P^\dag,Q'_0)$, and thus also $\R(P^\dag,X,Q'_0)$.
            Therefore, since $P^\dag \step{\rt} P'$, there exists a path $Q'_0 \pathtau Q_1' \step{\rt} Q_2' \pathtau Q_3' \step{\rt} ... \pathtau Q'_{2r-1} \step{\opt{\rt}} Q'_{2r}$ with $r>0$, such that $\forall i \in [0,r{-}1], \R(P^\dag,X,Q'_{2i}) \wedge \deadend{Q'_{2i+1}}{X}$ and $\R(P',X,Q'_{2r})$.
            Write $Q_0:=Q=\theta_Y(Q^\dag)$, $Q_1 := \theta_Y(Q'_1)$ and $Q_j := Q'_j$ for $j \in [2,2r]$. By the semantics, there exists a path $Q_0 \pathtau Q_1 \step{\rt} Q_2 \pathtau Q_3 \step{\rt} ... \pathtau Q_{2r+1} \step{\opt{\rt}} Q_{2r}$ with $r>0$, such that $Q_1\nsteptau$ and, by the definition of $\tbisim$, $\forall i \in [1,r{-}1],\; \theta_X(P) \tbisim \theta_X(Q_{2i}) \wedge \deadend{Q_{2i+1}}{X}$ and $\theta_X(P') \tbisim \theta_X(Q_{2r})$.
        \end{itemize}
        \item If $P \nsteptau$ then
        \begin{itemize}
            \item if $\R(P,Q)$ then $\R(P,\emptyset,Q)$, so there exists a path $Q \pathtau Q_0 \nsteptau$.
            \item if $P = \theta_X(P^\dag)$, $Q = \theta_X(Q^\dag)$ and $\R(P^\dag,X,Q^\dag)$ then, by the semantics, $P^\dag \nsteptau$. Since $\R(P^\dag,X,Q^\dag)$, there exists a path $Q^\dag \pathtau Q_0 \nsteptau$. By the semantics, there exists a path $Q \pathtau \theta_X(Q_0) \nsteptau$.
        \end{itemize}
    \end{enumerate}
    Let $\tbisim$ be a \tb time-out bisimulation, let's define
    \begin{align*}
        \R = \{(P,Q) \mid P \tbisim Q\} \cup \{(P,X,Q) \mid \theta_X(P) \tbisim \theta_X(Q)\}
    \end{align*}
    We are going to show that $\R$ is a generalised \tb reactive bisimulation. Let $P,Q \in \closed$ and $X \subseteq A$.
    \begin{enumerate}
        \item If $\R(P,Q)$ then $P \tbisim Q$.
        \begin{enumerate}
            \item If $P \step{\alpha} P'$ then there exists a path $Q \pathtau Q_1 \step{\opt{\alpha}} Q_2$ such that $P \tbisim Q_1$ and $P' \tbisim Q_2$, thus, by definition of $\R$, $\R(P,Q_1)$ and $\R(P',Q_2)$.
            \item If $\deadend{P}{X}$ and $P \step{\rt} P'$ then there exists a path $Q \pathtau Q_1 \step{\rt} Q_2 \pathtau Q_3 \step{\rt} ... \pathtau Q_{2r-1} \step{\opt{\rt}} Q_{2r}$ with $r>0$, such that $Q_1\nsteptau$, $\forall i \in [1,r{-}1],\; \theta_X(P) \tbisim \theta_X(Q_{2i}) \wedge \deadend{Q_{2i+1}}{X}$ and $\theta_X(P') \tbisim \theta_X(Q_{2r})$. Thus, by definition of $\R$, $\forall i \in [1,r{-}1],\; \R(P,X,Q_{2i})$ and $\R(P',X,Q_{2r})$.
            \item If $P \nsteptau$ then there exists a path $Q \pathtau Q_0 \nsteptau$ such that $P \tbisim Q_0$, thus, by definition of $\R$, $\R(P,Q_0)$.
        \end{enumerate}
        \item If $\R(P,X,Q)$ then $\theta_X(P) \tbisim \theta_X(Q)$.
        \begin{enumerate}
            \item If $P \steptau P'$ then, by the semantics, $\theta_X(P) \steptau \theta_X(P')$. Therefore, there exists a path $\theta_X(Q) \pathtau Q^\dag \step{\opt{\tau}} Q^\ddag$ such that $\theta_X(P) \tbisim Q^\dag$ and $\theta_X(P') \tbisim Q^\ddag$. By the semantics, $Q^\dag = \theta_X(Q_1)$, $Q^\ddag = \theta_X(Q_2)$ and $Q \pathtau Q_1 \step{\opt{\tau}} Q_2$. Moreover, by definition of $\R$, $\R(P,X,Q_1)$ and $\R(P',X,Q_2)$.
            \item If $P \step{a} P'$ with $a \in X \vee \deadend{P}{X}$ then, by the semantics, $\theta_X(P) \step{a} P'$. Therefore, there exists a path $\theta_X(Q) \pathtau Q^\dag \step{a} Q_2$ such that $\theta_X(P) \tbisim Q^\dag$ and $P' \tbisim Q_2$. By the semantics, $Q^\dag = \theta_X(Q_1)$ and $Q \pathtau Q_1 \step{a} Q_2$. Moreover, by definition of $\R$, $\R(P,X,Q_1)$ and $\R(P',Q_2)$.
            \item If $\deadend{P}{(X\cup Y)}$ and $P \step{\rt} P'$ then, by the semantics, $\deadend{\theta_X(P)}{Y}$ and $\theta_X(P) \step{\rt} P'$. Therefore, $\theta_X(Q) \pathtau Q^\dag_1 \step{\rt} Q_2 \pathtau Q_3 \step{\rt} ... \pathtau Q_{2r-1} \step{\opt{\rt}} Q_{2r}$ with $r>0$, such that $Q_1^\dag\nsteptau$, $\forall i \in [1,r{-}1]\; \theta_Y(P) \tbisim \theta_Y(Q_{2i}) \wedge \deadend{Q_{2i+1}}{Y}$ and $\theta_Y(P') \tbisim \theta_Y(Q_{2r})$. By the semantics, $Q^\dag_1 = \theta_X(Q_1)$ with $Q_1\nsteptau$ and we have $Q \pathtau Q_1 \step{\rt} Q_2 \pathtau Q_3 \step{\rt} ... \pathtau Q_{2r-1} \step{\opt{\rt}} Q_{2r}$. Moreover, by definition of $\R$, $\forall i \in [1,r{-}1],\; \R(P,Y,Q_{2i}) \wedge \deadend{Q_{2r+1}}{X}$ and $\R(P',Y,Q_{2r})$.
            \item If $P \nsteptau$ then, by the semantics, $\theta_X(P) \nsteptau$. Therefore, there exists a path $\theta_X(Q) \pathtau Q^\dag \nsteptau$ such that $\theta_X(P) \tbisim Q^\dag$. By the semantics, $Q^\dag = \theta_X(Q_0)$ and $Q \pathtau Q_0 \nsteptau$. Moreover, by definition of $\R$, $\R(P,X,Q_0)$.
        \end{enumerate}
    \end{enumerate}
    This ends the proof of Proposition~\ref{prop:time-out bisim}.1, and thereby its corollary~\ref{prop:time-out bisim}.2.\\
    Let $\R$ be a generalised rooted \tb reactive bisimulation, let's define
    \begin{align*}
        \tbisim := \{(P,Q) \mid \R(P,Q)\} \cup \{(\theta_X(P),\theta_X(Q)) \mid \R(P,X,Q)\}
    \end{align*}
    We are going to show that $\tbisim$ is a rooted \tb time-out bisimulation. Let $P,Q \in \closed$ such that $P \tbisim Q$, by definition of $\tbisim$, $\R(P,Q)$ or $P = \theta_X(P^\dag)$, $Q = \theta_X(Q^\dag)$ and $\R(P,X,Q)$.
    \begin{enumerate}
        \item If $P \step{\alpha} P'$ with $\alpha \in A_\tau$ then
        \begin{itemize}
            \item if $\R(P,Q)$ then there exists a transition $Q \step{\alpha} Q'$ such that $P'  \bisimtbr Q'$.
            \item if $P = \theta_X(P^\dag)$, $Q = \theta_X(Q^\dag)$ and $\R(P^\dag,X,Q^\dag)$ then 
            \begin{itemize}
                \item if $\alpha = \tau$ then, by the semantics, $P' = \theta_X(P^\ddag)$ and $P^\dag \steptau P^\ddag$. Since $\R(P^\dag,X,Q^\dag)$, there exists a transition $Q^\dag \steptau Q^\ddag$ such that $P^\ddag  \bisimtbr[X] Q^\ddag$. By the semantics, there exists a transition $Q \steptau \theta_X(Q^\ddag)$. Moreover, by Proposition~\ref{prop:time-out bisim}.\ref{corr}, $P'  \bisimtbr \theta_X(Q^\ddag)$.
                \item if $\alpha = a \in A$ then, by the semantics, $P^\dag \step{a} P'$ and $a \in X \vee \deadend{P^\dag}{X}$. Since $\R(P^\dag,X,Q^\dag)$, there exists a transition $Q^\dag \step{a} Q'$ such that $P'  \bisimtbr Q'$. Moreover, $\deadend{P^\dag}{X} \Leftrightarrow \deadend{Q^\dag}{X}$. By the semantics, there exists a transition $Q \step{a} Q'$ such that $P'  \bisimtbr Q'$.
            \end{itemize}
        \end{itemize}
        \item If $\deadend{P}{X}$ and $P \step{\rt} P'$ then
        \begin{itemize}
            \item if $\R(P,Q)$ then there exists a transition $Q \step{\rt} Q'$ such that $P' \bisimtbr[X] Q'$. Thus, $\theta_X(P')  \bisimtbr \theta_X(Q')$ by Proposition~\ref{prop:time-out bisim}.\ref{corr}.
            \item if $P = \theta_Y(P^\dag)$, $Q = \theta_Y(Q^\dag)$ and $\R(P^\dag,Y,Q^\dag)$ then, by the semantics, $P^\dag \step{\rt} P'$ and $\deadend{P^\dag}{Y}$. Since $\R(P^\dag,Y,Q^\dag)$, $\deadend{P}{(X\cup Y)}$ and $P^\dag \step{\rt} P'$, there exists a transition $Q^\dag \step{\rt} Q'$ such that $P' \bisimtbr[X] Q'$. Thus, $\theta_X(P')  \bisimtbr \theta_X(Q')$. Moreover, $\deadend{Q^\dag}{X}$. By the semantics, there exists a transition $Q \step{\rt} Q'$ such that $\theta_X(P')  \bisimtbr \theta_X(Q')$.
        \end{itemize}
    \end{enumerate}
    Let $\tbisim$ be a rooted \tb time-out bisimulation, let's define
    \begin{align*}
        B := \{(P,Q) \mid P \tbisim Q\} \cup \{(P,X,Q) \mid \theta_X(P) \tbisim \theta_X(Q)\}
    \end{align*}
    We are going to show that $\R$ is a generalised rooted \tb reactive bisimulation. Let $P,Q \in \closed$ and $X \subseteq A$.
    \begin{enumerate}
        \item If $\R(P,Q)$ then $P \tbisim Q$.
        \begin{enumerate}
            \item If $P \step{\alpha} P'$ with $\alpha \in A_\tau$ then there exists a transition $Q \step{\alpha} Q'$ such that $P'  \bisimtbr Q'$.
            \item If $\deadend{P}{X}$ and $P \step{\rt} P'$ then there exists a transition $Q \step{\rt} Q'$ such that $\theta_X(P')  \bisimtbr \theta_X(Q')$. Thus, $P' \bisimtbr[X] Q'$, by Proposition~\ref{prop:time-out bisim}.\ref{corr}.
        \end{enumerate}
        \item If $\R(P,X,Q)$ then $\theta_X(P) \tbisim \theta_X(Q)$.
        \begin{enumerate}
            \item If $P \steptau P'$ then, by the semantics, $\theta_X(P) \steptau \theta_X(P')$. Since $\theta_X(P) \tbisim \theta_X(Q)$, there exists a transition $\theta_X(Q) \steptau Q^\ddag$ such that $\theta_X(P')  \bisimtbr Q^\ddag$. By the semantics, $Q^\ddag = \theta_X(Q')$ and there exists a transition $Q \steptau Q'$. By Proposition~\ref{prop:time-out bisim}.\ref{corr}, $P' \bisimtbr[X] Q'$.
            \item If $P \step{a} P'$ with $a \in X \vee \deadend{P}{X}$ then, by the semantics, $\theta_X(P) \step{a} P'$. Since $\theta_X(P) \tbisim \theta_X(Q)$, there exists a transition $\theta_X(Q) \step{a} Q'$ such that $P'  \bisimtbr Q'$. By the semantics, there exists a transition $Q \step{a} Q'$ such that $P'  \bisimtbr Q'$.
            \item If $\deadend{P}{(X\cup Y)}$ and $P \step{\rt} P'$ then, by the semantics, $\theta_X(P) \step{\rt} P'$ and $\deadend{P}{Y}$. Since $\theta_X(P) \tbisim \theta_X(Q)$, there exists a transition $\theta_X(Q) \step{\rt} Q'$ such that $\theta_Y(P')  \bisimtbr \theta_Y(Q')$. By the semantics, there exists a transition $Q \step{\rt} Q'$. By Proposition~\ref{prop:time-out bisim}.\ref{corr}, $P' \bisimtbr[Y] Q'$.
        \popQED
        \end{enumerate}
    \end{enumerate}
\end{proof}

\section{Congruence Proofs for \texorpdfstring{$\bisimtbr$ and $\bisimtb$}{Branching Reactive Bisimilarity}}  \label{app:stability}

To prove congruence properties, the notion of bisimulation \emph{up to}, introduced by Milner in \cite{Mi90ccs}, is going to be helpful. Let $\bisim\,$ denote the classical notion of strong bisimilarity~\cite{Mi90ccs}:\linebreak[3] A \emph{(strong) bisimulation} is a symmetric relation ${\R} \subseteq \closed\times\closed$ such that, for all $P,Q \in \closed$ with $P \mathrel\R Q$, if $P \step{\alpha} P'$ with $\alpha \in Act$ then there is a transition $Q \step{\alpha} Q'$ such that $P' \mathrel\R Q'$; write $P \bisim Q$ if $P \mathrel\R Q$ for some strong bisimulation $\R$.
\begin{definition}\rm \label{def:up to}
    A \emph{\tb time-out bisimulation up to $\bisim$\,} is a symmetric relation ${\tbisim} \subseteq \closed\times\closed$ such that, for all $P,Q \in \closed$ with $P \tbisim Q$,
    \begin{enumerate}
        \item if $P \step{\alpha} P'$ with $\alpha \in A_\tau$ then there exists a path $Q \pathtau Q_1 \step{\opt{\alpha}} Q_2$ such that $P \upto[\bisim] Q_1$ and $P' \upto[\bisim] Q_2$
        \item if $\deadend{P}{X}$ and $P \step{\rt} P'$ then there exists a path $Q = Q_0 \pathtau Q_1 \step{\rt} Q_2 \pathtau Q_3 \step{\rt} ... \pathtau Q_{2r-1} \step{\opt{\rt}} Q_{2r}$ with $r\mathbin>0$, such that $Q_1\nsteptau$, $\forall i \mathbin\in [1,r{-}1],\; \theta_X(P) \upto[\bisim] \theta_X(Q_{2i})\linebreak[3] \wedge \deadend{Q_{2i+1}}{X}$ and $\theta_X(P') \upto[\bisim] \theta_X(Q_{2r})$
        \item if $P \nsteptau$ then there exists a path $Q \pathtau Q_0 \nsteptau$,
    \end{enumerate}
    where $\upto[\bisim]$ stands for the relational composition $\bisim \circ \tbisim \circ \bisim$\,.
\end{definition}

\begin{proposition} \label{prop:up to}
    Let $P,Q \in \closed$.  Then $P \bisimtbr Q$ iff there exists a \tb time-out bisimulation $\mathcal{B}$ up to $\bisim$ such that $P \tbisim Q$.
\end{proposition}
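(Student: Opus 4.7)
The plan is to prove each direction separately. The forward direction is immediate: if $P \bisimtbr Q$, Proposition~\ref{prop:time-out bisim}.1 yields a \tb time-out bisimulation $\tbisim$ with $P \tbisim Q$. Since $\bisim$ is reflexive, ${\tbisim} \subseteq {\bisim} \circ {\tbisim} \circ {\bisim} = {\upto[\bisim]}$, so $\tbisim$ automatically satisfies the weaker requirements of a \tb time-out bisimulation up to $\bisim$.

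For the backward direction, my plan is to show that the relation ${\tbisim^\star} := {\bisim} \circ {\tbisim} \circ {\bisim}$ is itself a genuine \tb time-out bisimulation; since ${\tbisim} \subseteq {\tbisim^\star}$ (again by reflexivity of $\bisim$), Proposition~\ref{prop:time-out bisim}.1 then concludes. Symmetry of $\tbisim^\star$ follows from symmetry of $\bisim$ and $\tbisim$.

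Suppose $P \tbisim^\star Q$ via witnesses $P \bisim P^\dag \tbisim Q^\dag \bisim Q$. If $P \step{\alpha} P'$ with $\alpha \in A_\tau$, then $\bisim$ yields $P^\dag \step{\alpha} P'^\dag$ with $P' \bisim P'^\dag$; the up-to clause for $\tbisim$ provides a path $Q^\dag \pathtau Q_1^\dag \step{\opt{\alpha}} Q_2^\dag$ with $P^\dag \upto[\bisim] Q_1^\dag$ and $P'^\dag \upto[\bisim] Q_2^\dag$; lifting this path through $Q^\dag \bisim Q$ gives $Q \pathtau Q_1 \step{\opt{\alpha}} Q_2$ with $Q_i^\dag \bisim Q_i$. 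Because $\bisim$ is transitive, $({\bisim}) \circ ({\upto[\bisim]}) \circ ({\bisim}) = {\upto[\bisim]}$, so chaining $P \bisim P^\dag \upto[\bisim] Q_1^\dag \bisim Q_1$ yields $P \tbisim^\star Q_1$, and analogously $P' \tbisim^\star Q_2$. The time-out clause proceeds in the same manner, using in addition that $\bisim$ preserves $\init{\cdot}$ (so $\deadend{P}{X}$ transfers to $\deadend{P^\dag}{X}$, and each $\deadend{Q^\dag_{2i+1}}{X}$ transfers to $\deadend{Q_{2i+1}}{X}$), that $\bisim$ preserves $\nsteptau$ (giving $Q_1 \nsteptau$), and that $\bisim$ is a congruence for $\theta_X$ (standard, and already exploited in \cite{strongreactivebisimilarity}), so $\theta_X(P) \bisim \theta_X(P^\dag)$ and $\theta_X(Q^\dag_{2i}) \bisim \theta_X(Q_{2i})$, whence the required $\theta_X(P) \tbisim^\star \theta_X(Q_{2i})$ follows by the same absorption identity. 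The stability clause requires only that $\bisim$ preserves $\nsteptau$.

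The main obstacle is not conceptual but bookkeeping: repeatedly peeling apart elements of $\upto[\bisim]$ into their three constituents, and verifying that after the manipulations the result collapses back into ${\bisim} \circ {\tbisim} \circ {\bisim}$. The one semantic fact that must be cited explicitly is the congruence of $\bisim$ for $\theta_X$; everything else reduces to the classical ``path-lifting'' properties of strong bisimilarity.
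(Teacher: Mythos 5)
Your proposal is correct and follows essentially the same route as the paper: the forward direction via reflexivity of $\bisim$, and the backward direction by showing that ${\bisim}\circ{\tbisim}\circ{\bisim}$ is itself a \tb time-out bisimulation, using path-lifting along $\bisim$, preservation of initial actions and stability, transitivity of $\bisim$, and its congruence property for $\theta_X$ from \cite{strongreactivebisimilarity}. No gaps worth noting.
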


\begin{proof}
    First of all, a \tb time-out bisimulation is a \tb time-out bisimulation up to $\bisimtbr$ by reflexivity of $\bisim$\,. Conversely, let $\tbisim$ be a \tb bisimulation up to $\bisim$. We are going to show that $\upto[\bisim]$ is a \tb time-out bisimulation. By the reflexivity of $\bisimtbr$ this will suffice. Let $P,Q \in \closed$ such that $P \upto[\bisim] Q$. Then there exists $P^\dag, Q^\dag \in \closed$ such that $P \bisim P^\dag \tbisim Q^\dag \bisim Q$.
    \begin{enumerate}
        \item If $P \step{\alpha} P'$ with $\alpha \in A_\tau$ then, since $P \bisim P^\dag$, there exists a transition $P^\dag \step{\alpha} P^\ddag$ such that $P' \bisim P^\ddag$. Since $P^\dag \tbisim Q^\dag$, there exists a path $Q^\dag \pathtau Q^\star \step{\opt{\alpha}} Q^\ddag$ such that $P^\dag \upto[\bisim] Q^\star$ and $P^\ddag \upto[\bisim] Q^\ddag$. Since $Q^\dag \bisim Q$, there exists a path $Q \pathtau Q_1 \step{\opt{\alpha}} Q_2$ such that $Q^\star \bisim Q_1$ and $Q^\ddag \bisim Q_2$. Since $\bisim$ is transitive, $P \upto[\bisim] Q_1$ and $P' \upto[\bisim] Q_2$.
        \item If $\deadend{P}{X}$ and $P \step{\rt} P'$ then, since $P \bisim P^\dag$, $\deadend{P^\dag}{X}$ and there exists a transition $P^\dag \step{\rt} P^\ddag$ such that $P' \bisim P^\ddag$. Since $P^\dag \tbisim Q^\dag$, there exists a path $Q^\dag \pathtau Q^\dag_1 \step{\rt} Q^\dag_2 \pathtau Q^\dag_3 \step{\rt} ... \pathtau Q^\dag_{2r-1} \step{\opt{\rt}} Q^\dag_{2r}$ with $r>0$, such that $Q^\dag_1\nsteptau$, $\forall i \in [1,r{-}1],\; \theta_X(P^\dag) \upto[\bisim] \theta_X(Q^\dag_{2i}) \wedge \deadend{Q^\dag_{2i+1}}{X}$ and $\theta_X(P^\ddag) \upto[\bisim] \theta_X(Q^\dag_{2r})$. Since $Q^\dag \bisim Q$, there exists a path $Q \pathtau Q_1 \step{\rt} Q_2 \pathtau Q_3 \step{\rt} ... \pathtau Q_{2r-1} \step{\opt{\rt}} Q_{2r}$ such that $Q_1\nsteptau$, $\forall i \in [1,r{-}1],\; Q^\dag_{2i} \bisim Q_{2i} \wedge \deadend{Q_{2i+1}}{X}$ and $Q^\dag_{2r} \bisim Q_{2r}$. Since $\bisim$ is transitive and a congruence for $\theta_X$ \cite{strongreactivebisimilarity}, $\forall i \in [1,r{-}1],\; \theta_X(P) \tbisim \theta_X(Q_{2i}) \wedge \deadend{Q_{2i+1}}{X}$ and $\theta_X(P') \upto[\bisim] \theta_X(Q_{2r})$.
        \item If $P \nsteptau$ then, since $P \bisim P^\dag$, $P^\dag \nsteptau$. Since $P^\dag \tbisim Q^\dag$, there exists a path $Q^\dag \pathtau Q^\star \nsteptau$. Since $Q^\dag \bisim Q$, there exists a path $Q \pathtau Q_0 \nsteptau$ such that $Q^\star \bisim Q_0$.
    \popQED
    \end{enumerate}
\end{proof}

\noindent
The following lemma was proven in \cite[Appendix B]{strongreactivebisimilarity}. It will be useful in the proof of Proposition \ref{prop:stability}.

\begin{lemma}\label{lem:strong identities}
    Let $P,Q \in \closed$, $X,S,I \subseteq A$, $\rename \subseteq A\times A$.
    \begin{itemize}
        \item If $P \nsteptau$ and $\init{P}\cap X \subseteq S$ then $\theta_X(P \parallel_S Q) \bisim \theta_X(P \parallel_S \theta_{X \setminus (S \setminus \init{P})}(Q))$.
        \item $\theta_X(\tau_I(P)) \bisim \theta_X(\tau_I(\theta_{X \cup I}(P)))$.
        \item $\theta_X(\rename(P)) \bisim \theta_X(\rename(\theta_{\rename^{-1}(X)}(P)))$.
    \end{itemize}
\end{lemma}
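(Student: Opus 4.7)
Each identity has the form $\theta_X(C(P)) \bisim \theta_X(C'(P))$ for some contexts, so the natural plan is to exhibit, for each, a symmetric relation $\R_i$ containing the pair in question and close it under $\bisim$ on both sides. This "closing up" is permissible because $\bisim$ is already known to be a congruence for every operator of $\ccsp$ \cite{strongreactivebisimilarity}, which gives enough slack to avoid making the relation itself transition-closed. In every case the core observation is that the outer $\theta_X$ blocks a transition of its argument iff the inner $\theta$ blocks the corresponding transition deeper inside, so the two sides expose matching transition capabilities.

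For identity~(2), take $\R_2$ containing $(\theta_X(\tau_I(P)), \theta_X(\tau_I(\theta_{X\cup I}(P))))$ for all $P,X,I$. The argument hinges on the equivalence $\deadend{\tau_I(P)}{X} \iff \deadend{P}{X \cup I}$, which falls out of the $\tau_I$-semantics: $\tau_I(P)$ has a $\tau$- or $a$-transition with $a \in X$ iff $P$ has a transition in $X \cup I \cup \{\tau\}$. This equivalence exactly aligns the deadend clause of the outer $\theta_X$ on the left with the deadend clause of the inner $\theta_{X\cup I}$ on the right, while transitions $a \in X$ pass through both wrappers directly; in either case the residues coincide or are related by another instance of $\R_2$. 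Identity~(3) is treated analogously using $\deadend{\rename(P)}{X} \iff \deadend{P}{\rename^{-1}(X)}$, since $\rename(P)\step{b}$ with $b\in X$ iff $P\step{a}$ for some $a\in \rename^{-1}(X)$.

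For identity~(1), write $X' := X \setminus (S \setminus \init{P})$ and take
$\R_1 = \{(\theta_X(P \parallel_S Q),\, \theta_X(P \parallel_S \theta_{X'}(Q))) : P \nsteptau,\ \init{P} \cap X \subseteq S\}$.
A $\tau$-step of the composition must come from $Q$ (since $P \nsteptau$ and $\tau \notin S$), so the hypotheses $P \nsteptau$ and $\init{P}\cap X \subseteq S$ are preserved and we remain in $\R_1$. For non-$\tau$ steps through the outer $\theta_X$, the analysis splits into three cases: (i) a sync on $a \in S$ places $a \in \init{P}$, hence $a \notin S \setminus \init{P}$, so $\theta_{X'}$ does not block it; (ii) an independent $Q$-move with $a \in X$ has $a \notin S$, hence $a \in X'$; (iii) an independent $P$-move or $P$-time-out can only reach the outer $\theta_X$ via its deadend clause (note that $P \step{a}$ with $a \notin S$ forces $a \notin X$ by $\init{P}\cap X \subseteq S$), where $\deadend{P \parallel_S Q}{X}$ must hold.

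Case (iii) is the technical crux and the main obstacle: the residues $P' \parallel_S Q$ and $P' \parallel_S \theta_{X'}(Q)$ lie outside $\R_1$. They are however $\bisim$-related, since the hypothesis $\deadend{P \parallel_S Q}{X}$ forces $\deadend{Q}{X'}$: any $a \in X'$ is either in $X\setminus S$ (ruled out as an independent $Q$-action by the composition deadend) or in $X \cap S \cap \init{P}$ (then $P \step{a}$, so $Q\step{a}$ would produce a forbidden sync, ruling out the $Q$-action again); and under $\deadend{Q}{X'}$ the wrapper $\theta_{X'}$ is operationally transparent, yielding $Q \bisim \theta_{X'}(Q)$, whence the residues are $\bisim$-related by congruence of $\bisim$ under $\parallel_S$. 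The matching of $\rt$-transitions proceeds identically, using that $\rt\notin S$ so parallel composition never synchronises on $\rt$.
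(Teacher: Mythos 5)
The paper does not actually prove this lemma: it is imported verbatim from \cite[Appendix~B]{strongreactivebisimilarity}, so there is no in-paper argument to compare yours against. Judged on its own merits, your direct construction is sound and is the standard way to establish such identities. The load-bearing facts you isolate are exactly the right ones: $\deadend{\tau_I(P)}{X} \Leftrightarrow \deadend{P}{X\cup I}$, $\deadend{\rename(P)}{X} \Leftrightarrow \deadend{P}{\rename^{-1}(X)}$, and, for the parallel case, $\deadend{P\parallel_S Q}{X} \Leftrightarrow \deadend{Q}{X\setminus(S\setminus\init{P})}$ together with $\deadend{P\parallel_S \theta_{X'}(Q)}{X} \Leftrightarrow \deadend{Q}{X'}$, which make the outer deadend clauses on the two sides fire simultaneously. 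Your handling of the crux in case~(iii) is also correct: when $\deadend{Q}{X'}$ holds and $Q\nsteptau$, the wrapper $\theta_{X'}$ admits every transition of $Q$ via its deadend clause and is discarded afterwards, so $\theta_{X'}(Q)$ and $Q$ have literally the same outgoing transitions, and congruence of $\bisim$ for $\parallel_S$ closes the gap between the residues $P'\parallel_S Q$ and $P'\parallel_S\theta_{X'}(Q)$; working up to $\bisim$ on both sides is legitimate for strong bisimulation. Two small points worth making explicit in a polished write-up: in identities (2) and (3) the inner $\theta$ is dropped after any visible or $\rt$ step, so the residues on the two sides are syntactically identical and only the $\tau$-successors need to be kept in the relation; and since $X'\subseteq X$ and $a\in X\setminus S \Rightarrow a\in X'$ (respectively $a\in X\cap S\cap\init{P}\Rightarrow a\in X'$), the side conditions of the inner and outer $\theta$'s conjoin to exactly the condition on the left-hand side, which is what makes the matching a biconditional and hence covers both directions of the bisimulation game.
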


\begin{proof}[Proof of Proposition \ref{prop:stability}]
    Let $\tbisim$ be the smallest relation satisfying, for all $P,Q \in \closed$,
    \begin{itemize}
        \item if $P \bisimtbr Q$ then $P \tbisim Q$
        \item if $P \tbisim Q$ and $\alpha \in Act$ then $\alpha.P \tbisim \alpha.Q$
        \item if $P_1 \tbisim Q_1$, $P_2 \tbisim Q_2$ and $S \subseteq A$ then $P_1 \parallel_S P_2 \tbisim Q_1 \parallel_S Q_2$
        \item if $P \tbisim Q$ and $I \subseteq A$ then $\tau_I(P) \tbisim \tau_I(Q)$
        \item if $P \tbisim Q$ and $\rename \subseteq A\times A$ then $\rename(P) \tbisim \rename(Q)$
        \item if $P \tbisim Q$ and $L \subseteq U \subseteq A$ then $\theta_L^U(P) \tbisim \theta_L^U(Q)$.
    \end{itemize}
    We are going to show that $\tbisim$ is a \tb time-out bisimulation up to $\bisim$\,. This implies that ${\B}={\bisimtbr}$\,, using Proposition~\ref{prop:up to}, and as $\B$ is a congruence for the operators of  Proposition \ref{prop:stability}, so is $\bisimtbr$\,. Before we do so, we show, by induction on the construction of $\B$, that
    \begin{equation}\label{stability}
    \mbox{if $P \B Q$ and $P\nsteptau$ then $Q\pathtau Q'$ for some $Q'$ with $P \B Q'$ and $\init{Q'}=\init{P}$.}
    \end{equation}
    Let $P \B Q$ and $P\nsteptau$.
    \begin{itemize}
      \item If $P \bisimtbr Q$ then, by Clause 3 of Definition~\ref{def:time-out bisim}, $Q\pathtau Q'$ for some $Q'$ with $Q'\nsteptau$. By (the symmetric counterpart of) Clause 1, one obtains $P \B Q'$. Clause 1 gives $\init{Q'}=\init{P}$.
      \item If $P = \alpha.P^\dag$ and $Q = \alpha.Q^\dag$ with $\alpha \in Act$ then note that $\alpha \ne \tau$ and take $Q':=Q$. One has $\init{Q'}=\init{P}$.
      \item If $P = P_1 \parallel_S P_1$ and $Q = Q_1 \parallel_S Q_2$ with $S \subseteq A$ and $P_i \tbisim Q_i$ for $i=1,2$, then, for $i=1,2$, $P_i\nsteptau$, so by induction $Q_i\pathtau Q'_i$ for some $Q'_i$ with $P_i \B Q'_i$ and $\init{Q'_i}=\init{P_i}$.
      Now $Q \pathtau Q'_1\|_S Q'_2$, $P \B  Q'_1\|_S Q'_2$ and $\init{Q'_1\|_S Q'_2}=\init{P}$.
      \item If $P = \tau_I(P_1)$ and $Q = \tau_I(Q_1)$ with $I \subseteq A$ and $P_1 \tbisim Q_1$, then $P_1\nsteptau$, so by induction $Q_1\pathtau Q'_1$ for some $Q'_1$ with $P_1 \B Q'_1$ and $\init{Q'_1}=\init{P_1}$. Now $Q \pathtau \tau_I(Q'_1)$, $P \B \tau_I(Q'_1)$ and $\init{\tau_I(Q'_1)}=\init{P}$.
      \item If $P = \rename(P_1)$ and $Q = \rename(Q_1)$ with $\rename \subseteq A\times A$ and $P_1 \tbisim Q_1$, then $P_1\nsteptau$, so by induction $Q_1\pathtau Q'_1$ for some $Q'_1$ with $P_1 \B Q'_1$ and $\init{Q'_1}=\init{P_1}$. Now $Q \pathtau \rename(Q'_1)$, $P \B \rename(Q'_1)$ and $\init{\rename(Q'_1)}=\init{P}$.
      \item  If $P = \theta_L^U(P_1)$ and $Q = \theta_L^U(Q_1)$ with  $L \subseteq U \subseteq A$ and $P_1 \tbisim Q_1$, then $P_1\nsteptau$, so by induction $Q_1\pathtau Q'_1$ for some $Q'_1$ with $P_1 \B Q'_1$ and $\init{Q'_1}=\init{P_1}$. Now $Q \pathtau \theta_L^U(Q'_1)$, $P \B \theta_L^U(Q'_1)$ and $\init{\theta_L^U(Q'_1)}=\init{P}$.
    \end{itemize}
We now check that $\tbisim$ is a \tb time-out bisimulation up to $\bisim$\,.
Note that $\tbisim$ is symmetric because $\bisimtbr$ is. $\bisim$ was proven to be a congruence for $\ccsp$ in \cite{strongreactivebisimilarity}. Let $P,Q \in \closed$ such that $P \tbisim Q$.
    \begin{enumerate}
        \item If $P \step{\alpha} P'$ with $\alpha \in A_\tau$ then we have to find a path $Q \pathtau Q_1 \step{\opt{\alpha}} Q_2$ such that $P \upto[\bisim] Q_1$ and $P' \upto[\bisim] Q_2$. Remember that ${\tbisim} \subseteq {\upto[\bisim]}$. We are going to proceed by structural induction on $P$ and by case distinction on the derivation of $P \tbisim Q$.
        \begin{itemize}
            \item If $P \bisimtbr Q$ then, by definition of $\bisimtbr$\,, there exists a path $Q \pathtau Q_1 \step{\opt{\alpha}} Q_2$ such that $P \bisimtbr Q_1$ and $P' \bisimtbr Q_2$, thus, by definition of $\tbisim$, $P \tbisim Q_1$ and $P' \tbisim Q_2$.
            \item If $P = \beta.P^\dag$ and $Q = \beta.Q^\dag$ with $\beta \in Act$ and $P^\dag \tbisim Q^\dag$ then, by the semantics, $P' = P^\dag$, $\beta = \alpha$, and thus there exists a path $Q \step{\alpha} Q^\dag$ such that $P \tbisim Q$ and $P' \tbisim Q^\dag$.
            \item If $P = P^\dag \parallel_S P^\ddag$ and $Q = Q^\dag \parallel_S Q^\ddag$ with $S \subseteq A$, $P^\dag \tbisim Q^\dag$ and $P^\ddag \tbisim Q^\ddag$ then
            \begin{itemize}
                \item if $\alpha \in S$ then, by the semantics, $P' = P'^\dag \parallel_S P'^\ddag$, $P^\dag \step{\alpha} P'^\dag$ and $P^\ddag \step{\alpha} P'^\ddag$. Note that $\alpha \ne \tau$ because $\alpha \in A$. Since $P^\dag \tbisim Q^\dag$ and $P^\ddag \tbisim Q^\ddag$, by induction, there exist two paths $Q^\dag \pathtau Q_1^\dag \step{\alpha} Q_2^\dag$ and $Q^\ddag \pathtau Q_1^\ddag \step{\alpha} Q_2^\ddag$ such that $P^\dag \upto[\bisim] Q^\dag_1$,\linebreak[4] $P'^\dag \upto[\bisim] Q^\dag_2$, $P^\ddag \upto[\bisim] Q_1^\ddag$ and $P'^\ddag \upto[\bisim] Q^\ddag_2$. By the semantics, $Q \pathtau Q^\dag_1 \parallel_S Q^\ddag_1\linebreak[3] \step{\alpha} Q^\dag_2 \parallel_S Q^\ddag_2$. Moreover, by definition of $\tbisim$ and the congruence property of $\bisim$, $P \upto[\bisim] Q^\dag_1 \parallel_S Q^\ddag_1$ and $P' \upto[\bisim] Q^\dag_2 \parallel_S Q^\ddag_2$.
                \item if $\alpha \not\in S$ then, by the semantics, two cases are possible. Suppose that $P' = P'^\dag \parallel_S P^\ddag$ and $P^\dag \step{\alpha} P'^\dag$; the other case is symmetrical. Since $P^\dag \tbisim Q^\dag$, by induction, there exists a path $Q^\dag \pathtau Q^\dag_1 \step{\opt{\alpha}} Q^\dag_2$ such that $P^\dag \upto[\bisim] Q^\dag_1$ and $P'^\dag \upto[\bisim] Q^\dag_2$. By the semantics, there exists a path $Q \pathtau Q^\dag_1 \parallel_S Q^\ddag \step{\opt{\alpha}} Q^\dag_2 \parallel_S Q^\ddag$. Moreover, by definition of $\tbisim$ and the congruence property of $\bisim$, $P \upto[\bisim] Q^\dag_1 \parallel_S Q^\ddag$ and $P' \upto[\bisim] Q^\dag_2 \parallel_S Q^\ddag$.
            \end{itemize}
            \item If $P = \tau_I(P^\dag)$ and $Q = \tau_I(Q^\dag)$ with $I \subseteq A$ and $P^\dag \tbisim Q^\dag$ then, by the semantics, $P' = \tau_I(P'^\dag)$, $P^\dag \step{\beta} P'^\dag$ and $(\beta \in I \wedge \alpha = \tau) \vee \beta = \alpha$. Since $P^\dag \tbisim Q^\dag$, by induction, there exists a path $Q^\dag \pathtau Q^\dag_1 \step{\opt{\beta}} Q^\dag_2$ such that $P^\dag \upto[\bisim] Q^\dag_1$ and $P'^\dag \upto[\bisim] Q^\dag_2$. By the semantics, $Q \pathtau \tau_I(Q_1^\dag) \step{\opt{\alpha}} \tau_I(Q^\dag_2)$ such that, by definition of $\tbisim$ and the congruence property of $\bisim$, $P \upto[\bisim] \tau_I(Q_1^\dag)$ and $P' \upto[\bisim] \tau_I(Q_2^\dag)$.
            \item If $P = \rename(P^\dag)$ and $Q = \rename(Q^\dag)$ with $\rename \subseteq A\times A$ and $P^\dag \tbisim Q^\dag$ then, by the semantics, $P' = \rename(P'^\dag)$, $P^\dag \step{\beta} P'^\dag$ and $(\beta,\alpha) \in \rename \vee \alpha = \beta = \tau$. Since $P^\dag \tbisim Q^\dag$, by induction, there exists a path $Q^\dag \pathtau Q^\dag_1 \step{\opt{\beta}} Q^\dag_2$ such that $P^\dag \upto[\bisim] Q^\dag_1$ and $P'^\dag \upto[\bisim] Q^\dag_2$. By the semantics, $Q \pathtau \rename(Q_1^\dag) \step{\opt{\alpha}} \rename(Q^\dag_2)$ such that, by definition of $\tbisim$ and the congruence property of $\bisim$, $P \upto[\bisim] \rename(Q_1^\dag)$ and $P' \upto[\bisim] \rename(Q_2^\dag)$.
            \item If $P = \theta_L^U(P^\dag)$ and $Q = \theta_L^U(Q^\dag)$ with $L \subseteq U \subseteq A$ and $P^\dag \tbisim Q^\dag$ then
            \begin{itemize}
                \item if $\alpha = \tau$ then, by the semantics, $P' = \theta_X(P'^\dag)$ and $P^\dag \steptau P'^\dag$. Since $P^\dag \tbisim Q^\dag$, by induction, there exists a path $Q^\dag \pathtau Q^\dag_1 \step{\opt{\tau}} Q^\dag_2$ such that $P^\dag \upto[\bisim] Q^\dag_1$ and $P'^\dag \upto[\bisim] Q^\dag_2$. By the semantics, there exists a path $Q \pathtau \theta_L^U(Q^\dag_1) \step{\opt{\tau}} \theta_L^U(Q^\dag_2)$ such that, by definition of $\tbisim$ and the congruence property of $\bisim$, $P \upto[\bisim] \theta_L^U(Q^\dag_1)$ and $P' \upto[\bisim] \theta_L^U(Q^\dag_2)$.
                \item if $\alpha = a \in A$ then, by the semantics, $a \in U \vee \deadend{P^\dag}{L}$ and $P^\dag \step{a} P'$. Since $P^\dag \tbisim Q^\dag$, by induction there exists a path $Q^\dag \pathtau Q_1^\dag \step{a} Q_2^\dag$ such that $P^\dag \upto[\bisim] Q^\dag_1$ and $P' \upto[\bisim] Q^\dag_2$. Moreover, in case $a \notin U$ we have $P^\dag\nsteptau$ so (\ref{stability}) ensures that $Q^\dag\pathtau Q'$ for some $Q'$ with $P^\dag \B Q'$ and $\init{Q'}=\init{P}$. This implies that we may choose $Q_1^\dag$ such that $Q' \pathtau Q_1^\dag$, and thus $Q'=Q_1^\dag$. This gives us $\deadend{P^\dag}{L} \Leftrightarrow \deadend{Q^\dag_1}{L}$. By the semantics, there exists a path $Q \pathtau \theta_L^U(Q^\dag_1) \step{a} Q^\dag_2$ such that, by definition of $\tbisim$ and the congruence property of $\bisim$, $P \upto[\bisim] \theta_L^U(Q^\dag_1)$ and $P' \upto[\bisim] Q^\dag_2$.
            \end{itemize}
        \end{itemize}
        \item If $\deadend{P}{X}$ and $P \step{\rt} P'$ then we have to find a path $Q \pathtau
        Q_1 \step{\rt} Q_2 \pathtau Q_3 \step{\rt} ... \pathtau Q_{2r-1} \step{\opt{\rt}} Q_{2r}$
        with $r>0$, such that $Q_1\nsteptau$, $\forall i \in
        [1,r{-}1],\; \theta_X(P) \upto[\bisim] \theta_X(Q_{2i}) \wedge \deadend{Q_{2i+1}}{X}$ and
        $\theta_X(P') \upto[\bisim] \theta_X(Q_{2r})$. Remember that ${\tbisim} \subseteq
        {\upto[\bisim]}$. We are going to proceed by by structural induction on $P$ and by case distinction on the derivation of $P \tbisim Q$.
        \begin{itemize}
            \item If $P \bisimtbr Q$ then, by Definition~\ref{def:time-out bisim}, there exists a path $Q \pathtau Q_1 \step{\rt} Q_2 \pathtau Q_3 \step{\rt} ... \pathtau Q_{2r{-}1} \step{\opt{\rt}} Q_{2r}$ with $r>0$, such that $Q_1\nsteptau$, $\forall i \in [1,r{-}1],\; \theta_X(P) \tbisim \theta_X(Q_{2i}) \wedge \deadend{Q_{2i+1}}{X}$ and $\theta_X(P') \tbisim \theta_X(Q_{2r})$.
            \item If $P = \beta.P^\dag$ and $Q = \beta.Q^\dag$ with $\beta \in Act$ and
            $P^\dag \tbisim Q^\dag$ then, by the semantics, $P' = P^\dag$ and $\beta = t$. Thus, by the semantics, there exists a path $Q \step{\rt} Q^\dag$ such that $Q\nsteptau$ and, by definition of $\tbisim$, $\theta_X(P') \tbisim \theta_X(Q^\dag)$.
            \item If $P = P^\dag \parallel_S P^\ddag$ and $Q = Q^\dag \parallel_S Q^\dag$ with $S \subseteq A$, $P^\dag \tbisim Q^\dag$ and $Q^\dag \tbisim Q^\dag$ then, since $\rt \not\in S$, by the semantics, two cases are possible. Suppose that $P' = P^\dag \parallel_S P'^\ddag$ and $P^\ddag \step{\rt} P'^\ddag$; the other case is symmetrical. Since $\deadend{P}{X}$, $P^\dag \nsteptau$ and $\init{P^\dag} \cap X \subseteq S$. Moreover, $\deadend{P^\ddag}{(X\setminus S) \cup (X \cap S \cap \init{P^\dag})}$. Note that $(X\setminus S) \cup (X \cap S \cap \init{P^\dag}) = X \setminus (S \setminus \init{P^\dag})$. Since $P^\ddag \tbisim Q^\ddag$, $P^\ddag \step{\rt} P'^\ddag$ and $\deadend{P^\ddag}{X \setminus (S\setminus\init{P^\dag})}$, by induction, there exists a path $Q^\ddag \pathtau Q^\ddag_1 \step{\rt} Q^\ddag_2 \pathtau Q^\ddag_3 \step{\rt} ... \pathtau Q^\ddag_{2r-1} \step{\opt{\rt}} Q^\ddag_{2r}$ with $r>0$, such that $Q^\ddag_1\nsteptau$, $\forall i \in [1,r{-}1],\linebreak[4] \theta_{X \setminus (S\setminus\init{P^\dag})}(P^\ddag) \upto[\bisim] \theta_{X \setminus (S\setminus\init{P^\dag})}(Q^\ddag_{2i}) \wedge \deadend{Q^\ddag_{2i+1}}{X \setminus (S\setminus\init{P^\dag})}$ and $\theta_{X \setminus (S\setminus\init{P^\dag})}(P'^\ddag) \upto[\bisim] \theta_{X \setminus (S\setminus\init{P^\dag})}(Q^\ddag_{2r})$. Moreover, by (\ref{stability}), since $P^\dag \nsteptau$, there exists a path $Q^\dag \pathtau Q^\dag_0 \nsteptau$ such that $P^\dag \tbisim Q^\dag_0$ and $\init{Q^\dag_0} = \init{P^\dag}$. By the semantics, there exists a path $Q \pathtau Q^\dag_0 \parallel_S Q^\ddag_1 \step{\rt} Q^\dag_0 \parallel_S Q^\ddag_2 \pathtau Q^\dag_0 \parallel_S Q^\ddag_3 \step{\rt} ... \pathtau Q^\dag_0 \parallel_S Q^\ddag_{2r-1} \step{\opt{\rt}} Q^\dag_0 \parallel_S Q^\ddag_{2r}$. Since $Q^\dag_0 \nsteptau$ and $Q^\ddag_1\nsteptau$ we have $Q^\dag_0 \parallel_S Q^\ddag_1\nsteptau$. By Lemma~\ref{lem:strong identities}, the definition of $\tbisim$ and the congruence property of $\bisim$, $\forall i \in [1,r{-}1],\; \theta_X(P) \bisim \mbox{}$ $$\theta_X(P^\dag \parallel_S \theta_{X \setminus (S\setminus\init{P^\dag})}(P^\ddag)) \upto[\bisim] \theta_X(Q^\dag_0 \parallel_S \theta_{X \setminus (S\setminus\init{Q_0^\dag})}(Q^\ddag_{2i})) \bisim \theta_X(Q^\dag_0 \parallel_S Q^\ddag_{2i}).$$ Moreover, $\deadend{Q^\dag_0 \parallel_S Q^\ddag_{2i+1}}{X}$ and $\theta_X(P') \bisim \mbox{}$ $$\theta_X(P^\dag \parallel_S \theta_{X \setminus (S\setminus\init{P^\dag})}(P'^\ddag)) \upto[\bisim] \theta_X(Q^\dag_0 \parallel_S \theta_{X \setminus (S\setminus\init{Q_0^\dag})}(Q^\ddag_{2r})) \bisim \theta_X(Q^\dag_0 \parallel_S Q^\ddag_{2r}).$$
            \item If $P = \tau_I(P^\dag)$ and $Q = \tau_I(Q^\dag)$ with $I \subseteq A$ and $P^\dag \tbisim Q^\dag$ then, by the semantics, $P' = \tau_I(P'^\dag)$, $P^\dag \step{\rt} P'^\dag$ and $\deadend{P^\dag}{(X\cup I)}$. Since $P^\dag \tbisim Q^\dag$ and $P\nsteptau$, also $P^\dag\nsteptau$, so (\ref{stability}) ensures that $Q^\dag \pathtau Q^\dag_0$ for some $Q^\dag_0$ with $P^\dag \B Q^\dag_0$ and $\init{Q^\dag_0}=\init{P^\dag}$. Since $P^\dag \B Q^\dag_0$, by induction, there exists a path $Q^\dag_0 \pathtau Q^\dag_1 \step{\rt} Q^\dag_2 \pathtau Q^\dag_3 \step{\rt} ... Q^\dag_{2r-1} \step{\opt{\rt}} Q^\dag_{2r}$ with $r>0$, such that $\forall i \in [1,r{-}1],\; \theta_{X\cup I}(P^\dag) \upto[\bisim] \theta_{X\cup I}(Q^\dag_{2i}) \wedge \deadend{Q^\dag_{2i+1}}{X\cup I}$ and $\theta_{X\cup I}(P'^\dag) \upto[\bisim] \theta_{X\cup I}(Q^\dag_{2r})$. As $Q^\dag_0\nsteptau$ we have $Q^\dag_0=Q^\dag_1$. By the semantics, $Q \pathtau \tau_I(Q_1^\dag) \step{\rt} \tau_I(Q^\dag_2) \pathtau \tau_I(Q^\dag_3) \step{\rt} ... \pathtau \tau_I(Q^\dag_{2r-1}) \step{\opt{\rt}} \tau_I(Q^\dag_{2r})$. Since $\init{Q^\dag_1}=\init{P^\dag}$ and $\tau_I(P^\dag)\nsteptau$, also $\tau_I(Q^\dag_1)\nsteptau$. Lemma~\ref{lem:strong identities}, the definition of $\tbisim$ and the congruence property of $\bisim$, $\forall i \in [1,r{-}1]$, $$\theta_X(P) \bisim \theta_X(\tau_I(\theta_{X\cup I}(P^\dag))) \upto[\bisim] \theta_X(\tau_I(\theta_{X\cup I}(Q^\dag_{2i}))) \bisim \theta_X(\tau_I(Q_{2i}^\dag))$$ and $\deadend{\tau_I(Q^\dag_{2i+1})}{X}$. Moreover, $$\theta_X(P') \bisim \theta_X(\tau_I(\theta_{X\cup I}(P'^\dag))) \upto[\bisim] \theta_X(\tau_I(\theta_{X\cup I}(Q^\dag_{2r}))) \bisim \theta_X(\tau_I(Q_{2r}^\dag)).$$
            \item If $P = \rename(P^\dag)$ and $Q = \rename(Q^\dag)$ with $\rename \subseteq A\times A$ and $P^\dag \tbisim Q^\dag$ then, by the semantics, $P' = \rename(P'^\dag)$, $P^\dag \step{\rt} P'^\dag$ and $\deadend{P^\dag}{\rename^{-1}(X)}$. Since $P^\dag \tbisim Q^\dag$, by induction, there exists a path $Q^\dag \pathtau Q^\dag_1 \step{\rt} Q^\dag_2 \pathtau Q^\dag_3 \step{\rt} ... Q^\dag_{2r-1} \step{\opt{\rt}} Q^\dag_{2r}$ with $r>0$, such that $Q^\dag_1\nsteptau$,\; $\forall i \in [1,r{-}1],\; \theta_{\rename^{-1}(X)}(P^\dag) \upto[\bisim] \theta_{\rename^{-1}(X)}(Q^\dag_{2i}) \wedge \deadend{Q^\dag_{2i+1}}{\rename^{-1}(X)}$ and $\theta_{\rename^{-1}(X)}(P'^\dag) \upto[\bisim] \theta_{\rename^{-1}(X)}(Q^\dag_2)$. By the semantics, $Q \pathtau \rename(Q_1^\dag) \step{\rt} \rename(Q^\dag_2) \pathtau \rename(Q^\dag_3) \step{\rt} ... \pathtau \rename(Q^\dag_{2r-1}) \step{\opt{\rt}} \rename(Q^\dag_{2r})$ and $\rename(Q_1^\dag)\nsteptau$. By Lemma~\ref{lem:strong identities}, the definition of $\tbisim$ and the congruence property of $\bisim$, $\forall i \in [1,r{-}1]$, $$\theta_X(P) \bisim \theta_X(\rename(\theta_{\rename^{-1}(X)}(P^\dag))) \upto[\bisim] \theta_X(\rename(\theta_{\rename^{-1}}(Q^\dag_{2i}))) \bisim \theta_X(\tau_I(Q_{2i}^\dag))$$ and $\deadend{\rename(Q^\dag_{2i+1})}{X}$. Moreover, $$\theta_X(P') \bisim \theta_X(\rename(\theta_{\rename^{-1}(X)}(P'^\dag))) \upto[\bisim] \theta_X(\rename(\theta_{\rename^{-1}(X)}(Q^\dag_{2r}))) \bisim \theta_X(\rename(Q_{2r}^\dag)).$$
            \item If $P = \theta_L^U(P^\dag)$ and $Q = \theta_L^U(Q^\dag)$ with $L \subseteq U \subseteq A$ and $P^\dag \tbisim Q^\dag$ then, by the semantics, $\deadend{P^\dag}{L\cup X}$ and $P^\dag \step{\rt} P'$. Since $P^\dag \tbisim Q^\dag$ and $P\nsteptau$, also $P^\dag\nsteptau$, so (\ref{stability}) ensures that $Q^\dag \pathtau Q^\dag_0$ for some $Q^\dag_0$ with $P^\dag \B Q^\dag_0$ and $\init{Q^\dag_0}=\init{P^\dag}$. Since $P^\dag \B Q^\dag_0$, by induction, there exists a path $Q^\dag_0 \pathtau Q_1^\dag \step{\rt} Q_2 \pathtau Q_3 \step{\rt} ... \pathtau Q_{2r-1} \step{\opt{\rt}} Q_{2r}$ with $r>0$, such that $Q_1^\dag\nsteptau$, $\forall i \in [1,r{-}1],\; \theta_X(P^\dag) \upto[\bisim] \theta_X(Q_{2i}) \wedge \deadend{Q_{2i+1}}{X}$ and $\theta_X(P') \upto[\bisim] \theta_X(Q_{2r})$. As $Q^\dag_0\nsteptau$ we have $Q^\dag_0=Q^\dag_1$. As $\init{Q^\dag_1}=\init{P^\dag}$ one has  $\deadend{Q^\dag_1}{L}$. By the semantics, there exists a path $Q \pathtau \theta_L^U(Q^\dag_1) =: Q_1 \step{\rt} Q_2 \pathtau Q_3 \step{\rt} ... \pathtau Q_{2r-1} \step{\opt{\rt}} Q_{2r}$. Since $Q_1^\dag\nsteptau$ we have $Q_1\nsteptau$.
            Moreover, note that $P \bisim P^\dag$ because $\deadend{P^\dag}{L}$. By the congruence property of $\bisim$, $\forall i \in [1,r{-}1]$, $$\theta_X(P) \bisim \theta_X(P^\dag) \upto[\bisim] \theta_X(Q_{2i})$$ and $\deadend{Q_{2i+1}}{X}$. Moreover, $\theta_X(P') \upto[\bisim] \theta_X(Q_{2r})$.
        \end{itemize}
        \item The last condition of Definition~\ref{def:up to} is implied by (\ref{stability}).
    \popQED
    \end{enumerate}
\end{proof}

\section{Full Congruence Proofs for \texorpdfstring{$\bisimrtbr$ and $\bisimrtb$}{Rooted Branching Reactive Bisimilarity}} \label{app:congruence}

\begin{definition}\rm \label{def:rooted time-out bisim up to}
    Here, a \emph{rooted \tb time-out bisimulation up to $\bisimtbr$} is a symmetric relation ${\tbisim} \subseteq \closed\times\closed$ such that, for all $P,Q \in \closed$ with $P \tbisim Q$,
    \begin{enumerate}
        \item if $P \step{\alpha} P'$ with $\alpha \in A_\tau$ then there is a transition $Q \step{\alpha} Q'$ such that $P' \bisim\,\B\,\bisimtbr Q'$
        \item if $\deadend{P}{X}$ and $P \step{\rt} P'$ then there is a transition $Q \step{\rt} Q'$ such that $\theta_X(P') \bisim\,\B\,\bisimtbr \theta_X(Q')$.
    \end{enumerate}
\end{definition}

\begin{proposition}
    Let $P,Q \in \closed$. Then $P \bisimrtbr Q$ iff there exists a rooted \tb time-out bisimulation $\B$ up to $\bisimtbr$ such that $P \B Q$.
\end{proposition}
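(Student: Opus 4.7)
The plan mirrors Proposition~\ref{prop:up to}, adapting it to the rooted setting. The forward direction is immediate: since $\bisim$ and $\bisimtbr$ are reflexive, every rooted \tb time-out bisimulation in the sense of Definition~\ref{def:rooted time-out bisim} qualifies as a rooted \tb time-out bisimulation up to $\bisimtbr$, so $\B := \bisimrtbr$ witnesses $(\Rightarrow)$.

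For $(\Leftarrow)$, fix a rooted \tb time-out bisimulation $\B$ up to $\bisimtbr$ containing $(P,Q)$; it suffices to prove $\B \subseteq \bisimrtbr$. I will proceed in two stages. In the first stage I establish $\B \subseteq \bisimtbr$: consider the symmetric relation
\[
\R := \bisimtbr \;\cup\; (\bisim \circ \B \circ \bisimtbr) \;\cup\; (\bisimtbr \circ \B \circ \bisim),
\]
additionally closed under componentwise application of $\theta_X$ for every $X \subseteq A$. I will verify that $\R$ is a \tb time-out bisimulation up to $\bisim$ in the sense of Definition~\ref{def:up to}, by propagating a step $P_0 \step{\alpha} P_0'$ (with $\alpha\in A_\tau$) from a chain $P_0 \bisim S_1 \B S_2 \bisimtbr Q_0$ through each component: strong matching yields $S_1 \step{\alpha} S_1'$ with $P_0' \bisim S_1'$; the rooted up-to condition yields $S_2 \step{\alpha} S_2'$ with $S_1' \bisim T_1 \B T_2 \bisimtbr S_2'$; and Definition~\ref{def:time-out bisim} applied to $S_2 \bisimtbr Q_0$ yields a path $Q_0 \pathtau Q_1 \step{\opt{\alpha}} Q_2$ with appropriate matches. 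Transitivity of $\bisim$ and $\bisimtbr$, combined with $\bisim \subseteq \bisimtbr$, collapses the resulting post-step chain $P_0' \bisim S_1' \bisim T_1 \B T_2 \bisimtbr S_2' \bisimtbr Q_2$ into a single $\R$-pair, and then reflexivity of $\bisim$ lifts this to an $\upto[\bisim]$-pair. Stability is immediate from the stability clause on $\bisimtbr$. Invoking Proposition~\ref{prop:up to} then gives $\R \subseteq \bisimtbr$, whence $\B \subseteq \bisimtbr$.

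The second stage upgrades $\B \subseteq \bisimtbr$ to $\B \subseteq \bisimrtbr$: the rooted up-to clause produces post-transition pairs in $\bisim \circ \B \circ \bisimtbr$, and since $\bisim \subseteq \bisimtbr$ and $\bisimtbr$ is transitive, this composition lies entirely in $\bisimtbr$. Likewise, the time-out clause yields $\theta_X(P_0') \bisim \B \bisimtbr \theta_X(Q_0') \subseteq \bisimtbr$. Hence $\B$ satisfies Definition~\ref{def:rooted time-out bisim} without the up-to relaxation, so $\B \subseteq \bisimrtbr$ and in particular $P \bisimrtbr Q$.

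The main obstacle is the time-out clause of the Stage 1 verification. The intermediate states $Q_{2i}$ in the matching path from the time-out clause of Definition~\ref{def:time-out bisim} are related to $P_0$ only after applying $\theta_X$, and the central $\B$-pair does not propagate through $\theta_X$ on its own. It is precisely to handle this that $\R$ is closed under componentwise $\theta_X$: combined with the congruence of $\bisim$ and $\bisimtbr$ for $\theta_X$ (Proposition~\ref{prop:stability}), one obtains the needed chain $\theta_X(P_0) \bisim \theta_X(S_1) \,\R\, \theta_X(S_2) \bisimtbr \theta_X(Q_{2i})$, which collapses into an $\upto[\bisim]$-witness for the time-out requirement.
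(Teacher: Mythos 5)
Your forward direction and your Stage 2 upgrade coincide with the paper's, and your overall plan (sandwich $\B$ between $\bisim$ and $\bisimtbr$ and show the composite lands inside $\bisimtbr$) is also the paper's. The problem is in Stage 1, exactly at the point you yourself flag as the main obstacle. You resolve it by decreeing that $\R$ is closed under componentwise application of $\theta_X$, but you never verify the clauses of Definition~\ref{def:up to} for the pairs that this closure adds, and your case analysis only treats chains $P_0 \bisim S_1 \B S_2 \bisimtbr Q_0$. A pair $(\theta_X(S_1),\theta_X(S_2))$ is not harmless bookkeeping: it has its own transitions, generated by the operational rules for $\theta_X$ (visible and time-out steps drop the $\theta_X$ wrapper on the target, $\tau$-steps keep it, and which steps exist depends on whether $\deadend{S_1}{X}$), and matching them from $\theta_X(S_2)$ requires, e.g., pushing a state $S_{2,1}$ with $T_2 \bisimtbr S_{2,1}$ on to a stable state satisfying $\deadend{\cdot}{X}$ before a $\theta_X$-guarded action can fire; the time-out clause for such a pair moreover produces nested terms $\theta_Y(\theta_X(\cdot))$, so the closure never terminates in a finite list of shapes you have checked. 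Establishing that these closed pairs satisfy the clauses is in effect a congruence-style argument for $\theta_X$ with respect to your up-to relation, comparable in substance to the congruence proofs for $\theta_L^U$ in Appendix~\ref{app:stability}; it cannot be had for free by writing "additionally closed under $\theta_X$". As it stands, $\R$ has not been shown to be a \tb time-out bisimulation up to $\bisim$, so Proposition~\ref{prop:up to} cannot yet be invoked and Stage 1 is incomplete.

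The paper avoids the closure altogether: it shows directly that the composite $\bisim\,\B\,\bisimtbr$ is a plain \tb time-out bisimulation in the sense of Definition~\ref{def:time-out bisim}. The device for the time-out clause is to absorb the $\theta_X$'s into the outer components rather than pushing them through the $\B$-pair: from $P \bisim P^\dag \B Q^\dag \bisimtbr Q$ with $\deadend{P}{X}$ one gets $\deadend{P^\dag}{X}$ and then $\deadend{Q^\dag}{X}$ via Clause 1 of Definition~\ref{def:rooted time-out bisim up to} (and symmetry), whence $\theta_X(P) \bisim P$ and $Q^\dag \bisim \theta_X(Q^\dag)$, so the chain $\theta_X(P) \bisim P^\dag \B Q^\dag \bisimtbr \theta_X(Q^\dag) \bisimtbr \theta_X(Q_{2i})$ yields the required matches for the intermediate states, while the endpoint uses congruence of $\bisim$ for $\theta_X$ together with the pair $\theta_X(P^\ddag) \bisim\,\B\,\bisimtbr \theta_X(Q^\ddag)$ that Clause 2 of Definition~\ref{def:rooted time-out bisim up to} already supplies. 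If you adopt this trick, Stage 1 goes through with the un-closed relation $\bisim\,\B\,\bisimtbr$ (together with its converse) and no appeal to $\theta_X$-closure, and your Stage 2 then completes the argument as you wrote it.
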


\begin{proof}
    First of all, a rooted \tb time-out bisimulation is a rooted \tb time-out bisimulation up to $\bisimtbr$ by reflexivity of $\bisim$ and $\bisimtbr$\,. Conversely, we are going to show that $\bisim\,\B\,\bisimtbr$ is a \tb time-out bisimulation. This implies that ${\bisim\,\B\,\bisimtbr} \subseteq {\bisimtbr}$\,, so that each rooted \tb time-out bisimulation up to $\bisimtbr$ is in fact a rooted \tb time-out bisimulation. Let $P,Q \in \closed$ such that $P \bisim\,\B\,\bisimtbr Q$. There exists $P^\dag,Q^\dag \in \closed$ such that $P \bisim P^\dag \B Q^\dag \bisimtbr Q$.
    \begin{enumerate}
        \item If $P \step{\alpha} P'$ then, since $P \bisim P^\dag$, there is a transition $P^\dag \step{\alpha} P^\ddag$ such that $P' \bisim P^\ddag$. Thus, by Clause 1 of Definition~\ref{def:rooted time-out bisim up to}, there is a transition $Q^\dag \step{\alpha} Q^\ddag$ such that $P^\ddag \bisim\,\B\,\bisimtbr Q^\ddag$. By Clause~1 of Definition~\ref{def:time-out bisim} there is a path $Q \pathtau Q_1 \step{\opt{\alpha}} Q_2$ with $Q^\dag \bisimtbr Q_1$ and $Q^\ddag \bisimtbr Q_2$. By the transitivity of $\bisim$ and $\bisimtbr$ we obtain $P \bisim\,\B\,\bisimtbr Q_1$ and $P' \bisim\,\B\,\bisimtbr Q_2$.
        \item If $\deadend{P}{X}$ and $P \step{t} P'$ then, since $P \bisim P^\dag$, $\deadend{P^\dag}{X}$ and $P^\dag \step{t} P^\ddag$ for some $P^\ddag$ with $P' \bisim P^\ddag$. Thus, by Clauses 1 and 2 of Definition~\ref{def:rooted time-out bisim up to}\linebreak[4] $\deadend{Q^\dag}{X}$ and there is a transition $Q^\dag \step{\rt} Q^\ddag$ with $\theta_X(P^\ddag) \bisim\,\B\,\bisimtbr \theta_X(Q^\ddag)$.  By Clause~2 of Definition~\ref{def:time-out bisim} there is a path $Q = Q_0 \pathtau Q_1 \step{\rt} Q_2 \pathtau Q_3 \step{\rt} ... \pathtau Q_{2r{-}1} \step{\opt{\rt}} Q_{2r}$ with $r>0$, such that $Q_1\nsteptau$, $\forall i \in [1,r{-}1],\; \theta_X(Q^\dag) \bisimtbr \theta_X(Q_{2i}) \wedge \deadend{Q_{2i+1}}{X}$ and $\theta_X(Q^\ddag) \bisimtbr \theta_X(Q_{2r})$. Since $\deadend{Q^\dag}{X}$ we have $Q^\dag \bisim \theta_X(Q^\dag)$ and hence  $Q^\dag \bisimtbr \theta_X(Q^\dag)$. Thus $\theta_X(P) \bisim P \bisim P^\dag \tbisim Q^\dag \bisimtbr \theta_X(Q^\dag) \bisimtbr \theta_X(Q_{2i})$. Since $\bisim$ is a congruence for $\theta_X$ \cite[Theorem 20]{strongreactivebisimilarity}, we have $\theta_X(P')\bisim\theta_X(P^\ddag)$. By the transitivity of $\bisim$ and $\bisimtbr$ we obtain $\theta_X(P) \bisim\,\B\,\bisimtbr \theta_X(Q_{2i})$ and $\theta_X(P') \bisim\,\B\,\bisimtbr \theta_X(Q_{2r})$.
        \item If $P\nsteptau$ then, since $P \bisim P^\dag$, $P^\dag\nsteptau$, so by Clause 1 of Definition~\ref{def:rooted time-out bisim up to} $Q^\dag\nsteptau$, and by Clause~3 of Definition~\ref{def:time-out bisim} there is a path $Q \pathtau Q_0 \nsteptau$.
        \popQED
    \end{enumerate}
\end{proof}

\begin{proof}[Proof of Theorem \ref{thm:congruence}]
    Let ${\tbisim} \subseteq \closed\times\closed$ be the smallest relation such that 
    \begin{itemize}
        \item if $P \bisimrtbr Q$ then $P \tbisim Q$
        \item if $P \tbisim Q$ and $\alpha \in Act$ then $\alpha.P \tbisim \alpha.Q$
        \item if $P_1 \tbisim Q_1$ and $P_2 \tbisim Q_2$ then $P_1 + P_2 \tbisim Q_1 + Q_2$
        \item if $P_1 \tbisim Q_1$, $P_2 \tbisim Q_2$ and $S \subseteq A$ then $P_1 \parallel_S P_2 \tbisim Q_1 \parallel_S Q_2$
        \item if $P \tbisim Q$ and $I \subseteq A$ then $\tau_I(P) \tbisim \tau_I(Q)$
        \item if $P \tbisim Q$ and $\rename \subseteq A\times A$ then $\rename(P) \tbisim \rename(Q)$
        \item if $P \tbisim Q$ and $L \subseteq U \subseteq A$ then $\theta_L^U(P) \tbisim \theta_L^U(Q)$
        \item if $P \tbisim Q$ and $X \subseteq A$ then $\psi_X(P) \tbisim \psi_X(Q)$
        \item if $\equa$ is a recursive specification with $z \in V_\equa$ and $\rho, \nu \in V\setminus V_\equa \rightarrow \closed$ are substitutions such that $\forall x \in V\setminus V_\equa,\; \rho(x) \tbisim \nu(x)$, then $\langle z|\equa\rangle[\rho] \tbisim \langle z|\equa\rangle[\nu]$.
        \item if $\equa$ and $\equa'$ are recursive specifications and $x \in V_\equa=V_{\equa'}$ with $\langle x|\equa\rangle, \langle x|\equa'\rangle \in \closed$ such that $\forall y \in V_\equa,\; \equa_y \bisimrtbr \equa'_y$, then $\langle x|\equa\rangle \tbisim \langle x|\equa'\rangle$.
    \end{itemize}
    Note that since $\bisim$, $\tbisim$ and $\bisimtb$ are congruences for the operators listed in Proposition~\ref{prop:stability}, so are the composed relations $\tbisim\,\bisimtb$ and $\bisim\,\tbisim\,\bisimtb$\;.\hfill (\textsterling)

   Let ${=_\mathcal{I}} := \{(P,Q) \mid \init{P} = \init{Q}\}$. A trivial induction on the derivation of $P \tbisim Q$, using the fact that $=_\mathcal{I}$ is a full congruence for $\ccsp$ \cite{strongreactivebisimilarity}, shows that 
    \begin{align*}
        P \tbisim Q \Rightarrow \init{P} = \init{Q} \tag{$@$}
    \end{align*}
    (For the second last case, the assumption that $\rho(x) \B \nu(x)$ for all $x\in V\setminus V_\equa$ implies $\rho =_{\mathcal{I}} \nu$ by induction. Since $=_{\mathcal{I}}$ is a lean congruence, this implies $\langle z|\equa\rangle[\rho] =_{\mathcal{I}}\langle{z|\equa}\rangle[\nu]$.)\\
    A trivial induction on $\expr$ shows that
    \begin{align*}
        \forall E \in \expr, \rho,\nu \in V \rightarrow \closed,\; (\forall x \mathbin\in V, \rho(x) \tbisim \nu(x)) \Rightarrow E[\rho] \tbisim E[\nu] \tag{$\star$}
    \end{align*}
    A useful corollary is
    \begin{equation}
        \begin{aligned}
            \forall E\in\expr, \equa \mbox{ a recursive specification}, \rho, \nu \in V\setminus V_\equa \rightarrow \closed,\\ (\forall x \in V\setminus V_\equa,\; \rho(x) \tbisim \nu(x)) \Rightarrow \langle E|\equa\rangle[\rho] \tbisim \langle E|\equa\rangle[\nu]
        \end{aligned} \tag{$\$$}
    \end{equation}
    Applied in the context of the last condition of $\tbisim$, it implies
    \begin{align}
        \forall E \in \expr, \mbox{the variables of \textit{E} are in }V_\equa \Rightarrow \langle E|\equa\rangle \tbisim \langle E|\equa'\rangle \tag{$\#$}
    \end{align}
    Since ${\bisimrtbr} \subseteq {\tbisim}$, it suffices to prove that $\tbisim$ is a rooted \tb time-out bisimulation up to $\bisimtbr$ (so that ${\tbisim}={\bisimrtbr}$). Note that $\tbisim$ is symmetric, since $\bisimrtbr$ is. Let $P, Q \in \closed$ such that $P \tbisim Q$.
    \begin{enumerate}
        \item If $P \step{\alpha} P'$ with $\alpha \in A_\tau$ then we need to find a transition $Q \step{\alpha} Q'$ such that $P' \tbisim\,\bisimtbr Q'$. This is sufficient as ${\tbisim\bisimtbr} \subseteq {\bisim\tbisim\bisimtbr}$. We are going to proceed by induction on the proof of $P \step{\alpha} P'$ and by case distinction on the derivation of $P \tbisim Q$.
        \begin{itemize}
            \item If $P \bisimrtbr Q$ then there exists a transition $Q \step{\alpha} Q'$ such that $P' \bisimtbr Q'$, and so $P' \tbisim\bisimtbr Q'$.
            \item If $P = \beta.P^\dag$ and $Q = \beta.Q^\dag$ such that $\beta \in Act$ and $P^\dag \tbisim Q^\dag$ then $\alpha = \beta$ and $P' = P^\dag$. Thus there exists a transition $Q \step{\alpha} Q^\dag$ such that $P^\dag \tbisim Q^\dag$, and so $P^\dag \tbisim \bisimtbr Q^\dag$.
            \item If $P = P^\dag + P^\ddag$ and $Q = Q^\dag + Q^\ddag$ such that $P^\dag \tbisim Q^\dag$ and $P^\ddag \tbisim Q^\ddag$ then, by the semantics, $P^\dag \step{\alpha} P'$ or $P^\ddag \step{\alpha} P'$. Suppose that $P^\dag \step{\alpha} P'$ (the other case proceeds symmetrically). Since $P^\dag \tbisim Q^\dag$, there exists a transition $Q^\dag \step{\alpha} Q'$ such that $P' \tbisim\,\bisimtbr Q'$. By the semantics, there exists a transition $Q \step{\alpha} Q'$ such that $P' \tbisim\,\bisimtbr Q'$.
            \item If $P = P^\dag \parallel_S P^\ddag$ and $Q = Q^\dag \parallel_S Q^\ddag$ such that $S\subseteq A$, $P^\dag \tbisim Q^\dag$ and $P^\ddag \tbisim Q^\ddag$ then
            \begin{itemize}
                \item if $\alpha \not\in S$ then, by the semantics, $P' = P'^\dag \parallel_S P^\ddag$ and $P^\dag \step{\alpha} P'^\dag$ or $P' = P^\dag \parallel_S P'^\ddag$ and $P^\ddag \step{\alpha} P'^\ddag$. Suppose that $P^\dag \step{\alpha} P'^\dag$ (the other case proceeds symmetrically). Since $P^\dag \tbisim Q^\dag$, there exists a transition $Q^\dag \step{\alpha} Q'^\dag$ such that $P'^\dag \tbisim\,\bisimtbr Q'^\dag$. By the semantics, there exists a transition $Q \step{\alpha} Q'^\dag \parallel_S Q^\ddag$ such that, by (\textsterling), $P' \tbisim\,\bisimtbr Q'^\dag \parallel_S Q^\ddag$.
                \item if $\alpha \in S$ then, by the semantics, $P' = P'^\dag \parallel_S P'^\ddag$, $P^\dag \step{\alpha} P'^\dag$ and $P^\ddag \step{\alpha} P'^\ddag$. Since $P^\dag \tbisim Q^\dag$ and $P^\ddag \tbisim Q^\ddag$, there exists two transitions $Q^\dag \step{\alpha} Q'^\dag$ and $Q^\ddag \step{\alpha} Q'^\ddag$ such that $P'^\dag \tbisim\,\bisimtbr Q'^\dag$ and $P'^\ddag \tbisim\,\bisimtbr Q'^\ddag$. By the semantics, there exists a transition $Q \step{\alpha} Q'^\dag \parallel_S Q'^\ddag$ such that, by (\textsterling), $P' \tbisim\,\bisimtbr Q'^\dag \parallel_S Q'^\ddag$.
            \end{itemize}
            \item If $P = \tau_I(P^\dag)$ and $Q = \tau_I(Q^\dag)$ with $I \subseteq A$ and $P^\dag \tbisim Q^\dag$ then, by the semantics, $P' = \tau_I(P'^\dag)$, $P^\dag \step{\beta} P'^\dag$ and $(\beta \in I\cup\{\tau\} \wedge \alpha = \tau) \vee \beta = \alpha \not\in I$. Since $P^\dag \tbisim Q^\dag$, there exists a transition $Q^\dag \step{\beta} Q'^\dag$ such that $P'^\dag \tbisim\,\bisimtbr Q'^\dag$. By the semantics, there exists a transition $Q \step{\alpha} \tau_I(Q'^\dag)$ such that, by (\textsterling), $P'\tbisim\,\bisimtbr \tau_I(Q'^\dag)$.
            \item If $P = \rename(P^\dag)$ and $Q = \rename(Q^\dag)$ with $\rename \subseteq A\times A$ and $P^\dag \tbisim Q^\dag$ then, by the semantics, $P' = \rename(P'^\dag)$, $P^\dag \step{\beta} P'^\dag$ and $(\beta,\alpha) \in \rename \vee \beta = \alpha = \tau$. Since $P^\dag \tbisim Q^\dag$, there exists a transition $Q^\dag \step{\beta} Q'^\dag$ such that $P'^\dag \tbisim\,\bisimtbr Q'^\dag$. By the semantics, there exists a transition $Q \step{\alpha} \rename(Q'^\dag)$ such that, by (\textsterling), $P'\tbisim\,\bisimtbr \rename(Q'^\dag)$.
            \item If $P = \theta_L^U(P^\dag)$ and $Q = \theta_L^U(Q^\dag)$ with $L \subseteq U \subseteq A$ and $P^\dag \tbisim Q^\dag$ then
            \begin{itemize}
                \item if $\alpha = \tau$ then, by the semantics, $P' = \theta_X(P'^\dag)$ and $P^\dag \step{\tau} P'^\dag$. Since $P^\dag \tbisim Q^\dag$, there exists a transition $Q^\dag \step{\tau} Q'^\dag$ such that $P'^\dag \tbisim\,\bisimtbr Q'^\dag$. By the semantics, there exists a transition $Q \step{\tau} \theta_L^U(Q'^\dag)$ such that, by (\textsterling), $P'\tbisim\,\bisimtbr \theta_L^U(Q'^\dag)$.
                \item if $\alpha = a \in A$ then, by the semantics, $P^\dag \step{a} P'$ and $a \in U \vee \deadend{P^\dag}{L}$. Since $P^\dag \tbisim Q^\dag$, there exists a transition $Q^\dag \step{a} Q'$ such that $P' \tbisim\,\bisimtbr Q'$. According to $(@)$, $\deadend{P^\dag}{L} \Leftrightarrow \deadend{Q^\dag}{L} $, thus, by the semantics, there exists a transition $Q \step{a} Q'$ such that $P'\tbisim\,\bisimtbr Q'$.
            \end{itemize}
            \item If $P = \psi_X(P^\dag)$ and $Q = \psi_X(Q^\dag)$ with $X \subseteq A$ and $P^\dag \tbisim Q^\dag$ then, by the semantics, $P^\dag \step{\alpha} P'$. Since $P^\dag \tbisim Q^\dag$, there exists a transition $Q^\dag \step{\alpha} Q'$ such that $P' \tbisim\,\bisimtbr Q'$. By the semantics, there exists a transition $Q \step{\alpha} Q'$ such that $P'\tbisim\,\bisimtbr Q'$.
            \item Let $P = \langle z|\equa\rangle[\rho]$ and $Q = \langle z|\equa\rangle[\nu]$ with $\equa$ a recursive specification, $z \in V_\equa$ and $\rho, \nu \in V\setminus V_\equa \rightarrow\closed$ such that $\forall x \in V\setminus V_\equa,\; \rho(x) \tbisim \nu(x)$. By the semantics, $\langle\equa_z|\equa\rangle[\rho] \step{\alpha} P'$ is provable by a strict sub-proof of $P \step{\alpha} P'$. Moreover, according to $(\$)$, $\langle\equa_z|\equa\rangle[\rho] \tbisim \langle\equa_z|\equa\rangle[\nu]$. By induction, there exists a transition $\langle\equa_z|\equa\rangle[\nu] \step{\alpha} Q'$ such that $P'\tbisim\,\bisimtbr Q'\!$. By the semantics, there exists a transition $\langle z|\equa\rangle[\nu] \step{\alpha} Q'$ such that $P'\tbisim\,\bisimtbr Q'$.
            \item Let $P = \langle x|\equa\rangle$ and $Q = \langle x|\equa'\rangle$ with $\equa$ and $\equa'$ two recursive specifications such that $\forall y \in V_\equa = V_{\equa'},\; \equa_y \bisimrtbr \equa'_y$ and $x \in V_\equa$. By the semantics, $\langle\equa_x|\equa\rangle \step{\alpha} P'$ is provable by a strict sub-proof of $P \step{\alpha} P'$. Moreover, according $(\#)$, $\langle \equa_x|\equa\rangle \tbisim \langle\equa_x|\equa'\rangle$. By induction, there exists a transition $\langle\equa_x|\equa'\rangle \step{\alpha} R'$ such that $P' \tbisim\,\bisimtbr R'$. Since $\langle \_ |\equa'\rangle \in V_{\equa'} \rightarrow \closed$ and $\equa_x \bisimrtbr \equa'_x$, $\langle\equa_x|\equa'\rangle \bisimrtbr \langle\equa_x'|\equa'\rangle$. Therefore, there exists a transition $\langle\equa'_x|\equa'\rangle \step{\alpha} Q'$ such that $R' \bisimtbr Q'$. By the semantics, there exists a transition $Q \step{\alpha} Q'$ such that, by transitivity of $\bisimtbr$\,, $P' \tbisim\,\bisimtbr Q'$.
        \end{itemize}
        \item If $\deadend{P}{X}$ and $P \step{\rt} P'$ then we need to find a transition $Q \step{\rt} Q'$ such that $\theta_X(P') \bisim\,\tbisim\,\bisimtbr \theta_X(Q')$. We are going to proceed by induction on the proof of $P \step{\rt} P'$ and by case distinction on the derivation of $P \tbisim Q$.
        \begin{itemize}
            \item If $P \bisimrtbr Q$ then there exists a transition $Q \step{\rt} Q'$ such that $\theta_X(P') \bisimtbr \theta_X(Q')$ and so $\theta_X(P') \bisim\,\tbisim\,\bisimtbr \theta_X(Q')$.
            \item If $P = \beta.P^\dag$ and $Q = \beta.Q^\dag$ such that $\beta \in Act$ and $P^\dag \tbisim Q^\dag$ then $\beta = \rt$ and $P' = P^\dag$. Thus there is a transition $Q \step{t} Q^\dag$ such that, by definition of $\tbisim$, $\theta_X(P^\dag) \bisim\,\tbisim\,\bisimtbr \theta_X(Q^\dag)$.
            \item If $P = P^\dag + P^\ddag$ and $Q = Q^\dag + Q^\ddag$ such that $P^\dag \tbisim Q^\dag$ and $P^\ddag \tbisim Q^\ddag$ then, by the semantics, $\deadend{P^\dag}{X}$, $\deadend{P^\ddag}{X}$ and $P^\dag \step{\rt} P'$ or $P^\ddag \step{\rt} P'$. Suppose that $P^\dag \step{\rt} P'$ (the other case is symmetrical). Since $P^\dag \tbisim Q^\dag$, there exists a transition $Q^\dag \step{\rt} Q'$ such that $\theta_X(P') \bisim\,\tbisim\,\bisimtbr \theta_X(Q')$. By the semantics, there exists a transition $Q \step{\rt} Q'$ such that $\theta_X(P') \bisim\,\tbisim\, \bisimtbr \theta_X(Q')$.
            \item If $P = P^\dag \parallel_S P^\ddag$ and $Q = Q^\dag \parallel_S Q^\ddag$ such that $S\subseteq A$, $P^\dag \tbisim Q^\dag$ and $P^\ddag \tbisim Q^\ddag$ then, by the semantics, $P' = P'^\dag \parallel_S P^\ddag$ and $P^\dag \step{\rt} P'^\dag$ or $P' = P^\dag \parallel_S P'^\ddag$ and $P^\ddag \step{\rt} P'^\ddag$. Suppose that $P^\ddag \step{\rt} P'^\ddag$ (the other case is symmetrical). Since $\deadend{P}{X}$, $P^\dag \nsteptau$ and $\init{P^\dag} \cap X \subseteq S$. Moreover, $\deadend{P^\ddag}{X \setminus S \cup (X\cap S\cap\init{P^\dag})}$. Note that $X \setminus S \cup (X\cap S\cap\init{P^\dag}) = X \setminus(S\setminus\init{P^\dag})$. Since $P^\ddag \tbisim Q^\ddag$, there exists a transition $Q^\ddag \step{\rt} Q'^\ddag$ such that $\theta_{X \setminus(S\setminus\init{P^\dag})}(P'^\ddag) \bisim\,\tbisim\,\bisimtbr \theta_{X \setminus(S\setminus\init{P^\dag})}(Q'^\ddag)$. Since $P^\dag \tbisim Q^\dag$ and $P^\ddag \tbisim Q^\ddag$, $\init{P^\dag} = \init{Q^\dag}$ and $\init{P^\ddag} = \init{Q^\ddag}$. By the semantics, there exists a transition $Q \step{\rt} Q^\dag \parallel_S Q'^\ddag$. By (\textsterling) and Lemma~\ref{lem:strong identities}, \(\theta_X(P') = \theta_X(P^\dag \parallel_S P'^\ddag) \bisim \theta_X(P^\dag \parallel_S\theta_{X \setminus(S\setminus\init{P^\dag})}(P'^\ddag)) \bisim\,\tbisim\,\bisimtbr \theta_X(Q^\dag \parallel_S \theta_{X \setminus(S\setminus\init{Q^\dag})}(Q'^\ddag)) \bisim \theta_X(Q^\dag \parallel_S Q'^\ddag)\). In the last step we use that $Q^\dag \nsteptau$, since $P^\dag\nsteptau$ and $\init{P}=\init{Q}$, using ($@$). Now apply that ${\bisim}\subseteq{\bisimtbr}$ and the transitivity of $\bisim$ and $\bisimtbr$\,.
            \item If $P = \tau_I(P^\dag)$ and $Q = \tau_I(Q^\dag)$ with $I \subseteq A$ and $P^\dag \tbisim Q^\dag$ then, by the semantics, $P' = \tau_I(P'^\dag)$ and $P^\dag \step{\rt} P'^\dag$. Since $\deadend{P}{X}$, $\deadend{P^\dag}{X\cup I}$. Since $P^\dag \tbisim Q^\dag$, there exists a transition $Q^\dag \step{\rt} Q'^\dag$ such that $\theta_{X\cup I}(P'^\dag) \bisim\,\tbisim\,\bisimtbr \theta_{X\cup I}(Q'^\dag)$. By the semantics, there exists a transition $Q \step{\rt} \tau_I(Q'^\dag)$ such that, by (\textsterling) and Lemma~\ref{lem:strong identities}, $\theta_X(P') \bisim \theta_X(\tau_I(\theta_{X\cup I}(P'^\dag))) \bisim\,\tbisim\,\bisimtbr \theta_X(\tau_I(\theta_{X\cup I}(Q'^\dag))) \bisim \tau_I(Q'^\dag)$.
            \item If $P = \rename(P^\dag)$ and $Q = \rename(Q^\dag)$ with $\rename \subseteq A\times A$ and $P^\dag \tbisim Q^\dag$ then, by the semantics, $P' = \rename(P'^\dag)$ and $P^\dag \step{\rt} P'^\dag$. Since $\deadend{P}{X}$, $\deadend{P^\dag}{\rename^{-1}(X)}$. Since $P^\dag \tbisim Q^\dag$, there exists a transition $Q^\dag \step{\rt} Q'^\dag$ such that $\theta_{\rename^{-1}(X)}(P'^\dag) \bisim\,\tbisim\,\bisimtbr \theta_{\rename^{-1}(X)}(Q'^\dag)$. By the semantics, there exists a transition $Q \step{\rt} \rename(Q'^\dag)$ such that, by (\textsterling) and Lemma~\ref{lem:strong identities}, $\theta_X(P') \bisim \theta_X(\rename(\theta_{\rename^{-1}(X)}(P'^\dag))) \bisim\,\tbisim\,\bisimtbr \theta_X(\rename(\theta_{\rename^{-1}(X)}(Q'^\dag))) \bisim \rename(Q'^\dag)$.
            \item If $P = \theta_L^U(P^\dag)$ and $Q = \theta_L^U(Q^\dag)$ with $L \subseteq U \subseteq A$ and $P^\dag \tbisim Q^\dag$ then, by the semantics, $P^\dag \step{\rt} P'$ and $\deadend{P^\dag}{L}$. Since $P^\dag \tbisim Q^\dag$, there exists a transition $Q^\dag \step{\rt} Q'$ such that $\theta_X(P') \bisim\,\tbisim\,\bisimtbr \theta_X(Q')$. According to $(@)$, $\deadend{P^\dag}{L} \Leftrightarrow \deadend{Q^\dag}{L} $, thus, by the semantics, there exists a transition $Q \step{\rt} Q'$ such that $\theta_X(P')\bisim\,\tbisim\,\bisimtbr \theta_X(Q')$.
            \item If $P = \psi_Y(P^\dag)$ and $Q = \psi_Y(Q^\dag)$ with $Y \subseteq A$ and $P^\dag \tbisim Q^\dag$ then, by the semantics, $P' = \theta_Y(P'^\dag)$, $P^\dag \step{\rt} P'^\dag$ and $\deadend{P^\dag}{Y}$. Since $P^\dag \tbisim Q^\dag$, there exists a transition $Q^\dag \step{\rt} Q'^\dag$ such that $\theta_Y(P'^\dag) \bisim\,\tbisim\,\bisimtbr \theta_Y(Q'^\dag)$. Using ($@$), $\deadend{Q^\dag}{Y}$, so by the semantics, there exists a transition $Q \step{\rt} \theta_Y(Q'^\dag)$. By (\textsterling), $\theta_X(P') \bisim\,\tbisim\,\bisimtbr \theta_X(\theta_Y(Q'^\dag))$.
            \item Let $P = \langle z|\equa\rangle[\rho]$ and $Q = \langle z|\equa\rangle[\nu]$ with $\equa$ a recursive specification, $z \in V_\equa$ and $\rho, \nu \in V\setminus V_\equa \rightarrow\closed$ such that $\forall x \in V\setminus V_\equa, \rho(x) \tbisim \nu(x)$. By the semantics, $\langle\equa_z|\equa\rangle[\rho] \step{\rt} P'$ is provable by a strict sub-proof of $P \step{\rt} P'$ and $\init{P} = \init{\langle\equa_z|\equa\rangle[\rho]}$. Moreover, according to $(\$)$, $\langle\equa_z|\equa\rangle[\rho] \tbisim \langle\equa_z|\equa\rangle[\nu]$. By induction, there exists a transition $\langle\equa_z|\equa\rangle[\nu] \step{\rt} Q'$ such that $\theta_X(P') \bisim\,\tbisim\,\bisimtbr \theta_X(Q')$. By the semantics, there exists a transition $\langle z|\equa\rangle[\nu] \step{\rt} Q'$ such that $\theta_X(P') \bisim\,\tbisim\,\bisimtbr \theta_X(Q')$.
            \item Let $P = \langle x|\equa\rangle$ and $Q = \langle x|\equa'\rangle$ with $\equa$ and $\equa'$ two recursive specifications such that $\forall y \in V_\equa = V_{\equa'}, \equa_y \bisimrtbr \equa'_y$ and $x \in V_\equa$. By the semantics, $\langle\equa_x|\equa\rangle \step{\rt} P'$ is provable be a strict sub-proof of $P \step{\alpha} P'$ and $\init{P} = \init{\langle\equa_x|\equa\rangle}$. Moreover, according $(\#)$, $\langle \equa_x|\equa\rangle \tbisim \langle\equa_x|\equa'\rangle$. By induction, there exists a transition $\langle\equa_x|\equa'\rangle \step{\rt} R'$ such that $\theta_X(P') \bisim\,\tbisim\,\bisimtbr \theta_X(R')$. Since $\langle \_ |\equa'\rangle \in V_{\equa'} \rightarrow \closed$ and $\equa_x \bisimrtbr \equa'_x$, $\langle\equa_x|\equa'\rangle \bisimrtbr \langle\equa_x'|\equa'\rangle$. Moreover, according to $(@)$, $\deadend{\langle\equa_x|\equa'\rangle}{X}$. Therefore, there exists a transition $\langle\equa'_x|\equa'\rangle \step{\alpha} Q'$ such that $\theta_X(R') \bisimtbr \theta_X(Q')$. By the semantics, there exists a transition $Q \step{\alpha} Q'$ such that, by transitivity of $\bisimtbr$, $\theta_X(P') \bisim\,\tbisim\,\bisimtbr \theta_X(Q')$.
        \end{itemize}
    \end{enumerate}
    As a result, $\mathcal{B}$ is a rooted \tb time-out bisimulation up to $\bisimtbr$\,, and $(\star)$ gives us that $\bisimrtbr$ is a lean congruence and the last condition of $\tbisim$ adds that it is a full congruence.
\end{proof}

\section{Proof of RSP}\label{app:RSP}

To prove RSP, another version of $\bisimtbr$ is needed, this time up to itself.

\begin{definition}\rm \label{def:up to b}
    A \textit{\tb time-out bisimulation up to $\bisimtbr$} is a symmetric relation ${\tbisim} \subseteq \closed\times\closed$ such that, for all $P,Q \in \closed$ such that $P \tbisim Q$, and for all $X\subseteq A$,
    \begin{enumerate}
        \item if $P \pathtau P' \step{\alpha} P''$ with $\alpha \in A_\tau$ and $P \bisimtbr P'$ then there exists a path $Q \pathtau Q_1 \step{\opt{\alpha}} Q_2$ such that $P' \upto[\bisimtbr] Q_1$ and $P'' \upto[\bisimtbr] Q_2$
        \item if $P=P_0 \pathtau P_1 \step{\rt} P_2 \pathtau P_3 \step{\rt} ... \pathtau P_{2r-1} \step{\opt{\rt}} P_{2r}$ with $r > 0$, such that $\forall i \in [0,r{-}1]$, $\theta_X(P) \bisimtbr \theta_X(P_{2i}) \wedge P \bisimtbr P_{2i+1} \wedge \deadend{P_{2i+1}}{X}$, then there exists a path $Q = Q_0 \pathtau Q_1 \step{\rt} Q_2 \pathtau Q_3 \step{\rt} ... \pathtau Q_{2n-1} \step{\opt{\rt}} Q_{2n}$ with $n>0$, such that $\forall i \in [1,2n{-}1]\;\theta_X(P) \upto[\bisimtbr] \theta_X(Q_{i})$,\;$\forall j \in [0,n{-}1]\;\deadend{Q_{2j+1}}{X}$ and $\theta_X(P_{2r}) \upto[\bisimtbr] \theta_X(Q_{2n})$
        \item if $P \pathtau P_0 \nsteptau$ with $P \bisimtbr P_0$ then there exists a path $Q \pathtau Q_0 \nsteptau$.
    \end{enumerate}
\end{definition}

\begin{proposition} \label{prop:up to b}
    Let $P,Q \in \closed$. Then $P \bisimtbr Q$ iff there exists a \tb time-out bisimulation $\mathcal{B}$ up to $\bisimtbr$ such that $P \tbisim Q$.
\end{proposition}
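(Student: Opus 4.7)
The plan is to handle both directions via the composition $\R := \bisimtbr \circ \B \circ \bisimtbr$, following the standard recipe for ``up to'' results. Since ${\bisimtbr} \subseteq {\R}$ holds trivially, if we can show that $\R$ is an ordinary \tb time-out bisimulation whenever $\B$ is one up to $\bisimtbr$\,, both directions follow.

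For the $(\Rightarrow)$ direction, it suffices to verify that $\bisimtbr$ itself is a \tb time-out bisimulation up to $\bisimtbr$\,. Since $\upto[\bisimtbr] = \bisimtbr \circ \bisimtbr \circ \bisimtbr = \bisimtbr$ by transitivity (Proposition~\ref{prop:equivalence}), the up-to clauses reduce to the corresponding clauses of Definition~\ref{def:time-out bisim}, once we absorb the strengthened hypotheses. For Clause~1, given $P \bisimtbr Q$ and $P \pathtau P' \step{\alpha} P''$ with $P \bisimtbr P'$, transitivity gives $P' \bisimtbr Q$, and Clause~1 of $\bisimtbr$ applied to $P' \step{\alpha} P''$ yields the required matching path. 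Clauses~2 and~3 are analogous, using the stuttering lemma (Lemma~\ref{lem:stuttering}) to absorb the intermediate $\bisimtbr$-conditions, and the congruence of $\bisimtbr$ for $\theta_X$ (Proposition~\ref{prop:stability}) to lift conditions of the form $\theta_X(P) \bisimtbr \theta_X(P_{2i})$.

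For the $(\Leftarrow)$ direction, I would fix witnesses $P \bisimtbr P^\dag \B Q^\dag \bisimtbr Q$ for any $P \R Q$, and verify the three clauses of Definition~\ref{def:time-out bisim} for $\R$. For Clause~1, given $P \step{\alpha} P'$\!, Clause~1 of $\bisimtbr$ applied through $P \bisimtbr P^\dag$ yields a path $P^\dag \pathtau P^\dag_1 \step{\opt{\alpha}} P^\dag_2$ with $P \bisimtbr P^\dag_1$ and $P' \bisimtbr P^\dag_2$; by stuttering and transitivity, $P^\dag \bisimtbr P^\dag_1$, so the hypothesis of the up-to Clause~1 is met. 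This yields a matching path from $Q^\dag$, which is then transported to $Q$ by iteratively applying Clause~1 of $\bisimtbr$ along $Q^\dag \bisimtbr Q$ to each $\tau$-step and the final $\step{\opt{\alpha}}$-step. Clause~3 is treated analogously.

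The hard case is Clause~2, and this is the main obstacle. Given $\deadend{P}{X}$ and $P \step{\rt} P'$\!, Clause~2 of $\bisimtbr$ applied to $P \bisimtbr P^\dag$ yields a long $\rt$-path $P^\dag = P^\dag_0 \pathtau P^\dag_1 \step{\rt} \cdots \step{\opt{\rt}} P^\dag_{2r}$. The key observation is that the hypothesis of the up-to Clause~2 is tailor-made to receive such a path: the conditions $\theta_X(P^\dag) \bisimtbr \theta_X(P^\dag_{2i})$, $P^\dag \bisimtbr P^\dag_{2i+1}$ and $\deadend{P^\dag_{2i+1}}{X}$ are derived from the matching conditions on $P$ via the stuttering lemma, transitivity, and the $\bisimtbr$-congruence of $\theta_X$. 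The up-to clause then delivers a matching $\rt$-path from $Q^\dag$, which must finally be transported along $Q^\dag \bisimtbr Q$ to a path in $Q$ by cascading Clauses~1 and~2 of $\bisimtbr$ on each $\tau$-segment and each $\rt$-transition. The non-trivial bookkeeping lies in checking that the intermediate $\theta_X$-conditions survive this cascade, which can introduce further stuttering and refinement of the path; here the stuttering lemma and Proposition~\ref{prop:stability} are repeatedly invoked to re-establish the $\upto[\bisimtbr]$-relations at each intermediate state.
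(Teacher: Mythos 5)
Your proposal follows essentially the same route as the paper: for the hard direction you show that the composite $\R = {\bisimtbr} \circ {\B} \circ {\bisimtbr}$ is a \tb time-out bisimulation by first expanding the given transition of $P$ into a path from $P^\dag$ satisfying the path-shaped hypotheses of Definition~\ref{def:up to b} (via transitivity and congruence of $\bisimtbr$ for $\theta_X$), then applying the up-to clause of $\B$, and finally transporting the resulting path along $Q^\dag \bisimtbr Q$ — exactly the paper's argument (the paper proves only this direction, leaving the easy implication implicit, which you sketch correctly). The only blemish is the opening claim that ${\bisimtbr} \subseteq {\R}$ holds trivially, which is neither true for arbitrary $\B$ (take $\B = \emptyset$) nor needed: the backward direction only requires ${\B} \subseteq {\R}$, which follows from reflexivity of $\bisimtbr$, and you treat the forward direction separately anyway.
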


\begin{proof}
    Let $\tbisim$ be a \tb time-out bisimulation up to $\bisimtbr$\,. We are going to show that $\upto[\bisimtbr]$ is a \tb time-out bisimulation. Let $P,Q \in \closed$ such that $P \upto[\bisimtbr] Q$. Then there exists $P^\dag,Q^\dag \in \closed$ such that $P \bisimtbr P^\dag \tbisim Q^\dag \bisimtbr Q$.
    \begin{enumerate}
        \item If $P \step{\alpha} P'$ with $\alpha \in A_\tau$ then, since $P \bisimtbr P^\dag$, there exists a path $P^\dag \pathtau P^\star \step{\opt{\alpha}} P^\ddag$ such that $P \bisimtbr P^\star$ and $P' \bisimtbr P^\ddag$. Since $P^\dag \pathtau P^\star \step{\opt{\alpha}} P^\ddag$ and $P^\dag \tbisim Q^\dag$, there exists a path $Q^\dag \pathtau Q^\star \step{\opt{\alpha}} Q^\ddag$ such that $P^\star \upto[\bisimtbr] Q^\star$ and $P^\ddag \upto[\bisimtbr] Q^\ddag$. Since $Q^\dag \pathtau Q^\star$ and $Q^\dag \bisimtbr Q$, there exists a path $Q \pathtau Q_0$ such that $Q^\star \bisimtbr Q_0$; moreover, since $Q^\star \step{\opt{\alpha}} Q^\ddag$, there exists a path $Q_0 \pathtau Q_1 \step{\opt{\alpha}} Q_2$ such that $Q^\star \bisimtbr Q_1$ and $Q^\ddag \bisimtbr Q_2$. As a result, there exists a path $Q \pathtau Q_1 \step{\opt{\alpha}} Q_2$ such that, by transitivity of $\bisimtbr$\,, $P \upto[\bisimtbr] Q_1$ and $P' \upto[\bisimtbr] Q_2$.
        \item If $\deadend{P}{X}$ and $P \step{\rt} P'$ then, since $P \bisimtbr P^\dag$, there exists a path $P^\dag=P^\dag_0 \pathtau P^\dag_1 \step{\rt} P^\dag_2 \pathtau P^\dag_3 \step{\rt} ... \pathtau P^\dag_{2r-1} \step{\opt{\rt}} P^{\dag}_{2r}$ with $r>0$, such that $P^\dag_1\nsteptau$,\linebreak[3] $\forall i \in [1,r{-}1],\; \theta_X(P) \bisimtbr \theta_X(P^\dag_{2i}) \wedge \deadend{P^\dag_{2i+1}}{X}$ and $\theta_X(P') \bisimtbr \theta_X(P^\dag_{2r})$. As remarked in Section~\ref{sec:brb}, we even have $P \bisimtbr P^\dag_{2i+1}$ for all $i \in [1,r{-}1]$. As $\bisimtbr$ is a congruence, $\theta_X(P^\dag) \bisimtbr \theta_X(P)$, so $\forall i \in [0,r{-}1],\; \theta_X(P^\dag) \bisimtbr \theta_X(P^\dag_{2i}) \wedge P^\dag \bisimtbr P^\dag_{2i+1}$. Since $P^\dag \tbisim Q^\dag$, there exists a path $Q^\dag = Q^\dag_0 \pathtau Q^\dag_1 \step{\rt} Q^\dag_2 \pathtau Q^\dag_3 \step{\rt} ... \pathtau Q^\dag_{2n-1} \step{\opt{\rt}} Q^\dag_{2n}$ with $n>0$, such that 
        $\forall i \in [1,2n{-}1]\;\theta_X(P^\dag) \upto[\bisimtbr] \theta_X(Q^\dag_{i})$,\;$\forall j \in [0,n{-}1]\;\deadend{Q_{2j+1}}{X}$ and $\theta_X(P^\dag_{2r}) \upto[\bisimtbr] \theta_X(Q^\dag_{2n})$.
        Since $Q^\dag \bisimtbr Q$, there exists a path $Q \pathtau Q_1 \step{\rt} Q_2 \pathtau Q_3 \step{\rt} ... \pathtau Q_{2m-1} \step{\opt{\rt}} Q_{2m}$ with $m>0$, such that $Q_1\nsteptau$, $\forall k \in [1,m{-}1], \exists j \in [0,2n{-}1], \theta_X(Q^\dag_j) \bisimtbr \theta_X(Q_{2k}) \wedge \deadend{Q_{2k+1}}{X}$ and $\theta_X(Q^\dag_{2n}) \bisimtbr \theta_X(Q_{2m})$. As a result, there exists a path $Q \pathtau Q_1 \step{\rt} Q_2 \pathtau Q_3 \step{\rt} \dots \pathtau Q_{2m-1} \step{\opt{\rt}} Q_{2m}$ with $m>0$, such that $Q_1\nsteptau$, and, by transitivity of $\bisimtbr$\,, $\forall i \in [1,m{-}1],\; \theta_X(P) \upto[\bisimtbr] \theta_X(Q_{2i}) \wedge \deadend{Q_{2i+1}}{X}$ and $\theta_X(P') \upto[\bisimtbr] \theta_X(Q_{2m})$.
        \item If $P \nsteptau$ then, since $P \bisimtbr P^\dag$, there exists a path $P^\dag \pathtau P^\dag_0 \nsteptau$, and $P^\dag \bisimtbr P^\dag_0$. Since $P^\dag \tbisim Q^\dag$, there exists a path $Q^\dag \pathtau Q^\dag_0 \nsteptau$. Since $Q^\dag \bisimtbr Q$, there exists a path $Q \pathtau Q_0 \nsteptau$.
\popQED
    \end{enumerate}
\end{proof}

\noindent
The following lemma will be useful to deal with the matching of paths.

\begin{lemma} \label{lem:guarded}
    Let $H \mathbin\in \expr$ be well-guarded and have free variables from $W \mathbin\subseteq V$ only, and let $\rho,\nu \mathbin\in \closed^W\!\!$. 
    \begin{enumerate}
        \item $\init{H[\rho]} = \init{H[\nu]}$. \label{init}
        \item If $H[\rho] \step{\alpha} R$ with $\alpha \in Act$ then there exists $H' \in \expr$ with free variables in $W$ only such that $R = H'[\rho]$ and $H[\nu] \step{\alpha} H'[\nu]$.
        Moreover, in case $\alpha\in\{\tau,\rt\}$, also $H'$ is well-guarded.\label{2}
    \end{enumerate}
\end{lemma}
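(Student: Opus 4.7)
The proof is by structural induction on $H$, proving both claims together. For claim 1, the fundamental observation is that well-guardedness confines every free variable of $H$ to a subexpression $a.F$ with $a \in A$, so the substituted values $\rho(x)$ and $\nu(x)$ never surface at the top level of $H[\rho]$ or $H[\nu]$. Consequently, $\init{H[\rho]}$ can be computed from the outer constructor of $H$ together with, inductively, the initials of sub-expressions, and does not depend on $\rho$. The case $H=x$ is excluded by well-guardedness, and the recursion case $H=\langle y\mid\equa\rangle$ is handled either because the subexpression contributes no free variables from $W$ (and is thus a constant under the substitution), or because any such free variables are shielded by a visible-action guard located above.

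For claim 2 I would perform an inner induction on the derivation of $H[\rho]\step{\alpha}R$. The overarching principle is that variables of $H$ sit behind visible-action guards, so the substitution $\rho$ cannot generate a fresh $\tau$- or $\rt$-transition; such transitions must come from the structure of $H$ itself. For the leaf rule $\alpha.P\step{\alpha}P$ applied to $H=\alpha.E$, set $H':=E$; when $\alpha\in\{\tau,\rt\}$, since $\tau$ and $\rt$ are not legal guards, well-guardedness of $\alpha.E$ forces $E$ to be well-guarded. For a transition $a.x[\rho]\step{a}\rho(x)$ arising from a subexpression $a.x$ of $H$, set $H':=x$; the well-guardedness clause is vacuous since $a\in A$. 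For the inductive cases of choice, parallel composition, renaming, and the environment operators $\theta_L^U$ and $\psi_X$, the transition is built from a transition of a direct sub-expression; the induction hypothesis yields the required $H'$, and well-guardedness is preserved by reassembling the constructor around it, using that $\tau_I$ is banned from $H$.

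The main obstacle is the recursion case $H=\langle y\mid\equa\rangle$, where $\equa$ need not itself be manifestly well-guarded: the unfolding rule passes the transition through $\langle\equa_y\mid\equa\rangle[\rho]\step{\alpha}R$, whose structure is not immediately a sub-expression of $H$. I would treat this by an inner induction on derivation height, exploiting that $H$ being well-guarded forces any free variable of $H$ appearing in $\equa_y$ to be already shielded by a visible-action guard within $\equa_y$ itself. Hence a $\tau$- or $\rt$-transition of $\langle\equa_y\mid\equa\rangle[\rho]$ cannot consume such a free variable; the same transition is therefore derivable from $\langle\equa_y\mid\equa\rangle[\nu]$ with identical shape, and the resulting $H'$ retains the property that its free variables in $W$ remain behind visible-action guards, closing the induction.
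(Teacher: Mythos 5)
Your plan for Part 2 is essentially the paper's proof: an induction on the derivation of $H[\rho]\step{\alpha}R$ with a case analysis on the shape of $H$, the recursion case being discharged because $\langle\equa_y|\equa\rangle$ is again well-guarded, has its free variables in $W$ only, and its transition has a strictly shorter derivation (your outer structural induction does no work there -- only the derivation measure decreases -- but that is harmless). There is, however, a concrete gap in the cases you dismiss as routine reassembly: the rules for $\theta_L^U$ and $\psi_X$ are not purely compositional. The rule deriving $\theta_L^U(x)\step{\alpha}x'$ for actions not sanctioned by $U$ has the premise $\deadend{x}{L}$, and the rule deriving $\psi_X(x)\step{\rt}\theta_X(x')$ has the premise $\deadend{x}{X}$. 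From the induction hypothesis you only obtain a matching transition of $H^\dag[\nu]$; to fire the same rule on the $\nu$-side you must additionally know that $\init{H^\dag[\nu]}\cap(L\cup\{\tau\})=\emptyset$, resp.\ $\init{H^\dag[\nu]}\cap(X\cup\{\tau\})=\emptyset$, i.e.\ you need Part 1 for the argument $H^\dag$. This transfer of the side conditions is the only non-trivial content of these two cases and the very reason Part 1 is bundled into the lemma; your proposal never invokes it, whereas the paper's proof appeals to Lemma~\ref{lem:guarded}.1 at exactly this point.

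Relatedly, your treatment of Part 1 is too loose to plug that hole. Pure structural induction breaks at $H=\langle y|\equa\rangle$: the transitions of $\langle y|\equa\rangle[\rho]$ arise from the unfolding $\langle\equa_y|\equa\rangle[\rho]$, which is not a subterm, and when $H$ is itself the recursive term its free variables are guarded \emph{inside} the bodies $\equa_z$, not ``above'' it, so neither of your two subcases applies. Nor can Part 1 simply be folded into the derivation-height induction of Part 2: to discharge $\deadend{H^\dag[\nu]}{L}$ you would have to match an arbitrary transition of $H^\dag[\nu]$, whose derivation height is unbounded, by a transition of $H^\dag[\rho]$, which a hypothesis bounded by the current derivation height does not give you. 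The paper sidesteps this by importing Part 1 from \cite{strongreactivebisimilarity}; you should either do the same or supply a separate, genuinely well-founded argument for it before using it in the $\theta_L^U$ and $\psi_X$ cases.
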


\begin{proof}
\newcommand{\spar}[1]{\mathbin{\|^{}_{#1}}}        % parallel composition
\ref{init}.\ has been proven in \cite{strongreactivebisimilarity}.
We obtain \ref{2}.\ by induction on the derivation of $H[\rho] \step\alpha R$, making a case distinction on the shape of $H$.

Let $H=\alpha.G$, so that $H[\rho] = \alpha.G[\rho]$.
Then $R = G[\rho]$ and $H[\nu] \step\alpha G[\nu]$.
In case $\alpha\in\{\tau,\rt\}$, also $G$ is well-guarded.

The case $H=0$ cannot occur. Nor can the case $H=x\in V$, as $H$ is well-guarded.

Let $H = H_1 \spar{S} H_2$, so that $H[\rho] = H_1[\rho]\spar{S} H_2[\rho]$. Note that $H_1$ and $H_2$ are well-guarded and have free variables in $W$ only. One possibility is that $a\notin S$, $H_1[\rho]\step\alpha R_1$ and $R= R_1 \spar{S} H_2[\rho]$. By induction, $R_1$ has the form $H'_1[\rho]$ for some term $H'_1\in\expr$ with free variables in $W$ only, and in case $\alpha\in\{\tau,\rt\}$,
also $H'_1$ is well-guarded. Moreover, $H_1[\nu] \step\alpha H'_1[\nu]$.
Thus $R = (H'_1 \spar{S} H_2)[\rho]$, and $H':= H'_1 \spar{S} H_2$ has free variables in $W$ only. In case $\alpha\in\{\tau,\rt\}$, $H$ is well-guarded. Moreover, $H[\nu] =  H_1[\nu]\spar{S} H_2[\nu] \step\alpha  H'_1[\nu]\spar{S} H_2[\nu] = H'[\nu]$.

The other two cases for $\spar{S}$, and the cases for the operators $+$ and $\rename$, are equally trivial.

Let $H= \theta_L^U(H^\dagger)$, so that $H[\rho] = \theta_L^U(H^\dagger[\rho])$. Note that $H^\dagger$ is well-guarded and has free variables in $W$ only. The case $\alpha = \tau$ is again trivial, so assume $\alpha\neq\tau$. Then {$H^\dagger[\rho] \step\alpha R$} and either $\alpha\in X$ or $\init{H^\dagger[\rho]} \cap (L\cup\{\tau\}) = \emptyset$. By induction, $R$ has the form $H'[\rho]$ for some term $H'\in\expr$ with free variables in $W$ only, and in case $\alpha=\rt$ this term is well-guarded. Moreover, $H^\dagger[\nu] \step\alpha H'[\nu]$. Since $\init{H^\dagger[\rho]} = \init{(H^\dagger[\nu]}$ by Lemma~\ref{lem:guarded}.\ref{init}, either $\alpha\in X$ or $\init{H^\dagger[\nu]} \cap (L\cup\{\tau\}) = \emptyset$. Consequently, $H[\nu] = \theta_L^U(H^\dagger[\nu])\step\alpha H'[\nu]$.

Let $H= \psi_X(H^\dagger)$, so that $H[\rho] = \psi_X(H^\dagger[\rho])$. Note that $H^\dagger$ is well-guarded and has free variables in $W$ only. The case $\alpha \in A\cup\{\tau\}$ is trivial, so assume $\alpha=\rt$. Then {$H^\dagger[\rho] \step\rt R^\dagger$} for some $R^\dagger$ such that $R=\theta_X(R^\dagger)$. Moreover, $H^\dagger[\rho] \cap (X\cup\{\tau\}) = \emptyset$. By induction, $R^\dagger$ has the form $H'[\rho]$ for some well-guarded term $H'\in\expr$ with free variables in $W$ only. Moreover, $H^\dagger[\nu] \step\rt H'[\nu]$. Thus $R=(\theta_X(H'))[\rho]$ and $\theta_X(H')$ is well-guarded and has free variables in $W$ only. Since $\init{H^\dagger[\rho]} = \init{(H^\dagger[\nu]}$ by Lemma~\ref{lem:guarded}.\ref{init}, $H^\dagger[\nu] \cap (X\cup\{\tau\}) = \emptyset$. Consequently, $H[\nu] = \psi_X(H^\dagger[\nu])\step\rt \theta_X(H'[\nu]) = (\theta_X(H'))[\nu]$.

Finally, let $H = \langle x|\equa \rangle$, so that $H[\rho] = \langle x|\equa[\rho^\dagger]\rangle$, where $\rho^\dagger \in \closed^{W {\setminus} V_\equa}$ is the restriction of $\rho$ to $W {\setminus} V_\equa$.
The transition $\langle\equa_x[\rho^\dagger]|\equa[\rho^\dagger]\rangle \step\alpha R$ is derivable through a subderivation of the one for $\langle x|\equa[\rho^\dagger]\rangle \step\alpha R$.
Moreover, $\langle\equa_x[\rho^\dagger]|\equa[\rho^\dagger]\rangle = \langle\equa_x|\equa\rangle[\rho]$.
So by induction, $R$ has the form $H'[\rho]$ for some term $H'\mathbin\in\expr$ with free variables in $W$
only, and $\langle\equa_x|\equa\rangle[\nu] \step\alpha H'[\nu]$.
Moreover, in case $\alpha\in\{\tau,\rt\}$, also $H'$ is well-guarded.
Since $\langle\equa_x|\equa\rangle[\nu] = \langle\equa_x[\nu^\dagger]|\equa[\nu^\dagger]\rangle$, it follows that 
$H[\nu] = \langle x|\equa\rangle[\nu]= \langle x|\equa[\nu^\dagger]\rangle\step\alpha H'[\nu]$. 
\end{proof}

\begin{corollary} \label{cor:guarded path}
    Let $H \in \expr$ be well-guarded and have free variables from $W \subseteq V$ only, and let $\rho,\nu \in \closed^W$. 
    \begin{itemize}
        \item If $H[\rho] \pathtau R \step{\alpha} S$ with $\alpha \in Act$ then there exists $H',H'' \in \expr$ with free variables in $W$ only such that $R = H'[\rho]$, $S = H'[\rho]$ and $H[\nu] \pathtau H'[\nu] \step{\alpha} H'[\nu]$.
        \item if $H[\rho] \pathtau R_1 \step{\rt} R_2 \pathtau R_3 \step{\rt} \dots \pathtau R_{2r-1} \step{\opt{\rt}} R_{2r}$ with $r>0$ then there exists $(H_i)_{i\in [1,2r]} \in \expr^{2r}$ with free variables in $W$ only such that $\forall i \in [1,2r],\; R_i = H_i[\rho]$ and $H[\rho] \pathtau H_1[\nu] \step{\rt} H_2[\nu] \pathtau H_3[\nu] \step{\rt} \dots \pathtau H_{2r-1}[\nu] \step{\opt{\rt}} H_{2r}[\nu]$.
\qed
    \end{itemize}
\end{corollary}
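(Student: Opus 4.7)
The plan is to derive this corollary by iterating Lemma~\ref{lem:guarded}.\ref{2} along the given path, exploiting the fact that $\tau$- and $\rt$-transitions preserve well-guardedness of the expression template. This preservation is exactly the ``moreover'' clause of Lemma~\ref{lem:guarded}.\ref{2}, and it is what makes the induction go through: after any $\tau$- or $\rt$-step lifted through the lemma, the new template is again well-guarded, so Lemma~\ref{lem:guarded}.\ref{2} can be applied once more.

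For the first bullet, I would proceed by induction on the length $n$ of the $\tau$-path $H[\rho] \pathtau R$. In the base case $n = 0$ we have $R = H[\rho]$, and since $H$ is itself well-guarded, a direct application of Lemma~\ref{lem:guarded}.\ref{2} to the transition $R \step{\alpha} S$ yields an expression $H''$ with free variables in $W$ such that $S = H''[\rho]$ and $H[\nu] \step{\alpha} H''[\nu]$; take $H' := H$. For the inductive step, write the path as $H[\rho] \steptau T \pathtau R \step{\alpha} S$; Lemma~\ref{lem:guarded}.\ref{2} applied to the first transition produces a well-guarded $G$ with $T = G[\rho]$ and $H[\nu] \steptau G[\nu]$; the induction hypothesis applied to $G$ and the strictly shorter path $G[\rho] \pathtau R \step{\alpha} S$ yields the required $H'$, $H''$ and a matching path in $G[\nu]$, which is then prefixed by $H[\nu] \steptau G[\nu]$.

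For the second bullet, the same inductive strategy is carried out on the total length of the time-out path (counting both $\tau$- and $\rt$-steps). After each prefix of the path, the state reached can be written as $G_i[\rho]$ for a well-guarded $G_i$; the next transition, being either $\tau$ or $\rt$, is lifted by Lemma~\ref{lem:guarded}.\ref{2} to a transition $G_i[\nu] \step{\alpha} G_{i+1}[\nu]$ with $G_{i+1}$ again well-guarded. Collecting the resulting templates as $H_1, \dots, H_{2r}$ gives the desired lifted path in $H[\nu]$ and the identifications $R_i = H_i[\rho]$.

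I do not expect any real obstacle. The only bookkeeping subtlety is that well-guardedness of each $H_i$ is needed as a loop invariant to feed into the next application of Lemma~\ref{lem:guarded}.\ref{2}, even though the conclusion of the corollary merely asks for free variables in $W$; this is guaranteed precisely because every transition along the time-out path is labelled $\tau$ or $\rt$, so the ``moreover'' clause of Lemma~\ref{lem:guarded}.\ref{2} applies at every step.
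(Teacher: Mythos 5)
Your proposal is correct and matches the paper's intent: the corollary is stated with an immediate \qed precisely because it follows by iterating Lemma~\ref{lem:guarded}.2 along the path, with the ``moreover'' clause guaranteeing that each intermediate template reached by a $\tau$- or $\rt$-step is again well-guarded and hence eligible for the next application. Your identification of well-guardedness as the loop invariant is exactly the point that makes the iteration sound.
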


\begin{proof}[Proof of Proposition \ref{prop:rsp}]
    It suffices to prove the proposition when $\rho, \nu \in \closed^{V_\equa}$ and only variables of $V_\equa$ can occur in the expressions $\equa_x$ for $x \in V_\equa$. Indeed, the general case requires to prove that, for all $\sigma: V \rightarrow \closed$, $\rho[\sigma] \bisimrtbr \nu[\sigma]$. Let $\hat{\sigma}: V\setminus V_\equa \rightarrow \closed$ be defined as $\forall x \in V \setminus V_\equa, \hat{\sigma}(x) = \sigma(x)$. Since $\rho \bisimrtbr \equa[\rho]$, $\rho[\sigma] \bisimrtbr \equa[\rho][\sigma] = \equa[\hat{\sigma}][\rho[\sigma]]$, therefore, proving the proposition with $\rho[\sigma]$, $\nu[\sigma]$ and $\equa[\hat{\sigma}]$ is sufficient.

    It also suffices to prove the proposition for the case that $\equa$ is manifestly well-guarded. Indeed, if $\equa$ is well-guarded, let $\equa'$ be the manifestly well-guarded specification into which $\equa$ can be converted. Since $\bisimrtbr$ is a lean congruence, a solution to $\equa$ up to $\bisimrtbr$ is a solution to $\equa'$ up to $\bisimrtbr$.
    
    Let $\equa$ be a manifestly well-guarded recursive specification with free variables from $V_\equa$ only, and $\rho, \nu \in \closed^{V_\equa}$ two of its solutions up to $\bisimrtbr$. We are going to show that the symmetric closure of 
    \begin{align*}
        \tbisim := \{(H[\equa[\rho]],H[\equa[\nu]]) \mid H \in \expr \mbox{ is without $\tau_I$ and with free variables from } V_\equa \mbox{ only}\}
    \end{align*}
    is a branching time-out bisimulation up to $\bisimrtbr$. Here $\equa[\rho] := \{x = \equa_x[\rho] \mid x \in V_\equa\}\in\closed^{V_\equa}$ is employed as a substitution. Let $P,Q \in \closed$ such that $P \tbisim Q$. Then there exists $H \in \expr$ with free variables from $V_\equa$ only such that $P = H[\equa[\rho]]$ and $Q = H[\equa[\nu]]$, the other case being symmetrical. Note that $H[\equa[\rho]] = H[\equa][\rho]$. Since $H$ and $\equa$ have free variables from $V_\equa$ only, so does $H[\equa]$. Moreover, since $\equa$ is manifestly well-guarded, $H[\equa]$ is well-guarded.
    \begin{enumerate}
        \item Let $H[\equa[\rho]] \pathtau P_1 \step{\alpha} P_2$. By Corollary \ref{cor:guarded path}, there exists $H_1, H_2 \in \expr$ with free variables from $V_\equa$ only such that $P_1 = H_1[\rho]$, $P_2 = H_2[\rho]$ and $H[\equa][\nu] \pathtau H_1[\nu] \step{\alpha} H_2[\nu]$. Furthermore, since $\bisimrtbr$ is a congruence and $\rho$ and $\nu$ are solutions of $\equa$ up to $\bisimrtbr$\,, $H_1[\rho] \bisimrtbr H_1[\equa[\rho]]$, $H_1[\nu] \bisimrtbr H_1[\equa[\nu]]$, $H_2[\rho] \bisimrtbr H_2[\equa[\rho]]$ and $H_2[\nu] \bisimrtbr H_2[\equa[\nu]]$, therefore, by definition of $\tbisim$, $H_1[\rho] \upto[\bisimrtbr] H_1[\nu]$ and $H_2[\rho] \upto[\bisimrtbr] H_2[\nu]$.
        \item Let $H[\equa[\rho]] \pathtau P_1 \step{\rt} P_2 \pathtau P_3 \step{\rt} ... \pathtau P_{2r-1} \step{\opt{\rt}} P_{2r}$ with $r>0$, such that $\forall i \in [0,r{-}1],\; \theta_X(P) \bisimtbr \theta_X(P_{2i}) \wedge P \bisimtbr P_{2i+1} \wedge \deadend{P_{2i+1}}{X}$. By Corollary \ref{cor:guarded path}, there exists $(H_i)_{i \in [1,2r]} \in \expr^{2r}$ with free variables from $V_\equa$ only such that $\forall i \in [1,2r],\; P_i = H_i[\rho]$ and $H[\equa][\nu] \pathtau H_1[\nu] \step{\rt} H_2[\nu] \pathtau H_3[\nu] \step{\rt} ... \pathtau H_{2r-1} \step{\opt{\rt}} H_{2r}[\nu]$. Since all $H_{2i+1}$ are well-guarded, by Lemma \ref{lem:guarded}, $\forall i \in [0,r{-}1],\; \deadend{H_{2i+1}[\nu]}{X}$. Furthermore, since $\bisimrtbr$ is a congruence and $\rho$ and $\nu$ are solutions of $\equa$ up to $\bisimrtbr$\,, for all $i \in [1,2r]$, $H_i[\rho] \bisimrtbr H_i[\equa[\rho]]$ and $H_i[\nu] \bisimrtbr H_i[\equa[\nu]]$; therefore, by definition of $\tbisim$, for $i \in [1,2r]$, $\theta_X(H_i[\rho]) \upto[\bisimrtbr] \theta_X(H_i[\nu])$ (notice that $\theta_X(H_i[\equa][\rho]) = \theta_X(H_i)[\equa][\rho]$). It follows that $\forall i \in [1,2r{-}1]\;\theta_X(P) \upto[\bisimtbr] \theta_X(H_i[\nu])$.
        \item Let $H[\equa[\rho]] \pathtau P_0 \nsteptau$. By Lemma \ref{lem:guarded}, there exists a well-guarded $H_1\in \expr$ with free variables from $V_\equa$ only such that $P_0 = H_0[\rho]$ and $H[\equa][\nu] \pathtau H_0[\nu]$. Since $H_0$ is well-guarded and $P_0 \nsteptau$, according to Lemma \ref{lem:guarded}.1, $H_0[\nu] \nsteptau$.
    \end{enumerate}

\noindent
    Next, we will prove that $\tbisim$ is a rooted \tb time-out bisimulation. Let $P,Q \in \closed$ such that $P \tbisim Q$. Then there exists $H \in \expr$ with free variables from $V_\equa$ only such that $P = H[\equa[\rho]]$ and $Q = H[\equa[\nu]]$, the other case being symmetrical. Note that $H[\equa[\rho]] = H[\equa][\rho]$. Since $H$ and $\equa$ have free variables from $V_\equa$ only, so does $H[\equa]$. Moreover, since $\equa$ is manifestly well-guarded, $H[\equa]$ is well-guarded.
    \begin{enumerate}
        \item Let $P \step{\alpha} P'$. By Lemma \ref{lem:guarded}, there exists $H' \in \expr$ with free variables from $V_\equa$ only such that $P' = H'[\rho]$ and $Q = H[\equa][\nu] \step{\alpha} H'[\nu]$. Furthermore, since $\bisimrtbr$ is a congruence and $\rho$ and $\nu$ are solutions of $\equa$ up to $\bisimrtbr\,$, $H'[\rho] \bisimrtbr H'[\equa[\rho]]$ and $H'[\nu] \bisimrtbr H'[\equa[\nu]]$. Therefore, by definition of $\tbisim$, $H'[\rho] \upto[\bisimrtbr] H'[\nu]$. But, $\tbisim$ is a \tb time-out bisimulation up to $\bisimrtbr\,$, thus, by Proposition~\ref{prop:up to b}, $H'[\rho] \bisimrtbr H'[\nu]$.
        \item Let $\deadend{P}{X}$ and $P \step{\rt} P'$. By Lemma \ref{lem:guarded}, there exists $H' \in \expr$ with free variables from $V_\equa$ only such that $P' = H'[\rho]$ and $Q=H[\equa][\nu] \step{\rt} H'[\nu]$. Exactly as above, not even using $\deadend{P}{X}$, this implies $H'[\rho] \bisimrtbr H'[\nu]$. Thus, since $\bisimrtbr$ is a congruence, $\theta_X(H'[\rho]) \bisimrtbr \theta_X(H'[\nu])$.
    \end{enumerate}

\noindent
    By considering $H = x$ with $x \in V_\equa$, this yields $\equa_x[\rho] \bisimrtbr \equa_x[\nu]$ and so $\rho(x) \bisimrtbr \equa_x[\rho] \bisimrtbr \equa_x[\nu] \bisimrtbr \nu(x)$. Consequently, $\rho \bisimrtbr \nu$.
\end{proof}

\section{Proof of Lemma \ref{lem:independent}}\label{app:contraintuitive}

\begin{proof}[Proof of Lemma \ref{lem:independent}]
    Suppose that there exists $X \subseteq A$ such that $\deadend{P}{X}$ and there exists
    $P \step{\rt} P'$ such that $\theta_X(P) \bisimtbr \theta_X(P')$, yet there exists $Y \subseteq A$ such that $\deadend{P}{Y}$ and $\neg(\theta_Y(P) \bisimtbr \theta_Y(P'))$.

    Since $P \nsteptau$ and $\theta_X(P) \bisimtbr \theta_X(P')$, By Definition~\ref{def:time-out bisim}.3 there exists a path $\theta_X(P') \pathtau P^\dag \nsteptau$, such that $\theta_X(P') \bisimtbr P^\dag$ by Definition~\ref{def:time-out bisim}.1. By the semantics, there exists a path $P' \pathtau P'' \nsteptau$ such that $P^\dag = \theta_X(P'')$. Proposition~\ref{prop:time-out bisim}.2 yields $P\bisimtbr[X] P''$, so $P\bisimtbr P''$ by Lemma~\ref{lem:obvious}.3 and $\init{P} = \init{P''}$ by Lemma~\ref{lem:obvious}.2.

    We are going to show that if $P \bisimtbr Q$ and $\init{P} = \init{Q}$ then we can find a nonempty path $Q (\pathtau\step{\rt})^*\pathtau Q'$ such that $P \bisimtbr Q'$ and $\init{P} = \init{Q'}$.

    Since $P \bisimtbr Q$, $\deadend{P}{Y}$ and $P \step{\rt} P'$, there exists a path $Q \pathtau Q_1 \step{\rt} Q_2 \pathtau Q_3 \step{\rt} ... \pathtau Q_{2r-1} \step{\opt{\rt}} Q_{2r}$ with $r>0$, such that $\forall i \in [1,r{-}1], \theta_Y(P) \bisimtbr \theta_Y(Q_{2i})\linebreak[2] \wedge\linebreak[2]\deadend{Q_{2i+1}}{Y}$ and $\theta_Y(P') \bisimtbr \theta_Y(Q_{2r})$. Since $\neg(\theta_Y(P) \bisimtbr \theta_Y(P'))$, $Q \ne Q_{2r}$. 

    Since $\theta_Y(P') \pathtau \theta_Y(P'') \nsteptau$ and $\theta_Y(P') \bisimtbr \theta_Y(Q_{2r})$, there exists a path $\theta_Y(Q_{2r}) \pathtau Q^\dag \nsteptau$ such that $Q^\dag \bisimtbr \theta_Y(P'')$. According to the semantics, there is a path $Q_{2r} \pathtau Q' \nsteptau$ such that $Q^\dag = \theta_Y(Q')$. Since $\deadend{P''}{Y}$, $Q' \nsteptau$ and $\theta_Y(P'') \bisimtbr \theta_Y(Q')$, Proposition~\ref{prop:time-out bisim}.2 and Lemma~\ref{lem:obvious} yields $Q' \bisimtbr P'' \bisimtbr P$ and $\init{Q'} = \init{P''} = \init{P}$.

    As a result, there exists an infinite $\tau/t$-path starting in $P''$, but that contradicts the strong guardedness of $P$.
\end{proof}

\section{Soundness of the Reactive Approximation Axiom} \label{app:RA}

\begin{lemma}\label{lem:theta twice}
    $\forall P \in \closed, \theta_X(P) \bisim \theta_X(\theta_X(P))$
\end{lemma}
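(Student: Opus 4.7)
The plan is to exhibit a strong bisimulation $\R$ directly witnessing $\theta_X(P) \bisim \theta_X(\theta_X(P))$. The natural candidate is the symmetric closure of
\[
\R := \{(\theta_X(P),\, \theta_X(\theta_X(P))) \mid P \in \closed\} \cup \{(R,R) \mid R \in \closed\}\;.
\]
The identity pairs are needed because rules 2 and 3 of the operational semantics of $\theta_L^U$ in Figure~\ref{fig:ccsp semantics} strip off the $\theta$ wrapper after a non-$\tau$ transition, so after matching such a transition both sides become the same residual.

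Before the main case analysis I would establish the following preliminary observation, which is really the heart of the argument:
\[
\deadend{\theta_X(P)}{X} \;\Longleftrightarrow\; \deadend{P}{X}\;.
\]
The direction $(\Leftarrow)$ is immediate by inspection of the three rules for $\theta_L^U$: if $P$ has no $\tau$ or $a$-transitions for $a\in X$, then neither does $\theta_X(P)$. The direction $(\Rightarrow)$ follows because rule~1 translates any $\tau$ of $P$ into a $\tau$ of $\theta_X(P)$, and rule~2 translates any $a\in X$ out of $P$ into an $a$ out of $\theta_X(P)$.

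With this in hand, the verification that $\R$ is a bisimulation reduces to routine case analysis on the transition rules. For $\theta_X(P) \step\alpha Q$ I would split on which of the three rules fired: rule~1 yields $\alpha=\tau$ and $Q = \theta_X(P')$, matched by $\theta_X(\theta_X(P)) \steptau \theta_X(\theta_X(P'))$ landing back in $\R$; rule~2 yields $\alpha=a\in X$ and $Q=P'$, matched by applying rule~2 at the outer level; rule~3 uses $\deadend{P}{X}$, which via the observation above gives $\deadend{\theta_X(P)}{X}$, so rule~3 fires at the outer level and produces the same $P'$, landing in the identity part of $\R$. For the converse direction, $\theta_X(\theta_X(P)) \step\alpha Q'$ is analysed the same way; the only point worth care is that rule~3 applied at the outer level forces $\deadend{\theta_X(P)}{X}$, hence $\deadend{P}{X}$, so the inner transition $\theta_X(P)\step\alpha R$ can itself only have come from rule~3, giving $P \step\alpha R = Q'$ and the required match by rule~3 applied once to $P$.

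I do not anticipate a serious obstacle; the subtlety is entirely concentrated in the equivalence $\deadend{\theta_X(P)}{X} \Leftrightarrow \deadend{P}{X}$, which is what makes rule~3 commute with the doubling of $\theta_X$. Once that is recorded, every transition is matched on the nose, so the argument is cleaner than a mere $\bisimtbr$-style equivalence would require.
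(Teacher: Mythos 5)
Your proof is correct and follows the same route the paper intends: the paper dismisses the lemma as trivial from the semantics of $\theta_X$, and your explicit strong bisimulation (the pairs $(\theta_X(P),\theta_X(\theta_X(P)))$ together with identity pairs, hinging on $\deadend{\theta_X(P)}{X}\Leftrightarrow\deadend{P}{X}$) is exactly the routine verification being alluded to. No gaps.
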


\begin{proof}
    Trivial when considering the semantics of $\theta_X$.
\end{proof}

\begin{proof}[Proof of Proposition \ref{prop:soundness}]
    We show that $\tbisim := \{(P,Q),(Q,P) \mid \forall X \subseteq A, \; \psi_X(P) \bisimrtbr \psi_X(Q)\}$ is a rooted branching time-out bisimulation. Let $P,Q \in \closed$ such that $P \tbisim Q$. Thus, $\forall X \subseteq A, \psi_X(P) \bisimrtbr \psi_X(Q)$.
    \begin{enumerate}
        \item If $P \step{\alpha} P'$ with $\alpha \in A_\tau$ then, by the semantics, $\psi_A(P) \step{\alpha} P'$. Since $\psi_A(P) \bisimrtbr \psi_A(Q)$, there exists a transition $\psi_A(Q) \step{\alpha} Q'$ such that $P' \bisimtbr Q'$. By the semantics, $Q \step{\alpha} Q'$.
        \item If $\deadend{P}{X}$ and $P \step{\rt} P'$ then, by the semantics, $\psi_X(P) \step{\rt} \theta_X(P')$. Since $\psi_X(P) \bisimrtbr \psi_X(Q)$, there exists a transition $\psi_X(Q) \step{\rt} Q^\ddag$ with $\theta_X(\theta_X(P')) \bisimtbr \theta_X(Q^\ddag)$. By the semantics, $Q^\ddag = \theta_X(Q')$ and $Q \step{\rt} Q'$. By Lemma~\ref{lem:theta twice}, $\theta_X(P') \bisim \theta_X(\theta_X(P')) \bisimtbr \theta_X(\theta_X(Q')) \bisim \theta_X(Q')$.
    \popQED
    \end{enumerate}
\end{proof}

\section{Proofs of Completeness for Finite Processes} \label{app:completeness finite}

\begin{definition}\rm
Call a process $P$ \emph{brb-stable} if (a) there is no process $P^\dagger$ with $P\steptau P^\dagger$ and $P \bisimtbr P^\dagger$, and (b) there is no set $X\subseteq A$ and process $P^\ddagger$ with
$\deadend{P}{X}$, $P \step\rt P^\ddagger$ and $\theta_X(P) \bisimtbr \theta_X(P^\ddagger)$.
\end{definition}

\begin{lemma}\label{lem:brb-stable}
If $P$ and $Q$ are brb-stable and $P \bisimtbr Q$ then $P \bisimrtbr Q$.
\end{lemma}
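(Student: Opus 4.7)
The plan is to exhibit a rooted \tb time-out bisimulation relating $P$ and $Q$ and then invoke Proposition~\ref{prop:time-out bisim}.3. The natural candidate is
\[
{\tbisim} := \{(P',Q') \in \closed \times \closed \mid P' \bisimtbr Q' \text{ and } P', Q' \text{ are brb-stable}\},
\]
which is manifestly symmetric and contains $(P,Q)$; so the work reduces to verifying the two clauses of Definition~\ref{def:rooted time-out bisim} for an arbitrary pair $(P,Q) \in {\tbisim}$. In each case the strategy is the same: take the path produced by a witnessing \tb time-out bisimulation (Proposition~\ref{prop:time-out bisim}.1) and collapse it to a single transition, using clause~(a) of brb-stability to kill any $\tau$-prefix and clauses~(a)/(b) to kill a spurious tail.

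For $P \step{\alpha} P'$ with $\alpha \in A_\tau$, Clause~1 of Definition~\ref{def:time-out bisim} supplies $Q \pathtau Q_1 \step{\opt{\alpha}} Q_2$ with $P \bisimtbr Q_1$ and $P' \bisimtbr Q_2$. I would first rule out $Q \ne Q_1$: any first step $Q \steptau Q^\ast$ would lie on the path $Q \pathtau Q^\ast \pathtau Q_1$, and the Stuttering Lemma~\ref{lem:stuttering} combined with $P \bisimtbr Q$ and $P \bisimtbr Q_1$ would give $Q^\ast \bisimtbr P \bisimtbr Q$, contradicting clause~(a) of brb-stability of $Q$. Next, if $\alpha = \tau$ and the optional step is vacuous ($Q = Q_2$), then $P \bisimtbr Q = Q_2 \bisimtbr P'$ would contradict clause~(a) of brb-stability of $P$ via $P \steptau P'$; for $\alpha \in A$ this last subcase does not arise. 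What survives is the desired strict match $Q \step{\alpha} Q_2$ with $P' \bisimtbr Q_2$.

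The main obstacle is the time-out clause. Given $\deadend{P}{X}$ and $P \step{\rt} P'$, Clause~2 of Definition~\ref{def:time-out bisim} yields a path
\[
Q = Q_0 \pathtau Q_1 \step{\rt} Q_2 \pathtau Q_3 \step{\rt} \cdots \pathtau Q_{2r-1} \step{\opt{\rt}} Q_{2r}
\]
with $r>0$, $Q_1 \nsteptau$, $\theta_X(P) \bisimtbr \theta_X(Q_{2i})$ and $\deadend{Q_{2i+1}}{X}$ for $1 \le i \le r-1$, and $\theta_X(P') \bisimtbr \theta_X(Q_{2r})$. Since $\deadend{P}{X}$ yields $P \nsteptau$, the Stuttering argument again forces $Q = Q_1$, and Lemma~\ref{lem:obvious}.2 then delivers $\deadend{Q}{X}$. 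The delicate remaining step is to exclude both $r \ge 2$ and $r = 1$ with the final $\opt{\rt}$ vacuous. For $r \ge 2$, I would lift $\theta_X(P) \bisimtbr \theta_X(Q_2)$ (the $i=1$ instance) to $\theta_X(Q) \bisimtbr \theta_X(Q_2)$ using congruence of $\bisimtbr$ for $\theta_X$ (Proposition~\ref{prop:stability}) applied to $P \bisimtbr Q$; this, together with $\deadend{Q}{X}$ and $Q \step{\rt} Q_2$, contradicts clause~(b) of brb-stability of $Q$. For $r = 1$ with $Q = Q_{2r}$, the last bisimilarity reads $\theta_X(P') \bisimtbr \theta_X(Q) \bisimtbr \theta_X(P)$, contradicting clause~(b) of brb-stability of $P$. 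The only surviving possibility is precisely the strict match $Q \step{\rt} Q_2$ with $\theta_X(P') \bisimtbr \theta_X(Q_2)$ required by Definition~\ref{def:rooted time-out bisim}.
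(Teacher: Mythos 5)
Your proof is correct and takes essentially the same route as the paper's: it verifies the two clauses of Definition~\ref{def:rooted time-out bisim} for the pair, collapsing the matching paths to single transitions via brb-stability (stuttering to kill the $\tau$-prefix, congruence of $\theta_X$ together with the $i=1$ instance to exclude $r\ge 2$, and the vacuous final step refuted against the stability of $P$). The only compressed point is that in the time-out clause the stuttering argument needs $P \bisimtbr Q_1$ before you can conclude $Q = Q_1$; this is exactly what the paper derives via Lemma~\ref{lem:obvious}.1 and~\ref{lem:obvious}.3 (and is recorded in the note after Definition~\ref{def:time-out bisim}), so with that spelled out your argument coincides with the paper's.
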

\begin{proof}
Assume that $P$ and $Q$ are brb-stable and $P \bisimtbr Q$.
If $P \step{\alpha} P'$ with $\alpha \in A_\tau$, then there is a path $Q \pathtau Q_1 \step{\opt{\alpha}} Q_2$ with $P \bisimtbr Q_1$ and $P' \bisimtbr Q_2$. By symmetry and transitivity of $\bisimtbr$ we have $Q_1 \bisimtbr Q$, so by the brb-stability of $Q$ it follows that $Q_1=Q$. Moreover, if $\alpha=\tau$ and $Q_2=Q_1$ then $P'\bisimtbr P$, contradicting the brb-stability of $P$. Thus $Q \step\alpha Q_2$. This argument also yields that $\init{P} = \init{Q}$.

Furthermore, if $\init{P}\cap(X\cup\{\tau\}) = \emptyset$ and $P \step{\rt} P'$ then there is a path $Q \pathtau Q_1 \step{\rt} Q_2 \pathtau Q_3 \step{\rt} ... \pathtau Q_{2r{-}1} \step{\opt{\rt}} Q_{2r}$ with $r>0$, such that $Q_1\nsteptau$, $\forall i \in [1,r{-}1],\; \theta_X(P) \bisimtbr \theta_X(Q_{2i})\linebreak[2] \wedge\linebreak[2] \deadend{Q_{2i+1}}{X}$ and $\theta_X(P') \bisimtbr \theta_X(Q_{2r})$. By Lemma~\ref{lem:obvious}.1, Lemma~\ref{lem:obvious}.3 and the transitivity of $\bisimtbr$\,, we have $P \bisimtbr[X] Q_1$, $P \bisimtbr Q_1$ and $Q \bisimtbr Q_1$, respectively, so the brb-stability of $Q$ yields $Q_1=Q$. If $r>1$ we would have $\theta_X(Q_2) \bisimtbr \theta_X(P) \bisimtbr \theta_X(Q)$, contradicting the brb-stability of $Q$. Thus $r=1$ and \raisebox{0pt}[0pt]{$Q \step{\opt{\rt}} Q_2$}. If $Q_2=Q$ we would obtain $\theta_X(P') \bisimtbr \theta_X(Q_{2})\bisimtbr \theta_X(P)$, contradicting the  brb-stability of $P$. Hence $Q \step{\rt} Q_2$. So indeed $P \bisimrtbr Q$.
\end{proof}

\begin{proof}[Proof of Proposition \ref{prop:collapse}]
    We define the \textit{length} of a path $P_0 \step{\alpha_1} P_1 \step{\alpha_2} ... \step{\alpha_n} P_n$ to be $n$ and the \textit{depth} of a process $P$, denoted $d(P)$, to be the length of the longest path starting from $P$. It is well defined because $P$ is a recursion-free $\ccsp$ process. Note that $d(\theta_X(P))\leq d(P)$.

    $(\bisimtbr)$ We will proceed by induction on $\max(d(P), d(Q))$. Let $n \mathbin\in\nat$ and suppose that the property holds for any recursion-free $\ccsp$ processes $P, Q$ such that $\max(d(P), d(Q)) < n$. Let $P,Q$ be two recursion-free $\ccsp$ processes such that $\max(d(P), d(Q)) = n$ and $P \bisimtbr Q$.
    
    Since $P$ is recursion-free, using Lemma \ref{lem:obvious}.4, there exists a path $P \pathtau P_1 \step{\rt} P_2 \pathtau P_3 \step{\rt} ... \pathtau P_{2r-1} \step{\rt} P_{2r} \pathtau P_0$ with $r \in \nat$, such that $P \bisimtbr P_1$ and $\forall i \in [1,r], \exists X_i \subseteq A, \deadend{P_{2i-1}}{X_i} \wedge \theta_{X_i}(P) \bisimtbr \theta_X(P_{2r})$, $P \bisimtbr P_0$ and $P_0$ is brb-stable. We are going to show that, for all $\alpha \in Act$, $Ax_r \vdash \alpha.\hat{P} = \alpha.\hat{P}_0$. If $P$ is brb-stable then $P = P_1$ and $r=0$ so $P = P_0$ and this is trivial. Thus, suppose that $P$ is not brb-stable. Then, as $P_0$ is brb-stable, $P \ne P_0$ and so $d(P_0) < d(P)$.

    Let $I := \{(\alpha,P') \mid P \step{\alpha} P' \wedge \alpha \in A_\tau \wedge (\alpha \ne \tau \vee P \not\bisimtbr P')\}$, listing the outgoing transitions of $P$ not labelled by $\rt$ and not elidable w.r.t.\ $\bisimtbr\,$. Let $(\alpha,P') \in I$. Since $P \bisimtbr P_0$, there exists a path $P_0 \pathtau P_1 \step{\opt{\alpha}} P_2$ such that $P \bisimtbr P_1$ and $P' \bisimtbr P_2$. Since $P_0$ is brb-stable and $P_1 \bisimtbr P \bisimtbr P_0$, $P_0 = P_1$ and $P_0 \step{\opt{\alpha}} P_2$. If $\alpha = \tau$ then $P_0 \bisimtbr P \not\bisimtbr P' \bisimtbr P_2$ so $P_0 \ne P_2$ and $P_0 \step{\alpha} P_2$. Since $\max(d(P_2),d(P')) < d(P)$, by induction, $Ax_r \vdash \alpha.\hat{P}' = \alpha.\hat{P}_2$, so by Lemma~\ref{lem:head-normal form} $Ax_r \vdash \alpha.P' = \alpha.P_2$. As a result, $Ax_r \vdash \hat{P}_0 = \sum_{(\alpha,P')\in I}\alpha.{P'} + \hat{P}_0$.

    Let $J := \{(\tau,P') \mid P \steptau P' \wedge P \bisimtbr P'\}$, listing the outgoing $\tau$-transitions of $P$ elidable w.r.t.\ $\bisimtbr\,$. Let $(\tau,P') \in J$. Since $P' \bisimtbr P \bisimtbr P_0$ and $\max(d(P'),d(P_0)) < d(P)$, by induction, $Ax_r \vdash \tau.\hat{P'} = \tau.\hat{P_0}$, so by Lemma~\ref{lem:head-normal form} $Ax_r \vdash \tau.P' = \tau.\hat{P_0}$.

    Suppose that $P \steptau$. Since $P$ is not brb-stable, there exists a transition $P \steptau P'$ such that $P \bisimtbr P'$ (i.e.\ $J \ne \emptyset$). Now the following equality can be derived from $Ax_r$, for all $\beta \in Act$. Here the first step applies $\hyperlink{Lt}{\mbox{\bf L}\tau}$.
    \begin{align*}
        Ax_r \vdash \beta.\hat{P} = &~ \beta.(\sum_{\{(\alpha,P') \mid P\step{\alpha} P' \wedge \alpha \in A_\tau\}}\alpha.{P'}) = \beta.(\sum_{(\tau,P') \in J}\tau.{P}' + \sum_{(\alpha,P') \in I}\alpha.{P}') \\
        = &~ \beta.(\tau.\hat{P}_0 + \sum_{(\alpha,P') \in I}\alpha.{P'}) = \beta.(\tau.(\hat{P}_0 + \sum_{(\alpha,P') \in I}\alpha.{P'}) + \sum_{(\alpha,P') \in I}\alpha.{P'}) \\
        = &~ \beta.(\hat{P}_0 + \sum_{(\alpha,P') \in I}\alpha.{P'}) = \beta.\hat{P}_0
    \end{align*}

    Now, suppose that $P \nsteptau$. Then $J = \emptyset$ and $I = \{(\alpha,P') \mid P \step{\alpha} P' \wedge \alpha \in A\}$. Since $P$ is not brb-stable and $P \nsteptau$, there exists $X \subseteq A$ and $P \step{\rt} P'$ such that $\deadend{P}{X}$ and $\theta_X(P) \bisimtbr \theta_X(P')$ $(\star)$. Moreover, since $P_0$ is brb-stable, $P \nsteptau$ and $P \bisimtbr P_0$, $P_0 \nsteptau$. Let $I_0 := \{(\alpha,P'_0) \mid P_0 \step{\alpha} P'_0 \wedge \alpha \in A_\tau\}$. Since $P\nsteptau$, for all $(\alpha, P'_0) \in I_0$, there exists $P \step{\alpha} P'$ with $P' \bisimtbr P'_0$, thus, $\init{P} = \init{P_0}$ and, by induction, $Ax_r \vdash \sum_{(\alpha,P') \in I}\alpha.\hat{P}' = \sum_{(\alpha,P'_0) \in I_0}\alpha.\hat{P}'_0$.

    Let $H := \{(\rt,P') \mid \forall X \subseteq A, (\deadend{P}{X} \Rightarrow \theta_X(P) \bisimtbr \theta_X(P'))\}$, listing the outgoing  time-outs of $P$ that can be elided. Note that $(\star)$ implies $H \ne \emptyset$. Let $(\rt,P') \in H$. Since $P \bisimtbr P_0$, for all $X \subseteq A$, if $\deadend{P}{X}$ then $\theta_X(P') \bisimtbr \theta_X(P_0)$ and $\max(d(\theta_X(P')),d(\theta_X(P_0))) \le \max(d(P'),d(P_0)) < d(P)$. Therefore, by induction, for all $X \subseteq A$, if $\deadend{P}{X}$ then $Ax_r \vdash \rt.\widehat{\theta_X(P')} = \rt.\widehat{\theta_X(P_0)}$. As a result, using the reactive approximation axiom (RAA), $Ax_r \vdash \sum_{(\alpha,P') \in I}\alpha.\hat{P}' + \sum_{(\rt,P') \in H}\rt.\hat{P}' = \sum_{(\alpha,P'_0) \in I_0}\alpha.\hat{P}'_0 + \rt.\hat{P}_0$, so with Lemma~\ref{lem:head-normal form}
    \begin{equation}\label{C}
      Ax_r \vdash \sum_{(\alpha,P') \in I}\alpha.{P}' + \sum_{(\rt,P') \in H}\rt.{P}' = \sum_{(\alpha,P'_0) \in I_0}\alpha.{P}' + \rt.\hat{P}_0.
    \end{equation}
    
    Let $K := \{(\rt,P') \mid \forall X \subseteq A, (\deadend{P}{X} \Rightarrow \theta_X(P) \not\bisimtbr \theta_X(P'))\}$, listing the outgoing  time-outs of $P$ that cannot be elided. Let $(\rt,P') \in K$. Since $P \bisimtbr P_0$, for all $X \subseteq A$, if $\deadend{P}{X}$ then there exists a path $P_0 \pathtau P^X_1 \step{\rt} P^X_2 \pathtau P_3^X \step{\rt} ... \pathtau P_{2r-1}^X \step{\opt{\rt}} P^X_{2r}$ with $r>0$, such that $Q_1\nsteptau$, $\forall i \in [0,r{-}1],\, \theta_X(P_{2i}^X) \bisimtbr \theta_X(P) \linebreak[2]\wedge\linebreak[2] \deadend{P_{2i+1}}{X}$ and $\theta_X(P_{2r}^X) \bisimtbr \theta_X(P')$. Since $P_0$ is brb-stable, $P_0 \step{\opt{\rt}} P^X_{2r}$ and, since $\theta_X(P) \not\bisimtbr \theta_X(P')$, $P_0 \step{\rt} P^X_{2r}$ and $\max(d(\theta_X(P')),d(\theta_X(P_{2r}^X)))<n$. As a result, by induction, for all $X \subseteq A$, if $\deadend{P}{X}$ then there exists a transition $P_0 \step{\rt} P^X$ such that $Ax_r \vdash \rt.\widehat{\theta_X(P')} = \rt.\widehat{\theta_X(P^X)}$. Therefore, using RAA and Lemma~\ref{lem:head-normal form},
    \begin{equation}\label{D}
      Ax_r \vdash \hat{P}_0 = \hat{P}_0 + \sum_{(\rt,P') \in K}\rt.{P}'.
    \end{equation}

    According to Lemma \ref{lem:independent}, since $P$ is recursion-free and thus strongly guarded, for all $(\rt,P')$ such that $P \step{\rt} P'$, either, for all $X \subseteq A$ with $\deadend{P}{X}$ we have $\theta_X(P) \bisimtbr \theta_X(P')$; or, for all $X \subseteq A$ with $\deadend{P}{X}$ we have $\theta_X(P) \,\not\!\bisimtbr \theta_X(P')$. As a result,
    \begin{equation}\label{A}
      \hat{P} = \sum_{(\alpha,P') \in I}\alpha.P' + \sum_{(\rt,P') \in H}\rt.P' + \sum_{(t,P') \in K}\rt.P'.
    \end{equation}
    Let $\alpha \in Act$.
    Then, using (\ref{A}), (\ref{C}) and (\ref{D}), respectively, writing $R$ for $\sum_{(\alpha,P'_0) \in I_0}\alpha.{P}'_0 + \sum_{(\rt,P') \in K}\rt.P'$ and applying the $\rt$-branching axiom,
    \begin{align*}
        Ax_r \vdash \alpha.\hat{P} = &~ \alpha.(\sum_{(\alpha,P') \in I}\alpha.P' + \sum_{(\rt,P') \in H}\rt.P' + \sum_{(\rt,P') \in K}\rt.P') \\
        = &~ \alpha.(\sum_{(\alpha,P'_0) \in I_0}\alpha.{P}'_0 + \rt.\hat{P}_0 + \sum_{(\rt,P') \in K}\rt.P') \\
        = &~ \alpha.(\sum_{(\alpha,P'_0) \in I_0}\alpha.{P}'_0 + \rt.(\hat{P}_0 + \sum_{(\rt,P') \in K}\rt.P') + \sum_{(\rt,P') \in K}\rt.P') \\
        = &~ \alpha.(R+ \rt.(\sum_{(\alpha,P'_0) \in I_0}\alpha.\hat{P}'_0 + \sum_{\{P''_0\mid P_0 \step\rt P''_0\}}\rt.P''_0 + \sum_{(\rt,P') \in K}\rt.P')) \\
        = &~ \alpha.(R+ \rt.(R + \sum_{\{P''_0\mid P_0 \step\rt P''_0\}}\rt.P''_0)) = \alpha.(R + \sum_{\{P''_0\mid P_0 \step\rt P''_0\}}\rt.P''_0) \\
        = &~ \alpha.(\hat{P}_0 + \sum_{(\rt,P') \in K}\rt.P') = \alpha.\hat{P}_0
    \end{align*}

    As a result, in any case, for all $\alpha \in Act$, $Ax_r \vdash \alpha.\hat{P} = \alpha.\hat{P}_0$.
    
    Likewise, since $Q$ is recursion-free, a similar brb-stable $Q_0$ can be defined. By the same reasoning, it can be proved that, for all $\alpha \in Act$, $Ax_r \vdash \alpha.\hat{Q} = \alpha.\hat{Q}_0$. Since $P_0$ and $Q_0$ are brb-stable and $P_0 \bisimtbr P \bisimtbr Q \bisimtbr Q_0$, $P_0 \bisimrtbr Q_0$ according to Lemma \ref{lem:brb-stable}. 
    
    To end the proof, it suffices to show that, for all $\alpha \in Act$, $Ax_r \vdash \alpha.\hat{P}_0 = \alpha.\hat{Q}_0$, but \hypertarget{here}{we are going to prove the stronger statement $Ax_r \vdash \hat{P}_0 = \hat{Q}_0$}. Using RAA, it suffices to prove that, for all $X \subseteq A$, $Ax_r \vdash \psi_X(P_0) = \psi_X(Q_0)$.

    Let $(\alpha,P_0') \in I_0$. Since $P_0 \bisimrtbr Q_0$, there exists a transition $Q_0 \step{\alpha} Q_0'$ such that $P_0' \bisimtbr Q_0'$. By induction, $Ax_r \vdash \alpha.\hat{P}_0' = \alpha.\hat{Q}_0'$.
    
    Let $X \subseteq A$ and $(\rt,P_0')$ such that $\deadend{P_0}{X}$ and $P_0 \step{\rt} P_0'$. Since $P_0 \bisimrtbr Q_0$, there exists a transition $Q_0 \step{\rt} Q_0'$ such that $\theta_X(P_0') \bisimtbr \theta_X(Q_0')$. By induction, $Ax_r \vdash \rt.\widehat{\theta_X(P_0')} = \rt.\widehat{\theta_X(Q_0')}$.

    Let $X \subseteq A$. If $\init{P_0}\cap(X\cup\{\tau\}) \ne \emptyset$ then $\init{Q_0}\cap(X\cup\{\tau\}) \ne \emptyset$ and, using Lemma~\ref{lem:head-normal form},
    \begin{align*}
        Ax_r \vdash \psi_X(Q_0) = & ~\sum_{\{(\alpha,Q_0') \mid Q_0 \step{\alpha} Q_0' \wedge \alpha \ne \rt\}}\alpha.Q_0' \\
        = & ~\sum_{\{(\alpha,Q_0') \mid Q_0 \step{\alpha} Q_0' \wedge \alpha \ne\rt\}}\alpha.Q_0' + \sum_{(\alpha,P_0') \in I_0}\alpha.P_0' = ~\psi_X(P_0 + Q_0)
    \end{align*}
    If $\deadend{P_0}{X}$ then $\init{Q_0}\cap(X\cup\{\tau\}) = \emptyset$ and
    \begin{align*}
        Ax_r \vdash \psi_X(Q_0) = &~ \sum_{\{(\alpha,Q_0') \mid Q_0 \step{\alpha} Q_0' \wedge \alpha \ne \rt\}}\alpha.Q_0' + \sum_{\{(t,Q_0') \mid Q_0 \step{\rt} Q_0'\}}\rt.\theta_X(Q_0') \\
        = &~ \sum_{\{(\alpha,Q_0') \mid Q_0 \step{\alpha} Q_0' \wedge \alpha \ne\rt\}}\alpha.Q_0' + \sum_{(\alpha,P_0') \in I_0}\alpha.P_0' + \sum_{\{(\rt,Q_0') \mid Q_0 \step{\rt} Q_0'\}}\rt.\theta_X(Q_0') \\
        & + \sum_{\{(\rt,P_0') \mid P_0 \step{\rt} P_0'\}}\rt.\theta_X(P_0') = \psi_X(P_0 + Q_0)
    \end{align*}
    As a result, for all $X \subseteq A$, $Ax_r \vdash \psi_X(Q_0) = \psi_X(P_0+Q_0)$, and so,
    $Ax_r \vdash Q_0 = P_0+Q_0$. Symmetrically, $Ax_r \vdash P_0 = P_0+Q_0$. Therefore, $Ax_r \vdash P_0=Q_0$.

\bigskip

    $(\bisimtb)$ We will proceed by induction on $\max(d(P), d(Q))$. Let $n \mathbin\in\nat$ and suppose that the property holds for any recursion-free $\ccsp$ processes $P, Q$ such that $\max(d(P), d(Q)) < n$. Let $P,Q$ be two recursion-free $\ccsp$ processes such that $\max(d(P), d(Q)) = n$ and $P \bisimtb Q$.

    Since $P$ is recursion-free, there exists a path $P \pathtau P_1 \step{\rt} P_2 \pathtau P_3 \step{\rt} ... \pathtau P_{2r-1} \step{\rt} P_{2r} \pathtau P_0$ with $r \in \nat$ such that $\forall i \in [0,2r],\, P \bisimtb P_i$ and, for all $P_0 \step{\tau/\rt} P^\dag$, $\neg(P_0 \bisimtb P^\dag)$. We are going to show that, for all $\alpha \in Act$, $Ax \vdash \alpha.\hat{P} = \alpha.\hat{P_0}$. If $P = P_0$ then it is trivial. Thus, suppose $P_0 \ne P$. Then $d(P_0) < d(P)$ and there exists $P \step{\tau/\rt} P'$ with $P \bisimtb P'$.

    Let $J := \{(\tau,P') \mid P \steptau P' \wedge P \bisimtb P'\}$, listing the outgoing $\tau$-transitions of $P$ that can be elided w.r.t.\ $\bisimtb\,$. Let $(\tau,P') \in J$. Since $P' \bisimtb P \bisimtb P_0$ and $\max(d(P'),d(P_0)) < d(P)$, by induction, $Ax \vdash \tau.\hat{P'} = \tau.\hat{P}_0$.

    Let $K := \{(\rt,P') \mid P \step{\rt} P' \wedge P \bisimtb P'\}$, listing the outgoing time-outs of $P$ that can be elided w.r.t.\ $\bisimtb\,$. Let $(\rt,P') \in J$. Since $P' \bisimtb P \bisimtb P_0$ and $\max(d(P'),d(P_0)) < d(P)$, by induction, $Ax \vdash \rt.\hat{P'} = \rt.\hat{P}_0$.
    
    Let $I := \{(\alpha,P') \mid P \step{\alpha} P'\} \setminus (J \cup K)$, listing the outgoing transitions of $P$ that cannot be elided. Let $(\alpha,P') \in I$. If $\alpha \in A_\tau$ then, since $P \bisimtb P_0$, there exists a path $P_0 \pathtau P_1 \step{\opt{\alpha}} P_2$ such that $P \bisimtb P_1$ and $P' \bisimtb P_2$. Thus, $P_1 \bisimtb P \bisimtb P_0$, but, for all $P_0 \steptau P^\dag$, $P_0 \,\not\!\bisimtb P^\dag$, so $P_0 = P_1$ and $P \step{\opt{\alpha}} P_2$. Since $(\alpha,P') \not\in J$, $\alpha \in A$ or $P_0 \bisimtb P \not\bisimtb P' \bisimtb P_2$ so $P_0 \step{\alpha} P_2$ and $\max(d(P_2),d(P')) < d(P)$. Thus, by induction, $Ax \vdash \alpha.\hat{P'} = \alpha.\hat{P}_2$. If $\alpha = \rt$ then, since $P \bisimtb P_0$, there exists a path $P_0 \pathtau P_1 \step{\rt} P_2 \pathtau P_3 \step{\rt} ... \pathtau P_{2r-1} \step{\opt{\rt}} P_{2r}$ with $r>0$, such that, for all $i \in [0,2r-1]$, $P \bisimtb P_i$ and $P' \bisimtb P_{2r}$. Thus, $P_{2r-1} \bisimtb P \bisimtb P_0$, but, for all $P_0 \step{\tau/t} P^\dag$, $\neg(P_0 \bisimtb P^\dag)$, therefore, $P_0 = P_{2r-1}$ and $P_0 \step{\opt{\rt}} P_{2r}$. Since $(\rt,P') \not\in K$, $P_0 \bisimtb P \not\bisimtb P' \bisimtb P_2$ so $P_0 \step{\rt} P_2$ and $\max(d(P_2),d(P')) < d(P)$. Thus, by induction, $Ax \vdash \rt.\hat{P'} = \rt.\hat{P}_2$. As a result, $Ax \vdash \hat{P}_0 + \sum_{(\alpha,P') \in I}\alpha.\hat{P'} = \hat{P}_0$.

    Since there exists $P \step{\tau/\rt} P'$ with $P \bisimtb P'$, $J \cup K \ne \emptyset$. We are going to perform a case distinction on the emptiness of $J$ and $K$.
    \begin{itemize}
        \item If $J \ne \emptyset$ and $K \ne \emptyset$ then, employing the $\tau/\rt$-branching axiom,
        \begin{align*}
            Ax \vdash \alpha.\hat{P} = &~ \alpha.(\sum_{(\tau,P') \in J}\tau.\hat{P}' + \sum_{(t,P') \in K}\rt.\hat{P}' + \sum_{(\alpha,P') \in I}\alpha.\hat{P}') \\
            = &~ \alpha.(\tau.\hat{P}_0 + \rt.\hat{P}_0 + \sum_{(\alpha,P') \in I}\alpha.\hat{P}') \\
            = &~ \alpha.(\tau.(\hat{P}_0 + \sum_{(\alpha,P') \in I}\alpha.\hat{P}') + \rt.(\hat{P}_0 + \sum_{(\alpha,P') \in I}\alpha.\hat{P}') + \sum_{(\alpha,P') \in I}\alpha.\hat{P}') = \alpha.\hat{P}_0 
        \end{align*}
        \item If $J \ne \emptyset$ and $K = \emptyset$ then, employing the branching axiom,
        \begin{align*}
            Ax \vdash \alpha.\hat{P} = &~ \alpha.(\sum_{(\tau,P') \in J}\tau.\hat{P}' + \sum_{(t,P') \in K}\rt.\hat{P}' + \sum_{(\alpha,P') \in I}\alpha.\hat{P}') = \alpha.(\tau.\hat{P}_0 + \!\sum_{(\alpha,P') \in I}\!\alpha.\hat{P}') \\
            = &~ \alpha.(\tau.(\hat{P}_0 + \sum_{(\alpha,P') \in I}\alpha.\hat{P}') + \sum_{(\alpha,P') \in I}\alpha.\hat{P}') = \alpha.\hat{P}_0 
        \end{align*}
        \item If $J = \emptyset$ and $K \ne \emptyset$ then $\{(\alpha, P') \mid P \step{\alpha} P' \wedge \alpha \in A_\tau\} \subseteq I$. Let $(\alpha,P_2)$ such that $\alpha \in A_\tau$ and $P_0 \step{\alpha} P_2$. Since $P \bisimtb P_0$, there exists a path $P \pathtau P' \step{\opt{\alpha}} P''$ such that $P' \bisimtb P_0$ and $P'' \bisimtb P_2$. Since $J = \emptyset$, $P = P'$. If $\alpha = \tau$ and $P = P''$ then $P_0 \bisimtb P' \bisimtb P_2$ and $P_0 \steptau P_2$, but that contradicts the definition of $P_0$. Thus, $P\step{\alpha} P_2$. Therefore, by induction, for all $(\alpha,P_2) \in \{(\alpha,P_2) \mid P_0 \step{\alpha} P_2 \wedge \alpha \in A_\tau\}$, there exists $(\alpha,P') \in \{(\alpha, P') \mid P \step{\alpha} P' \wedge \alpha \in A_\tau\}$ such that $Ax \vdash \alpha.\hat{P}_2 = \alpha.\hat{P'}$. Symmetrically, for all $(\alpha,P') \in \{(\alpha, P') \mid P \step{\alpha} P' \wedge \alpha \in A_\tau\}$, there exists $(\alpha,P_2) \in \{(\alpha,P_2) \mid P_0 \step{\alpha} P_2 \wedge \alpha \in A_\tau\}$ such that $Ax \vdash \alpha.\hat{P}_2 = \alpha.\hat{P'}$. As a result, $$Ax \vdash \sum_{\{(\alpha, P') \mid P \step{\alpha} P' \wedge \alpha \in A_\tau\}}\alpha.\hat{P}' = \sum_{\{(\alpha,P_2) \mid P_0 \step{\alpha} P_2 \wedge \alpha \in A_\tau\}}\alpha.\hat{P}_2.$$

        Moreover, the reasoning that yields $Ax \vdash \hat{P}_0 + \sum_{(\alpha,P') \in I}\alpha.\hat{P'} = \hat{P}_0$ can be used to get $Ax \vdash \hat{P}_0 + \sum_{(\rt,P') \in I}\rt.\hat{P}' = \hat{P}_0$. Thus, writing $R$ for $\sum_{\{(\alpha, P_2) \mid P_0 \step{\alpha} P_2 \wedge \alpha \in A_\tau\}}\alpha.\hat{P}_2 + \sum_{(\rt,P') \in I}\rt.\hat{P}'$, and using the $\rt$-branching axiom,

\begin{align*}
            Ax \vdash \alpha.\hat{P} = &~ \alpha.(\sum_{(\tau,P') \in J}\tau.\hat{P}' + \sum_{(t,P') \in K}\rt.\hat{P}' + \sum_{(\alpha,P') \in I}\alpha.\hat{P}') \\
            = &~ \alpha.(\rt.\hat{P}_0 + \sum_{\{(\alpha, P') \mid P \step{\alpha} P' \wedge \alpha \in A_\tau\}}\alpha.\hat{P}' +  \sum_{(\rt,P') \in I}\rt.\hat{P}') \\
            = &~ \alpha.(\rt.(\hat{P}_0 + \sum_{(\rt,P') \in I}\rt.\hat{P'}) + \sum_{\{(\alpha, P_2) \mid P_0 \step{\alpha} P_2 \wedge \alpha \in A_\tau\}}\alpha.\hat{P}_2 +  \sum_{(\rt,P') \in I}\rt.\hat{P}') \\
            = &~ \alpha.(\rt.(R+\sum_{P_0 \step\rt P'_0}\rt.P'_0) + R) = \alpha.(R+\sum_{P_0 \step\rt P'_0}\rt.P'_0)\\
            = &~ \alpha.(\hat{P}_0 + \sum_{(\rt,P') \in I}\rt.\hat{P}') = \alpha.\hat{P}_0
        \end{align*}
    \end{itemize}
    As a result, in any case, for all $\alpha \in Act$, $Ax \vdash \alpha.\hat{P} = \alpha.\hat{P}_0$. Similarly, since $Q$ is recursion-free, there exists a recursion-free $\ccsp$ process $Q_0$ such that $Q (\step{\tau/\rt})^* Q_0$, $Q \bisimtb Q_0$ and, for all $Q_0 \step{\tau/t} Q^\dag$, $\neg(Q_0 \bisimtb Q^\dag)$. Moreover, for all $\alpha \in Act$, $Ax \vdash \alpha.\hat{Q} = \alpha.\hat{Q}_0$. Notice that $P_0 \bisimtb P \bisimtb Q \bisimtb Q_0$ and, since, for all $Q_0 \step{\tau/t} Q^\dag$, $\neg(Q_0 \bisimtb Q^\dag)$ and, for all $P_0 \step{\tau/t} P^\dag$, $\neg(P_0 \bisimtb P^\dag)$, $P_0 \bisimrtb Q_0$.

    Let $(\alpha,P_0')$ such that $P_0 \step{\alpha} P_0'$. Since $P_0 \bisimrtb Q_0$, there exists a path $Q_0 \step{\alpha} Q_2$ such that $P_0' \bisimtb Q_2$. Since $\max(d(P_0'),d(Q_2)) < n$, by induction, $Ax \vdash \alpha.P_0' = \alpha.Q_2$. As a result, $Ax \vdash \hat{P}_0 + \hat{Q}_0 = \hat{Q}_0$. Symmetrically, $Ax \vdash \hat{P}_0 + \hat{Q}_0 = \hat{P}_0$, and so, $Ax \vdash \hat{P}_0 = \hat{Q}_0$. Finally, for all $\alpha \in Act$, $Ax \vdash \alpha.\hat{P} = \alpha.\hat{Q}$.
\end{proof}

\begin{proof}[Proof of Theorem \ref{thm:completeness finite}]
    Let $P,Q \in \closed$ be two recursion-free $\ccsp$ processes. Let $P \bisimrtbr Q$. $P$ and $Q$ can be equated in the same manner as \hyperlink{here}{$P_0$ and $Q_0$} in the proof of Proposition \ref{prop:collapse}. 

    Suppose that $P \bisimrtb Q$. Let $(\alpha,P')$ such that $P \step{\alpha} P'$ with $\alpha \in Act$. Since $P \bisimrtb Q$, there exists a transition $Q \step{\alpha} Q'$ such that $P' \bisimtb Q'$. According to the previous proposition, $Ax \vdash \alpha.P' = \alpha.Q'$, thus, 
    \begin{align*}
        Ax \vdash Q = \sum_{\{(\alpha,Q') \mid Q \step{\alpha} Q'\}}\alpha.Q' = \sum_{\{(\alpha,Q') \mid Q \step{\alpha} Q'\}}\alpha.Q' + \sum_{\{(\alpha,P') \mid P \step{\alpha} P'\}}\alpha.P' = Q + P
    \end{align*}
    As a result, $Ax \vdash Q = P+Q$. Symmetrically, $Ax \vdash P = P+Q$. Therefore, $Ax \vdash P=Q$.
\end{proof}

\section{Proof of Completeness by Equation Merging} \label{app:completeness}

\begin{proof}[Proof of Theorem \ref{thm:completeness}]
    Let $E_0$ and $F_0$ two strongly guarded $\ccsp$ processes such that $E_0 \bisimrtb F_0$. We are going to build a recursive specification $\equa$ such that $E_0$ and $F_0$ will be components of solutions of $\equa$ in the same variable. Let $\mathcal{E}_{E_0}$ (resp.\ $\mathcal{E}_{F_0}$) be the set of reachable expressions from $E_0$ (resp.\ $F_0$). Let $V_\equa$ be a set of new variables $\{x_{EF} \mid (E,F) \in \mathcal{E}_{E_0}\times\mathcal{E}_{F_0} \wedge E \bisimtb F\}$. We denote $x_0 = x_{E_0F_0} \in V_\equa$ and we define the following set of equations $\equa$, for all $x_{EF} \in V_\equa$, with $\alpha\in A \cup\{\tau,\rt\}$.
    \begin{align*}
        \equa_{x_{EF}} := & \sum_{E \step{\alpha} E', F \step{\alpha} F', E' \bisimtb F'}\hspace{-8pt}\alpha.x_{E'F'}  + \hspace{-8pt}\sum_{E \steptau E', E' \bisimtb F, x_{EF} \neq x_0}\hspace{-8pt}\tau.x_{E'F} \\
        & + \hspace{-8pt}\sum_{F \steptau F', E \bisimtb F', x_{EF} \neq x_0}\hspace{-8pt}\tau.x_{EF'} + \hspace{-8pt}\sum_{E \step{\rt} E', E' \bisimtb F, x_{EF} \neq x_0}\hspace{-8pt}\rt.x_{E'F} + \hspace{-8pt}\sum_{F \step{\rt} F', E \bisimtb F', x_{EF} \neq x_0}\hspace{-8.5pt}\rt.x_{EF'}
    \end{align*}
    Note that $\equa$ is well-guarded since $E_0$ and $F_0$ are strongly guarded $\ccsp$ processes. For $x_{EF} \in V_\equa$, we define $H_{EF}, G_{EF} \in \expr$ such that 
    \begin{align*}
        H_{EF} := & \sum_{E \step{\alpha} E', F \step{\alpha} F', E' \bisimtb F'}\hspace{-8pt}\alpha.E' + \hspace{-8pt}\sum_{E \steptau E', E' \bisimtb F, x_{EF} \neq x_0}\hspace{-8pt}\tau.E'
          + \hspace{-8pt}\sum_{E \step{\rt} E', E' \bisimtb F, x_{EF} \neq x_0}\hspace{-8pt}\rt.E' \\
        G_{EF} := &  \begin{cases} 
                        H_{EF} + \tau.E + \rt.E & \mbox{if }x_0 \mathbin{\neq} x_{EF} \mbox{, } \exists F \steptau F', E \bisimtb F' \mbox{ and }\exists F \step{\rt} F', E \bisimtb F'  \\ 
                        H_{EF} + \tau.E & \mbox{if }x_0 \mathbin{\neq} x_{EF} \mbox{, } \exists F \steptau F', E \bisimtb F' \mbox{ and } \forall F \step{\rt} F', E \,\not\!\bisimtb F'  \\ 
                        H_{EF} + \rt.E & \mbox{if }x_0 \mathbin{\neq} x_{EF} \mbox{, } \forall F \steptau F', E \,\not\!\bisimtb F'
                        \mbox{ and }\exists F \step{\rt} F', E \bisimtb F'  \\ 
                        E & \mbox{otherwise}
                    \end{cases}
    \end{align*}
    According to Lemma~\ref{lem:head-normal form}, for all $(E,F) \in \mathcal{E}_{E_0}\times\mathcal{E}_{F_0}$, $Ax^\infty \vdash E + H_{EF} = \widehat{(E + H_{EF})} = \hat{E} = E$. Let $(E,F) \in \mathcal{E}_{E_0}\times\mathcal{E}_{F_0}$.
    \begin{itemize}
        \item If $x_0 \ne x_{EF}$ and $\exists F\steptau F', E \bisimtb F'$ and $\exists F \step{\rt} F', E \bisimtb F'$ then, for all $\alpha \in Act$, $Ax^\infty \vdash \alpha.G_{EF} = \alpha.(H_{EF} + \tau.E + \rt.E) = \alpha.E$ using the $\tau/\rt$-branching axiom.
        \item If $x_0 \ne x_{EF}$ and $\exists F\steptau F', E \bisimtb F'$ and $\forall F \step{\rt} F', E \,\not\!\bisimtb F'$ then, for all $\alpha \in Act$, $Ax^\infty \vdash \alpha.G_{EF} = \alpha.(H_{EF} + \tau.E) = \alpha.E$ using the branching axiom.
        \item If $x_0 \ne x_{EF}$ and $\forall F\steptau F', E \,\not\!\bisimtb F'$ and $\exists F \step{\rt} F', E \bisimtb F'$ then, for all $\alpha \in Act$, $Ax^\infty \vdash \alpha.G_{EF} = \alpha.(H_{EF} + \rt.E)$. Let $(\alpha,E')$ such that $E \step{\alpha} E'$ with $\alpha \in A_\tau$. Since $E \bisimtb F$, there exists a path $F \pathtau F_1 \step{\opt{\alpha}} F_2$ such that $E \bisimtb F_1$ and $E' \bisimtb F_2$. Since $\forall F\steptau F', E \,\not\!\bisimtb F'$, $F = F_1$ and $F \step{\opt{\alpha}} F_2$. If $\alpha = \tau$ and $F = F_2$ then $E \steptau E'$ and $E' \bisimtb F$ so $H_{EF} \steptau E'$; else $F \step{\alpha} F_2$ and $E' \bisimtb F'$ so $H_{EF} \step{\alpha} E'$. Therefore, $Ax^\infty \vdash H_{EF} + \sum_{\{(\alpha,E') \mid E \step{\alpha} E' \wedge \alpha \in A_\tau\}}\alpha.E' = H_{EF}$. Consequently, $Ax^\infty \vdash E = H_{EF} + E = H_{EF} + \sum_{\{E'\mid E \step\rt E'\}}\rt.E'$. As a result, the $\rt$-branching axiom can be applied and so, for all $\alpha \in Act$, $Ax^\infty \vdash \alpha.G_{EF} = \alpha.(H_{EF} + \rt.E) = \alpha.E$.
    \end{itemize}
    In any case, for all $\alpha \in Act$, $Ax^\infty \vdash \alpha.G_{EF} = \alpha.E$. \hfill $(*)$
    
    If we prove that the family $(G_{EF})_{(E,F) \in \mathcal{E}_{E_0}\times\mathcal{E}_{F_0}}$ is a solution of $\equa$ then, by definition of $G_{E_0F_0}$, there would exist a solution whose value for the variable $x_0$ is $E$. According to $(*)$, we need to prove that, for all $x_{EF} \in V_\equa$, 
    \begin{align*}
        Ax^\infty \vdash G_{EF} = & \sum_{E \step{\alpha} E', F \step{\alpha} F', E' \bisimtb F'}\hspace{-8pt}\alpha.G_{E'F'} \\ 
        & + \hspace{-8pt}\sum_{E \steptau E', E' \bisimtb F, x_{EF} \neq x_0}\hspace{-8pt}\tau.G_{E'F} + \hspace{-8pt}\sum_{F \steptau F', E \bisimtb F', x_{EF} \neq x_0}\hspace{-8pt}\tau.G_{EF'} \\
        & + \hspace{-8pt}\sum_{E \step{\rt} E', E' \bisimtb F, x_{EF} \neq x_0}\hspace{-8pt}\rt.G_{E'F} + \hspace{-8pt}\sum_{F \step{\rt} F', E \bisimtb F', x_{EF} \neq x_0}\hspace{-8pt}\rt.G_{EF'}\\
        = & \sum_{E \step{\alpha} E', F \step{\alpha} F', E' \bisimtb F'}\hspace{-8pt}\alpha.E' \\ 
        & + \hspace{-8pt}\sum_{E \steptau E', E' \bisimtb F, x_{EF} \neq x_0}\hspace{-8pt}\tau.E' + \hspace{-8pt}\sum_{F \steptau F', E \bisimtb F', x_{EF} \neq x_0}\hspace{-8pt}\tau.E \\
        & + \hspace{-8pt}\sum_{E \step{\rt} E', E' \bisimtb F, x_{EF} \neq x_0}\hspace{-8pt}\rt.E' + \hspace{-8pt}\sum_{F \step{\rt} F', E \bisimtb F', x_{EF} \neq x_0}\hspace{-8pt}\rt.E \\
        = &~ H_{EF} + \hspace{-8pt}\sum_{x_{EF} \neq x_0, F \steptau F', E \bisimtb F'}\hspace{-8pt}\tau.E + \hspace{-8pt}\sum_{F \step{\rt} F', E \bisimtb F', x_{EF} \neq x_0}\hspace{-8pt}\rt.E 
    \end{align*}
    \begin{itemize}
        \item If $x_{EF} \neq x_0$ and $(\exists F \steptau F', E \bisimtb F') \vee (\exists F \step{\rt} F', E \bisimtb F')$ then this follows from the definition of $G_{EF}$.
        \item If $x_{EF} \neq x_0$ and $\forall F\steptau F', E \,\not\!\bisimtb F'$ and $\forall F\step{\rt} F', E \,\not\!\bisimtb F'$ then, by definition of $G_{EF}$, we have to prove $Ax^\infty \vdash E = H_{EF}$. Let $(\alpha,E')$ such that $E \step{\alpha} E'$ and $\alpha \in A_\tau$. Since $E \bisimtb F$, there exists a path $F \pathtau F_1 \step{\opt{\alpha}} F_2$ such that $E \bisimtb F_1$ and $E' \bisimtb F_2$. Since $\forall F\steptau F', \neg(E \bisimtb F')$, $F = F_1$, so there exists a transition $F \step{\opt{\alpha}} F_2$ such that either $F \step{\alpha} F_2$ and $E' \bisimtb F_2$, or $\alpha = \tau$ and $E' \bisimtb F$. In either case, $H_{EF} \step{\alpha} E'$. \\
        Let $(t,E')$ such that $E \step{\rt} E'$. Since $E \bisimtb F$, there exists a path $F \pathtau F_1 \step{\rt} F_2 \pathtau F_3 \step{\rt} ... \pathtau F_{2r-1} \step{\opt{\rt}} F_{2r}$ with $r>0$, such that, for all $i \in [0,2r{-}1]$, $E \bisimtb F_i$ and $E' \bisimtb F_{2r}$. Since $\forall F\steptau F', E \,\not\!\bisimtb F'$ and $\forall F\step{\rt} F', E \,\not\!\bisimtb F'$, $F = F_{2r-1}$, so $F \step{\opt{\rt}} F_{2r}$. Thus either there exists a transition $F \step{\rt} F_{2r}$ such that $E' \bisimtb F_{2r}$ or $E' \bisimtb F$. In either case, $H_{EF} \step{\rt} E'$. \\
        As a result, $Ax^\infty \vdash E = \hat{E} = \widehat{E + H}_{EF} = \hat{H}_{EF} = H_{EF}$.
        \item If $x_{EF} = x_0$ then $E = E_0$, $F = F_0$ and we have to show that $Ax^\infty \vdash E_0 = H_{E_0F_0} = \sum_{E_0 \step{\alpha} E', F_0 \step{\alpha} F', E' \bisimtb F'}\alpha.E'$. Let $(\alpha,E')$ such that $E_0 \step{\alpha} E'$. Since $E_0 \bisimrtb F_0$, there exists a transition $F_0 \step{\alpha} F'$ such that $E' \bisimtb F'$. $Ax^\infty \vdash E = \hat{E} = \hat{H}_{EF} = H_{EF}$.
    \popQED
    \end{itemize}
Note that we could define $H'_{EF}$ and $G'_{EF}$ by reverting the role of $E$ and $F$ and also get a solution whose value for the variable $x_0$ is $F_0$. Consequently, RSP yields $Ax^\infty \vdash E_0 = F_0$.
\end{proof}

\section{The Canonical Representative} \label{app:canonical rep}

We are going to start by proving some lemmas facilitating the handling of classes. 

\newpage

\begin{lemma} \label{lem:transition class}
    Let $P \in \closed^g$.
    \begin{enumerate}
        \item $\forall \alpha \in A_\tau, ([P] \step{\alpha} R' \Leftrightarrow \exists P \pathtau P_1 \step{\alpha} P_2, (P_1,P_2) \in [P]\times R' \wedge (\alpha \in A \vee [P] \ne R'))$.
        \item ${[P] \nsteptau} \Leftrightarrow \exists P_0 \in [P], P \pathtau P_0 \nsteptau$.
        \item Let $X\subseteq A$. Then $\deadend{[P]}{X} \Leftrightarrow \exists P \pathtau P_0, P_0 \in [P] \wedge \deadend{P_0}{X}$.
        \item If $[P] \step{\rt} R'$ and $\deadend{[P]}{X}$ then $\exists r>0,\, \exists P \pathtau P_1 \step{\rt} P_2 \pathtau P_3 \step{\rt} ... \pathtau P_{2r-1} \step{\rt} P_{2r}$, $P_1 \in [P] \wedge \forall i \in [0,2r{-}1],\; \theta_X(P_{i}) \in [\theta_X(P)] \wedge \forall j \in [0,r{-}1],\; \deadend{P_{2j+1}}{X}\linebreak[2] \wedge \theta_X(P_{2r}) \in [\theta_X(\chi(R'))] \wedge [\theta_X(P)] \ne [\theta_X(\chi(R'))]$.
        \item If $\exists X \mathbin\subseteq A, r\mathbin>0,\, \exists P \pathtau P_1 \step{\rt} P_2 \pathtau P_3 \step{\rt} ... \pathtau P_{2r-1} \step{\rt} P_{2r}, P_1 \in [P] \wedge \forall i \in [0,r{-}1]$, $\theta_X(P_{2i}) \in [\theta_X(P)] \wedge \deadend{P_{2i+1}}{X} \wedge \theta_X(P_{2r}) \not\in [\theta_X(P)]$ then there exists an $R'$ with $[P] \step{\rt} R' \wedge \theta_X(P_{2r}) \in [\theta_X(\chi(R'))]$.
    \end{enumerate}
\end{lemma}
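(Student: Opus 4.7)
The common theme is that the transition relation on $\bisimtbr$-equivalence classes is defined using the canonical representative $\chi(R)$, whereas the lemma allows arbitrary representatives reachable from $P$. Each part therefore has a $(\Rightarrow)$ direction in which I unfold the definition at $\chi([P])$ and transport the resulting witnesses to $P$ using $P \bisimtbr \chi([P])$, and a $(\Leftarrow)$ direction that transports witnesses from $P$ back to $\chi([P])$ symmetrically. My plan is to treat the parts in order, reusing bisimulation-matching machinery (Definition~\ref{def:intuitive}, Lemma~\ref{lem:obvious}, Proposition~\ref{prop:stability}, and, for parts 4 and 5, Lemma~\ref{lem:independent}).

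For Part 1, I would unfold $[P] \step{\alpha} R'$ to obtain $\chi([P]) \pathtau P_1 \step{\alpha} P_2$ with $P_1 \in [P]$ and $P_2 \in R'$. Since $P \bisimtbr \chi([P])$, repeated application of Clause~1.a of Definition~\ref{def:intuitive}, together with Lemma~\ref{lem:obvious}.1, yields a path $P \pathtau P_1' \step{\opt{\alpha}} P_2'$ with $P_1' \in [P]$ and $P_2' \in R'$; the side condition $\alpha \in A \vee [P] \ne R'$ disposes of the degenerate stuttering case $\alpha=\tau$ with $P_1'=P_2'$. The converse is symmetric. Part 2 is immediate from Clause~2.e of Definition~\ref{def:intuitive}: $[P] \nsteptau$ means $\chi([P])\nsteptau$, and $P \bisimtbr \chi([P])$ then gives $P \pathtau P_0 \nsteptau$ with $P_0 \in [P]$ via Lemma~\ref{lem:obvious}.1. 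Part 3 combines Part 2 with Lemma~\ref{lem:obvious}.2 and .4 to preserve initial actions along $P \pathtau P_0$.

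Parts 4 and 5 are the most delicate, because $[P] \step{\rt} R'$ is defined via an \emph{existentially} quantified set $Y$ of allowed actions, whereas the statement quantifies universally over $X$ with $\deadend{[P]}{X}$. For Part 4 I would first apply Part 3 to secure $P \pathtau P_1 \nsteptau$ with $P_1 \in [P]$ and $\deadend{P_1}{X}$. Unfolding $[P] \step{\rt} R'$ for some $Y$ gives a zig-zag from $\chi([P])$; using the generalised reactive bisimulation characterisation of Appendix~\ref{app:gbrb} (Clause~2.c) applied to $P_1 \bisimtbr[Y] \chi([P])$, I lift the zig-zag to a matching path from $P$ with the invariants phrased for $Y$. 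Lemma~\ref{lem:independent} then substitutes $X$ for $Y$ at every elided time-out, yielding $\theta_X(P_{2i}) \in [\theta_X(P)]$ and $\deadend{P_{2i+1}}{X}$; the inequality $[\theta_X(P)] \ne [\theta_X(\chi(R'))]$ is preserved because the final time-out is not elidable, again by Lemma~\ref{lem:independent}. The equalities at odd indices follow from the even ones by Lemma~\ref{lem:obvious}.1 and the congruence of $\bisimtbr$ for $\theta_X$ (Proposition~\ref{prop:stability}). Part 5 is the converse: given the explicit path from $P$ at parameter $X$, the same congruence transports it to $\chi([P])$, and taking $R' := [P_{2r}]$ discharges the definitional requirements of $[P] \step{\rt} R'$. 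The main obstacle I expect is the bookkeeping in Parts 4 and 5: converting between zig-zag paths while reconciling the $Y$ hidden in the definition with the externally given $X$, and ensuring time-out elidability transfers correctly via Lemma~\ref{lem:independent}.
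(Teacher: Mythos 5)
Your overall strategy is the same as the paper's (unfold the class transition at $\chi([P])$, transfer along $P \bisimtbr \chi([P])$, and use Lemma~\ref{lem:independent} to trade the existentially quantified environment hidden in the definition of $[P]\step{\rt}R'$ for the given $X$), but Part 2 contains a genuine error: $[P]\nsteptau$ does \emph{not} mean $\chi([P])\nsteptau$. By the definition of the transition relation on classes, $[P]\nsteptau$ only says that every path $\chi([P]) \pathtau P_1 \steptau P_2$ with $P_1\in[P]$ ends inside $[P]$; the chosen representative $\chi([P])$ may perfectly well have elidable $\tau$-transitions (e.g.\ $\chi([P])=\tau.a.0$ with $[P]=[a.0]$), so your reduction to Clause 2.e collapses. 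The direction $[P]\nsteptau \Rightarrow \exists P_0\in[P],\, P\pathtau P_0\nsteptau$ genuinely needs strong guardedness: follow elidable $\tau$-steps from $P$ inside $[P]$, which must terminate because $P\in\closed^g$; if the resulting state were not stable, its outgoing $\tau$-step would leave $[P]$ and, by Part 1, give $[P]\steptau$, a contradiction. Your sketch uses neither strong guardedness nor Part 1 here, and it also silently drops the converse direction of the equivalence; since Part 3 is a corollary of Parts 1 and 2, the gap propagates.

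In Part 5 the instantiation $R':=[P_{2r}]$ does not discharge the definition of $[P]\step{\rt}R'$: when the zig-zag from $P$ is matched by one from $\chi([P])$ (via Definition~\ref{def:time-out bisim}.2), its endpoint $P'_{2k}$ is related to $P_{2r}$ only by $\theta_X(P'_{2k}) \bisimtbr \theta_X(P_{2r})$, not by $\bisimtbr$, so $P'_{2k}\in[P_{2r}]$ need not hold, which the definition would require with your choice of $R'$. The correct choice is $R':=[P'_{2k}]$, after which $\theta_X(P_{2r})\in[\theta_X(\chi(R'))]$ follows; one must also argue that the final optional $\rt$ of the matching path is an actual $\rt$ (using $\theta_X(P_{2r})\notin[\theta_X(P)]$), since the class-transition definition demands a genuine last time-out. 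A smaller point in Part 4: Lemma~\ref{lem:independent} does not \emph{yield} $\deadend{P_{2i+1}}{X}$ --- it needs it as a hypothesis; that fact must come from $\deadend{[P]}{X}$ via Part 3 together with initial-action agreement (Lemma~\ref{lem:obvious}.2) between the stable intermediate states, which lie in $[P]$, and the stable $X$-refusing representative. With that repair, your order of operations (transfer at the hidden environment first, then switch to $X$) can be made to work and mirrors the paper's argument, which switches environments before transferring.
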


\begin{proof}
    Let $P \in \closed^g$.
    \begin{enumerate}
        \item Let $\alpha \in A_\tau$.
        \begin{itemize}
            \item If $[P] \step{\alpha} R'$ then, by definition of $\rightarrow$, there exists a path $\chi([P]) \pathtau P_1 \step{\alpha} P_2$ such that $P_1 \in [P]$, $P_2 \in R'$ and $\alpha \in A \vee [P] \ne R'$. Since $\chi([P]) \bisimtbr P$, there exists a path $P \pathtau P_1' \step{\opt{\alpha}} P_2'$ such that $P_1 \bisimtbr P_1'$ and $P_2 \bisimtbr P_2'$, thus, $P_1' \in [P]$ and $P_2' \in R'$. If $\alpha = \tau$ then $[P] \ne R'$, so $P_1' \,\not\!\bisimtbr P_2'$ and so $P_1' \step{\alpha} P_2$, otherwise, $P_1' \step{\alpha} P_2'$. 
            \item If there exists a path $P \pathtau P_1 \step{\alpha} P_2$ such that $P_1 \in [P]$, $P_2 \in R'$ and $\alpha \in A \vee [P] \ne R'$ then, since $P \bisimtbr \chi([P])$, there exists a path $\chi([P]) \pathtau P_1' \step{\opt{\alpha}} P_2'$ such that $P_1 \bisimtbr P_1'$ and $P_2 \bisimtbr P_2'$, thus, $P'_1 \in [P]$ and $P'_2 \in R'$. If $\alpha = \tau$ then $[P] \ne R'$, therefore, $P_1' \,\not\!\bisimtbr P_2'$ and so $P_1' \step{\alpha} P_2'$, otherwise, $P_1' \step{\alpha} P_2'$. By definition of $\rightarrow$, $[P] \step{\alpha} R'$.
        \end{itemize}
        \item 
        \begin{itemize}
            \item If $[P] \steptau$ then, according to the previous point, there exists a path $P \pathtau P_1 \steptau P_2$ such that $P_1 \in [P]$ and $P_2 \not\in [P]$. Suppose that there is a path $P \pathtau P_0 \nsteptau$ with $P_0 \in [P]$. Then $P_0 \bisimtbr P$, so there exists a path $P_0 \pathtau P^\dag \step{\opt{\tau}} P^\ddag$ such that $P_1 \bisimtbr P^\dag$ and $P_2 \bisimtbr P^\ddag$. Since $P_0 \nsteptau$, $P_0 = P^\ddag \bisimtbr P_2 \not\in [P]$, but that's impossible.
            \item Suppose that, for all paths $P \pathtau P_0 \nsteptau$, $P_0 \not\in [P]$. Since $P$ is strongly guarded, there exists a path $P \pathtau P_1$ such that $P_1 \in [P]$ and, for all $P_1 \steptau P'$, $P_1 \,\not\!\bisimtbr P'$. Since $P \pathtau P_1$ and $P_1 \in [P]$, there exists a transition $P_1 \steptau P_2$, and $P_1 \not\!\bisimtbr P_2$. Thus, there exists a path $P \pathtau P_1 \steptau P_2$ such that $P_1 \in [P]$ and $P_2 \not\in [P]$. According to the previous point, $[P] \steptau [P_2]$.
        \end{itemize}
        \item This a corollary of the two previous points.
        \item Suppose $[P] \step{\rt} R'$ and $\deadend{[P]}{X}$. By definition of $\rightarrow$, there exists $Z \subseteq A$, $r>0$ and a path $\chi([P])=P_0 \pathtau P_1 \step{\rt} P_2 \pathtau P_3 \step{\rt} ... \pathtau P_{2r-1} \step{\rt} P_{2r}$ such that $P_1 \in [P]$, $P_{2r} \in R'$, $[\theta_Z(\chi([P]))] \ne [\theta_Z(\chi(R'))]$, and for all $i \in [0,r{-}1]$, $\theta_Z(P_{2i}) \in [\theta_Z(\chi([P]))]$ and $\deadend{P_{2i+1}}{Z}$. For all $i \in [0,r{-}1]$, considering that $P_1 \bisimtbr \chi([P]) \bisimtbr[Z] P_{2i}$, Lemma~\ref{lem:obvious}.1 gives $P_1 \bisimtbr[Z] P_{2i+1}$, and by Lemma~\ref{lem:obvious}.3  $P_1 \bisimtbr P_{2i+1}$, i.e., $P_{2i+1} \in [P]$. 

        By Statement 3 of this lemma, $\exists P \pathtau P', P' \in [P] \wedge \deadend{P'}{X}$. According to Lemma~\ref{lem:obvious}.2, for $i \in [0,r{-}1]$, since $P' \bisimtbr P_{2i+1}$, one obtains  $\deadend{[P_{2i+1}]}{X}$.

        For all $i \in [1,r{-}1]$, since $P_{2i-1}\step\rt P_{2i}$, $\deadend{P_{2i-1}}{Z}$, $\deadend{[P]}{X}$ and $\theta_Z(P_{2i-1}) \bisimtbr \theta_Z(P_{2i})$, Lemma~\ref{lem:independent} yields $\theta_X(P_{2i-1}) \bisimtbr \theta_X(P_{2i})$, i.e., $\theta_X(P_{2i}) \in\linebreak[4] [\theta_X(\chi([P]))]$. Moreover,  $[\theta_X(P)] \ne [\theta_X(\chi(R'))]$, i.e., $\theta_X(P_{2r-1}) \,\not\!\bisimtbr \theta_X(P_{2r})$, for if we had $\theta_X(P_{2r-1}) \bisimtbr \theta_X(P_{2r})$ then the same reasoning would yield $\theta_Z(P_{2r-1}) \bisimtbr \theta_Z(P_{2r})$, contradicting that $[\theta_Z(\chi([P]))] \ne [\theta_Z(\chi(R'))]$.

        Since $\chi([P]) \bisimtbr P$, $[\theta_X(\chi([P]))] = [\theta_X(P)]$ and there exists a path $P=P'_0 \pathtau P_1' \step{\rt} P_2' \pathtau P_3' \step{\rt} ... \pathtau P_{2k-1}' \step{\opt{\rt}} P_{2k}'$ with $k>0$, such that $P_1 \bisimtbr P_1'$, $\forall j \in [0,k{-}1], \exists i \in [0,2r{-}1]$, $\theta_X(P_i) \bisimtbr \theta_X(P_{2j}') \wedge \deadend{P'_{2j+1}}{X}$ and $\theta_X(P_{2r}) \bisimtbr \theta_X(P_{2k}')$. With Lemma~\ref{lem:obvious}.1 we even have $\theta_X(P) \bisimtbr\theta_X(\chi([P])) \bisimtbr \theta_X(P_{i}) \bisimtbr \theta_X(P_{2j}') \bisimtbr \theta_X(P_{2j+1}')$. As a result, $P'_1 \in [P] \wedge \forall i \in [0,2k{-}1],\; \theta_X(P'_{i}) \in [\theta_X(P)] \wedge\linebreak[3] \forall j \in [0,k{-}1],\; \deadend{P'_{2j+1}}{X} \wedge \theta_X(P'_{2r}) \in [\theta_X(\chi(R'))] \wedge [\theta_X(P)] \ne [\theta_X(\chi(R'))]$. Since $[\theta_X(P)] \ne [\theta_X(\chi(R'))]$, $P_{2k-1}' \step{\rt} P_{2k}'$.
         \item Suppose there exists $X \subseteq A$ and a path $P \pathtau P_1 \step{\rt} P_2 \pathtau P_3 \step{\rt} ... \pathtau P_{2r-1} \step{\rt} P_{2r}$ such that $P_1 \in [P]$, $\forall i \in [0,r{-}1]$, $\theta_X(P_{2i}) \in [\theta_X(P)] \wedge \deadend{P_{2i+1}}{X}$ and $\theta_X(P_{2r}) \not\in [\theta_X(P)]$. Since $P \bisimtbr \chi([P])$, $[\theta_X(P)] = [\theta_X(\chi([P]))]$ and there exists a path $\chi([P]) \pathtau P_1' \step{\rt} P_2' \pathtau P_3' \step{\rt} ... \pathtau P_{2k-1}' \step{\opt{\rt}} P'_{2k}$ such that $P_1 \bisimtbr P_1'$, $\forall j \in [0,k{-}1],\linebreak[3] \exists i \in [0,2r{-}1]$, $\theta_X(P_i) \bisimtbr \theta_X(P'_{2j}) \wedge \deadend{P'_{2j+1}}{X}$ and $\theta_X(P_{2r}) \bisimtbr \theta_X(P'_{2k})$.\linebreak[4] With Lemma~\ref{lem:obvious}.1 we even have $\theta_X(\chi([P])) \bisimtbr \theta_X(P)) \bisimtbr \theta_X(P_{i}) \bisimtbr \theta_X(P_{2j}') \bisimtbr \theta_X(P_{2j+1}')$. As a a result, $P_1' \in [P]$, $\forall j \in [0,k{-}1], \theta_X(P'_{2j}) \in [\theta_X(\chi([P]))]$. Notice that $\theta_X(P_{2k}') \in [\theta_X(\chi([P'_{2k}]))]$ and so $[\theta_X(\chi([P]))] \ne [\theta_X(\chi([P'_{2k}]))]$ since $\theta_X(P_{2r}) \mathbin{\not\in} [\theta_X(\chi([P]))]$ but $\theta_X(P_{2r}) \in [\theta_X(\chi([P'_{2k}]))]$. Since $\theta_X(P_{2k}') \not\in [\theta_X(\chi([P]))]$, $P_{2k-1}' \step{\rt} P'_{2k}$. Therefore, by definition of $\rightarrow$, $[P] \step{\rt} [P_{2k}']$ and $\theta_X(P_{2r}) \in [\theta_X(\chi([P'_{2k}]))]$.
\popQED
    \end{enumerate}
\end{proof}

\begin{corollary}  \label{cor:transition class}
    Let $P \in \closed^g$ and $X \subseteq A$.
    \begin{enumerate}
        \item If $[P] \pathtau R'$ and $\init{R'}\cap (X \cup \{\tau\})=\emptyset$ then $\exists P \pathtau P'\mathbin\in R'$ with $\init{P'}\cap (X \cup \{\tau\})=\emptyset$.
          % \wedge [P] \ne R')$.
        \item If $\theta_X(P) \in [\theta_X(\chi(R))]$, $R \pathtau R'$ and $\init{R'}\cap (X \cup \{\tau\})=\emptyset$ then $\exists P \pathtau P' \in R'$ with $\init{P'}\cap (X \cup \{\tau\})=\emptyset$.
    \end{enumerate}
\end{corollary}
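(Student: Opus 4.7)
For part 1, I would proceed by induction on the length of the $\tau$-path $[P] \pathtau R'$ in the quotient transition system. In the base case $[P] = R'$, the hypothesis $\init{R'}\cap(X\cup\{\tau\})=\emptyset$ says $\deadend{[P]}{X}$ (thinking of the class as having no outgoing $\tau$ and no outgoing actions in $X$), so Lemma~\ref{lem:transition class}.3 directly produces a path $P \pathtau P_0$ with $P_0 \in [P]$ and $\deadend{P_0}{X}$; combined with Lemma~\ref{lem:transition class}.2 (to also kill any residual $\tau$ outgoing) one gets $\init{P_0}\cap(X\cup\{\tau\})=\emptyset$. For the inductive step, split the class path as $[P] \steptau R'' \pathtau R'$. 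Lemma~\ref{lem:transition class}.1 lifts the first step to an actual path $P \pathtau P_1 \steptau P_2$ with $P_2 \in R''$; since $P_2$ is reachable from the strongly guarded $P$, it is itself strongly guarded and $R'' = [P_2]$, so the inductive hypothesis applied to $P_2$ and the shorter path $R'' \pathtau R'$ yields $P_2 \pathtau P' \in R'$ with $\init{P'}\cap(X\cup\{\tau\})=\emptyset$, and concatenation gives the required $P \pathtau P'$.

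For part 2, I would first translate the hypothesis $\theta_X(P) \in [\theta_X(\chi(R))]$ into $P \bisimtbr[X] \chi(R)$ using Proposition~\ref{prop:time-out bisim}.\ref{corr}. Then I would apply part 1 to the representative $\chi(R)$ and the path $R = [\chi(R)] \pathtau R'$, obtaining $\chi(R) \pathtau S' \in R'$ with $\init{S'}\cap(X\cup\{\tau\})=\emptyset$. The next step is to transfer this $\tau$-path from $\chi(R)$ to $P$ through the relation $P \bisimtbr[X] \chi(R)$: iterating Clause~2.a of Definition~\ref{def:intuitive} along the $\tau$-steps of $\chi(R) \pathtau S'$ produces a path $P \pathtau P^\star$ with $P^\star \bisimtbr[X] S'$. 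Finally, since $\deadend{S'}{X}$ and $S' \nsteptau$, Lemma~\ref{lem:obvious}.4 (invoked with $S'$ playing the role of $P$ and $P^\star$ the role of $Q$) supplies $P^\star \pathtau P_0^\star$ with $P_0^\star \bisimtbr S'$, $P_0^\star \nsteptau$ and $\init{P_0^\star} = \init{S'}$; hence $P_0^\star \in [S'] = R'$ and $\init{P_0^\star}\cap(X\cup\{\tau\})=\emptyset$, and the concatenated path $P \pathtau P_0^\star$ is what we want.

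The main obstacle, both in part 1 and in the final transfer step of part 2, is keeping track of how paths on equivalence classes relate to paths on representatives; Lemma~\ref{lem:transition class} does all the heavy lifting for part 1, while for part 2 the delicate point is the iteration of Clause~2.a, which is only legitimate because each intermediate state produced by Clause~2.a is itself $X$-bisimilar to the corresponding intermediate state on the $\chi(R)$ side, allowing the next invocation. Once this is in place, Lemma~\ref{lem:obvious}.4 cleanly converts the remaining $X$-bisimilarity at a stable endpoint into genuine $\bisimtbr$-equivalence, which is exactly the membership in $R'$ required by the statement.
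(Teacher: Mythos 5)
Your proof is correct, and in substance it follows the paper's route. Part 1 is exactly what the paper means by ``follows directly from Lemma~\ref{lem:transition class}.1--3'': your induction on the class-level path is the intended argument, and the extra appeal to Lemma~\ref{lem:transition class}.2 in the base case is redundant, since $\deadend{P_0}{X}$ already says $\init{P_0}\cap(X\cup\{\tau\})=\emptyset$. For part 2 your skeleton coincides with the paper's -- apply part 1 to $\chi(R)$, transfer the resulting $\tau$-path across the bisimulation relating $P$ and $\chi(R)$, then stabilise -- but the mechanics differ: you unwrap the hypothesis to $P \bisimtbr[X] \chi(R)$ via Proposition~\ref{prop:time-out bisim}.2 at the outset, iterate Clause~2.a of Definition~\ref{def:intuitive} along $\chi(R)\pathtau S'$, and finish with Lemma~\ref{lem:obvious}.4, whereas the paper stays with the $\theta_X$-wrapped processes: it matches $\theta_X(\chi(R)) \pathtau \theta_X(Q')\nsteptau$ against $\theta_X(P)$ using plain $\bisimtbr$ and its stability clause, reads the matching path off as $P\pathtau P'\nsteptau$ via the operational semantics of $\theta_X$, and only then unwraps with Proposition~\ref{prop:time-out bisim}.2, concluding by Lemma~\ref{lem:obvious}.3 and~\ref{lem:obvious}.2. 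The two routes are interderivable; yours avoids reasoning about the semantics of $\theta_X$ at the cost of an explicit iteration of Clause~2.a, and Lemma~\ref{lem:obvious}.4 packages in one step the stabilisation and the equality of initial action sets that the paper extracts from Lemma~\ref{lem:obvious}.2--3. One point you use silently (as does the paper) is that states reachable from a process in $\closed^g$ are again in $\closed^g$, which is needed for $P_0^\star$ to belong to the class $R'$.
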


\begin{proof}
    The first statement follows directly from Lemma~\ref{lem:transition class}.1--3. For the second, suppose $\theta_X(P) \in [\theta_X(\chi(R))]$, $R \pathtau R'$ and $\init{R'}\cap (X \cup \{\tau\})=\emptyset$. By the first statement, $\chi(R)\pathtau Q'\in R'$ for some $Q'$ with $\init{Q'}\cap (X \cup \{\tau\})=\emptyset$.  By the semantics of $\theta_X$, there is a path $\theta_X(\chi(R)) \pathtau \theta_X(Q')\nsteptau$. Since $\theta_X(P) \bisimtbr \theta_X(\chi(R))$, there is a path $\theta_X(P) \pathtau P^\dag \nsteptau$ with $P^\dag \bisimtbr \theta_X(Q')$. By the semantics, $P^\dag = \theta_X(P')$ for some $P'$ with $P\pathtau P'\nsteptau$. So $P' \bisimtbr[X] Q'$ by Proposition~\ref{prop:time-out bisim}.2, and Lemma~\ref{lem:obvious}.3 yields $P' \bisimtbr Q'$. Thus $P'\in R'$ and Lemma~\ref{lem:obvious}.2 gives $\init{P'}\cap (X \cup \{\tau\})=\emptyset$.
\end{proof}

\begin{remark}\label{tortau}
    Let $R\in[\closed^g]$. If $R \step\rt$ then $R\nsteptau$.
\end{remark}

\begin{proof}
    Suppose $R \step\rt R'$. By the definition in Section~\ref{subsec:canonical}, there is a path $\chi(R)\pathtau P_1 \nsteptau$ with $P_1\in R$. So by Lemma~\ref{lem:transition class}.2 $R\nsteptau$.
\end{proof}

\begin{definition}\rm\label{def:uptoRT}
    A \emph{concrete branching time-out bisimulation up to reflexivity and transitivity} is a symmetric relation ${\B}$ on $\closed^g \uplus [\closed^g] \uplus \{\theta_X([P]) \mathbin| X \mathbin\subseteq A \wedge P\mathbin\in \closed^g\}$, such that, for all $P^\dag\!\B Q$,
    \begin{itemize}\itemsep 0pt \parsep 0pt
        \item if $P^\dag \step{\alpha} P^\ddagger$ with $\alpha\mathbin\in A_\tau$, then $\exists$ path $Q\pathtau Q^\dag \step{\opt{\alpha}} Q^\ddagger$ with $P^\dag \B^* Q^\dag$ and $P^\ddagger \B^* Q^\dagger$,
        \item if $P^\dag \step\rt P^\ddagger$ with $\init{P^\dag}\cap (X\cup\{\tau\})=\emptyset$, then there is a path $Q\pathtau Q^\dag \step{\rt} Q^\ddagger$ with $\init{Q^\dag}\cap (X\cup\{\tau\})=\emptyset$ and $\theta_{X}(P^\ddagger) \B^* \theta_X(Q^\ddagger)$,
        \item if $P^\dag \nsteptau$ then there is a path $Q\pathtau Q^\dag \nsteptau$.
    \end{itemize}
    Here $\B^*  := \{(P^\dag,Q^\dag) \mid \exists n\geq 0. ~ \exists P_0,\dots, P_n.~ P^\dag = P_0 \B P_1 \B \dots \B P_n = Q^\dag\}$.
\end{definition}

\begin{proposition}\label{prop:uptoRT}
    If $P \B Q$ for a concrete branching time-out bisimulation $\B$ up to reflexivity and transitivity, then $P \bisimtbr Q$.
\end{proposition}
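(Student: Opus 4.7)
The plan is to show that $\B^*$ itself is a branching time-out bisimulation in the sense of Definition~\ref{def:time-out bisim}, extended to the state space $\closed^g \uplus [\closed^g] \uplus \{\theta_X([P]) \mid X\subseteq A,\,P\in\closed^g\}$ on which transitions were defined in Section~\ref{subsec:canonical}. Since $\B\subseteq \B^*$, this suffices. Observe that the clauses in Definition~\ref{def:uptoRT} are actually \emph{stronger} than the corresponding clauses of Definition~\ref{def:time-out bisim}: the single-step $\rt$-matching of Definition~\ref{def:uptoRT} is the instance $r=1$ of Clause~2 of Definition~\ref{def:time-out bisim}, with $\init{Q^\dag}\cap(X\cup\{\tau\})=\emptyset$ subsuming both $Q_1\nsteptau$ and $\deadend{Q_1}{X}$. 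So it is enough to prove that $\B^*$ itself satisfies the three clauses of Definition~\ref{def:uptoRT} (not just up to $\B^*$).

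Proceed by induction on the length $n$ of a $\B$-chain $P=P_0 \B P_1 \B \cdots \B P_n=Q$. The base case $n=0$ is reflexivity of the transition matching. For the inductive step on an $A_\tau$-transition $P\step{\alpha} P'$, apply the first clause of Definition~\ref{def:uptoRT} to $P_0\B P_1$ to get a path $P_1\pathtau P_1^\dag\step{\opt{\alpha}} P_1^\ddag$ with $P_0\B^* P_1^\dag$ and $P'\B^* P_1^\ddag$. Then invoke the inductive hypothesis on the pair $P_1^\ddag \B^* P_n = Q$ (itself a chain of length at most $n-1$ extended by $\B^*$ on the left) to transport the matching through to $Q$. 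The $\rt$-case is easier because the matching is single-stepped: chain the $\rt$-matchings along the $\B$-chain, obtaining a concatenation of $\pathtau\step{\rt}$-blocks; the stability conditions $\init{Q^\dag_i}\cap(X\cup\{\tau\})=\emptyset$ at each step give exactly the dead-end conditions required by Clause~2 of Definition~\ref{def:time-out bisim}. The $\nsteptau$-clause is chained identically.

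The subtle point, and the main obstacle, is the intermediate states along the path $P_1 \pathtau P_1^\dag$ produced when matching an $A_\tau$-transition: Definition~\ref{def:time-out bisim}'s Clause~1 demands $P\tbisim Q^\dag$ for the intermediate anchor $Q^\dag$, not for every $\tau$-state in between. This matches what Definition~\ref{def:uptoRT} already supplies, so no additional stuttering argument is needed for Clause~1 itself. However, when \emph{chaining} through the $\B$-chain, the intermediate $\tau$-states of one matching may need to be matched further by the next link, which requires iteratively applying the $\tau$-clause of Definition~\ref{def:uptoRT} at each $\tau$-step. This is the standard pattern for branching bisimulation up to transitivity; it terminates by induction on chain length, but bookkeeping for the concatenated paths is the delicate part of the write-up.

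Finally, the conclusion $P\bisimtbr Q$ is formally obtained by restricting $\B^*$ to the relevant pairs and appealing to Proposition~\ref{prop:class} where needed to bridge pairs of the form $(P,[P])$ or $(\theta_X(P),\theta_X([P]))$ occurring inside the chain back to ordinary $\bisimtbr$-bisimilarity on $\closed$.
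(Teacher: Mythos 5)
Your overall strategy is the same as the paper's (extend the transition relation to the enlarged state space and show that $\B^*$ is a branching time-out bisimulation, by induction on the length of a $\B$-chain), but your inductive step has a genuine gap, and it is exactly the point you set aside as ``bookkeeping''. After matching $P_0\step{\alpha}P'$ across the first link you hold a \emph{path} $P_1\pathtau P_1^\dag\step{\opt{\alpha}}P_1^\ddag$ with $P_0\B^* P_1^\dag$ and $P'\B^* P_1^\ddag$; what must now be pushed across the remaining chain $P_1\B\cdots\B P_n$ is this whole path, not a single transition, so your induction hypothesis (stated for single transitions) does not apply. Your repair, invoking the hypothesis on ``$P_1^\ddag\B^* P_n$, a chain of length at most $n-1$ extended by $\B^*$ on the left'', does not restore it: the $\B^*$-extension is a chain of unbounded length, so the induction measure does not decrease. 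The same issue recurs in your plan of ``iteratively applying the $\tau$-clause at each $\tau$-step'': after one application the relevant states are related only by $\B^*$, i.e.\ by chains possibly longer than the one you started with, so ``induction on chain length'' does not terminate. This is not presentational; it is precisely the circularity that makes up-to-transitivity delicate for weak- and branching-style equivalences, and the paper's proof is organised around avoiding it. The statement the paper proves by induction on $n$ is the \emph{path} version: if $P_0\pathtau P^\dag_0\step{\opt{\alpha}}P^\ddag_0$ then $P_n\pathtau P^\dag_n\step{\opt{\alpha}}P^\ddag_n$ with $P^\dag_0\B^* P^\dag_n$ and $P^\ddag_0\B^* P^\ddag_n$ (and analogous path versions for the $\rt$-clause and the stability clause), so that the object transported link by link always has the same shape and only $n$ decreases. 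That strengthening of the induction hypothesis is the key idea missing from your proposal.

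A second, related problem is your $\rt$-case. You announce that it suffices to show $\B^*$ satisfies the clauses of Definition~\ref{def:uptoRT} (a single $\pathtau\step{\rt}$ block), but then propose to output ``a concatenation of $\pathtau\step{\rt}$-blocks''. Such a concatenation only fits Clause~2 of Definition~\ref{def:time-out bisim} with $r>1$, and that clause additionally demands $\theta_X(P)\mathrel{\B^*}\theta_X(Q_{2i})$ for the intermediate $\rt$-targets and $\deadend{Q_{2i+1}}{X}$ for the intermediate stable states; your chaining only relates the wrapped states to $\theta_X(P')$ (the target), not to $\theta_X(P)$, and after the first link the sources of the later $\rt$-matchings are related to the next chain element only by $\B^*$, so Clause~2 of Definition~\ref{def:uptoRT} cannot even be applied to them. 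The paper again resolves this with the path-strengthened induction hypothesis, which produces a \emph{single} block $P_n\pathtau P^\dag_n\step{\rt}P^\ddag_n$ with $\init{P^\dag_n}\cap(X\cup\{\tau\})=\emptyset$ and $\theta_X(P^\ddag_0)\mathrel{\B^*}\theta_X(P^\ddag_n)$, i.e.\ the $r=1$ instance with no intermediate conditions to discharge.
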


\begin{proof}
    It suffices to show that $\B^*$ is a branching time-out bisimulation. Clearly this relation is symmetric.
    \begin{itemize}
        \item Suppose $P_0 \B P_1 \B \dots \B P_n$ for some $n\geq 0$ and $P \pathtau P^\dag_0 \step{\opt{\alpha}} P^\ddagger_0$ with $\alpha\in A_\tau$. It suffices to find $P^\dag_n,P^\ddagger_n$ such that $P_n \pathtau P^\dag_n \step{\opt{\alpha}} P^\ddag_n$, $P^\dag_0 \B^* P^\dag_n$ and $P^\ddag_0 \B^* P^\ddag_n$. (In fact, we need this only in the special case where $P_0 = P^\dag_0 \neq P^\ddag_0$, but establish the more general claim.) We proceed with induction on $n$. The case $n=0$ is trivial.

        Fixing an $n>0$, by Definition~\ref{def:uptoRT} there are $P^\dag_1,P^\ddag_1$ such that $P_1 \pathtau P^\dag_1 \step{\opt{\alpha}} P^\ddag_1$, $P^\dag_0 \B^* P^\dag_1$ and $P^\ddag_0 \B^* P^\ddag_1$. Now by induction there are $P^\dag_n,P^\ddagger_n$ such that $P_n \pathtau P^\dag_n \step{\opt{\alpha}} P^\ddag_n$, $P^\dag_1 \B^* P^\dag_n$ and $P^\ddag_1 \B^* P^\ddag_n$. Hence $P^\dag_0 \B^* P^\dag_n$ and $P^\ddag_0 \B^* P^\ddag_n$.
        
        \item Suppose $P_0 \B P_1 \B \dots \B P_n$ for some $n\geq 0$ and there is a path $P_0 \pathtau P^\dag_0 \step\rt P^\ddagger_0$ with $\init{P^\dag_0}\cap(X\cup\{\tau\})=\emptyset$. It suffices to find $P^\dag_n,P^\ddagger_n$ such that $P_n \pathtau P^\dag_n \step\rt P^\ddagger_n$, $\init{P^\dag_n}\cap(X\cup\{\tau\})=\emptyset$ and $\theta_X(P^\ddagger_0) \B^* \theta_X(P^\ddagger_n)$. (In fact, we need this only in the special case where $P^\dag_0 = P_0$, but establish the more general claim.) We proceed with induction on $n$. The case $n=0$ is trivial.

        Fixing an $n>0$, by Definition~\ref{def:uptoRT} there exist $P^\dag_1,P^\ddagger_1$ such that $P_1 \pathtau P^\dag_1 \step\rt P^\ddagger_1$, $\init{P^\dag_1}\cap(X\cup\{\tau\})=\emptyset$ and $\theta_X(P^\ddagger_0) \B^* \theta_X(P^\ddagger_1)$. By induction there are $P^\dag_n,P^\ddagger_n$ with $P_n \mathbin{\pathtau} P^\dag_n \mathbin{\step\rt} P^\ddagger_n$, $\init{P^\dag_n}\!\cap\!(X\!\cup\!\{\tau\})\mathbin=\emptyset$ and $\theta_X(P^\ddagger_1) \B^* \theta_X(P^\ddagger_n)$. Hence $\theta_X(P^\ddagger_0) \B^* \theta_X(P^\ddagger_n)$.
    
        \item Suppose $P_0 \B P_1 \B \dots \B P_n$ for some $n\geq 0$ and there is a path $P_0 \pathtau P^\dag_0\nsteptau$. It suffices to find a path $P_n \pathtau P^\dag_n\nsteptau$. (In fact, we need this only in the special case where $P^\dag_0 = P_0$, but establish the more general claim.) We proceed with induction on $n$. The case $n=0$ is trivial.

        Fixing an $n>0$, by Definition~\ref{def:uptoRT} there exists a path $P_1 \pathtau P^\dag_1\nsteptau$. By induction, there exists a path $P_n \pathtau P^\dag_n\nsteptau$.
        \popQED          
    \end{itemize}
\end{proof}

\begin{lemma} \label{lem:class deadlock theta}
    Let $P \in \closed^g$ and $X \subseteq A$. If $\deadend{[P]}{X}$ then $[P] \bisim \theta_X([P])$.
\end{lemma}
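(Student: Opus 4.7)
The plan is to exhibit a very small strong bisimulation containing the pair $([P],\theta_X([P]))$. The hypothesis $\deadend{[P]}{X}$ says, by the definition of $\deadend{\cdot}{\cdot}$ applied in the extended LTS, that the state $[P]$ has no outgoing $\tau$-transitions and no outgoing $a$-transitions for $a\in X$. Inspecting the three operational rules defining transitions of $\theta_X([P])$ from those of $[P]$, the first rule (propagating an internal step to $\theta_X([P]')$) and the second rule (firing a visible action in $X$) are therefore vacuous, while the third, idling, rule fires for every outgoing transition of $[P]$, producing $\theta_X([P])\step{\alpha}R'$ with the same target $R'$ whenever $[P]\step{\alpha}R'$.

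I would then take
\[
  \R := \bigl\{([P],\theta_X([P])),(\theta_X([P]),[P])\bigr\} \;\cup\; \{(R,R)\mid R\text{ a state of the extended LTS}\}
\]
and check that $\R$ is a strong bisimulation. Identity pairs match trivially. For the pair $([P],\theta_X([P]))$, any transition $[P]\step{\alpha}R'$ is matched by $\theta_X([P])\step{\alpha}R'$ via the idling rule, landing on the identically related pair $(R',R')$; conversely, any transition $\theta_X([P])\step{\alpha}R^\ddag$ must arise from the idling rule (the other two being blocked by $\deadend{[P]}{X}$), so $[P]\step{\alpha}R^\ddag$, again landing on $(R^\ddag,R^\ddag)\in\R$.

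I do not expect any real obstacle. The entire argument is a direct unfolding of the operational semantics of $\theta_X$ under the hypothesis that $[P]$ is idling w.r.t.\ $X$; no use of the finer structure of equivalence classes (Lemma~\ref{lem:transition class}) or of up-to techniques is required, since we work with $\bisim$ rather than $\bisimtbr$\,.
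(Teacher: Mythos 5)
Your proof is correct and is exactly the paper's argument: the paper also takes the relation $\{([P],\theta_X([P])),(\theta_X([P]),[P])\}\cup Id$ and observes it is a strong bisimulation directly from the semantics of $\theta_X$ under the hypothesis $\deadend{[P]}{X}$. You have merely spelled out the rule-by-rule case analysis that the paper leaves as ``trivial''.
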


\begin{proof}
    It suffices to see that $\tbisim := \{([P],\theta_X([P])),(\theta_X([P]),[P])\} \cup Id$ is a strong bisimulation thanks to the semantics of $\theta_X$.
\end{proof}

\begin{lemma} \label{lem:theta class}
    Let $P \in \closed^g$ and $X \subseteq A$. Then $\theta_X([P]) \bisimtbr[] [\theta_X(P)]$.
\end{lemma}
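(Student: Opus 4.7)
The plan is to define the symmetric relation
$$\B := \{(\theta_X([Q]),\,[\theta_X(Q)]),\,([\theta_X(Q)],\,\theta_X([Q])) \mid Q \in \closed^g,\; X \subseteq A\}$$
and to show that $\B$ is a concrete branching time-out bisimulation up to reflexivity and transitivity in the sense of Definition \ref{def:uptoRT}. Proposition \ref{prop:uptoRT} then immediately yields $\theta_X([P]) \bisimtbr [\theta_X(P)]$. The key supporting facts are Proposition \ref{prop:class} (giving $\theta_X(P) \bisimtbr \chi([\theta_X(P)])$), the congruence of $\bisimtbr$ for $\theta_X$ from Proposition \ref{prop:stability}, the characterisations of class transitions in Lemma \ref{lem:transition class} and Corollary \ref{cor:transition class}, the operational rules for $\theta_X$ in Figure \ref{fig:ccsp semantics}, and Lemma \ref{lem:theta twice}.

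In the forward direction I translate any transition $\theta_X([P]) \step\alpha R$ into a concrete $\closed^g$-path and back. First, the $\theta_X$-rules unfold it into a class transition $[P] \step\alpha T$ with the appropriate idling precondition; Lemma \ref{lem:transition class} then extracts a witness $P \pathtau P_1 \step\alpha P_2$ in $\closed^g$, and reapplying the $\theta_X$-rules lifts this to a path from $\theta_X(P)$. Since $\chi([\theta_X(P)]) \bisimtbr \theta_X(P)$, branching bisimulation matches this path with one from $\chi([\theta_X(P)])$, which Lemma \ref{lem:transition class} packages back as a class transition of $[\theta_X(P)]$; the reverse direction is dual. For the $\tau$-case there is a dichotomy: if the bisimulation match elides the $\tau$, then $[\theta_X(P_2)] = [\theta_X(P)]$ and the class side replies with zero steps, which is exactly where up-to-transitivity is used; otherwise a genuine class $\tau$-transition is available. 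For visible actions the targets on both sides are the same class, and $\B^*$ relates them by reflexivity.

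The main obstacle I anticipate is the $\rt$-clause. Lemma \ref{lem:transition class}.4 expands $[P] \step\rt T$ (under $\deadend{[P]}{X}$) into an alternating path
$$P \pathtau P_1 \step\rt P_2 \pathtau P_3 \step\rt \cdots \pathtau P_{2r-1} \step\rt P_{2r}$$
of potentially unbounded length with $\deadend{P_{2j+1}}{X}$, $\theta_X(P_i) \in [\theta_X(P)]$ for $i<2r$, and $\theta_X(P_{2r}) \in [\theta_X(\chi(T))]$, whereas Definition \ref{def:uptoRT} permits matching only by a single $\rt$. My proposal is to lift this to $\theta_X(P) \pathtau \theta_X(P_1) \step\rt P_2 \pathtau P_3 \step\rt \cdots \step\rt P_{2r}$ in $\closed^g$ (the first $\rt$-step uses the $\theta_X$-rule together with $\deadend{P_1}{X}$; the remaining $\rt$-steps are plain $\closed^g$-transitions on unwrapped states), then match this entire path from $\chi([\theta_X(P)])$ via branching bisimulation, and invoke Lemma \ref{lem:transition class}.5 on the matched path to compress the whole sequence back into a single class transition $[\theta_X(P)] \step\rt T'$. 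Using Lemma \ref{lem:theta twice} together with congruence, the required condition $\theta_X(T) \,\B^*\, \theta_X(T')$ reduces (via the $\B$-edges $\theta_X(T) \,\B\, [\theta_X(\chi(T))]$ and $\theta_X(T') \,\B\, [\theta_X(\chi(T'))]$) to the equality $[\theta_X(\chi(T))] = [\theta_X(\chi(T'))]$, which follows from the $\bisimtbr$-relation between $\theta_X(P_{2r})$ and $\theta_X$ of the bisimulation-matched final state. Verifying the idling conditions $\init{\cdot}\cap(X\cup\{\tau\})=\emptyset$ throughout, using Corollary \ref{cor:transition class} and Lemma \ref{lem:obvious}, will form the bookkeeping-heavy core of the argument.
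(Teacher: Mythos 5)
Your overall strategy is the same as the paper's: the paper also proves Lemma \ref{lem:theta class} by showing that (essentially) your relation is a concrete branching time-out bisimulation up to reflexivity and transitivity in the sense of Definition \ref{def:uptoRT}, invoking Proposition \ref{prop:uptoRT}, Lemma \ref{lem:transition class}, Corollary \ref{cor:transition class} and the operational rules for $\theta_X$; the only structural difference is that the paper additionally includes strong bisimilarity $\bisim$ in the relation. The $A_\tau$- and stability-clauses you sketch match the paper's argument. The problems are in your treatment of the $\rt$-clause, where there are two concrete gaps.

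First, you conflate the fixed wrapper set $X$ with the universally quantified idling set of Definition \ref{def:uptoRT}. For the pair $(\theta_X([P]),[\theta_X(P)])$, the $\rt$-clause must be verified for \emph{every} $Y$ with $\init{\theta_X([P])}\cap(Y\cup\{\tau\})=\emptyset$ (and since $\deadend{[P]}{X}$ here, $\init{\theta_X([P])}=\init{[P]}$, so such $Y$ need have nothing to do with $X$), and the targets must be compared under $\theta_Y$: you need $\theta_Y(T)\B^*\theta_Y(T')$, whereas your argument only delivers $\theta_X(T)\B^*\theta_X(T')$, i.e.\ the single instance $Y=X$. Repairing this means instantiating Lemma \ref{lem:transition class}.4 and \ref{lem:transition class}.5 at $Y$; the expansion then only yields $\deadend{P_1}{Y}$, so the lifting of the first $\rt$-step through $\theta_X$ (which needs $\deadend{P_1}{X}$) requires a separate argument from $\deadend{[P]}{X}$, $P_1\in[P]$, $P_1\nsteptau$ and Lemma \ref{lem:obvious}.2 — a step your sketch cannot even see because both sets carry the same name. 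Second, you only treat one direction of the $\rt$-clause, namely a time-out of $\theta_X([P])$ answered by $[\theta_X(P)]$. The converse direction, where $[\theta_X(P)]\step{\rt} R'$ (idling in some $Y$) must be answered by $\theta_X([P])$ with a single $\rt$ preceded by $\tau$'s, is where the paper's proof does its main work: expand via Lemma \ref{lem:transition class}.4 into a path from $\theta_X(P)$, note that the first $\rt$-step forces $\deadend{P_1}{X}$ and hence $\theta_X(P_1)\bisim P_1$, project the path down to one from $P$, compress it with Lemma \ref{lem:transition class}.1 and \ref{lem:transition class}.5 into $[P]\pathtau[P_1]\step{\rt} R''$, and re-wrap after checking $\deadend{[P_1]}{X}$ and $\deadend{\theta_X([P_1])}{Y}$. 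This is not obtained from your direction by duality, since the two states have quite different structure. Incidentally, the direction you do labour over is nearly immediate with the paper's shortcut: when $\deadend{[P]}{X}$ one has $[P]=[\theta_X(P)]$ and, by Lemma \ref{lem:class deadlock theta}, $\theta_X([P])\bisim[P]$, so adding $\bisim$ to the relation (as the paper does) dispatches that case uniformly in $Y$ without any expansion/compression machinery.
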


\begin{proof}
    We will show that ${\tbisim} := {{\bisim} \cup \!\{(\theta_X([P]),[\theta_X(\!P)]), ([\theta_X(\!P)],\theta_X([P])) \!\mid\! P \mathbin\in \closed^g\! \wedge X \mathbin\subseteq A\}}$ is a concrete branching time-out bisimulation up to reflexivity and transitivity.
     \begin{itemize}
        \item If $\theta_X([P]) \steptau R'$ then $[P] \steptau R^\dag$ with $R' = \theta_X(R^\dag)$. According to Lemma \ref{lem:transition class}.1, there exists a path $P \pathtau P_1 \steptau P_2$ such that $P_1 \in [P]$ and $P_2 \in R^\dag = [P_2]$ and $[P] \ne R^\dag$. Thus, there exists a path $\theta_X(P) \pathtau \theta_X(P_1) \steptau \theta_X(P_2)$ such that $\theta_X(P_1) \in [\theta_X(P)]$. If $\theta_X(P_1) \bisimtbr \theta_X(P_2)$ then $[\theta_X(P)] = [\theta_X(P_2)]$. Otherwise, $[\theta_X(P)] \ne [\theta_X(P_2)]$, thus, according to Lemma \ref{lem:transition class}.1, $[\theta_X(P)] \steptau [\theta_X(P_2)]$. In either case, there exists a transition $[\theta_X(P)] \step{\opt{\tau}} [\theta_X(P_2)]$ such that, by definition of $\tbisim$, $[\theta_X(P_2)] \tbisim \theta_X([P_2]) = R'$.
        \item If $[\theta_X(P)] \steptau R'$ then, according to Lemma \ref{lem:transition class}.1, there exists a path $\theta_X(P) \pathtau P_1 \steptau P_2$ such that $P_1 \in [\theta_X(P)]$, $P_2 \in R'$ and $[\theta_X(P)] \ne R'$. Thus, there exists a path $P \pathtau P^\dag \steptau P^\ddag$ such that $P_1 = \theta_X(P^\dag)$ and $P_2 = \theta_X(P^\ddag)$. Notice that, since $[P_1] \ne [P_2]$, $[P^\dag] \ne [P^\ddag]$. According to Lemma \ref{lem:transition class}.1, there exists a path $[P] \pathtau [P^\dag] \steptau [P^\ddag]$. Thus, $\theta_X([P]) \pathtau \theta_X([P^\dag]) \steptau \theta_X([P^\ddag])$. Moreover, by definition of $\tbisim$, $\theta_X([P^\dag]) \tbisim [\theta_X(P^\dag)] = [\theta_X(P)]$ and $\theta_X([P^\ddag]) \tbisim [\theta_X(P^\ddag)] = R'$.
        \item If $\theta_X([P]) \step{a} R'$ with $a \in A$ then $[P] \step{a} R'$ and $a \in X \vee \deadend{[P]}{X}$. If $\deadend{[P]}{X}$, according to Lemma \ref{lem:transition class}.3, there exists a path $P \pathtau P_0$ such that $P_0 \in [P]$ and $\deadend{P_0}{X}$. Otherwise, set $P_0 := P$. Since $[P_0] \step{a} R'$, according to Lemma \ref{lem:transition class}.1, there exists a path $P_0 \pathtau P_1 \step{a} P_2$ such that $P_1 \in [P_0]$ and $P_2 \in R'$. Notice that $\deadend{[P]}{X} \Rightarrow \deadend{P_0}{X} \wedge P_0 = P_1$. Thus, there exists a path $\theta_X(P_0) \pathtau \theta_X(P_1) \step{a} P_2$ such that $\theta_X(P_1) \in [\theta_X(P)]$. According to Lemma \ref{lem:transition class}.1, there exists a transition $[\theta_X(P)] \step{a} R'$.
        \item If $[\theta_X(P)] \step{a} R'$ with $a \in A$ then, according to Lemma \ref{lem:transition class}.1, there exists a path $\theta_X(P) \pathtau P_1 \step{a} P_2$ such that $P_1 \in [\theta_X(P)]$ and $P_2 \in R'$. Thus, there exists a path $P \pathtau P^\dag \step{a} P_2$ such that $P_1 = \theta_X(P^\dag)$ and $a \in X \vee \deadend{P^\dag}{X}$. According to Lemma \ref{lem:transition class}.1, there exists a path $[P] \pathtau [P^\dag] \step{a} [P_2]$. Thus, $\theta_X([P]) \pathtau \theta_X([P^\dag]) \step{a} [P_2]$ since $\deadend{P^\dag}{X} \Rightarrow \deadend{[P^\dag]}{X}$ by Lemma \ref{lem:transition class}.3. Moreover, by definition of $\tbisim$, $\theta_X([P^\dag]) \tbisim [\theta_X(P^\dag)] = [\theta_X(P)]$ and $[P_2] = R' \tbisim^* R'$.
        \item If $\deadend{\theta_X([P])}{Y}$ and $\theta_X([P]) \step{\rt} R'$ then $\deadend{[P]}{X}$; thus, according to Lemma \ref{lem:transition class}.3, there exists a path $P \pathtau P_0$ such that $P_0 \in [P]$ and $\deadend{P_0}{X}$. Since $\deadend{P_0}{X}$, $P \bisimtbr P_0 \bisim \theta_X(P_0) \bisimtbr \theta_X(P)$ and so $[P] = [\theta_X(P)]$. Since $\deadend{[P]}{X}$, according to Lemma \ref{lem:class deadlock theta}, $\theta_X([P]) \bisim [P] = [\theta_X(P)]$.
         \item If $\deadend{[\theta_X(P)]}{Y}$ and $[\theta_X(P)] \step{\rt} R'$ then, according to Lemma \ref{lem:transition class}.4, there exists $r>0$ and a path $\theta_X(P) \pathtau \theta_X(P_1) \step{\rt} P_2 \pathtau P_3 \step{\rt} ... \pathtau P_{2r-1} \step{\rt} P_{2r}$ such that $\theta_X(P_1) \in [\theta_X(P)]$, $\deadend{\theta_X(P_1)}{Y}$ and $\forall i \in [1,r{-}1]$, $\theta_Y(P_{2i}) \in [\theta_Y(\theta_X(P))] \wedge \deadend{P_{2i+1}}{Y}$, $\theta_Y(P_{2r}) \in [\theta_Y(\chi(R'))]$ and $[\theta_Y(\theta_X(P))] \ne [\theta_Y(\chi(R'))]$. Since $\theta_X(P_1) \step{\rt} P_2$, $P_1 \step{\rt} P_2$ and $\deadend{P_1}{X}$, thus, $\theta_X(P) \bisimtbr \theta_X(P_1) \bisim P_1$. Therefore, $P \pathtau P_1$ and there exists a path $P_1 \step{\rt} P_2 \pathtau P_3 \step{\rt} ... \pathtau P_{2r-1} \step{\rt} P_{2r}$ such that $P_1 \in [P_1]$, $\forall i \in [1,r{-}1]$, $\theta_Y(P_{2i}) \in [\theta_Y(P_1)] \wedge \deadend{P_{2i+1}}{Y}$ and $[\theta_Y(P_1)] \ne [\theta_Y(\chi(R'))] = [\theta_Y(P_{2r})]$. According to Lemma \ref{lem:transition class}.1 and~\ref{lem:transition class}.5, there exists a path $[P] \pathtau [P_1] \step{\rt} R''$ for some $R''\in[\closed^g]$ with $\theta_Y(P_{2r}) \in [\theta_Y(\chi(R''))]$. Thus, $\theta_X([P]) \pathtau \theta_X([P_1]) \step{\rt} R'$ since $\deadend{[P_1]}{X}$. Since $P_1 \bisim \theta_X(P_1)$ and $\deadend{\theta_X(P_1)}{Y}$, Lemma~\ref{lem:transition class}.3 yields $\deadend{[P_1]}{Y}$, and hence $\deadend{\theta_X([P_1])}{Y}$. Moreover, by definition of $\tbisim$, $\theta_Y(R') \tbisim [\theta_Y(\chi(R'))] = [\theta_Y(P_{2r})] =  [\theta_Y(\chi(R''))] \tbisim \theta_Y(R'')$.
        \item If $\theta_X([P]) \nsteptau$ then $[P] \nsteptau$. According to Lemma \ref{lem:transition class}.2, there exists a path $P \pathtau P_0 \nsteptau$ such that $P_0 \in [P]$. Thus, there exists a path $\theta_X(P) \pathtau \theta_X(P_0) \nsteptau$ such that $\theta_X(P_0) \in [\theta_X(P)]$. According to Lemma \ref{lem:transition class}.2, $[\theta_X(P)] \nsteptau$.
        \item If $[\theta_X(P)] \nsteptau$ then, according to Lemma \ref{lem:transition class}.2, there exists a path $\theta_X(P) \pathtau P_0 \nsteptau$ such that $P_0 \in [\theta_X(P)]$. Thus, there exists a path $P \pathtau P^\dag \nsteptau$ such that $P_0 = \theta_X(P^\dag)$. According to Lemma \ref{lem:transition class}.1--2, there exists a path $[P] \pathtau [P^\dag]\nsteptau$. Thus, $\theta_X([P]) \pathtau \theta_X([P^\dag]) \nsteptau$. 
\popQED          
    \end{itemize}
\end{proof}

\begin{proof}[Proof of Proposition \ref{prop:class}]
    We are going to show that $\tbisim := \{(P,[P]),([P],P) \mid P \in \closed^g\}$ is a branching time-out bisimulation up to $\bisimtbr$ (see Definition~\ref{def:up to b}).
    \begin{enumerate}
        \item 
        \begin{itemize}
            \item Let $P \pathtau P' \step{\alpha} P''$ with $\alpha \in A_\tau$ and $P \bisimtbr P'$. If $\alpha \in A \vee P \,\not\!\bisimtbr P''$ then, according to Lemma \ref{lem:transition class}.1, $[P] \step{\alpha} [P'']$ and, by definition of $\tbisim$, $P' \tbisim [P'] = [P]$ and $P'' \tbisim [P'']$. Otherwise, $\alpha = \tau \wedge P \bisimtbr P''$ thus, by definition of $\tbisim$, $P'' \tbisim [P''] = [P]$. In either case, there exists a path $[P] \step{\opt{\alpha}} [P'']$ such that $P' \tbisim [P]$ and $[P''] \tbisim P''$.
            \item If $[P] \pathtau R' \step{\alpha} R''$ with $\alpha \in A_\tau$ and $[P] \bisimtbr R'$ then, according to Lemma \ref{lem:transition class}.1, $P \pathtau P_1 \step{\alpha} P_2$ such that $P_1 \in R'$ and $P_2 \in R''$. Thus, by definition of $\tbisim$, $P_1 \tbisim [P_1] = R'$ and $P_2 \tbisim [P_2] = R''$. 
        \end{itemize}
        \item
        \begin{itemize}
            \item Let $P \pathtau P_1 \step{\rt} P_2 \pathtau P_3 \step{\rt} ... \pathtau P_{2r-1} \step{\opt{\rt}} P_{2r}$ with $r>0$, and $\forall i \mathbin\in [0,r{-}1]$, $\theta_X(P) \bisimtbr \theta_X(P_{2i}) \wedge P \bisimtbr P_{2i+1} \wedge \deadend{P_{2i+1}}{X}$. If $\theta_X(P) \bisimtbr \theta_X(P_{2r})$ then $\forall i \mathbin\in [0,r],\; \theta_X(P_{2i}) \tbisim [\theta_X(P_{2i})] = [\theta_X(P)] \bisimtbr \theta_X([P])$, in the last step applying Lemma~\ref{lem:theta class}. Otherwise, $P_{2r-1}\mathbin{\neq} P_{2r}$ and by Lemma \ref{lem:transition class}.5 there exists a transition $[P] \mathbin{\step{\rt}} R'$ such that $\theta_X(P_{2r}) \bisimtbr \theta_X(\chi(R')) \tbisim [\theta_X(\chi(R'))] \bisimtbr \theta_X(R')$ and $\forall i \mathbin\in [0,r{-}1]$,\linebreak $\theta_X(P_{2i}) \tbisim [\theta_X(P_{2i})] = [\theta_X(P)] \bisimtbr \theta_X([P])$. In either case, there exists a transition $[P] \step{\opt{\rt}} R'$ such that $\theta_X(P) \bisimtbr \tbisim \bisimtbr \theta_X([P])$ and $\theta_X(P_{2r}) \upto[\bisimtbr] \theta_X(R')$. Moreover, by Lemma~\ref{lem:transition class}.3, $\deadend{[P]}{X}$ since $\deadend{P_1}{X}$ and $P_1 \in [P]$.
            \item Let $[P] = R_0 \pathtau R_1 \step{\rt} R_2 \pathtau R_3 \step{\rt} ... \pathtau R_{2r-1} \step{\opt{\rt}} R_{2r}$ with $r > 0$, such that $\forall i \in [0,r{-}1]$, $\theta_X([P]) \bisimtbr \theta_X(R_{2i}) \wedge [P] \bisimtbr R_{2i+1} \wedge \deadend{R_{2i+1}}{X}$.\linebreak[4] Then, according to Lemma \ref{lem:transition class}.4 and Corollary~\ref{cor:transition class}, there exists a path $P = P_0 \pathtau P_1 \step{\rt} P_2 \pathtau P_3 \step{\rt} ... \pathtau P_{2k-1} \step{\opt{\rt}} P_{2k}$ with $k>0$, such that $\forall j \in [0,k{-}1]$, $\deadend{P_{2j+1}}{X}$ and $\forall j \in [0,2k{-}1]$, $\exists i \in [0,2r{-}1]$, $\theta_X(P_{j}) \in [\theta_X(\chi(R_{i}))]$ and $\theta_X(P_{2k}) \in [\theta_X(\chi(R_{2r}))]$. Thus, applying Lemma~\ref{lem:theta class}, $\forall j \in [0,2k{-}1], \exists i \in [0,2r{-}1]$, $\theta_X(P_{j}) \tbisim [\theta_X(P_{j})] = [\theta_X(\chi(R_i))] \bisimtbr \theta_X(R_i) \bisimtbr \theta_X([P])$ and $\theta_X(P_{2k}) \tbisim [\theta_X(P_{2k})] = [\theta_X(\chi(R_{2r}))] \bisimtbr \theta_X(R_{2r})$.
        \end{itemize}
        \item
        \begin{itemize}
            \item If $P \pathtau P_0 \nsteptau$ with $P \bisimtbr P_0$ then, according to Lemma \ref{lem:transition class}.2, $[P] \nsteptau$.
            \item If $[P] \pathtau R' \nsteptau$ with $[P] \bisimtbr R'$ then, according to Lemma \ref{lem:transition class}.1--2, there exists a path $P \pathtau P' \pathtau P_0 \nsteptau$ such that $P',P_0 \in R'$.
            \popQED
        \end{itemize}
    \end{enumerate}
\end{proof}

\section{Completeness Proof by Canonical Representatives} \label{app:canonical}

\begin{lemma} \label{lem:simplication}
    Let $P,Q \in \closed^g$.
    \begin{itemize}
        \item $[P] \bisimtbr[] [Q] \Rightarrow [P] = [Q]$.
        \item $\theta_X([P]) \bisimtbr \theta_X([Q]) \Rightarrow \theta_X([P]) \bisimtb \theta_X([Q])$.
    \end{itemize}
\end{lemma}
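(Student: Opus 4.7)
\smallskip
\noindent\textbf{Plan.}\ \ For the first bullet the argument is immediate: Proposition~\ref{prop:class} gives $P \bisimtbr [P]$ and $Q \bisimtbr [Q]$, so combining with the hypothesis by transitivity of $\bisimtbr$ (Proposition~\ref{prop:equivalence}) yields $P \bisimtbr Q$, hence $[P] = [Q]$ by definition of the $\bisimtbr$-equivalence classes.

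For the second bullet I would extract a $\rt$-branching bisimulation from a witnessing generalised \tb reactive bisimulation $\R$ between $\theta_X([P])$ and $\theta_X([Q])$ (Proposition~\ref{prop:generalised}). The crucial structural observation is that in the part of the extended LTS reachable from $\theta_X([P])$ and $\theta_X([Q])$, every state $T$ with $T \step{\rt}$ also satisfies $T \nsteptau$: when $T = \theta_X(R)$ the semantics of $\theta_X$ forces $\deadend{R}{X}$ and hence $R \nsteptau$, so $T \nsteptau$; when $T$ is a class this is Remark~\ref{tortau}. Consequently, whenever an outgoing $\rt$-transition has to be matched, the reactive apparatus of $\R$ can be invoked with the empty environment.

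Taking $\R^{\!*}$ to be the pair part of $\R$, Clause~1 of Definition~\ref{def:non-reactive} reduces to the $A_\tau$-case (no $\rt_\epsilon$ or $\epsilon_Y$ actions occur in this LTS) and follows directly from Clause~1.a of $\R$, while Clause~3 follows from Clause~1.c of the generalised definition. For the central Clause~2 ($\rt$-elision), given $P^\dag \step{\rt} P^\ddag$ with $P^\dag \mathrel{\R^{\!*}} Q^\dag$, the observation yields $P^\dag \nsteptau$, and the generalised Clause~1.b applied with $X'{=}\emptyset$ produces a witness path $Q^\dag = S_0 \pathtau S_1 \step{\rt} S_2 \pathtau S_3 \step{\rt} \dots \step{\opt{\rt}} S_{2r}$ in which each $S_{2i+1}$ is $\nsteptau$, with triples $\R(P^\dag,\emptyset,S_{2i})$ at the even positions and $\R(P^\ddag,\emptyset,S_{2r})$ at the end.

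The main obstacle is promoting these triples to the pair relations that Definition~\ref{def:non-reactive} demands at every index of the path. At the stable odd-indexed states $S_{2i+1}$, Lemma~\ref{lem:obvious}.1 propagates the triple across $S_{2i}\pathtau S_{2i+1}$ and Lemma~\ref{lem:obvious}.3 then strips the empty environment, giving $\R^{\!*}(P^\dag,S_{2i+1})$ at once. At the post-$\rt$ even-indexed states, which need not themselves be $\tau$-stable, I would use Lemma~\ref{lem:obvious}.4 to exhibit a $\tau$-stable $\tau$-descendant $\hat S_{2i}$ of each $S_{2i}$ on which the pair relation already holds, and rebuild the witness path iteratively so that the post-$\rt$ state chosen at each step is this stable $\hat S_{2i}$. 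The final pair $\R^{\!*}(P^\ddag,S_{2r})$ is obtained in the same way. The hardest technical point is verifying that this rebuilding can be carried out consistently, preserving the alternating $\pathtau\,\rt$ pattern of Definition~\ref{def:non-reactive} along the entire witness path.
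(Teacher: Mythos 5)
Your first bullet is exactly the paper's argument and is fine. The gap is in the second bullet, and it sits precisely at what you call ``the hardest technical point''. What Clause 1.b of Definition~\ref{def:generalised} gives you at the end of the matching path is only the triple $\R(P^\ddag,\emptyset,S_{2r})$, i.e.\ $P^\ddag \bisimtbr[\emptyset] S_{2r}$; neither $P^\ddag$ nor $S_{2r}$ is a state with an outgoing $\rt$, so your stability observation does not apply to them, Lemma~\ref{lem:obvious}.3 is unusable, and obtaining ``the final pair $\R^*(P^\ddag,S_{2r})$ in the same way'' does not work: the path format of Clause~2 of Definition~\ref{def:non-reactive} ends immediately after the last (optional) $\rt$, so you cannot append the $\tau$-steps leading to a stable descendant $\hat S_{2r}$ without creating a further block whose pre-$\rt$ state ($\hat S_{2r}$ itself) must then be related to $P^\dag$ as well --- a relation you do not have. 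The same problem defeats the ``rebuild iteratively'' step for the intermediate post-$\rt$ states: once $S_{2i}$ is replaced by a stable descendant $\hat S_{2i}$, the original tail of the witness path is lost, re-invoking the time-out matching clause from $(P^\dag,\hat S_{2i})$ reproduces the same promotion problem one block further on, and you give no termination measure. In effect you are trying to prove that $\bisimtbr$ implies $\bisimtb$ on every LTS in which $\rt$-enabled states are $\tau$-stable, using none of the canonical-class structure; that is a much stronger statement than the lemma and is not established by your sketch. Two smaller points: Lemma~\ref{lem:obvious} is proved for \tb reactive bisimulations, and its Claims 3 and 4 rest on Clauses 2.c and 2.e, which Definition~\ref{def:generalised} lacks, so you may not apply them to the triples of your fixed generalised $\R$; you would have to pass to the bisimilarities via Proposition~\ref{prop:generalised}, after which your candidate relation can no longer be ``the pair part of $\R$''. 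Also, that pair part contains pairs outside the reachable sub-LTS, for which your structural observation fails, so the relation would in any case have to be restricted.

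The paper's proof avoids all of this by exploiting the class structure through the first bullet, which your sketch never uses. For the only non-trivial pair $(\theta_X([P]),\theta_X([Q]))$, a transition $\theta_X([P]) \step{\rt} R'$ forces $\deadend{[P]}{X}$ and $[P] \step{\rt} R'$; from $[P] \bisimtbr[X] [Q]$, Clause 2.c of Definition~\ref{def:intuitive} yields a path $[Q] \pathtau R$ with $[P] \bisimtbr R$, and since $R$ and $[P]$ are classes the first bullet gives $R = [P]$. Hence $\theta_X([Q]) \pathtau \theta_X([P]) \step{\rt} R'$: the matching path ends in literally the same state $R'$, so the relation $\mathit{Id} \cup \{(\theta_X([P]),\theta_X([Q])) \mid \theta_X([P]) \bisimtbr \theta_X([Q])\}$ suffices and no promotion of triples to pairs is ever needed. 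If you want to salvage your route, you must bring in this collapsing of $\bisimtbr$-related classes (or comparable facts such as Lemma~\ref{lem:transition class}) to pin down the end states of the matching paths; the stability observation alone does not do it.
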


\begin{proof}
    ~
    \begin{itemize}
        \item If $[P] \bisimtbr[] [Q]$ then, by Proposition~\ref{prop:class}, $P \bisimtbr[] [P] \bisimtbr[] [Q] \bisimtbr Q$. Thus, $[P] = [Q]$.
        \item We are going to show that ${\tbisim} := {\it Id} \cup \{(\theta_X([P]),\theta_X([Q])) \mid \theta_X([P]) \bisimtbr \theta_X([Q])\}$ is a $\rt$-branching bisimulation. Suppose $\theta_X([P]) \bisimtbr \theta_X([Q])$. The first and third clause of Definition~\ref{def:non-reactive} are trivially satisfied, because Definition~\ref{def:time-out bisim} features the same clauses. Towards the second clause, suppose $\theta_X([P]) \step\rt R'$. Then $\deadend{[P]}{X}$ and $[P]\step\rt R'$. As $[P] \bisimtbr[X] [Q]$, by Clause 2.c of Definition~\ref{def:intuitive} there is a path $[Q] \pathtau R$ for some $R\in[\closed^g]$ with $[P] \bisimtbr R$. By the previous statement of this lemma, $R=[P]$. Thus $\theta_X([Q])\pathtau \theta_X([P]) \step\rt R'$, which suffices to satisfy the second clause of  Definition~\ref{def:non-reactive}.
\popQED
    \end{itemize}
\end{proof}

\begin{proof}[Proof of Proposition \ref{prop:canonical}]
    Let $\equa'$ be a recursive specification such that $V_{\equa'} := \{y_{P'} \mid P' \in \reach{P}\}$ and, for all $P' \in \reach{P}$, $\equa_{y_{P'}} := \sum_{\{(\alpha,P'') \mid P' \step{\alpha} P''\}}\alpha.y_{P''}$. Note that $\equa'$ is strongly guarded since $P$ is. We are going to show that $P$ and $\langle x_P|\equa\rangle$ are both $y_P$-components of solutions of $\equa'$, so that the proposition follows by RSP.

    First of all, consider $\rho: V_{\equa'} \rightarrow \closed$ such that $\forall P' \in \reach{P},\; \rho(y_{P'}) := P'$. For all $P' \in \reach{P}$, $Ax^\infty_r \vdash \rho(y_{P'}) = P' = \sum_{\{(\alpha,P'') \mid P' \step{\alpha} P''\}}\alpha.P'' = \sum_{\{(\alpha,P'') \mid P' \step{\alpha} P''\}}\alpha.\rho(y_{P''})$ is a direct application of Lemma \ref{lem:head-normal form}. Thus, for all $P' \in \reach{P}$, $Ax^\infty_r \vdash \rho(y_{P'}) = \equa'_{y_{P'}}[\rho]$, i.e., $\rho$ is a solution of $\equa'$ up to $\bisimtbr$\,, and $\rho(y_P) = P$.

    Next, consider $\nu: V_{\equa'} \rightarrow \closed$ such that, for all $P' \in \reach{P}$, 
    \begin{align*}
        \nu(y_{P'}) := \sum_{\{(\alpha,P'') \mid P' \step{\alpha} P''\}}\alpha.\langle x_{[P'']}\mid\equa\rangle
    \end{align*}
    We are going to show that, for all $\alpha \in Act$ and all $P' \in \reach{P}$, $Ax^\infty_r \vdash \alpha.\nu(y_{P'}) = \alpha.\langle x_{[P']}\mid\equa\rangle$. Let $P' \in \reach{P}$.
    \begin{itemize}
        \item If $\exists P' \steptau P'',\; P' \bisimtbr P''$ then $\{(\alpha,[P'']) \mid P' \step{\alpha} P'' \wedge (\alpha \in A \vee (\alpha=\tau \wedge P' \,\not\!\bisimtbr P''))\} \subseteq \{(\alpha,R) \mid [P'] \step{\alpha} R\}$. Thus,
        \begin{align*}
            Ax^\infty_r \vdash \alpha.\nu(y_{P'}) & = \alpha.(\sum_{\{(\alpha,P'') \mid P' \step{\alpha} P'' \wedge \alpha \ne \rt\}}\alpha.\langle x_{[P'']}\mid\equa\rangle) \qquad\qquad(\hyperlink{Lt}{\mbox{\bf L}\tau})\\
            & = \alpha.(\sum_{\{(\alpha,P'') \mid P' \step{\alpha} P'' \wedge (\alpha \in A \vee (\alpha=\tau \wedge P \,\,\not\!\scriptrbis{}{\!br}\, P'))\}}\hspace{-50pt}\alpha.\langle x_{[P'']}\mid\equa\rangle + \tau.\langle x_{[P']}\mid\equa\rangle) \\
            & = \alpha.\langle x_{[P']}\mid\equa\rangle \qquad\qquad\qquad\qquad\mbox{(branching axiom and RDP)}
        \end{align*}
        \item If there exists $X \subseteq A$ and a transition $P' \step{\rt} P''$ such that $\deadend{P'}{X}$ and $\theta_X(P') \bisimtbr \theta_X(P'')$ then, since $P' \nsteptau$, for all $\alpha \in A_\tau$, $P' \step{\alpha} P'' \wedge P'' \in R \iff [P'] \step{\alpha} R$, using Lemma~\ref{lem:transition class}.1. Moreover, according to Lemma \ref{lem:independent}, $\{(\rt,P'') \mid P' \step{\rt} P''\} = \mbox{}$
         \[\begin{array}{c@{~}l}&\{(\rt,P'') \mid P' \step{\rt} P'' \wedge \forall Y \subseteq A,\, \deadend{P'}{Y} \Rightarrow \theta_Y(P') \bisimtbr \theta_Y(P'')\} \\ \uplus  & \{(\rt,P'') \mid P' \step{\rt} P'' \wedge \forall Y \subseteq A,\, \deadend{P'}{Y} \Rightarrow \theta_Y(P') \,\not\!\bisimtbr \theta_Y(P'')\}.\end{array}\] 
         For all $(\rt,P'')$ in the first class, and for all $Z \subseteq A$ such that $\deadend{P'}{Z}$, \hyperlink{recall}{recalling that $\langle x_R |\equa\rangle \bisim R$ for all $R$}, and applying Lemma~\ref{lem:theta class} twice, we obtain
         $$\theta_Z(\langle x_{[P'']}|\equa\rangle) \bisim \theta_Z([P'']) \bisimtbr[] [\theta_Z(P'')] = [\theta_Z(P')] \bisimtbr \theta_Z([P']) \bisim \theta_Z(\langle x_{[P']}|\equa\rangle).$$
         Moreover, $\init{[P']}=\init{P'}$.
         Therefore the reactive approximation axiom yields
        \[ 
        \sum_{\{(\alpha,R) \mid [P']\step{\alpha}R \wedge \alpha \in A_\tau\}}\hspace{-10pt}\alpha.\langle x_R\mid\equa\rangle + \rt.\langle x_{[P'']}\mid\equa\rangle = \sum_{\{(\alpha,R) \mid [P']\step{\alpha}R \wedge \alpha \in A_\tau\}}\hspace{-10pt}\alpha.\langle x_R\mid\equa\rangle + \rt.\langle x_{[P']}\mid\equa\rangle
        \]
        According to Lemma \ref{lem:transition class}.5, for all $(\rt,P'')$ in the second class, and for all $Z \subseteq A$ such that $\deadend{P'}{Z}$, there exists a transition $[P'] \step{\rt} R'$ such that $\theta_Z(P'') \in [\theta_Z(\chi(R'))]$. Thus, applying Lemma~\ref{lem:theta class} twice, $\theta_Z([P'']) \bisimtbr[] [\theta_Z(P'')] = [\theta_Z(\chi(R'))] \bisimtbr \theta_Z(R')$ and so, according to Lemma \ref{lem:simplication}, $\theta_Z([P'']) \bisimtb \theta_Z(R')$. \hyperlink{recall}{Recalling that $\langle x_R |\equa\rangle \bisim R$ for all $R$}, we obtain $\theta_Z(\langle x_{[P'']}|\equa\rangle) \bisimtb \theta_Z(\langle x_{R'}|\equa\rangle)$, and thus $\rt.\theta_Z(\langle x_{[P'']}|\equa\rangle) \bisimrtb \rt.\theta_Z(\langle x_{R'}|\equa\rangle)$. Therefore, thanks to Theorem \ref{thm:completeness}, using that $Ax^\infty$ can be derived from $Ax^\infty_r$, $Ax^\infty_r \vdash \rt.\theta_Z(\langle x_{[P'']}\mid\equa\rangle) = \rt.\theta_Z(\langle x_{R'}\mid\equa\rangle)$. As a result, using RDP and the reactive approximation axiom,
        \begin{align*}
            Ax^\infty_r \vdash \!\!\sum_{\{(\rt,P'') \mid P'\step{\rt}P'' \wedge \forall X \subseteq A,\; \deadend{P'}{X} ~\Rightarrow~ \theta_X(P') \,\not\!\scriptrbis{}{br}\theta_X(P'')\}}\hspace{-80pt}\rt.\langle x_{[P'']}\mid\equa\rangle) + \langle x_{[P']}\mid\equa\rangle = \langle x_{[P']}\mid\equa\rangle
        \end{align*}
        Therefore, for all $\alpha \in Act$,
        \begin{align*}
            Ax^\infty_r \vdash \alpha.\nu(y_{P'})& = \alpha.(\sum_{\{(\alpha,P'') \mid P'\step{\alpha}P''\}}\alpha.\langle x_{[P'']}\mid\equa\rangle) \\
            & = \alpha.(\sum_{\{(\alpha,R) \mid [P']\step{\alpha}R \wedge \alpha \in A_\tau\}}\alpha.\langle x_R\mid\equa\rangle + \rt.\langle x_{[P']}\mid\equa\rangle \\
            & + \sum_{\{(\rt,P'') \mid P'\step{\rt}P'' \wedge \forall X \subseteq A,\; \deadend{P'}{X} ~\Rightarrow~ \theta_X(P') \,\not\!\scriptrbis{}{br} \theta_X(P'')\}}\hspace{-60pt}\rt.\langle x_{[P'']}\mid\equa\rangle) \\
            & = \alpha.\langle x_{[P']}\mid\equa\rangle \qquad\qquad\qquad\qquad\mbox{(t-branching axiom and RDP)}
        \end{align*}
        \item If $\forall P \steptau P',\, P \,\not\!\bisimtbr P'$ and $\forall X \subseteq A,\; \deadend{P}{X} \Rightarrow \forall P \step{\rt} P',\; \theta_X(P) \mathrel{\,\not\!\bisimtbr} \theta_X(P')$ then, for all $\alpha \in A_\tau$, $P \step{\alpha} P' \wedge P' \in R' \iff [P] \step{\alpha} R'$ and $\init{P} = \init{[P]}$. Moreover, if $\deadend{P}{X}$ and $[P] \step{\rt} R'$ then there exists a transition $P \step{\rt} P'$ with $\theta_X(P') \in [\theta_X(\chi(R'))]$. Thus $\theta_X([P']) \bisimtbr \theta_X(R')$ so $\theta_X([P']) \bisimtb \theta_X(R')$ by Lemma~\ref{lem:simplication}, and therefore $Ax^\infty_r \vdash \rt.\theta_X([P']) = \rt.\theta_X(R')$. Conversely, if $\deadend{P}{X}$ and $P \step{\rt} P'$ then there exists a transition $[P] \step{\rt} R'$ such that $\theta_X(P') \in [\theta_X(\chi(R'))]$ and thus $Ax^\infty_r \vdash  \rt.\theta_X([P']) =  \rt.\theta_X(R')$. Using the reactive approximation axiom, $Ax^\infty_r \vdash \nu(y_{P'}) = \langle x_{[P']} \mid\equa\rangle$ and so, for all $\alpha \in Act$, $Ax^\infty_r \vdash \alpha.\nu(y_{P'}) = \alpha.\langle x_{[P']} \mid\equa\rangle$.
    \end{itemize}
    As a result, for all $P' \in \reach{P}$, $ Ax^\infty_r \vdash \nu(y_{P'}) = \sum_{\{(\alpha,P'') \mid P' \step{\alpha} P''\}}\alpha.\langle x_{[P'']}|\equa\rangle = \sum_{\{(\alpha,P'') \mid P' \step{\alpha} P''\}}\alpha.\nu(y_{P''}) = \equa_{y_{P'}}[\nu]$, so $\nu$ is a solution of $\equa'$ up to $\bisimtbr$\,. Moreover, $\nu(y_P) = \sum_{\{(\alpha,P') \mid P \step{\alpha} P'\}}\alpha.\langle x_{[P']}|\equa\rangle$ which can be equated to $\langle x_P | \equa\rangle$ by a single application of RDP.
\end{proof}

\begin{proof}[Proof of Theorem \ref{thm:canonical}]
    According to Proposition \ref{prop:canonical}, it suffices to establish that $Ax^\infty_r \vdash \langle x_P|\equa\rangle = \langle x_Q|\equa\rangle$. By applying RDP, this amounts to proving that
    \begin{align*}
        Ax^\infty_r \vdash \sum_{\{(\alpha,P') \mid P\step{\alpha}P'\}}\alpha.\langle x_{[P']}\mid\equa\rangle = \sum_{\{(\alpha,Q') \mid Q\step{\alpha}Q'\}}\alpha.\langle x_{[Q']}\mid\equa\rangle
    \end{align*}
    Let $(\alpha,P')$ such that $P \step{\alpha} P'$ and $\alpha \in A_\tau$. Since $P \bisimrtbr Q$, there exists a transition $Q \step{\alpha} Q'$ such that $P' \bisimtbr Q'$. Thus, $[P'] = [Q']$ and so $\langle x_{[P']}\mid\equa\rangle = \langle x_{[Q']}\mid\equa\rangle$. The same observation can be made for all $(\alpha,Q')$ such that $Q \step{\alpha} Q'$ and $\alpha \in A_\tau$. As a result, $\init{P} = \init{Q}$ and
    \begin{align*}
        Ax^\infty_r \vdash \sum_{\{(\alpha,P') \mid P\step{\alpha}P' \wedge \alpha \in A_\tau\}}\alpha.\langle x_{[P']}\mid\equa\rangle = \sum_{\{(\alpha,Q') \mid Q\step{\alpha}Q' \wedge \alpha \in A_\tau\}}\alpha.\langle x_{[Q']}\mid\equa\rangle
    \end{align*}
    Let $(\rt,P')$ be such that $P \step{\rt} P'$. Since $P \bisimrtbr Q$, for all $X \subseteq A$ such that $\deadend{P}{X}$, there exists a transition $Q \step{\rt} Q'$ such that $\theta_X(P') \bisimtbr \theta_X(Q')$. Thus, $\theta_X([P']) \bisimtbr \theta_X([Q'])$ by Proposition~\ref{prop:class}, and $\theta_X([P']) \bisimtb \theta_X([Q'])$ by Lemma \ref{lem:simplication}. \hyperlink{recall}{Recalling that $\langle x_R |\equa\rangle \bisim R$ for all $R$}, $\theta_X(\langle x_{[P']}\mid\equa\rangle)\linebreak[2] \bisimtb \theta_X(\langle x_{[Q']}\mid\equa\rangle)$ and hence $\rt.\theta_X(\langle x_{[P']}\mid\equa\rangle) \bisimrtb \rt.\theta_X(\langle x_{[Q']}\mid\equa\rangle)$. Since $Ax^\infty$ can be derived from $Ax^\infty_r$, according to Theorem \ref{thm:completeness}, $Ax^\infty_r \vdash \rt.\theta_X(\langle x_{[P']}\mid\equa\rangle) = \rt.\theta_X(\langle x_{[Q']}\mid\equa\rangle)$. The same observation can be made for all $(\rt,Q')$ such that $Q \step{\rt} Q'$. Let $X \subseteq A$. If $P \step{\alpha}$ with $\alpha \in X \cup\{\tau\}$ then 
    \begin{align*}
        Ax^\infty_r \vdash \psi_X(\langle x_{P}|\equa\rangle) & = \sum_{\{(\alpha,P') \mid P\step{\alpha}P' \wedge \alpha \in A_\tau\}}\alpha.\langle x_{[P']}\mid\equa\rangle \\
        & = \sum_{\{(\alpha,Q') \mid Q\step{\alpha}Q' \wedge \alpha \in A_\tau\}}\alpha.\langle x_{[Q']}\mid\equa\rangle \\
        & = \psi_X(\langle x_{Q}\mid\equa\rangle)
    \end{align*}
    Otherwise, $\deadend{P}{X}$ so $\deadend{Q}{X}$, thus,
    \begin{align*}
        Ax^\infty_r \vdash \psi_X(\langle x_{P}|\equa\rangle) & = \sum_{\{(\alpha,P') \mid P\step{\alpha}P' \wedge \alpha \in A_\tau\}}\!\!\alpha.\langle x_{[P']}\mid\equa\rangle + \sum_{\{(t,P') \mid P\step{\rt}P'\}}\rt.\theta_X(\langle x_{[P']}\mid\equa\rangle) \\
        & = \sum_{\{(\alpha,Q') \mid Q\step{\alpha}Q' \wedge \alpha \in A_\tau\}}\!\!\alpha.\langle x_{[Q']}\mid\equa\rangle + \sum_{\{(t,Q') \mid Q\step{\rt}Q'\}}\rt.\theta_X(\langle x_{[Q']}\mid\equa\rangle) \\
        & = \psi_X(\langle x_{Q}\mid\equa\rangle)
    \end{align*}
    Using the reactive approximation axiom, $Ax^\infty_r \vdash \langle x_{P}|\equa\rangle = \langle x_{Q}|\equa\rangle$.
\end{proof}

\end{document}